\definecolor{gray}{gray}{0}
\numberwithin{equation}{chapter}
\theoremstyle{plain}
\newtheorem{theorem}{Theorem}[section]
\newtheorem{lemma}[theorem]{Lemma}
\newtheorem{proposition}[theorem]{Proposition}
\newtheorem{corollary}[theorem]{Corollary}
\theoremstyle{definition}
\newtheorem{Problem}[theorem]{Problem}
\theoremstyle{remark}
\newtheorem{remark}[theorem]{Remark}
\numberwithin{equation}{section}
\DeclareMathAlphabet{\mathpzc}{OT1}{pzc}{m}{it}
 \newcommand{\cE}{\mathcal{E}}
 \newcommand{\cK}{\mathcal{K}}
 \newcommand{\cX}{\mathcal{X}}
 \newcommand{\cY}{\mathcal{Y}}
 \newcommand{\sC}{\mathscr{C}}
 \newcommand{\sH}{\mathscr{H}}
 \newcommand{\sL}{\mathscr{L}}
\newcommand{\E}{{\mathsf{E}}}
\newcommand{\TF}{{\mathsf{TF}}}
\newcommand{\sfH}{{\mathsf{H}}}
\newcommand{\sfU}{{\mathsf{U}}}
\newcommand{\Weyl}{{\mathsf{Weyl}}}
\newcommand{\Dirac}{{\mathsf{Dirac}}}
\newcommand{\Schwinger}{{\mathsf{Schwinger}}}
\renewcommand{\d}{{\mathsf{d}}}
\newcommand{\D}{{\mathsf{D}}}
\newcommand{\N}{{\mathsf{N}}}
\newcommand{\I}{{\mathsf{I}}}
\newcommand{\T}{{\mathsf{T}}}
\newcommand{\x}{{\mathsf{x}}}
\newcommand{\y}{{\mathsf{y}}}
\newcommand{\const}{{\mathsf{const}}}
\newcommand{\dist}{{{\mathsf{dist}}}}
\newcommand{\loc}{{{\mathsf{loc}}}}
\newcommand{\new}{{{\mathsf{new}}}}
\newcommand{\bC}{{\mathbb{C}}}
\newcommand{\bR}{{\mathbb{R}}}
\newcommand{\bZ}{{\mathbb{Z}}}
\newcommand{\fH}{{\mathfrak{H}}}
\newcommand{\boldupsigma}{{\boldsymbol{\upsigma}}}
\newcommand{\orig}{{\mathsf{orig}}}
\def\1{\boldsymbol {|}}
\newcommand{\blangle}{{\boldsymbol{\langle}}}
\newcommand{\brangle}{{\boldsymbol{\rangle}}}
\newcommand{\3}{{|\!|\!|}}
\newcommand{\Def}{\mathrel{\mathop:}=}
\newcommand{\Fed}{=\mathrel{\mathop:}}
\newcommand{\diam}{\operatorname{diam}}
\renewcommand{\Im}{\operatorname{Im}}       % \Im is already frak I
\renewcommand{\Re}{\operatorname{Re}}       % \Re is already frak R
\newcommand{\Spec}{\operatorname{Spec}}
\newcommand{\supp}{\operatorname{supp}}
\newcommand{\tr}{\operatorname{tr}}
\newcommand{\Tr}{\operatorname{Tr}}
\newcommand{\Res}{\operatorname{Res}}
\newenvironment{claim}[1][{\textup{(\theequation)}}]{\refstepcounter{equation}\vglue10pt
\begin{trivlist}
\item[{\hskip\labelsep#1}]}{\vglue10pt\end{trivlist}}
\newenvironment{claim*}[1][{}]{\vglue10pt
\begin{trivlist}
\item[{\hskip\labelsep#1}]}{\vglue10pt\end{trivlist}}
\newenvironment{phantomequation}[1][]{\refstepcounter{equation}}{}
\newcounter{note}
\DeclareTextCommand{\textge}{PU}{\9042\145}
\DeclareTextCommand{\textle}{PU}{\9042\144}
\DeclareTextCommand{\texthat}{PD1}{\136}
\begin{document}
\title{Asymptotics of the ground state energy of heavy molecules in self-generated magnetic field}
\author{Victor Ivrii}

\maketitle
{\abstract%
We consider asymptotics of the ground state energy of heavy atoms and molecules in the self-generated magnetic field and derive it including Scott, and in some cases even Schwinger and Dirac corrections (if magnetic field is not too strong). We also consider related topics:  an excessive negative charge, ionization energy and excessive positive charge when atoms can still bind into molecules.

This preprint supersedes  \cite{ivrii:Preprint-A}, \cite{ivrii:Preprint-B}, and \cite{ivrii:Preprint-C} as it contains more detailed analysis, more results, stronger results and also some corrections.
\endabstract}

\setcounter{chapter}{-1}
\setcounter{secnumdepth}{2}

\tableofcontents

\chapter{Introduction}
\label{sect-26-1}

We are going to replace Schr\"odinger operator without magnetic field as in Chapter~\ref{book_new-sect-24} or with a constant magnetic field as in Chapter~\ref{book_new-sect-25} by Schr\"odinger operator
\begin{equation}
H=H _{A,V}=\bigl((D -A)\cdot \boldupsigma \bigr) ^2-V(x)
\label{26-1-1}
\end{equation}
with unknown magnetic field $A$ but then to add to the ground state energy  of the atom (or molecule) the energy of magnetic field (see selected term in (\ref{26-1-2})  thus arriving to
\begin{equation}
\E(A)= \inf \Spec (\sfH_{A,V} ) + \underbracket{\alpha^{-1} \int
 |\nabla \times A|^2\,dx}
\label{26-1-2}
\end{equation}
with $N$-particle quantum Hamiltonian $\sfH_{A,V}$ defined by (\ref{book_new-25-1-1}) and a parameter $\alpha\in (0,\kappa^*Z^{-1}]$ with small constant $\kappa^*>0$.

Then  finally
\begin{equation}
\E ^*=\inf_{A\in \sH^1_0} \E(A)
\label{26-1-3}
\end{equation}
defines a ground state energy with a \index{magnetic field!self generated}\emph{self-generated magnetic field}\footnote{\label{foot-26-1} This notion was introduced in series of papers  L. Erd\"os,  S. Fournais  and   J. P. Solovej~\cite{EFS1, EFS2, EFS3}; see also L. Erd\"os and J. P. Solovej~\cite{erdos:solovej}.}.

First of all we are lacking so far a semiclassical local theory and we are developing it in Section~\ref{sect-26-2} where we consider one-particle quantum Hamiltonian
\begin{equation}
H=H _{A,V}=\bigl((hD -A)\cdot \boldupsigma \bigr) ^2-V(x)
\label{26-1-4}
\end{equation}
but instead of $\inf \Spec (\sfH _{A,V})$ we consider $\Tr^-( H_{A,V})$ which as we already know is what replaces $\inf \Spec (\sfH _{A,V})$ if electrons do not interact (then if electrons interact we will need to replace $V$ by $W$ which includes a potential generated by the electron cloud and justify this by estimating an error). We define energy of magnetic field as in (\ref{26-1-2}) but with $\kappa$ replaced by $\kappa h^{d-1}$ (here $d\ge 2$ is arbitrary) and we prove that for $d=2,3$ in this framework self-generated magnetic field is weak and the asymptotics with the remainder $O(h^{2-d})$ (or even $o(h^{1-d})$ under standard assumption of the global nature) is exactly as for $\kappa=0$ (i.e. with $A=0$). Under standard assumption about trajectories we can upgrade this asymptotics to even sharper with the remainder estimate $o(h^{-1})$ and with term $\varkappa_2 h^{-1}$ (Schwinger correction term).

Then in Section~\ref{sect-26-3} we consider operator with potential having Coulomb-type singularities and combining results and arguments of Sections~\ref{sect-26-2} and~\ref{book_new-sect-12-5} prove for $d=3$  that
\begin{equation}
\Tr^-( H_{A,V})=\Weyl_1 + 2S(\kappa) h^{-2} +
O\bigl(\kappa |\log \kappa|^{\frac{1}{3}}h^{-\frac{4}{3}}+h^{-1}\bigr)
\label{26-1-5}
\end{equation}
provided  $\kappa<\kappa^*$ (which is a small constant) and there is just one singularity; when there are several singularities with a minimal distance $a\gg 1$ between them we prove that
\begin{equation}
\Tr^-( H_{A,V})=\Weyl_1 + 2S(\kappa) h^{-2} +
O\bigl(\kappa |\log \kappa|^{\frac{1}{3}}h^{-\frac{4}{3}}+h^{-1}+\kappa a^{-3}h^{-2}\bigr).
\label{26-1-6}
\end{equation}
As $\kappa\ll h^{\frac{1}{3}}|\log h|^{-\frac{1}{3}}$ under standard assumption about trajectories we can upgrade this asymptotics to even sharper with the remainder estimate $o(h^{-1})$ and with Schwinger correction term.

Further, in Section~\ref{sect-26-4} we apply these results to provide estimates from above and below for the total energy (\ref{26-1-3}). As a byproduct we also estimate  $\D(\rho_\Psi-\rho^\TF,\,\rho_\Psi-\rho^\TF)$ where $\Psi$ is a ground state for a near-minimizer $A$.

This estimate enables us  in Section~\ref{sect-26-5} to derive upper estimates for excessive negative charge, estimates or asymptotics for ionization energy, and in the free nuclei model also for minimal distance between nuclei and (in the case of molecule) for excessive positive charge.

\chapter{Local semiclassical trace asymptotics}
\label{sect-26-2}

\section{Toy-model}
\label{sect-26-2-1}

\subsection{Statement of the problem}%
\label{sect-26-2-1-1}
Let us consider  operator (\ref{26-1-1}) in $\bR^d$ with $d=3$ where $A,V$ are real-valued functions and $V\in \sL^{\frac{5}{2}}\cap \sL^{4}$,
$A\in \sH^1_0$. Then this operator is self-adjoint. We are interested in
$\Tr^- (H_{A,V})$ (the sum of all negative eigenvalues of this operator). Let
\begin{gather}
\E ^*\Def \inf_{A\in \sH^1_0(B(0,r))}\E(A),\label{26-2-1}\\
\E(A)\Def \Bigl( \Tr^-H_{A,V}  +
\kappa^{-1} h^{-2}\int  |\partial A|^2\,dx\Bigr)
\label{26-2-2}
\end{gather}
with $\partial A=(\partial_i A_j)$ a matrix; here and below $r$ is a parameter and constants do not depend on it.

The estimate from above is delivered by $A=0$ and Weyl formula with an error $O(h^{-1})$ as $V\in \sC^{2,1}$\,\footnote{\label{foot-26-2} Recall that this means that the second derivatives of $V$ are continuous with the continuity modulus $O(|\log |x-y||^{-1})$, see Section~\ref{book_new-sect-4-5}. If there is a boundary it does not pose any problem provided it is in the classically forbidden region.}
\begin{gather}
\E^*\le \Weyl_1 + O(h^{-1});
\label{26-2-3}\\
\shortintertext{where}
\Weyl(\tau) = \frac{1}{3\pi^2} h^{-3}\int (V+\tau)_+^{\frac{3}{2}}\,dx,\label{26-2-4}\\
\shortintertext{and}
\Weyl_1 = \int _{-\infty}^0 \tau\d_\tau \Weyl(\tau)=
-\frac{2}{15\pi^2} \int V_+^{\frac{5}{2}}\,dx.\label{26-2-5}
\end{gather}
Also for estimates $o(h^{-2})$ we need to include into $\Weyl_1$ the corresponding boundary term. Now our goal is to provide an estimate from below
\begin{equation}
\E^*\ge \Weyl_1 - O(h^{-1});
\label{26-2-6}
\end{equation}
We will use also $\Weyl (x,\tau)$ and $\Weyl _1(x)$ defined the same way albeit without integration with respect to $x$.

\subsection{Preliminary analysis}
\label{sect-26-2-1-2}
So, let us  estimate $\E(A)$ from below. First we need the following really simple

\begin{proposition}\label{prop-26-2-1}
Let  $V\in \sL^{\frac{5}{2}}\cap \sL^4$. Then
\begin{gather}
\E^*\ge -C h^{-3}\label{26-2-7}\\
\shortintertext{and either}
\frac {1}{\kappa h^2} \int |\partial A|^2\,dx \le Ch^{-3}\label{26-2-8}
\end{gather}
or $\E (A) \ge ch^{-3}$.
\end{proposition}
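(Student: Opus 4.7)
The plan is to establish an a priori inequality of the form
\begin{equation*}
\Tr^- H_{A,V} + \frac{c_0}{\kappa h^2}\int |\partial A|^2\,dx \ge -C h^{-3},
\end{equation*}
with some $c_0\in (0,1)$, uniformly in admissible $A$. From this, \eqref{26-2-7} is immediate by taking $\inf_A$ and dropping the nonnegative remainder, while \eqref{26-2-8} follows by a case-split: either $(\kappa h^2)^{-1}\int |\partial A|^2 \le C_1 h^{-3}$ with $C_1 := 2C/(1-c_0)$, or else $\E(A) \ge -C h^{-3} + (1-c_0) C_1 h^{-3} \ge c h^{-3}$ for a suitable $c > 0$.

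For the key inequality I would use the Pauli identity $((hD-A)\cdot\boldupsigma)^2 = (hD-A)^2\otimes I - h\,\boldupsigma\cdot B$ with $B = \nabla\times A$, and then invoke a magnetic Lieb--Thirring-type bound for the Pauli operator, in the form developed for stability of matter with self-generated fields (Lieb--Loss--Solovej, Erd\H{o}s--Solovej), of the schematic shape
\begin{equation*}
-\Tr^- H_{A,V} \le C_0 h^{-3}\|V_+\|_{5/2}^{5/2} + C_0 h^{-a}\,\|V_+\|_q^s\,\|B\|_2^{t},\qquad t<2,
\end{equation*}
with exponents $q,s$ such that $\|V_+\|_q$ is controlled by $V\in\sL^{5/2}\cap\sL^{4}$. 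The crucial structural feature is that only $\|B\|_2$ enters the correction, and to a \emph{subquadratic} power; Young's inequality then absorbs this term into a small fraction of $(\kappa h^{2})^{-1}\|B\|_2^2$ at the cost of an additive $O(h^{-3})$. The identity $\|\partial A\|_2^2 = \|B\|_2^2 + \|\nabla\cdot A\|_2^2 \ge \|B\|_2^2$ transfers the resulting estimate to the full magnetic-energy density, yielding the key inequality.

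The only substantive ingredient is the Pauli Lieb--Thirring bound with a $\|B\|_2$-only correction. The naive route---starting from $H_{A,V}\ge \bigl((hD-A)^2 - V - h|B|\bigr)\otimes I$ and applying magnetic Lieb--Thirring via the diamagnetic inequality---produces $Ch^{-3}\int (V + h|B|)_+^{5/2}\,dx$, in which the term $\sim h^{5/2}\|B\|_{5/2}^{5/2}$ is not controllable by $\|B\|_2$ alone, since no $L^p$-bound on $B$ with $p>2$ is available a priori. Overcoming this requires exploiting the sign cancellation in $\boldupsigma\cdot B$ specific to the Pauli operator---this is the Lieb--Loss--Solovej ingredient, and is where the nontrivial effort lies. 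Modulo this input, the remainder is routine.
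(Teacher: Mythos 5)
Your proposal matches the paper's proof: both proceed by invoking the magnetic Lieb--Thirring inequality of Lieb--Loss--Solovej (the paper cites it as inequality (5) of that reference), which bounds $-\Tr^-H_{A,V}$ by $Ch^{-3}\|V_+\|_{5/2}^{5/2}$ plus a term proportional to $h^{-3/2}\|\partial A\|_2^{3/2}\|V_+\|_4$, and then absorb the subquadratic $\|\partial A\|_2^{3/2}$ factor into the magnetic energy via Young's inequality. Your diagnosis of why the naive diamagnetic route fails (it would produce an uncontrollable $\|B\|_{5/2}^{5/2}$ term) correctly identifies the role of the Pauli cancellation, which is exactly the ingredient the paper relies on.
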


\begin{proof}
Using the Magnetic Lieb-Thirring inequality (5)  of E.~H.~Lieb, M.~Loss,~M. and J.~P.~Solovej~\cite{lieb:loss:solovej})
\begin{multline}
\int \tr e_1(x,x,\tau)\, dx \ge \\
- Ch^{-3} \int V_+^{\frac{5}{2}}\,dx
-Ch^2 \int \Bigl(h^{-2}\int |\partial A|^2\,dx\Bigr)^{\frac{3}{4}}
\Bigl(h^{-8} \int V_+^4\,dx \Bigr)^{\frac{1}{4}}
\label{26-2-9}
\end{multline}
we conclude that for any $\delta>0$
\begin{equation}
\E(A) \ge -Ch^{-3}-C\delta^3 h^{-3} +
 \bigl(\kappa^{-1}-\delta^{-1}) h^{-1}\int |\partial A|^2\,dx
\label{26-2-10}
\end{equation}
which implies both statements of the Proposition.
\end{proof}

\begin{proposition}\label{prop-26-2-2}
Let  $V_+\in  \sL^{\frac{5}{2}}\cap \sL^4$, $\kappa \le ch^{-1} $ and
\begin{equation}
V\le - C^{-1} (1+|x|)^\delta +C.
\label{26-2-11}
\end{equation}
Then there exists a minimizer $A$.
\end{proposition}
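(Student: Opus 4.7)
The plan is the direct method of the calculus of variations. Let $\{A_n\}\subset \sH^1_0$ be a minimizing sequence for $\E$. Using Proposition~\ref{prop-26-2-1}, since $\E(0)\le Ch^{-3}$ by the Lieb--Thirring bound (the alternative $\E(A_n)\ge ch^{-3}$ can therefore be ruled out for $n$ large, or $A=0$ taken as minimizer), the second alternative (\ref{26-2-8}) holds along the sequence and, using $\kappa\le ch^{-1}$, yields $\int |\partial A_n|^2\,dx\le C$ uniformly in $n$. Hence $\{A_n\}$ is bounded in $\sH^1_0$ and a subsequence converges weakly to some $A\in\sH^1_0$, strongly in $\sL^p_{\loc}$ for $p<6$, and a.e.\ pointwise.

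Weak lower semicontinuity of the magnetic field term is immediate:
\begin{equation*}
\int |\partial A|^2\,dx \;\le\; \liminf_{n\to\infty}\int |\partial A_n|^2\,dx.
\end{equation*}
The heart of the matter is to prove the analogous inequality
\begin{equation*}
\Tr^-(H_{A,V}) \;\le\; \liminf_{n\to\infty}\Tr^-(H_{A_n,V}).
\end{equation*}
Here the confining hypothesis (\ref{26-2-11}) enters crucially: the operator $((hD)\boldupsigma)^2-V$ has compact resolvent and purely discrete spectrum, and the magnetic perturbation $-2hA\cdot D+|A|^2+h\,\boldupsigma\cdot(\partial\times A)$ is relatively form-bounded with a bound that is uniform along $\{A_n\}$ by the $\sH^1_0$ bound and Sobolev embeddings. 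Combined with strong local convergence of $A_n$ (and the a.e.\ pointwise limit for $|A_n|^2$), this yields convergence of the quadratic forms on $C_0^\infty$ test vectors and norm-resolvent convergence. The confinement prevents negative eigenvalues from escaping to infinity, so each negative eigenvalue is continuous in $A$; since the number of negative eigenvalues is controlled uniformly (Lieb--Thirring with $V_+\in\sL^{\frac{5}{2}}$), Fatou yields the required lower semicontinuity of their sum.

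Combining the two lower semicontinuity statements gives $\E(A)\le \liminf_n \E(A_n)=\E^*$, so $A$ is the desired minimizer.

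The main obstacle is the third step: translating weak convergence $A_n\rightharpoonup A$ in $\sH^1_0$ into lower semicontinuity of $\Tr^-(H_{A_n,V})$. One must handle simultaneously the quadratic term $|A_n|^2$ (where strong local convergence is needed) and the cross term $hA_n\cdot D$ (where weak convergence suffices when tested against smooth functions), while showing that the tail of the sum of negative eigenvalues cannot be lost at spatial infinity --- which is precisely what the confining assumption (\ref{26-2-11}) guarantees.
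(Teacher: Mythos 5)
Your overall architecture --- direct method, a priori $\sH^1_0$ bound via Proposition~\ref{prop-26-2-1}, extraction of a weakly convergent subsequence with strong $\sL^p_{\loc}$ convergence, weak lower semicontinuity of the field energy, lower semicontinuity of $\Tr^-$, pass to the limit --- matches the paper's proof exactly, and you have correctly identified the $\Tr^-$ term as the locus of the difficulty. The difference, and the gap, is in the mechanism you propose for that step.

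You assert that relative form-boundedness together with strong $\sL^p_{\loc}$ convergence of $A_n$ gives \emph{norm}-resolvent convergence of $H_{A_n,V}$ to $H_{A,V}$. As stated this does not follow: the difference of the operators contains the Pauli term $h\,\boldupsigma\cdot\bigl(\nabla\times (A_n-A)\bigr)$, and $\nabla\times A_n$ converges to $\nabla\times A$ only \emph{weakly} in $\sL^2$. Weak convergence of the perturbation in the form sense yields at best strong-resolvent convergence, which by itself permits negative eigenvalues to drift to $0$ or disappear, so the "continuity of eigenvalues" you invoke is not yet available. One can indeed recover a norm estimate here --- the confining potential (\ref{26-2-11}) makes $(H_{A,V}-z)^{-1}$ map bounded sets into sets that are both $\sH^1$-bounded and tight, and a weakly vanishing perturbation tested against a compact family can be upgraded to uniform smallness --- but that is exactly the nontrivial compactness argument you are silently invoking, and it needs to be made explicit.

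The paper avoids the resolvent route entirely and argues directly on eigenfunctions: with $u_{j,k}$ the orthonormal negative-energy eigenfunctions of $H_{A_j,V}$ (finitely many, by Lieb--Thirring), the eigenvalue equation together with $\|A_j\|_{\sL^6}\le C$ and $V\in\sL^4$ gives a uniform bound $\|Du_{j,k}\|\le K$, while (\ref{26-2-11}) gives the uniform tightness $\|(1+|x|)^{\delta/2}u_{j,k}\|\le K$. These two bounds force strong $\sL^2$ convergence of a subsequence of the $u_{j,k}$; passing to the limit in the weak form of the eigenvalue equation (where strong $\sL^p_{\loc}$ convergence of $A_j$ suffices, because the test functions are fixed) shows that every cluster point $\lambda_{\infty,k}\le 0$ of $\lambda_{j,k}$ is an eigenvalue of $H_{A_\infty,V}$ with the appropriate multiplicity, giving $\lim_j\Tr^-(H_{A_j,V})\ge\Tr^-(H_{A_\infty,V})$. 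This is your "Fatou" step realized concretely on eigenfunctions rather than resolvents. If you replace the norm-resolvent claim with this eigenfunction argument, your proposal becomes complete and then coincides in substance with the paper's proof.
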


\begin{proof}
Consider a minimizing sequence $A_j$. Without any loss of the generality one can assume that $A_j\to A_\infty$ weakly in $\sH^1$ and  in $\sL^6$ and strongly in $\sL^p_\loc$ with any $p<6$\,\footnote{\label{foot-26-3} Otherwise we select a converging subsequence.}. Then $A_\infty$ is a minimizer.

Really, due to (\ref{26-2-10}) negative spectra of $H_{A_j,V}$ are discrete and the number of  negative eigenvalues is bounded by $N_{h}$. Consider   ordered eigenvalues $\lambda_{j,k}$ of $H_{A_j,V}$. Without any loss of the generality one can assume that $\lambda_{j,k}$ have limits $\lambda_{\infty,k}\le 0$ (we go to the subsequence if needed).

We claim that $\lambda_{\infty,k}$ are also eigenvalues and if $\lambda_{\infty,k}=\ldots=\lambda_{\infty, k+r-1}$ then it is eigenvalue of at least multiplicity $r$. Indeed, let $u_{j,k}$ be corresponding eigenfunctions, orthonormal in $\sL^2$. Then in virtue of $A_j$ being bounded in $\sL^6$ and $V\in \sL^4$ we can estimate
\begin{equation*}
\|Du_{j,k}\|\le K \|u_{j,k}\|_{6}^{1-\sigma}\cdot \|u_{j,k}\|^\sigma\le K\|Du_{j,k}\|^{1-\sigma}\cdot \|u_{j,k}\|^\sigma
\end{equation*}
with $\sigma>0$ which implies $\|Du_{j,k}\|\le K$. Also assumption (\ref{26-2-11}) implies that $\| (1+|x|)^{\delta/2 }u_{j,k}\|$ are bounded and therefore without any loss of the generality one can assume that $u_{j,k}$ converge strongly.

Then
\begin{gather}
\lim_{j\to\infty} \Tr^- (H_{A_j,V}) \ge \Tr^- (H_{A_\infty,V}),\label{26-2-12}\\
\liminf_{j\to\infty}\int |\partial A_j|^2\, dx \ge
\int |\partial A_\infty |^2\, dx
\label{26-2-13}
\end{gather}
and therefore $\E(A_\infty)\le \E^*$. Then $A_\infty$ is a minimizer and there are equalities in (\ref{26-2-12})--(\ref{26-2-13}) and, in particular, there no negative eigenvalues of $H_{A_\infty, V}$  other than $\lambda_{\infty,k}$. \end{proof}

\begin{remark}\label{rem-26-2-3}
We do not know if the minimizer is unique. Also we do not impose here any restrictions on $r, K$ (which may depend on $h$) in (\ref{26-2-11}) or $\kappa>0$. From now on until further notice let $A$ be a minimizer.
\end{remark}

\begin{proposition}\label{prop-26-2-4}
In the framework of Proposition~\ref{prop-26-2-2} let $A$ be a minimizer. Then
\begin{multline}
\frac{2}{\kappa h^2} \Delta A_j (x)   = \Phi_j\Def\\
-\Re\tr  \upsigma_j\Bigl( (hD -A)_x \cdot \boldupsigma  e (x,y,\tau)+
e (x,y,\tau)\,^t (hD-A)_y \cdot \boldupsigma \Bigr)  \Bigr|_{y=x}
\label{26-2-14}
\end{multline}
where $A=(A_1,A_2,A_3)$, $\boldupsigma=(\upsigma_1, \upsigma_2, \upsigma_3)$ and $e(x,y,\tau)$ is the Schwartz kernel of the spectral projector $\uptheta (- H)$ of $H=H_{A,V}$ and $\tr$ is a matrix trace.
\end{proposition}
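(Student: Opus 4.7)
The plan is to derive the Euler--Lagrange equation for the minimizer $A$ by considering a smooth perturbation and using a Hellmann--Feynman type identity for $\Tr^-$. Take any test field $B\in \sH^1_0\cap \sC_0^\infty$ and set $A_\epsilon\Def A+\epsilon B$; since $A$ is a minimizer, $\frac{d}{d\epsilon}\E(A_\epsilon)\bigl|_{\epsilon=0}=0$. The plan is to compute each of the two contributions (magnetic energy and trace) and read off the pointwise equation from the arbitrariness of $B$.

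First, the magnetic energy contributes
\begin{equation*}
\frac{d}{d\epsilon}\Bigl[\kappa^{-1}h^{-2}\int|\partial A_\epsilon|^2\,dx\Bigr]_{\epsilon=0}
=2\kappa^{-1}h^{-2}\int \partial_i A_j\,\partial_i B_j\,dx
=-2\kappa^{-1}h^{-2}\int (\Delta A_j)\,B_j\,dx,
\end{equation*}
after integrating by parts (which is legitimate because $B$ is compactly supported and $A\in \sH^1$). Second, for the trace term, write $P\Def (hD-A)\cdot \boldupsigma$ and $Q\Def B\cdot \boldupsigma$. Then
\begin{equation*}
\partial_\epsilon H_{A_\epsilon,V}\bigl|_{\epsilon=0}=-(PQ+QP).
\end{equation*}
I would apply the standard Hellmann--Feynman / Duhamel identity
\begin{equation*}
\frac{d}{d\epsilon}\Tr^-\bigl(H_{A_\epsilon,V}\bigr)\Bigl|_{\epsilon=0}
=\Tr\Bigl(\dot H\,\uptheta(-H)\Bigr)
=-\Tr\bigl((PQ+QP)\,\uptheta(-H)\bigr);
\end{equation*}
the contribution from differentiating the spectral projector $\uptheta(-H)$ vanishes because it is multiplied by the factor $H$ localized to the eigenvalue $0$.

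Next I would translate this trace into an integral against $B$ using the Schwartz kernel $e(x,y,\tau)$ of $\uptheta(\tau-H)$ at $\tau=0$. The term $\Tr(QP\,\uptheta(-H))$ has kernel $Q(x)\,P_x e(x,y)$, giving after taking the diagonal and matrix trace
\begin{equation*}
\sum_j\int B_j(x)\,\tr\bigl[\upsigma_j\,(hD-A)_x\cdot \boldupsigma\, e(x,y,\tau)\bigr]_{y=x}\,dx;
\end{equation*}
the term $\Tr(PQ\,\uptheta(-H))=\Tr(Q\,\uptheta(-H)\,P)$ is the complex conjugate of the former by self-adjointness of $P$ and $Q$, so that summing them produces twice the real part, and integrating by parts in $y$ converts $P$ acting from the right into ${}^t(hD-A)_y\cdot\boldupsigma$ applied to $e(x,y,\tau)$ on the second argument. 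Assembling and equating the $\epsilon$-derivative of $\E(A_\epsilon)$ to zero gives
\begin{equation*}
\int\sum_j B_j\Bigl(\tfrac{2}{\kappa h^2}\Delta A_j-\Phi_j\Bigr)\,dx=0
\end{equation*}
for every admissible $B$, which delivers (\ref{26-2-14}) pointwise (in the distributional sense).

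The main obstacle I anticipate is rigorously justifying the Hellmann--Feynman step: one must control the behaviour of the negative spectrum of $H_{A_\epsilon,V}$ under the perturbation, ruling out nontrivial contributions from eigenvalues crossing $0$ and verifying that $\uptheta(-H)$ depends differentiably on $\epsilon$ on the range where it matters. This is handled via Kato's analytic perturbation theory together with Proposition~\ref{prop-26-2-1}, which already bounds the number of negative eigenvalues and provides the necessary trace-class control on $\uptheta(-H)$; the compactly supported smooth perturbation $B$ keeps $H_{A_\epsilon,V}$ in the same relative form domain, so standard arguments apply. A minor secondary point is the meaning of the ${}^t$ transpose on $(hD-A)_y\cdot\boldupsigma$, which is just the formal transpose arising from integration by parts in $y$ against the second argument of $e(x,y,\tau)$.
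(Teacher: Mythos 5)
Your argument takes essentially the same route as the paper: compute the first variation of each term of $\E(A)$, obtain the trace contribution via a Hellmann--Feynman type identity $\updelta\Tr^-(H)=\Tr\bigl((\updelta H)\,\uptheta(-H)\bigr)$, rewrite it through the Schwartz kernel $e(x,y,\tau)$, and read off the Euler--Lagrange equation (\ref{26-2-14}) from the arbitrariness of the test field. The only difference is cosmetic: the paper \emph{derives} the Hellmann--Feynman step from the residue formula (\ref{26-2-15}) for $\uptheta(\tau-H)$ together with an integration by parts in $\tau$, while you invoke it as a known identity and defer its justification to Kato perturbation theory and the spectral control of Proposition~\ref{prop-26-2-1}; both are legitimate and lead to the same conclusion.
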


\begin{proof}
Consider variation $\updelta A$ of $A$ and variation of $\Tr^- (H)=\Tr (H^-)$ where $H^-=H \uptheta (- H)$ is a negative part of $H$. Note that the spectral projector of $H$ is
\begin{equation}
\uptheta (\tau- H) =
\frac{1}{2\pi i} \int_{-\infty}^\tau \Res_\bR (\tau -H)^{-1}
\label{26-2-15}
\end{equation}
and therefore
\begin{multline*}
\updelta \Tr \bigl(\uptheta (\tau- H)\bigr) =
\frac{1}{2\pi i} \int_{-\infty}^\tau \Res_\bR
\Tr \bigl((\tau -H)^{-1} (\updelta H) (\tau -H)^{-1}\bigr)= \\
\frac{1}{2\pi i} \int_{-\infty}^\tau \Res_\bR
\Tr  \bigl((\updelta H) (\tau -H)^{-2} \bigr)=
-\partial_\tau \frac{1}{2\pi i} \int_{-\infty}^\tau \Res_\bR
\Tr  \bigl((\updelta H) (\tau -H)^{-1}\bigr) = \\[3pt]
-\partial_\tau \Tr \bigl((\updelta H) \uptheta (\tau-H)\bigr).
\end{multline*}
Plugging it into
\begin{equation}
\Tr^- (H)= \int _{-\infty}^0 \tau d_\tau
\Tr \bigl(\uptheta (\tau- H)\bigr)=
-\int _{-\infty}^0 \Tr \bigl(\uptheta (\tau- H)\,d\tau \bigr)
\label{26-2-16}
\end{equation}
and integrating  with respect to $\tau$ we arrive after simple calculations to
\begin{equation}
\updelta \Tr^- (H)=\Tr \bigl((\updelta H) \uptheta (\tau- H)\bigr)=
\sum_j\int \Phi_j (x) \updelta A_j(x)\,dx
\label{26-2-17}
\end{equation}
where $\Phi(x)$ is the right-hand expression of (\ref{26-2-14}). Therefore
\begin{equation}
\updelta \E(A)= \sum_j\int \bigl(\Phi_j (x)-\frac{2}{\kappa h^2}
\Delta A_j(x)\bigr) \updelta A_j(x)\,dx
\label{26-2-18}
\end{equation}
which implies (\ref{26-2-14}).
\end{proof}

\begin{proposition}\label{prop-26-2-5}
If for  $\kappa=\kappa^*$
\begin{gather}
\E^* \ge \Weyl_1 - CM\label{26-2-19}\\
\intertext{with $M\ge C h^{-1}$ then for $\kappa \le \kappa^*(1-\epsilon_0)$}
\frac{1}{\kappa h^2} \int |\partial A|^2\,dx \le C_1M.\label{26-2-20}
\end{gather}
\end{proposition}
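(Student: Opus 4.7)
The strategy is an elementary monotonicity/comparison argument: the functional $\E(A)$ depends on $\kappa$ only through the prefactor $\kappa^{-1}$ in front of $h^{-2}\int|\partial A|^2\,dx$, so evaluating the \emph{same} field $A$ at two different values of $\kappa$ isolates that integral. No serious obstacle arises; the only mild subtlety is invoking the existence of a minimizer at $\kappa$, which is supplied by Proposition~\ref{prop-26-2-2} (or one may argue with a near-minimizer).

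First, I would establish an upper bound for the minimum energy $\E^*$ (at the given $\kappa$) that is valid for all $\kappa>0$ by testing with $A\equiv 0$ and applying the standard Weyl asymptotics of $\Tr^-(-h^2\Delta - V)$ with remainder $O(h^{-1})$, as already noted in (\ref{26-2-3}):
\begin{equation*}
\E^* \le \Tr^-(H_{0,V}) \le \Weyl_1 + C h^{-1} \le \Weyl_1 + C M,
\end{equation*}
using the assumption $M\ge C h^{-1}$.

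Next, let $A$ be a minimizer of $\E$ at the given $\kappa\le\kappa^*(1-\epsilon_0)$ and denote by $\E_{\kappa^*}(\cdot)$ the same functional with $\kappa^*$ in place of $\kappa$. The two functionals differ only in the coefficient of $h^{-2}\int|\partial A|^2\,dx$, so
\begin{equation*}
\E^* = \E(A) = \E_{\kappa^*}(A) + \bigl(\kappa^{-1}-(\kappa^*)^{-1}\bigr)\, h^{-2}\int|\partial A|^2\,dx.
\end{equation*}
The condition $\kappa\le\kappa^*(1-\epsilon_0)$ gives $\kappa^{-1}-(\kappa^*)^{-1}\ge \epsilon_0\,\kappa^{-1}$, while $\E_{\kappa^*}(A)\ge \E^*_{\kappa^*}\ge \Weyl_1 - C M$ by the hypothesis (\ref{26-2-19}) at $\kappa=\kappa^*$. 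Combining with the upper bound above,
\begin{equation*}
\Weyl_1 + CM \ge \E^* \ge \Weyl_1 - CM + \epsilon_0\,\kappa^{-1}h^{-2}\int|\partial A|^2\,dx,
\end{equation*}
which rearranges to $\kappa^{-1}h^{-2}\int|\partial A|^2\,dx \le 2C\epsilon_0^{-1}\, M$, i.e.\ (\ref{26-2-20}) with $C_1=2C/\epsilon_0$. The genuinely hard content—namely, the lower bound (\ref{26-2-19}) itself—is precisely what is assumed in the hypothesis; the present proposition is just the simple extraction step that will later let us feed field-energy control back into the microlocal arguments.
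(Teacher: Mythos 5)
Your proof is correct and is exactly the monotonicity-in-$\kappa$ comparison the paper intends (the paper states the proof is "obvious based also on the upper estimate $\E^*\le \Weyl_1+Ch^{-1}$"). You have simply written out the two-line computation that the paper leaves implicit, including the verification that $\kappa\le\kappa^*(1-\epsilon_0)$ gives $\kappa^{-1}-(\kappa^*)^{-1}\ge\epsilon_0\kappa^{-1}$.
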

\begin{proof}
Proof is obvious based also on the upper estimate $\E^*\le \Weyl_1+Ch^{-1}$.
\end{proof}

\subsection{Estimates}
\label{sect-26-2-1-3}

\begin{proposition}\label{prop-26-2-6}
Let estimate \textup{(\ref{26-2-20})} be fulfilled and let
\begin{equation}
\varsigma = \kappa M h \le c.
\label{26-2-21}
\end{equation}
Then as $\tau\le c$

\begin{enumerate}[label=(\roman*), fullwidth]
\item\label{prop-26-2-6-i}
Operator norm in $\sL^2$ of $(hD)^k \uptheta(\tau -H)$ does not exceed $C$ for $k=0,1,2$;

\item\label{prop-26-2-6-ii}
Operator norm in $\sL^2$ of
$(hD)^k\bigl((hD-A)\cdot\boldupsigma\bigr) \uptheta(\tau -H)$ does not exceed $C$ for $k=0,1$.
\end{enumerate}
\end{proposition}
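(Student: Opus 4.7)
The plan is to prove both parts inductively using two key identities:
\[((hD-A)\cdot\boldupsigma)^2 = H + V \quad\text{and}\quad ((hD-A)\cdot\boldupsigma)^2 = (hD-A)^2 - h\boldupsigma\cdot B,\]
with $B=\nabla\times A$. Throughout, set $u = \uptheta(\tau-H)v$, so $u$ lies in the spectral subspace $\{H\leq\tau\leq c\}$, and write $\varsigma=\kappa M h$. Part (i) with $k=0$ is immediate since $\uptheta(\tau-H)$ is an orthogonal projector. For part (ii) with $k=0$, the first identity combined with $\langle Hu,u\rangle\leq\tau\|u\|^2$ from the spectral theorem, and a bound on $\langle Vu,u\rangle$ via H\"older and the $\sL^{5/2}$--$\sL^{10/3}$ interpolation, gives $\|((hD-A)\cdot\boldupsigma)u\|^2\leq C\|u\|^2 + \varepsilon\|hDu\|^2$, with the second piece to be absorbed by the $k=1$ case below.

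For part (i) with $k=1$, decompose $hD = (hD-A) + A$. The Pauli identity yields
\[\|(hD-A)u\|^2 = \|((hD-A)\cdot\boldupsigma)u\|^2 + h\langle\boldupsigma\cdot Bu,u\rangle,\]
and hypothesis (\ref{26-2-20}) together with the 3D Sobolev embedding gives $\|A\|_{\sL^6}^2,\|B\|_{\sL^2}^2 \leq C\kappa M h^2$. The interpolations $\|u\|_{\sL^4}^2 \leq Ch^{-3/2}\|u\|^{1/2}\|hDu\|^{3/2}$ and $\|u\|_{\sL^3}^2 \leq Ch^{-1}\|u\|\,\|hDu\|$ then convert the cross-term and $\|Au\|^2$ into $C\varsigma^{1/2}\|u\|^{1/2}\|hDu\|^{3/2}$ and $C\varsigma\|u\|\,\|hDu\|$, respectively. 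Young's inequality absorbs these into a small multiple of $\|hDu\|^2$, and the smallness $\varsigma\leq c$ from (\ref{26-2-21}) closes the estimate, yielding $\|hDu\|\leq C\|u\|$.

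The remaining cases, (ii) with $k=1$ and (i) with $k=2$, follow by iterating this machinery: commute $hD$ through $((hD-A)\cdot\boldupsigma)$ using the Pauli identity, handle the extra $\partial A$ factor by the same Sobolev-plus-$\varsigma$ estimates, then expand $(hD)^2 u = (hD-A)^2 u + (\text{lower-order terms involving } A, \nabla\cdot A, A^2)$ and use $(hD-A)^2 u = (H+V+h\boldupsigma\cdot B)u$ on the spectral subspace; the $H+V$ part is bounded via the spectral theorem together with $H\geq -\|V_+\|_{\sL^\infty}$ (or an $\sL^p$ variant thereof), and the $h\boldupsigma\cdot B$ part is reduced to $\|Bu\|_{\sL^2}$ via H\"older and the already-obtained $\sL^6$ bound on $u$. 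The main obstacle is the precise bookkeeping of semiclassical weights: every factor of $\|A\|_{\sL^6}$ or $\|B\|_{\sL^2}$ contributes $\sqrt{\kappa M}\,h$, every Sobolev gain of a derivative costs $h^{-1}$, and the Young exponents must be chosen so that the net imbalance is always a positive power of the dimensionless parameter $\varsigma=\kappa M h$, which is exactly why hypothesis (\ref{26-2-21}) is imposed to close the bootstrap.
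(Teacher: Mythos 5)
Your argument follows the same route as the paper's proof: the two Pauli identities $((hD-A)\cdot\boldupsigma)^2 = H+V = (hD-A)^2 - h\boldupsigma\cdot B$, the Sobolev bound $\|A\|_{\sL^6}\lesssim\|\partial A\|\lesssim(\kappa M)^{1/2}h$ coming from \textup{(\ref{26-2-20})}, interpolations of $\|u\|_{\sL^3}$ and $\|u\|_{\sL^4}$ against $\|u\|$ and $\|u\|_{\sL^6}\lesssim h^{-1}\|hDu\|$, and Young's inequality together with the smallness $\varsigma=\kappa M h\le c$ to absorb the resulting fractional powers of $\|hDu\|$, then iteration for $k=2$. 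Your interpolation exponents (e.g.\ $\|u\|_{\sL^4}^2\lesssim h^{-3/2}\|u\|^{1/2}\|hDu\|^{3/2}$) and your explicit note on handling $\langle Vu,u\rangle$ are if anything a touch more careful than the paper's displayed computation, so this is essentially the paper's proof.
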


\begin{proof}
(i) Let $u =\uptheta(\tau-H) f$. Then
$\|u\|\le \|f\|$ and since
\begin{equation}
\|A\|_{\sL^6} \le C\| \partial A\| \le C(\kappa M)^{\frac{1}{2}}h
\label{26-2-22}
\end{equation}
we conclude that
\begin{multline*}
\|hD u\| \le \|(hD-A)u\| +\|A u\| \le
\|(hD-A)u\| +C\|A \|_{\sL^6}\cdot \|u\|_{\sL^3}\le\\[2pt]
\|(hD-A)u\| +C(\kappa M)^{\frac{1}{2}}h \|u\|^{1/2}\cdot \|u\|_{\sL^6}^{1/2}\le \\[2pt]
\|(hD-A)u\| +C(\kappa M h)^{\frac{1}{2}} \|u\|^{1/2}\cdot \|h Du\|^{1/2}\le\quad \\[2pt]
\|(hD-A)u\| + \frac{1}{2} \|hDu\| + C\kappa M h  \|u\|;
\end{multline*}
therefore due to (\ref{26-2-21})
\begin{equation}
\|hD u\| \le 2\|(hD-A)u\| +C\kappa M h  \|u\|.
\label{26-2-23}
\end{equation}
On the other  hand, for $B=\nabla \times A$ and $\tau\le c$
\begin{multline*}
\|(h D-A)u\| ^2 \le C\|u\|^2 + (h |B|u,u)\le C\|u\|^2 + h\|B\|\cdot\|u\|_{\sL^4}^2 \le \\[3pt]
C\|u\|^2 +  C (\kappa M)^{\frac{1}{2}} h^2  \|u\|  \cdot \|u\|_{\sL^6} \le
C\|u\|^2 + C (\kappa M)^{\frac{1}{2}}h  \|u\|\cdot \| h D u\|\le \\[3pt]
C(1+\kappa M h^2 + \kappa ^{\frac{3}{2}}M^{\frac{3}{2}}h^2 ) \|u\|^2 + \frac{1}{2} \|(h D-A)u\| ^2
\end{multline*}
and due to (\ref{26-2-23}) we conclude that
\begin{equation}
\|(h D-A)u\|  \le C\|u\|\quad \text{and}\quad \|hDu\|\le C(1+\kappa M h) \|u\|
\label{26-2-24}
\end{equation}
provided $\kappa Mh^{1+\delta}\le c$ for sufficiently small $\delta>0$. Therefore under assumption (\ref{26-2-21})  for $k=0,1$  statement \ref{prop-26-2-6-i} is proven.

Further, since $(hD)^2=(hD-A)^2 + A(hD-A) +A hD -h[D,A]$ we in the same way as before (and using (\ref{26-2-24})) conclude that
\begin{gather*}
\|(hD)^2u\|
\le C\|u\|^2 +  \frac{1}{4} \|hD (hD-A)u\|+ \frac{1}{4} \| h^2D^2u\| \\
\shortintertext{and therefore}
\|h^2D^2u\|\le C\|u\|^2 + C \|A hDu \|
\end{gather*}
and repeating the same arguments we get  $\|h^2D^2 u\| \le C\|u\|$; so for $k=2$ statement \ref{prop-26-2-6-i}  is also proven.

\medskip\noindent
(ii) Statement \ref{prop-26-2-6-ii} is proven in the same way.
\end{proof}

\begin{corollary}\label{cor-26-2-7}
Let \textup{(\ref{26-2-20})} and \textup{(\ref{26-2-21})} be fulfilled. Then  as
$\tau\le c$
\begin{gather}
e(x,x,\tau) \le Ch^{-3}\label{26-2-25}\\
\shortintertext{and}
|\bigl((hD-A)\cdot \boldupsigma )  e(x,y,\tau)|_{x=y} |\le Ch^{-3}.
\label{26-2-26}
\end{gather}
\end{corollary}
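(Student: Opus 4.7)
The plan is to reduce the pointwise kernel bounds to $\sL^2\to\sL^2$ operator-norm estimates by combining the semiclassical Sobolev embedding
\begin{equation*}
\|v\|_{\sL^\infty(\bR^3)} \le Ch^{-\frac{3}{2}}\bigl(\|v\|_{\sL^2} + \|h^2 D^2 v\|_{\sL^2}\bigr)
\end{equation*}
with the idempotency $P^2 = P$ and self-adjointness $e(x,y,\tau) = e(y,x,\tau)^*$ of $P = \uptheta(\tau-H)$. For (\ref{26-2-25}), I would apply this embedding to $v = Pf$ for arbitrary $f \in \sL^2$. Proposition~\ref{prop-26-2-6}(i) with $k=0,2$ bounds the right-hand side by $C\|f\|_{\sL^2}$, so $\|Pf\|_{\sL^\infty}\le Ch^{-\frac{3}{2}}\|f\|_{\sL^2}$. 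Since $(Pf)(x) = \int e(x,y,\tau)f(y)\,dy$, duality yields $\|e(x,\cdot,\tau)\|_{\sL^2_y}\le Ch^{-\frac{3}{2}}$; combined with the idempotency identity $\tr e(x,x,\tau) = \int |e(x,z,\tau)|_\HS^2\,dz$ this gives (\ref{26-2-25}).

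For (\ref{26-2-26}), let $B = (hD-A)\cdot\boldupsigma$ and $K(x,y) = B_x e(x,y,\tau)$, the kernel of $BP$. Idempotency gives $K(x,y) = \int K(x,z)\,e(z,y,\tau)\,dz$, so Cauchy--Schwarz at $y=x$ together with (\ref{26-2-25}) yields $|K(x,x)|\le \|K(x,\cdot)\|_{\sL^2_y}\cdot e(x,x,\tau)^{\frac{1}{2}}\le Ch^{-\frac{3}{2}}\|K(x,\cdot)\|_{\sL^2_y}$. It therefore suffices to show $\|K(x,\cdot)\|_{\sL^2_y}\le Ch^{-\frac{3}{2}}$, equivalently, by duality, $\|BPg\|_{\sL^\infty}\le Ch^{-\frac{3}{2}}\|g\|_{\sL^2}$ for $g\in \sL^2$.

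The main obstacle lies in establishing this last $\sL^\infty$ bound on $BPg$: Proposition~\ref{prop-26-2-6}(ii) only gives $\sH^1_h$ control, which does not embed in $\sL^\infty$ in dimension three. I would overcome this by extending Proposition~\ref{prop-26-2-6}(ii) to $k=2$, using the commutator $[hD_j,B] = ih(\partial_j A)\cdot\boldupsigma$ to write $(hD)^2 BPg = B(hD)^2 Pg$ plus terms involving $h\partial A$ and $h^2\partial^2 A$ acting on $Pg$ and $hDPg$, and controlling the commutator contributions via the a priori bound $\|\partial A\|_{\sL^2}\le (\kappa M h^2)^{\frac{1}{2}}$ from (\ref{26-2-20}), the Sobolev estimate $\|A\|_{\sL^6}\le C\|\partial A\|_{\sL^2}$ already exploited in (\ref{26-2-22}), and the $\sL^\infty$ bound on $Pg$ established in the first step. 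The smallness assumption (\ref{26-2-21}) is precisely what makes this iteration close, exactly as in the proof of Proposition~\ref{prop-26-2-6}. Feeding $\|(hD)^2 BPg\|_{\sL^2}\le C\|g\|_{\sL^2}$ back into the semiclassical Sobolev embedding then yields the required bound.
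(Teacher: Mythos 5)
Your treatment of (\ref{26-2-25}) is correct and is essentially what the paper does: semiclassical Sobolev in $d=3$ gives $\|Pf\|_{\sL^\infty}\le Ch^{-3/2}(\|Pf\|+\|(hD)^2Pf\|)\le Ch^{-3/2}\|f\|$ from Proposition~\ref{prop-26-2-6}(i), and the identity $\tr e(x,x,\tau)=\int|e(x,z,\tau)|_{\HS}^2\,dz$ from $P^2=P$ closes it. Your reduction of (\ref{26-2-26}) to the $\sL^2\to\sL^\infty$ bound for $BP=\bigl((hD-A)\cdot\boldupsigma\bigr)\uptheta(\tau-H)$ via $K(x,x)=\int K(x,z)e(z,x)\,dz$ and Cauchy--Schwarz also matches the paper's intent (the proof of the corollary asserts precisely that $BP$ is bounded $\sL^2\to\sC$), and you are right that Proposition~\ref{prop-26-2-6}(ii) as stated only yields $\sH^1_h$ control of $BPg$, which in dimension three is not enough for $\sL^\infty$.

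The gap is in the claimed extension to $k=2$. Expanding $(hD_j)^2 BPg$ you get
\[
(hD_j)^2 BPg \;=\; B(hD_j)^2 Pg \;-\;2ih\,(\partial_j A)\cdot\boldupsigma\,(hD_j Pg)\;-\;h^2(\partial_j^2 A)\cdot\boldupsigma\,Pg,
\]
and neither the leading term nor the last one is controllable by the ingredients you invoke. For the leading term, $\|B(hD)^2 Pg\|\le\|hD(hD)^2 Pg\|+\|A(hD)^2 Pg\|$ requires $\|(hD)^3Pg\|$ (Proposition~\ref{prop-26-2-6}(i) stops at $k=2$), and the route via $\|Bv\|^2=\langle(H+V)v,v\rangle$ with $v=(hD)^2Pg$ reintroduces $[H,(hD)^2]$, i.e.\ $\partial^2 A$ again. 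For the last term, $\|h^2(\partial^2 A)Pg\|\le h^2\|\partial^2 A\|_{\sL^2}\|Pg\|_{\sL^\infty}$ requires an $\sL^2$ bound on $\partial^2 A$, which hypotheses (\ref{26-2-20})--(\ref{26-2-21}) simply do not supply; nor does $\|\partial A\|_{\sL^2}$, $\|A\|_{\sL^6}$, or the $\sL^\infty$ bound on $Pg$ help, as none of these sees a second derivative of $A$. (The middle term has the same difficulty: $h(\partial A)(hDPg)$ in $\sL^2$ needs either $hDPg\in\sL^\infty$, i.e.\ $(hD)^3Pg$, or $\partial A\in\sL^p$ with $p>2$, i.e.\ $\partial^2A$.) Note also the logical order: the regularity $\|\Delta A\|_{\sL^\infty}\le C\kappa h^{-1}$ that would fix this is Corollary~\ref{cor-26-2-8}, which itself \emph{uses} (\ref{26-2-26}), so it cannot be invoked here without circularity. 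In fairness, the paper's own proof is equally terse at exactly this point --- it asserts the $\sL^2\to\sC$ bound on $BP$ ``due to Proposition~\ref{prop-26-2-6}'' even though statement (ii) only reaches $k=1$ --- so you have accurately located the missing step; but the mechanism you sketch does not close it, and an honest proof would need a different device (e.g.\ mollifying $A$, a genuine bootstrap through the minimizer equation, or a duality/interpolation trick that avoids $(hD)^2BP$ entirely).
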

\begin{proof}
Due to proposition \ref{prop-26-2-6} operator norms from $\sL^2$ to $\sC$  of both $\uptheta (\tau -H)$ and
$\bigl((hD-A)\cdot \boldupsigma \bigr)\uptheta (\tau -H)$ do not exceed $C$ and the same is true for an adjoint operator which imply both claims.
\end{proof}

\begin{corollary}\label{cor-26-2-8}
Let \textup{(\ref{26-2-20})} and \textup{(\ref{26-2-21})} be fulfilled and $A$ be a minimizer. Then
\begin{gather}
\|\partial A\|_{\sC^{1-\delta}} \le C\kappa h^{-1}\label{26-2-27}\\
\shortintertext{and}
\|\partial  A\|_{\sL^\infty}\le C'_\delta h^{-\frac{4}{5}-\delta}\label{26-2-28}
\end{gather}
where $\sC^\theta$ is the scale of H\"older spaces and $\delta >0$ is arbitrarily small.
\end{corollary}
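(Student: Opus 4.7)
The plan is to extract both bounds from the Euler--Lagrange equation (\ref{26-2-14}) for the minimizer, combined with standard elliptic regularity and a Gagliardo--Nirenberg interpolation. Reading (\ref{26-2-14}) componentwise, $\Delta A_j = \tfrac{\kappa h^2}{2}\Phi_j$, and the pointwise diagonal estimate (\ref{26-2-26}) gives $|\Phi_j(x)|\le Ch^{-3}$; hence $\|\Delta A\|_{\sL^\infty} \le C\kappa h^{-1}$. This single $\sL^\infty$ bound on $\Delta A$ is the only input required for what follows.

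For (\ref{26-2-27}), I would apply interior Calder\'on--Zygmund $L^p$ elliptic estimates on $B(0,r)$ (using the Dirichlet condition from $A\in\sH^1_0$) to obtain $\|D^2 A\|_{\sL^p} \le C_p\|\Delta A\|_{\sL^p} \le C'_p \kappa h^{-1}$ for every $p<\infty$, and then invoke the Sobolev embedding $W^{2,p}(\bR^3)\hookrightarrow \sC^{1,1-3/p}$ (valid for $p>3$). Taking $p=3/\delta$ delivers $\|\partial A\|_{\sC^{1-\delta}}\le C_\delta \kappa h^{-1}$, which is exactly (\ref{26-2-27}).

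For (\ref{26-2-28}), interpolate the $\sL^2$ bound on $\partial A$ against the H\"older bound just obtained. From (\ref{26-2-20}) together with (\ref{26-2-21}) one has $\|\partial A\|_{\sL^2} \le C(\kappa Mh^2)^{1/2} \le Ch^{1/2}$. A Gagliardo--Nirenberg-type inequality of the form $\|f\|_{\sL^\infty} \le C\, \|f\|_{\sL^p}^{\theta}\,\|f\|_{\sC^{1-\delta}}^{1-\theta}$ (proved by comparing $f(x)$ to its average over $B(x,r)$ for an optimized radius $r$), applied to $f=\partial A$ with $\theta$ and $p$ chosen appropriately --- the intermediate $\sL^p$ norm of $\partial A$ being itself controlled by interpolating $\sL^2$ against the H\"older bound via H\"older's inequality --- produces $\|\partial A\|_{\sL^\infty}\le C'_\delta h^{-4/5-\delta}$ once the resulting powers of $h$ and $\kappa$ are collected and $\kappa\le\kappa^*$ is used to absorb the $\kappa$-factor into the constant.

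The main obstacle is really bookkeeping rather than analysis: one has to ensure that the Calder\'on--Zygmund and Sobolev embedding constants stay uniform in $h$ (and in $r$ up to a mild rescaling of the domain), and that the interpolation exponents are tuned carefully enough for the $\delta$-loss to be exactly what is stated. None of this raises any new analytical difficulty; once the key $\sL^\infty$ bound $\|\Delta A\|_{\sL^\infty}\le C\kappa h^{-1}$ is in hand, both (\ref{26-2-27}) and (\ref{26-2-28}) follow from a routine combination of off-the-shelf $L^p$ elliptic theory and standard interpolation inequalities.
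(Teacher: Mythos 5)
Your proposal is correct and tracks the paper's argument essentially step for step: the paper also starts from the Euler--Lagrange equation (\ref{26-2-14}) and the pointwise bound $|\Phi|\le Ch^{-3}$ from Corollary~\ref{cor-26-2-7} to get $\|\Delta A\|_{\sL^\infty}\le C\kappa h^{-1}$, invokes ``standard properties of elliptic equations'' (which your Calder\'on--Zygmund $L^p$ plus Morrey/Sobolev embedding makes explicit) for (\ref{26-2-27}), and then for (\ref{26-2-28}) interpolates the H\"older bound against $\|\partial A\|_{\sL^2}\le C(\kappa M h^{2})^{1/2}$ by exactly the ball-averaging argument you describe as the proof of the Gagliardo--Nirenberg inequality. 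The only cosmetic difference is that the paper carries out the interpolation by hand rather than quoting a named inequality, and it is deliberately generous with exponents (it uses only the crude a priori bound $M\lesssim h^{-3}$, whereas plugging (\ref{26-2-21}) into your interpolation would give the sharper $h^{-2/5}$); since $h^{-4/5-\delta}$ is all that is needed to trigger (\ref{26-2-44}) downstream, this looseness is intentional and your route recovers the stated bound with room to spare.
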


\begin{proof}
Really, due to (\ref{26-2-14})   minimizer $A$ satisfies
$\|\Delta A\|_{\sL^\infty}\le C\kappa h^{-1}$. Also we know that
$\|\partial A\| \le C(\kappa Mh^2)^{\frac{1}{2}}\le Ch^{\frac{1}{2}}$ due to (\ref{26-2-21}). Then (\ref{26-2-27}) holds due to the standard properties of the elliptic equations\footnote{\label{foot-26-4} Actually we can slightly improve this statement.}.

Therefore if at some point $y$ we have $|\partial A(y)|\gtrsim \mu$, it is true in its $\epsilon  (\mu h \kappa ^{-1} )^{1-\delta}$-vicinity (provided
$\mu \le \kappa h^{-1}$) and then
\begin{gather*}
\|\partial A\|^2 \gtrsim
\mu ^2 (\mu h \kappa^{-1})^{3(1-\delta)}\\
\intertext{and  we conclude that}
\mu ^2 (\mu h \kappa^{-1})^{3(1-\delta)} \le C\kappa h^2 M \iff
\mu^{5-3\delta} \le C\kappa^{4-3\delta}h^{-1+3\delta}M
\end{gather*}
and one can see easily that (\ref{26-2-28}) holds due to (\ref{26-2-21}) and assumption $h^{-1}\le M\lesssim h^{-3}$.

On the other hand, if $\mu \ge \kappa h^{-1}$ then we need to take $\epsilon$-vicinity and then $\mu^2 \le C\kappa Mh^2 \le C h^{\frac{1}{2}}$ where we used (\ref{26-2-21}) again. Therefore (\ref{26-2-28}) has been proven. \end{proof}

\begin{remark}\label{rem-26-2-8}
\begin{enumerate}[label=(\roman*), fullwidth]
\item\label{rem-26-2-8-i}
It is not clear if it is possible to generalize this theory to arbitrary $d\ge 2$ with magnetic field energy  given by
\begin{equation}
\frac{1}{\kappa h^{d-1}}\int \bigl(|\partial A|^2-|\nabla\cdot A|^2\bigr) \,dx
\label{26-2-29}
\end{equation}
Surely one should use generalized Pauli matrices $\upsigma_j$ in the definition of the operator: as $d=2$ one can prove that $\E(A)$ is bounded from below and minimizer exists; as $d=4$ one can prove that $\E(A)$ is bounded from below as $\kappa \le \epsilon_0 h$; especially problematic is the case $d\ge 5$;

\item\label{rem-26-2-8-ii}
Therefore while arguments  of Subsection~\ref{sect-26-2-2} below remain valid for $d\ge 4$, so far they remain conditional (if a minimizer exists and satisfies some crude estimates).
\end{enumerate}
\end{remark}

\section{Microlocal analysis unleashed}
\label{sect-26-2-2}

\subsection{Sharp estimates}
\label{sect-26-2-2-1}

Now we can unleash the full power of microlocal analysis but we need to extend it to our framework. It follows by induction from (\ref{26-2-27})--(\ref{26-2-28}) and the arguments we used to derive these estimates that
\begin{equation}
\|\partial A\|_{\sC^{n-\delta}} \le C_n\kappa h^{-1-n}\qquad \forall n\in \bZ^+,
\label{26-2-30}
\end{equation}
so $A$ is ``smooth'' in $\varepsilon = h$ scale while for rough microlocal analysis as in Section~\ref{book_new-sect-2-3} one needs at least
$\varepsilon = Ch|\log h|$. We consider in this section arbitrary $d\ge 2$; see however remark~\ref{rem-26-2-8}.

\begin{proposition}\label{prop-26-2-10}
For a commutator of a pseudo-differential operator with a smooth symbol and $\sC^{\theta+1}$-function $A(x)$ a usual commutator formula holds modulo
$O( h^{\theta+1}\3\partial A\3_{\theta})$ for any non-integer $\theta >0$
where
\begin{equation}
\3f\3_{\theta}\Def \left\{\begin{aligned}
&\sum _{\alpha:|\alpha|=\theta} \sup _x |\partial^\alpha f(x)|
&&\theta\in \bZ^+,\\
&\sum _{\alpha:|\alpha|=\lfloor\theta\rfloor}
\sup _{x\ne y}|x-y|^{\lfloor\theta\rfloor-\theta}\cdot
|\partial^\alpha f(x)-\partial^\alpha f(y)|
&&\theta\notin \bZ^+.
\end{aligned}\right.
\label{26-2-31}
\end{equation}
\end{proposition}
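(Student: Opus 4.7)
The strategy is rough microlocal analysis via mollification at scale $\varepsilon=h$. Let $\chi_\varepsilon(x)=\varepsilon^{-d}\chi(x/\varepsilon)$ be a standard mollifier and set $A_\varepsilon\Def A*\chi_\varepsilon$ with $\varepsilon=h$. Standard H\"older-space estimates give, for all multi-indices $\alpha$,
$$
\|\partial^\alpha A_\varepsilon\|_{\sL^\infty}\le C_\alpha h^{\theta+1-|\alpha|}\3\partial A\3_\theta\ (|\alpha|>\theta+1),\qquad \|\partial^\alpha(A-A_\varepsilon)\|_{\sL^\infty}\le C_\alpha h^{\theta+1-|\alpha|}\3\partial A\3_\theta\ (|\alpha|\le\lfloor\theta+1\rfloor).
$$
Writing $P\Def p(x,hD)$, I decompose $[P,A]=[P,A_\varepsilon]+[P,A-A_\varepsilon]$ and treat the two pieces separately.

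On the smooth part $A_\varepsilon\in\sC^\infty$ the standard $\Psi$DO commutator expansion applies:
$$
[P,A_\varepsilon]=\sum_{1\le k\le N}\frac{h^k}{i^k k!}(\partial_\xi^k p)(x,hD)\cdot\partial_x^k A_\varepsilon+R_N,\qquad \|R_N\|_{\sL^2\to\sL^2}\le Ch^{N+1}\|\partial^{N+1}A_\varepsilon\|_{\sL^\infty}.
$$
Taking $N=\lceil\theta\rceil$, so that $N+1>\theta+1$, the mollification bound forces $\|R_N\|\le Ch^{\theta+1}\3\partial A\3_\theta$. In each surviving term I replace $\partial_x^k A_\varepsilon$ by $\partial_x^k A$ at cost $h^k\|\partial^k(A-A_\varepsilon)\|_{\sL^\infty}\le Ch^{\theta+1}\3\partial A\3_\theta$. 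The rough piece is even simpler: since $P$ is $\sL^2$-bounded (Calder\'on--Vaillancourt for a smooth symbol),
$$
\|[P,A-A_\varepsilon]\|_{\sL^2\to\sL^2}\le 2\|P\|\cdot\|A-A_\varepsilon\|_{\sL^\infty}\le Ch^{\theta+1}\3\partial A\3_\theta,
$$
and assembling the three contributions yields the claim.

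The main obstacle is to calibrate $\varepsilon$: it must be small enough that $h^{N+1}\|\partial^{N+1}A_\varepsilon\|_{\sL^\infty}=O(h^{\theta+1}\3\partial A\3_\theta)$ with $N=\lceil\theta\rceil$, yet large enough that $\|A-A_\varepsilon\|_{\sL^\infty}=O(h^{\theta+1}\3\partial A\3_\theta)$. The choice $\varepsilon=h$ threads this needle precisely because one truncates the expansion at the finite order $N=\lceil\theta\rceil$ and does not insist on $O(h^\infty)$ remainders (the latter would force $\varepsilon\sim h|\log h|$, as in Section~\ref{book_new-sect-2-3}). A minor subtlety is that for $k=\lfloor\theta+1\rfloor$ the coefficient $\partial_x^k A$ in the expansion is merely H\"older continuous, but since it enters solely as a bounded multiplication operator this does not affect the $\sL^2$ mapping properties.
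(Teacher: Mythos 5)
The paper leaves the proof to the reader, so there is no in-text argument to compare against; your proposal is correct, but it goes the long way around the one point that actually needs an argument. The crux is your remainder bound $\|R_N\|_{\sL^2\to\sL^2}\le Ch^{N+1}\|\partial^{N+1}A_\varepsilon\|_{\sL^\infty}$, which you attribute to ``the standard $\Psi$DO commutator expansion.'' The standard symbol-class (Calder\'on--Vaillancourt) remainder bounds control $\|R_N\|$ by seminorms of the symbol involving several \emph{additional} $x$-derivatives of $\partial^{N+1}A_\varepsilon$; at the critical scale $\varepsilon=h$ each such derivative costs a factor $h^{-1}$, which exactly cancels the $h$ gained per order of the expansion, so the symbol-class estimate does not close at $\varepsilon=h$ (this is precisely why the paper's rough calculus in Section~\ref{book_new-sect-2-3} insists on $\varepsilon\ge Ch|\log h|$). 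What rescues the bound is the special structure: $A_\varepsilon$ is a \emph{multiplication} operator, so the Schwartz kernel of $[P,A_\varepsilon]$ is $(2\pi h)^{-d}\int e^{i(x-y)\xi/h}p(x,\xi)\bigl(A_\varepsilon(y)-A_\varepsilon(x)\bigr)\,d\xi$, and Taylor-expanding $A_\varepsilon(y)-A_\varepsilon(x)$ to order $N$ in $y-x$ leaves a remainder kernel of size $O\bigl(|x-y|^{N+1}\|\partial^{N+1}A_\varepsilon\|_\infty\bigr)$ times a factor rapidly decaying in $|x-y|/h$; Schur's test then yields your bound. This derivation, not a generic citation, is the content of the step.

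Once one writes out that kernel argument, the mollification becomes superfluous: Taylor-expand $A$ itself to order $N=\lfloor\theta\rfloor+1$; the Taylor remainder obeys $|r_N(x,y)|\le C|x-y|^{\theta+1}\3\partial A\3_\theta$ directly from the H\"older continuity of $\partial^N A$ encoded in $\3\partial A\3_\theta$, and the identical Schur test gives $\|R_N\|\le Ch^{\theta+1}\3\partial A\3_\theta$ in one step, with no $A_\varepsilon$, no three-piece decomposition and no calibration of $\varepsilon$. Your proof is therefore valid but indirect: the $\varepsilon=h$ threading you emphasize is a side effect of a detour, not the mechanism making the estimate sharp.
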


\begin{proof}
Easy proof is left to the reader.
\end{proof}

\begin{proposition}\label{prop-26-2-11}
Assume that
\begin{gather}
\|\partial V\|_{\sC(B(0,2))}\le C_0 \label{26-2-32}\\
\shortintertext{and}
\mu \Def \|\partial  A\|_{\sC(B(0,2))} \le C_0.
\label{26-2-33}
\end{gather}
Let $U(x,y,t)$ be the Schwartz kernel of $e^{ih^{-1}tH_{A,V}}$. Then for $T\asymp 1$
\begin{enumerate}[label=(\roman*), fullwidth]
\item\label{prop-26-2-11-i}
Estimate
\begin{equation}
\|F_{t\to h^{-1}\tau} \chi_T(t) (hD_x)^\alpha(hD_y)^\beta
\psi_1 (x)\psi_2 (y) U\| \le C h^s
\label{26-2-34}
\end{equation}
holds  for all $\alpha:|\alpha|\le 2$, $\beta:|\beta|\le 2$, $s$ and all $\psi_1,\psi_2  \in \sC_0^\infty(B(0,1))$, such that
$\dist(\supp \psi_1, \supp \psi_2)\ge C_0T$ and $\tau\le c_0$; here $\|.\|$ means an operator norm from $\sL^2$ to $\sL^2$;
\item\label{prop-26-2-11-ii}
Estimate
\begin{multline}
\|F_{t\to h^{-1}\tau} \chi_T(t) (hD_x)^\alpha(hD_y)^\beta
\varphi_1 (hD_x)\varphi_2 (hD_y) U\| \le\\[3pt]
C h^s+
Ch^{\theta}\bigl(\3 A\3_{\theta+1}+\3 V\3_{\theta+1}\bigr)
 \label{26-2-35}
\end{multline}
holds for all $\alpha:|\alpha|\le 2$, $\beta:|\beta|\le 2$, $s$  and  all $\varphi_1,\varphi_2  \in \sC_0^\infty$, such that
$\dist(\supp \varphi_1, \supp \varphi_2)\ge C_0T$, and $\tau\le c_0$;
\item\label{prop-26-2-11-iii}
If also in  $B(0,2)$
\begin{equation}
\epsilon_0 \le |V|\le c
\label{26-2-36}
\end{equation}
then for a small constant $T=\epsilon$ estimate
\begin{equation}
\|F_{t\to h^{-1}\tau} \chi_T(t) (hD_x)^\alpha(hD_y)^\beta U\| \le
C h^s+
Ch^{\theta}\bigl(\3 A\3_{\theta+1}+\3 V\3_{\theta+1}\bigr)
\label{26-2-37}
\end{equation}
holds for all $\alpha:|\alpha|\le 2$, $\beta:|\beta|\le 2$, $s$  and all $\psi_1,\psi_2  \in \sC_0^\infty(B(0,1))$, such that
$\diam(\supp \psi_1\cup \supp \psi_2)\le \epsilon_0T$ and $|\tau|\le \epsilon$.
\end{enumerate}
\end{proposition}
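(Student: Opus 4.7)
The plan is to derive all three estimates from propagation-of-singularities for $e^{ih^{-1}tH}$ using the rough-symbol microlocal calculus of Section~\ref{book_new-sect-2-3}. The key observation is that although $A$ has only Hölder regularity, estimate (\ref{26-2-30}) makes it effectively smooth at scale $\varepsilon=h$, well below the threshold $\varepsilon=Ch|\log h|$ needed by that calculus. The price is that every commutator with a function of $A$ or $V$ loses an error of size $h^{\theta+1}(\3\partial A\3_\theta+\3\partial V\3_\theta)$ via Proposition~\ref{prop-26-2-10}; accumulated over finitely many iterations this produces the $h^\theta(\3 A\3_{\theta+1}+\3 V\3_{\theta+1})$ remainder appearing in (\ref{26-2-35}) and (\ref{26-2-37}).

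For part~\ref{prop-26-2-11-i} I would use a standard escape-function commutator scheme. The principal symbol of $H$ is $p(x,\xi)=(\xi-A(x))^2-V(x)$, and on $\{|p|\le c_0\}$ the Hamiltonian velocity $\dot x=2(\xi-A)$ is bounded by some $c_1$ under (\ref{26-2-32})--(\ref{26-2-33}). Choosing $C_0>c_1$, I would build a family of cutoffs interpolating between $\psi_2$ and $1-\psi_1$ and nest $N$ commutators with $H$ inside $\psi_1\,e^{ih^{-1}tH}\psi_2$; since the classical flow cannot cross the gap in time $|t|\le T$, each commutator yields a factor of $h$ and the total gain is $h^N$. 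Fourier-transforming against $\chi_T(t)$ and absorbing the $(hD_x)^\alpha (hD_y)^\beta$ factors using Proposition~\ref{prop-26-2-6} then yields (\ref{26-2-34}) for arbitrary $s$.

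For part~\ref{prop-26-2-11-ii} I would run the dual argument in momentum. Along the flow, $\dot\xi=-\partial_x p=2(\xi-A)\cdot\partial A-\partial V$ is bounded under the same hypotheses, so $\varphi_1(hD_x)$ and $\varphi_2(hD_y)$ with $C_0T$-separated supports cannot be linked by the classical flow in time $|t|\le T$. The same commutator scheme applies, but now each commutator $[H,\varphi(hD)]$ is evaluated modulo $h^{\theta+1}\3\partial A\3_\theta$ by Proposition~\ref{prop-26-2-10}, and accumulation of these losses produces the second term on the right-hand side of (\ref{26-2-35}).

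For part~\ref{prop-26-2-11-iii} the ellipticity assumption (\ref{26-2-36}) splits into two cases. If $V\le-\epsilon_0$ on $B(0,2)$, then $H\ge\epsilon_0/2$ microlocally there and $|\tau|\le\epsilon$ lies in the resolvent set; a direct elliptic parametrix for $H-\tau$, with rough-symbol calculus handled via Proposition~\ref{prop-26-2-10}, gives (\ref{26-2-37}) with the same Hölder loss. If $V\ge\epsilon_0$, the characteristic equation forces $|\xi-A|\ge\sqrt{\epsilon_0/2}$ and hence $|\dot x|\ge\sqrt{2\epsilon_0}$; for $T=\epsilon$ small enough the classical flow then displaces points by more than $\epsilon_0 T\ge\diam(\supp\psi_1\cup\supp\psi_2)$, and the escape-function argument of part~\ref{prop-26-2-11-i} applies. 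The main obstacle throughout is the quantitative bookkeeping of rough-commutator losses so that $N$ iterations yield the stated $h^\theta$ remainder rather than a power-of-$N$ degeneration; this is precisely where the sharpness of Proposition~\ref{prop-26-2-10} together with the effective smoothness scale $\varepsilon=h$ from (\ref{26-2-30}) plays the decisive role.
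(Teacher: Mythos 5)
Your treatment of parts \ref{prop-26-2-11-i} and \ref{prop-26-2-11-ii} is essentially the paper's: part \ref{prop-26-2-11-i} is a multiplication-operator escape-function argument that incurs no rough-commutator loss, and part \ref{prop-26-2-11-ii} is the dual $\xi$-side argument where Proposition~\ref{prop-26-2-10} produces the $h^{\theta}\bigl(\3 A\3_{\theta+1}+\3 V\3_{\theta+1}\bigr)$ remainder. The paper phrases \ref{prop-26-2-11-ii} as a Gårding-type iteration with nested operators $Q, Q', Q'',\dots$ rather than directly in ``momentum escape function'' language, but the mechanism is the same.

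Part \ref{prop-26-2-11-iii} has a genuine gap in the $V\ge\epsilon_0$ branch. You conclude from $|\dot x|\ge\sqrt{2\epsilon_0}$ that ``the escape-function argument of part~\ref{prop-26-2-11-i} applies.'' But the argument of \ref{prop-26-2-11-i} uses an escape function depending only on $x$, and that scheme only exploits an \emph{upper} bound on $\{\phi,p\}$ uniformly in $\xi$ --- it proves that $x$ cannot move far in time $T$, not that it must move. A lower bound $|\dot x|\ge\sqrt{2\epsilon_0}$ gives the magnitude of the velocity but not its sign relative to $\nabla\phi(x)$: for the incoming part of the wave $\{\phi,p\}$ is negative. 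To force displacement out of a ball of diameter $\epsilon_0 T$ one must first split $u$ into an outgoing and an incoming piece via $\xi$-cutoffs $\varphi(hD)$, and it is precisely this $\xi$-localization that costs the rough-commutator error. That is why the paper's proof explicitly says it ``needs Statement \ref{prop-26-2-11-ii}'' for the contribution of $u$ not confined to a small conic neighbourhood of $(y,\eta)$, and why (\ref{26-2-37}) inherits the $h^{\theta}\bigl(\3 A\3_{\theta+1}+\3 V\3_{\theta+1}\bigr)$ term at all. As written, your argument for this case would spuriously deliver $Ch^{s}$ with no rough loss, which is inconsistent with Remark~\ref{rem-26-2-12}(iv) that $V$ need not be smooth beyond $\sC^{\theta+1}$. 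You need to insert the $\varphi(hD)$-decomposition and invoke part \ref{prop-26-2-11-ii} for one of the two pieces before running a one-sided escape argument on the other.
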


\begin{proof}
Let $u= e^{ith^{-1}H}f$ with arbitrary $f$.

\begin{enumerate}[label=(\roman*), fullwidth]
\item\label{proof-26-2-11-i}
Statement \ref{prop-26-2-11-i} is easily proven by the same arguments as in the proof of Theorem~\ref{book_new-thm-2-1-2}: we consider just usual function $\phi(x)$ and operators of multiplication like $\upchi(\phi(x))$ so there are no ``bad'' commutators due to non-smoothness of $A$ or $V$.

\item\label{proof-26-2-11-ii}
Statement \ref{prop-26-2-11-ii} is also proven by the same arguments; however in this case $\phi=\phi(x,\xi)$ so we need to involve ``bad'' commutators but their contributions are bounded by
\begin{equation*}
C \| Q_1 u\| \cdot \Bigl( h^{1+\theta}\bigl(\3 A\3_{\theta+1}+
\3 V\3_{\theta+1}\bigr) \|u\| +h^{1+\delta} \|Q'u\|\Bigr)
\end{equation*}
in the right-hand expression while the left-hand expression is
$\epsilon h\|Q_1u\|^2$ where $Q$, $Q_1$, and $Q'$ are operators with symbols
$\upchi  (\phi(x,\xi))$, $\upchi_1 (\phi(x,\xi))$, and
$\upchi_1 (\phi(x,\xi)-\eta)$ respectively, $\eta>0$ is an arbitrarily small constant (so the latter symbol has a bit larger support than the former one), $\delta>0$ is a small exponent, $\upchi_1(t)=(-\upchi'(t))^{\frac{1}{2}}$, and f.e. $\upchi(t)=e^{-|t|^{-1}}$ as $t<0$, $\upchi(t)=0$ as $t\ge 0$.

Therefore we conclude that
\begin{equation*}
\|Qu\|\le C h^{\theta}\bigl(\3 A\3_{\theta+1}+
\3 V\3_{\theta+1}\bigr) \|u\| +C h^{\delta} \|Q'u\|
\end{equation*}
and similarly we can estimate $\|Q'u\|$ with $\|Q''u\|$ in the right-hand expression etc and thus we conclude that
\begin{equation*}
\|Qu\|\le C h^{\theta}\bigl(\3 A\3_{\theta+1}+
\3 V\3_{\theta+1}\|u\| + Ch^s\|u\|
\end{equation*}
which is what we need.

\item\label{proof-26-2-11-iii}
Statement \ref{prop-26-2-11-iii} is easily proven by the same arguments as in the proof of Theorem~\ref{book_new-thm-2-1-2}: we consider just usual function $\phi(x)$ and operators of multiplication like $\upchi(\phi(x))$ so there are no ``bad'' commutators due to non-smoothness of $A$ or $V$. However we need to consider a contribution of $u$ which is not confined to the small vicinity of $(y,\eta)$ and we need Statement~\ref{prop-26-2-11-ii} for this so the last term in the right-hand expression of (\ref{26-2-37}) is inherited.

We leave easy details to the reader.
\end{enumerate}
\end{proof}

\begin{remark}\label{rem-26-2-12}
\begin{enumerate}[label=(\roman*), fullwidth]
\item\label{rem-26-2-12-i}
Statement \ref{prop-26-2-11-i} means the finite propagation speed with respect to $x$;

\item\label{rem-26-2-12-ii}
Statement \ref{prop-26-2-11-ii} means the finite propagation speed with respect to $\xi$ and the last term in the right-hand expression of (\ref{26-2-35}) is due to the non-smoothness of $A$ and $V$;

\item\label{rem-26-2-12-iii}
Statement \ref{prop-26-2-11-iii} means that under assumption (\ref{26-2-36}) there actually is a propagation with respect to $x$;

\item\label{rem-26-2-12-iv}
So far we have not assumed that $V$ is very smooth function; we actually do not need it at all: it is sufficient to assume that $\partial V$ is very smooth in microscale $\varepsilon=h^{1-\delta}$; one can actually invoke more delicate arguments of the proof of Theorem~\ref{book_new-thm-2-3-1} and deal with microscale $\varepsilon= Ch|\log h|$.
\end{enumerate}
\end{remark}

Therefore in the framework of Proposition~\ref{prop-26-2-11}\ref{prop-26-2-11-iii} estimate
\begin{multline}
|F_{t\to h^{-1}\tau} \chi_T(t)
\bigl((hD_x)^\alpha (hD_y)^\beta U(x,y,t)\bigr)\bigr|_{x=y}|\le \\[3pt]
Ch^{1-d+s} T^{-s}+ C T^2 h^{-d+\theta}
\bigl(\3 A\3_{\theta+1}+\3 V \3_{\theta+1}\bigr)
\label{26-2-38}
\end{multline}
holds for all $\alpha:|\alpha|\le 2$, $\beta:|\beta|\le 2$, $s$  as $T=\epsilon$ and $|\tau|\le \epsilon$ where as usual
$\chi \in \sC^\infty _0([-1,-\frac{1}{2}]\cup [\frac{1}{2},1])$, $\chi_T(t)=\chi(t/T)$.

Let us consider $T\in (Ch, \epsilon)$; then we apply the standard rescaling
$t\mapsto tT^{-1}$, $x\mapsto xT^{-1}$, $h\mapsto hT^{-1}$ and assumptions (\ref{26-2-32}), (\ref{26-2-33}) are replaced by weaker assumptions
\begin{gather}
 T \|\partial V\|_{\sC(B(x,1))} \le C_0
\tag*{$\textup{(\ref*{26-2-32})}'$}\label{26-2-32-'}\\
\shortintertext{and}
T \|\partial  A\|_{\sC(B(x,1))} \le C_0.
\tag*{$\textup{(\ref*{26-2-33})}'$}\label{26-2-33-'}
\end{gather}
Further, $\3 A\3_{\theta+1}$ and $\3 V\3_{\theta+1}$ acquire factor $T^{\theta+1}$.

Furthermore, as $U(x,y,t)$ is a density with respect to $y$ we need to add factor $T^{-d}$ to the right-hand expression and due to $F_{t\to h^{-1}\tau}$ we need to add another factor $T$ and after these substitution and multiplications we arrive to

\begin{proposition}\label{prop-26-2-13}
Let $h\le T\le \epsilon$ and assumptions \ref{26-2-32-'}, \ref{26-2-33-'} and \textup{(\ref{26-2-36})} be fulfilled. Then  estimate \textup{(\ref{26-2-38})} holds.
\end{proposition}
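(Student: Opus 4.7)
The plan is to reduce Proposition~\ref{prop-26-2-13} to the $T\asymp 1$ case of~(\ref{26-2-38}) already established in the framework of Proposition~\ref{prop-26-2-11}\ref{prop-26-2-11-iii}, by the parabolic rescaling flagged in the paragraph immediately preceding the statement. First I would introduce the new variables $\tilde x=x/T$, $\tilde y=y/T$, $\tilde t=t/T$ and a rescaled semiclassical parameter $\tilde h=h/T\in(0,1]$ --- it is here that the hypothesis $T\ge h$ is used. Setting $\tilde A(\tilde x)=A(T\tilde x)$ and $\tilde V(\tilde x)=V(T\tilde x)$, the operator $H_{A,V}$ transforms, up to an overall factor $T^{-2}$, into an operator $\tilde H_{\tilde A,\tilde V}$ of the same form with $\tilde h$ in place of $h$. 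Since $h^{-1}t=\tilde h^{-1}\tilde t$, the propagator pulls back as $e^{ih^{-1}tH_{A,V}}\longleftrightarrow e^{i\tilde h^{-1}\tilde t\tilde H_{\tilde A,\tilde V}}$, so the left-hand side of~(\ref{26-2-38}) can be expressed in tilde variables.

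Next I would verify that the rescaled hypotheses are exactly those of Proposition~\ref{prop-26-2-11}\ref{prop-26-2-11-iii}: one has $\|\partial_{\tilde x}\tilde V\|_{\sC(B(\tilde x,2))}=T\|\partial V\|_{\sC(B(x,2T))}$ and similarly for $\tilde A$, so \ref{26-2-32-'} and \ref{26-2-33-'} yield~(\ref{26-2-32}) and~(\ref{26-2-33}) for the rescaled coefficients (after a trivial shift of center, using $T\le\epsilon<1$). The pointwise bound~(\ref{26-2-36}) is scale-invariant. Thus the $T\asymp 1$ version of~(\ref{26-2-38}) applies to $\tilde U$ and yields
\[
\bigl|F_{\tilde t\to\tilde h^{-1}\tilde\tau}\chi(\tilde t)\bigl[(\tilde hD_{\tilde x})^\alpha(\tilde hD_{\tilde y})^\beta\tilde U\bigr](\tilde x,\tilde x,\tilde t)\bigr|\le C\tilde h^{1-d+s}+C\tilde h^{-d+\theta}\bigl(\3\tilde A\3_{\theta+1}+\3\tilde V\3_{\theta+1}\bigr).
\]

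Unwinding the rescaling collects three scaling factors: the density character of $U$ in $y$ gives $U(x,y,t)=T^{-d}\tilde U(\tilde x,\tilde y,\tilde t)$, producing $T^{-d}$; the change of integration variable in $F_{t\to h^{-1}\tau}$ (with $\tilde\tau=\tau$) produces $T$; and the H\"older norms of~(\ref{26-2-31}) scale as $\3\tilde A\3_{\theta+1}=T^{\theta+1}\3A\3_{\theta+1}$, similarly for $\tilde V$. The semiclassical derivatives $(hD_x)^\alpha(hD_y)^\beta$ pull back to $(\tilde hD_{\tilde x})^\alpha(\tilde hD_{\tilde y})^\beta$ with no extra powers of $T$. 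Substituting $\tilde h=h/T$, the first term becomes $T\cdot T^{-d}\cdot(h/T)^{1-d+s}=h^{1-d+s}T^{-s}$ and the second becomes $T\cdot T^{-d}\cdot(h/T)^{-d+\theta}\cdot T^{\theta+1}=T^{2}h^{-d+\theta}$ multiplying the H\"older norms, which is exactly~(\ref{26-2-38}).

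The bulk of the work is bookkeeping; the only substantive point is that the commutator losses controlled via Proposition~\ref{prop-26-2-10} in the proof of Proposition~\ref{prop-26-2-11} scale homogeneously, which is automatic from the definition~(\ref{26-2-31}) of $\3\cdot\3_\theta$. The lower bound $T\ge h$ is essential: for $T<h$ one would have $\tilde h>1$ and the semiclassical microlocal machinery in the rescaled coordinates would collapse.
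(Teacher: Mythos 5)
Your proof is correct and follows exactly the rescaling argument the paper sketches in the two paragraphs preceding the proposition: rescale $t,x,h$ by $T^{-1}$, check that \ref{26-2-32-'}, \ref{26-2-33-'}, (\ref{26-2-36}) become the hypotheses of Proposition~\ref{prop-26-2-11}\ref{prop-26-2-11-iii}, and collect the factors $T^{-d}$, $T$, $T^{\theta+1}$. One small slip: with $h\mapsto h/T$ the operator $H_{A,V}$ transforms into $\tilde H_{\tilde A,\tilde V}$ with \emph{no} $T^{-2}$ factor (this invariance is exactly why $h^{-1}tH_{A,V}=\tilde h^{-1}\tilde t\tilde H_{\tilde A,\tilde V}$ holds cleanly), so the parenthetical ``up to an overall factor $T^{-2}$'' is wrong though harmless, since you then use the correct invariance of the propagator.
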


Next we apply our standard arguments:

\begin{proposition}\label{prop-26-2-14}
In the framework of proposition~\ref{prop-26-2-13}
\begin{multline}
|F_{t\to h^{-1}\tau} \bigl[\bar{\chi}_T(t)
\bigl((hD_x)^\alpha (hD_y)^\beta U(x,y,t)\bigr)\bigr]\bigr|_{x=y}|\le \\[4pt]
Ch^{1-d}+ C T^2 h^{-d+\theta}  \bigl(\3 A\3_{\theta+1}+\3 V \3_{\theta+1}\bigr)
\label{26-2-39}
\end{multline}
provided $\bar{\chi}\in \sC^\infty_0 ([-1,1])$ and
\begin{multline}
|\bigl[\bigl((hD_x -A(x))\cdot\boldupsigma\bigr)^\alpha
\bigl((hD_y-A(y))\cdot\boldupsigma\bigr)^\beta e(x,y,t)\bigr]\bigr|_{x=y} -\\[3pt]
\shoveright{\Weyl_{\alpha,\beta} (x)|\le}\\[3pt]
Ch^{1-d}\bigl(1+\|\partial A\|_{\sC(B(x,1))}+\|\partial V\|_{\sC(B(x,1))}\bigr) +
 C h^{-d+\frac{1}{2}(\theta+1)}
\bigl(\3 A\3_{\theta+1}+\3 V \3_{\theta+1}\bigr)^{\frac{1}{2}}
\label{26-2-40}
\end{multline}
where
\begin{gather}
\Weyl_{\alpha,\beta}(x) \Def \const \, h^{-d}
\int_{\{H(x,\xi) \le \tau\}}
 \bigl((\xi -A(x))\cdot\boldupsigma \bigr)^{\alpha+\beta}\, d\xi
\label{26-2-41}\\
\intertext{is the corresponding Weyl expression and}
H(x,\xi)= \bigl((\xi -A(x))\cdot \boldupsigma \bigr)^2-V(x);
\label{26-2-42}
\end{gather}
in particular $\Weyl_{\alpha,\beta}(x)=0$ as $|\alpha|+|\beta|=1$.
\end{proposition}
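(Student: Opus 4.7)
The plan is a two-step Tauberian argument: first derive the Fourier-side bound (\ref{26-2-39}) by dyadic decomposition of the cut-off, and then convert it into the pointwise bound (\ref{26-2-40}) via the Tauberian theorem, optimising in $T$.

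For (\ref{26-2-39}) I would take $\bar\chi\in\sC_0^\infty([-1,1])$ and $\chi\in\sC_0^\infty([-1,-1/2]\cup[1/2,1])$ with $\bar\chi=\bar\chi_{T_0}+\sum_{k\ge 0}\chi_{T_k}$, $T_0\asymp h$ and $T_k=2^k T_0\le T$. The base piece $\bar\chi_{T_0}$ contributes $O(h^{1-d})$ via the standard small-time parametrix, which is insensitive to the roughness of $A$ and $V$ on scales $\varepsilon\gtrsim h$ in view of (\ref{26-2-30}). Each annular piece is controlled by Proposition~\ref{prop-26-2-13}, i.e.\ by (\ref{26-2-38}). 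Summing the resulting bounds, the $Ch^{1-d+s}T_k^{-s}$ summands give (for $s$ large) a geometric series dominated by the smallest scale $T_0$, contributing $O(h^{1-d})$; the $CT_k^2h^{-d+\theta}(\3A\3_{\theta+1}+\3V\3_{\theta+1})$ summands give a series dominated by the largest scale $T$, contributing the second term on the right of (\ref{26-2-39}).

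For (\ref{26-2-40}) I would apply the standard Tauberian theorem to the monotone diagonal density associated with $[(hD_x)^\alpha(hD_y)^\beta e(x,y,\tau)]|_{x=y}$; this bounds its deviation from the short-time smoothed average by $CT^{-1}$ times $\sup_\tau$ of the left-hand side of (\ref{26-2-39}). I would then identify that smoothed average with $\Weyl_{\alpha,\beta}(x)$ after freezing the coefficients of $H_{A,V}$ at $x$, the freezing error being the standard $O(h^{1-d}(1+\|\partial A\|_{\sC(B(x,1))}+\|\partial V\|_{\sC(B(x,1))}))$ --- the first summand in (\ref{26-2-40}). The Tauberian remainder reads $CT^{-1}h^{1-d}+CT h^{-d+\theta}(\3A\3_{\theta+1}+\3V\3_{\theta+1})$, and balancing these two summands in $T\in(Ch,\epsilon)$ via $T^2\asymp h^{1-\theta}(\3A\3_{\theta+1}+\3V\3_{\theta+1})^{-1}$ produces exactly the second summand in (\ref{26-2-40}). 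The passage from $(hD_x)^\alpha(hD_y)^\beta$ to the gauge-invariant $((hD_x-A(x))\cdot\boldupsigma)^\alpha((hD_y-A(y))\cdot\boldupsigma)^\beta$ on the diagonal only adds lower-order polynomial-in-$A$ corrections that apply identically to the exact kernel and to the Weyl integral (which is manifestly gauge-covariant by construction), so they cancel up to errors already absorbed. Finally, if $|\alpha|+|\beta|=1$, the integrand in (\ref{26-2-41}) is odd in $\xi-A(x)$ on the ball $\{((\xi-A(x))\cdot\boldupsigma)^2\le \tau+V(x)\}$, which is symmetric under $\xi-A(x)\mapsto -(\xi-A(x))$, so $\Weyl_{\alpha,\beta}(x)=0$.

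The main difficulty is the Tauberian step: the theorem must be applied to a kernel acted on by the unbounded operators $(hD)^\alpha(hD)^\beta$, while the symbol of $H_{A,V}$ is only H\"older-regular, so the commutators needed to swap these operators past the spectral projector would in principle lose regularity. This is handled precisely by the quantitative ``bad commutator'' estimate of Proposition~\ref{prop-26-2-11}\ref{prop-26-2-11-ii}, which tracks the dependence on $\3A\3_{\theta+1}+\3V\3_{\theta+1}$; feeding that estimate through the Tauberian machine is what yields the second summand of (\ref{26-2-40}) after the $T$-optimization above.
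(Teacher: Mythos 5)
Your proposal is correct and follows essentially the same route as the paper: a dyadic decomposition of $\bar\chi_T$ and summation of \textup{(\ref{26-2-38})} over $C_0h\le |t|\le T$ (plus the trivial $Ch^{1-d}$ bound for $|t|\lesssim h$) gives \textup{(\ref{26-2-39})}, and then the standard Tauberian argument with $T$-optimisation gives \textup{(\ref{26-2-40})}. The one small point to flag is that your balancing value $T\asymp h^{(1-\theta)/2}\bigl(\3 A\3_{\theta+1}+\3 V\3_{\theta+1}\bigr)^{-1/2}$ must still respect the admissibility constraints of Proposition~\ref{prop-26-2-13}, namely $T\lesssim\bigl(\|\partial A\|_{\sC(B(x,1))}+\|\partial V\|_{\sC(B(x,1))}\bigr)^{-1}$, which the paper encodes in the $\min$ defining $T^*$ in \textup{(\ref{26-2-43})}; when that cap binds, the term $CT^{-1}h^{1-d}$ evaluates to $Ch^{1-d}\bigl(\|\partial A\|_{\sC}+\|\partial V\|_{\sC}\bigr)$, which is the same first summand you instead attribute to the coefficient-freezing error, so the conclusion is unaffected.
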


\begin{proof}
Obviously summation of (\ref{26-2-38}) over $C_0h \le |t|\le T$ and a trivial estimate by $Ch^{1-d}$ of the contribution of the interval $|t|\le C_0 h$ implies (\ref{26-2-39}).

Then the standard Tauberian arguments and (\ref{26-2-39}) imply that the left-hand expression of (\ref{26-2-40}) does not exceed
\begin{equation*}
CT^{-1}h^{1-d} +
CT h^{-d+\theta}\bigl(\3 A\3_{\theta+1}+\3 V \3_{\theta+1}\bigr).
\end{equation*}
Optimizing with respect to $T\le \epsilon$ such that \ref{26-2-32-'},
\ref{26-2-33-'} hold we pick up $T=T^*$ with
\begin{multline}
T^*=\\
 \epsilon \min \Bigl(1,\,
\bigl(\|\partial A\|_{\sC(B(x,1))} +\|\partial V\|_{\sC(B(x,1))} \bigr)^{-1},\,
h^{-\frac{1}{2}(\theta-1)}
\bigl(\3 A\3_{\theta+1}+\3 V \3_{\theta+1}\bigr)^{-\frac{1}{2}}\Bigr).
\label{26-2-43}
\end{multline}
Meanwhile the Tauberian formula and (\ref{26-2-38}) imply that the contribution of an interval $\{t: |t|\asymp T\}$ with $ h\le T\le T^*$ to the Tauberian expression does not exceed the right-hand expression of (\ref{26-2-38}) divided by $T$, i.e.
\begin{equation*}
Ch^{1-d+s}T^{-s-1} + C T h^{-d+\theta}  \|\partial  A\|_{\sC^{\theta}};
\end{equation*}
summation over $T_*\Def h^{1-\delta}\le T\le T^*$ results in  the right-hand expression of (\ref{26-2-40}).

So, we need to calculate only the contribution of $\{t:|t|\le T_*\}$ but one can see easily that modulo indicated error  it coincides with $\Weyl_{\alpha,\beta}$.
\end{proof}

\begin{remark}\label{rem-26-2-15}
As $d\ge 3$  one can skip assumption \textup{(\ref{26-2-36})}.
\end{remark}

Indeed, we can apply the standard rescaling technique: $x\mapsto x \ell^{-1}$,
$h\mapsto \hbar= h\ell^{-\frac{3}{2}}$, $A\mapsto A \ell^{-\frac{1}{2}}$, $V\mapsto V\ell^{-1}$ with
$\ell =  \max(\epsilon|V| \nu^{-1},\ h^{\frac{2}{3}}\nu^{-\frac{1}{3}})$,
$\nu =(1+|\partial V|_{\sC})$; see Section~\ref{book_new-sect-5-1}.

\subsection{Application}
\label{sect-26-2-2-2}

Let us apply developed technique to estimate a minimizer.
\begin{proposition}\label{prop-26-2-16}
Let $\kappa\le c$   and let $A$ be a minimizer. Let
\begin{equation}
\mu \Def \|\partial  A\|_{\sC}\le C h^{-1+\delta}.
\label{26-2-44}
\end{equation}
As $d=2$ let assumption \textup{(\ref{26-2-36})} be also fulfilled.  Then as
$\theta\in (1,2)$ estimate
\begin{multline}
\|\partial  A\|_{\sC^{\theta-1}} + h^{\theta-1} \|\partial  A\|_{\sC^{\theta}} \le \\
C\kappa \bigl( 1+ \|V\|_{\sC^1} + h^{\frac{1}{2}(\theta-1)} \|V\|_{\sC^{\theta+1}} ^{\frac{1}{2}}\bigr)+  C\|\partial A\|'\label{26-2-45}
\end{multline}
holds with
\begin{equation}
\|\partial A\|'\Def\sup _y \|\partial A\|_{\sL^2(B(y,1))}.
\label{26-2-46}
\end{equation}
\end{proposition}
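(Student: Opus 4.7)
The plan is to combine the Euler--Lagrange equation (\ref{26-2-14}) for a minimizer with the pointwise microlocal estimate (\ref{26-2-40}) (in which the Weyl term vanishes, since $|\alpha|+|\beta|=1$) and then apply standard elliptic (Schauder-type) regularity for the Laplacian.

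First I would write (\ref{26-2-14}) as $\Delta A_j = \tfrac{\kappa h^2}{2}\Phi_j$ with $F_j := \tfrac{\kappa h^2}{2}\Phi_j$. Next, I would apply Proposition~\ref{prop-26-2-14} with $|\alpha|+|\beta|=1$. Because $\Weyl_{\alpha,\beta}(x)=0$ in that case, estimate (\ref{26-2-40}) becomes a genuine bound on $\Phi_j(x)$ itself. In $d=3$ (the main case; for $d=2$ we invoke Remark~\ref{rem-26-2-15}/assumption (\ref{26-2-36})), this yields
\begin{equation*}
\|F\|_{\sL^\infty} \le C\kappa\bigl(1+\|\partial A\|_{\sC}+\|\partial V\|_{\sC}\bigr)
+ C\kappa h^{\frac{1}{2}(\theta-1)}\bigl(\|\partial A\|_{\sC^{\theta}} + \|V\|_{\sC^{\theta+1}}\bigr)^{\frac{1}{2}}.
\end{equation*}
A parallel $\sC^{\theta-1}$ bound on $F$ is obtained by the same microlocal argument (applied to compare $\Phi$ at two nearby points using the finite-propagation estimate of Proposition~\ref{prop-26-2-11} and the microscale smoothness (\ref{26-2-30}) of $A$): one gets an analogous inequality with $\|\partial A\|_{\sC}$, $\|\partial V\|_{\sC}$ replaced by $\|\partial A\|_{\sC^{\theta-1}}$, $\|V\|_{\sC^{\theta}}$ and the same high-order error term.

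Then I would invoke the standard Schauder/Calder\'on--Zygmund estimates for $\Delta A = F$ on balls of size $\asymp 1$:
\begin{equation*}
\|\partial A\|_{\sC^{\theta-1}(B(y,1/2))} \le C\|F\|_{\sC^0(B(y,1))} + C\|\partial A\|_{\sL^2(B(y,1))},
\end{equation*}
\begin{equation*}
\|\partial A\|_{\sC^{\theta}(B(y,1/2))} \le C\|F\|_{\sC^{\theta-1}(B(y,1))} + C\|\partial A\|_{\sL^2(B(y,1))},
\end{equation*}
taking $\sup_y$ and using the definition (\ref{26-2-46}) of $\|\partial A\|'$. Plugging in the bounds on $\|F\|_{\sC^0}$ and $\|F\|_{\sC^{\theta-1}}$ gives an inequality whose left-hand side is the target quantity and whose right-hand side contains, besides the desired terms, the potentially dangerous contributions $C\kappa\|\partial A\|_{\sC}$, $C\kappa\|\partial A\|_{\sC^{\theta-1}}$, and $C\kappa h^{\frac{\theta-1}{2}}\|\partial A\|_{\sC^{\theta}}^{\frac{1}{2}}$ (and the same with $h^{\theta-1}$ prefactor from the second Schauder inequality). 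Finally I would absorb these: since $\kappa\le c$ is small, the first two can be moved to the left-hand side directly (using $\|\partial A\|_{\sC}\le \|\partial A\|_{\sC^{\theta-1}}$); the remaining $\|\partial A\|_{\sC^\theta}^{1/2}$ term is absorbed by AM--GM, $C\kappa h^{\frac{\theta-1}{2}}\|\partial A\|_{\sC^\theta}^{1/2}\le \tfrac{1}{2}h^{\theta-1}\|\partial A\|_{\sC^\theta} + C\kappa^2$, which is consistent with the right-hand side of (\ref{26-2-45}).

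The main obstacle is step three: extracting a Hölder bound on $\Phi$ (equivalently, on $F$) rather than merely the $\sC^0$ bound that (\ref{26-2-40}) gives directly. I expect this to be handled by revisiting the Tauberian estimate in Proposition~\ref{prop-26-2-14}, this time comparing the propagators at $x$ and $x'=x+z$ with $|z|\ll T^*$ and using that Proposition~\ref{prop-26-2-11} controls the $x$-dependence on the scale $T$ (finite propagation speed) while (\ref{26-2-30}) provides enough smoothness of $A$ on the microscale. The assumption (\ref{26-2-44}) $\mu\le Ch^{-1+\delta}$ is crucial here to keep $T^*$ macroscopic after the rescaling so that the comparison at scale $|z|\le 1$ remains meaningful; it is also what prevents the quadratic correction from destroying the $\sC^\theta$-level estimate. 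Everything else is bookkeeping.
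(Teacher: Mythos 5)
Your overall plan is the same as the paper's (Euler--Lagrange equation $\Delta A_j = \tfrac{\kappa h^2}{2}\Phi_j$, microlocal estimate on the right-hand side, elliptic regularity, absorption of the $\partial A$ terms by smallness of $\kappa$ and AM--GM). But the crucial step — where the factor $h^{\theta-1}$ in front of $\|\partial A\|_{\sC^\theta}$ comes from — is handled incorrectly.

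You claim a ``parallel $\sC^{\theta-1}$ bound on $F$'' with the \emph{same} right-hand side as the $\sL^\infty$ bound, and then attribute the $h^{\theta-1}$ prefactor to Schauder theory. Both parts are wrong. Schauder introduces no $h$-factors, and an unweighted $\sC^{\theta-1}$ bound on $\Phi$ with the same magnitude as $\|\Phi\|_{\sC}$ is not available: the microlocal analysis controls each extra derivative of $e(x,y,\tau)$ only at the cost of an $h^{-1}$, so $\|\Phi\|_{\sC^{\theta-1}} \lesssim \|\Phi\|_\sC^{2-\theta}\,\|\partial\Phi\|_\sC^{\theta-1}$ inevitably loses a factor $h^{1-\theta}$ relative to $\|\Phi\|_\sC$. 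What the paper actually does is apply Proposition~\ref{prop-26-2-14} once more with $|\alpha|+|\beta|=2$ (the form $h\partial_x$ hits the density and brings in one more $hD$), which gives a bound on $\|h\,\partial\Delta A\|_{\sC}$ by the \emph{same} right-hand side as the bound on $\|\Delta A\|_{\sC}$; this is estimate (\ref{26-2-47}). Interpolating between $\|\Delta A\|_{\sC}$ and $\|\partial\Delta A\|_{\sC}$ then gives a bound on $h^{\theta'-1}\|\Delta A\|_{\sC^{\theta'-1}}$, not on $\|\Delta A\|_{\sC^{\theta'-1}}$, and elliptic regularity turns that into claim (\ref{26-2-48}): $h^{\theta'-1}\|\partial A\|_{\sC^{\theta'}}\le (\ref{26-2-47})+C\|\partial A\|'$. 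The absorption step you sketch at the end (using $\kappa\le c$, $\|\partial A\|_\sC\le\epsilon\|\partial A\|_{\sC^\theta}+C_\epsilon\|\partial A\|'$, and AM--GM for the $\|\partial A\|_{\sC^\theta}^{1/2}$ term) is then correct and matches the paper. So the gap is precisely the provenance of the $h^{\theta-1}$ weight: it comes from the $h$-per-derivative scaling of the microlocal estimate and the resulting interpolation between $\sC^0$ and $h$-weighted $\sC^1$ bounds on $\Delta A$, not from a direct Hölder estimate on $\Phi$ and certainly not from Schauder.
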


\begin{proof}
Consider expression for $\Delta A$. According to equation (\ref{26-2-14}) and Proposition~\ref{prop-26-2-14}    $|\Delta A|+ |h\partial \Delta A| $ does not exceed the right-hand expression of (\ref{26-2-40}) multiplied by
$C\kappa h^{d-1}$ i.e.
\begin{multline}
\|\Delta A\|_\sC+ \|h\partial \Delta A\|_\sC \le\\
C\kappa \Bigl( 1+|\partial A|_{\sC}  + |\partial V|_{\sC} +
 h^{\frac{1}{2}(\theta-1)}
\|\partial  A\|_{\sC^{\theta}}^{\frac{1}{2}}+ h^{\frac{1}{2}(\theta-1)} \| \partial V\|_{\sC^\theta}^{\frac{1}{2}}\Bigr)
\label{26-2-47}
\end{multline}
where we replaced $\3 A\3_{\theta+1}$ and $\3 V\3_{\theta+1}$ by larger
$\| \partial A\|_{\sC^\theta}$ and $\| \partial V\|_{\sC^\theta}$ respectively.

Then the regularity theory for elliptic equations implies that
\begin{claim}\label{26-2-48}
For any $\theta '\in (1,2)$ \ $h^{\theta'-1} \|\partial A\|_{\sC^{\theta'}}$ does not exceed this expression (\ref{26-2-47}) plus $C\|\partial A\|'$.
\end{claim}
Note that $\|\partial  A\|_{\sC}$ does not exceed
$\epsilon \|\partial  A\|_{\sC^\theta}+ C'_\epsilon \|\partial A\|'$ with arbitrarily small constant $\epsilon>0$ and therefore
\begin{equation}
h^{\theta-1} \|\partial A\|_{\sC^{\theta} }+
\epsilon^{-1}\|\partial A\|_{\sC}
\label{26-2-49}
\end{equation}
does not exceed expression (\ref{26-2-47}) plus $C'_\epsilon \|\partial A\|'$ where we used (\ref{26-2-48}) for  $\theta'=\theta$.

Comparing (\ref{26-2-49}) and (\ref{26-2-47}) we conclude that for
$\kappa \le c$ and sufficiently small constant $\epsilon>0$ we can eliminate in the derived inequality both contributions of $\partial A$ to (\ref{26-2-47}) thus we arrive to (\ref{26-2-45}). \end{proof}

Having this strong estimate to $A$ allows us to prove

\begin{theorem}\label{thm-26-2-17}
Let $\kappa\le c$,  \textup{(\ref{26-2-44})} be fulfilled, and let $d= 3$. Assume that
\begin{gather}
\|V\|_{\sC^{\theta+1}}\le c\label{26-2-50}\\
\intertext{with $\theta \in (1,2)$. Then}
\E^*= \Weyl_1 +O(h^{2-d})\label{26-2-51}\\
\intertext{and a minimizer $A$ satisfies}
\|\partial A\|\le C \kappa^{\frac{1}{2}}h^{\frac{1}{2}}\label{26-2-52}\\
\shortintertext{and}
\|\partial A\|_{\sC^{\theta-1}} + h^{\theta-1}\|\partial A\|_{\sC^{\theta}}\le
C \kappa^{\frac{1}{2}}h^{\frac{1}{2}} +C\kappa.
\label{26-2-53}
\end{gather}
\end{theorem}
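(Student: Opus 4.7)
The argument is a bootstrap combining three ingredients: the microlocal spectral asymptotics of Proposition~\ref{prop-26-2-14} applied to the operator $H_{A,V}$ for a minimizer $A$, the elliptic a priori estimate of Proposition~\ref{prop-26-2-16} for the Euler--Lagrange equation~(\ref{26-2-14}), and the variational comparison of Proposition~\ref{prop-26-2-5}. The upper bound $\E^{*}\le \Weyl_{1}+Ch^{-1}$ is immediate by taking $A=0$ and invoking the Weyl formula~(\ref{26-2-3}), which is available since $V\in \sC^{\theta+1}$ with $\theta>1$.

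For the lower bound I would apply Proposition~\ref{prop-26-2-14} with $\alpha=\beta=0$ and combine it with the standard Tauberian argument (which gains one extra power of $h$ over the pointwise estimate on $e(x,x,\tau)$ since $\Tr^{-}$ is an integrated spectral quantity) to get
\begin{equation*}
\bigl|\Tr^{-}(H_{A,V})-\Weyl_{1}\bigr|\le
C h^{2-d}\bigl(1+\|\partial A\|_{\sC}+\|\partial V\|_{\sC}\bigr)
+ C h^{1-d+\frac{\theta+1}{2}}\bigl(\|\partial A\|_{\sC^{\theta}}+\|\partial V\|_{\sC^{\theta}}\bigr)^{\frac{1}{2}}.
\end{equation*}
Under~(\ref{26-2-50}) the $V$-contributions are $O(1)$, so only the $A$-norms remain. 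Feeding in Proposition~\ref{prop-26-2-16}, which under $\kappa\le c$ and~(\ref{26-2-50}) yields $\|\partial A\|_{\sC^{\theta-1}}+h^{\theta-1}\|\partial A\|_{\sC^{\theta}}\le C+C\|\partial A\|'$ with $\|\partial A\|'\le p\Def \|\partial A\|$, and noting that $\theta-1>0$ implies $\|\partial A\|_{\sC}\le C(1+p)$, the key cancellation (for $d=3$) is that the two Tauberian exponents collapse: $h^{1-d+(\theta+1)/2}\cdot h^{-(\theta-1)/2}=h^{2-d}$. Hence the whole microlocal remainder reduces to $C h^{-1}(1+p)$.

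Consequently $\E(A)\ge \Weyl_{1}-Ch^{-1}(1+p)+\kappa^{-1}h^{-2}p^{2}$, and minimizing the right-hand side in $p\ge 0$ (optimum at $p_{*}\asymp \kappa h$) produces $\E^{*}\ge \Weyl_{1}-Ch^{-1}-C'\kappa\ge \Weyl_{1}-C''h^{-1}$, which is~(\ref{26-2-51}). With this in hand, Proposition~\ref{prop-26-2-5} applied with $M= h^{-1}$ (after a harmless shrinkage of the threshold $\kappa^{*}$) gives $\|\partial A\|^{2}\le C\kappa h$, i.e.~(\ref{26-2-52}); plugging this sharper $L^{2}$-bound back into Proposition~\ref{prop-26-2-16} produces~(\ref{26-2-53}). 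The main obstacle is the bootstrap step above: the Hölder scale in Proposition~\ref{prop-26-2-16} must line up \emph{exactly} with the non-smoothness remainder in Proposition~\ref{prop-26-2-14} so that the error stays at $h^{2-d}$, and the linear-in-$p$ remainder must be absorbed by the quadratic magnetic-energy term in $\E(A)$, which is why the smallness $\kappa\le c$ is essential.
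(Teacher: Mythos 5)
Your overall architecture matches the paper's: upper bound from $A=0$, lower bound by combining the microlocal trace estimate with the a~priori bounds on $\partial A$ from Proposition~\ref{prop-26-2-16}, absorbing the linear-in-$\|\partial A\|$ remainder into the magnetic energy using $\kappa\le c$, and then reading off (\ref{26-2-52}) and (\ref{26-2-53}). (The paper extracts (\ref{26-2-52}) directly from the analogue of your lower bound rather than re-invoking Proposition~\ref{prop-26-2-5}, but that is a harmless variant.)

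The gap is in the lower bound's central displayed inequality
\begin{equation*}
\bigl|\Tr^{-}(H_{A,V})-\Weyl_{1}\bigr|\le
C h^{2-d}\bigl(1+\mu+\|\partial V\|_{\sC}\bigr)
+ C h^{1-d+\frac{\theta+1}{2}}\bigl(\|\partial A\|_{\sC^{\theta}}+\|\partial V\|_{\sC^{\theta}}\bigr)^{\frac{1}{2}},
\end{equation*}
which you assert follows from ``Proposition~\ref{prop-26-2-14} plus the standard Tauberian argument.'' It does not. The Tauberian argument applied to the Fourier estimate (\ref{26-2-39}) (this is where the extra $h/T$-gain you mention actually lives) compares $\Tr^{-}(H_{A,V})$ with its \emph{Tauberian expression}, i.e. the time-truncated reconstruction, not with $\Weyl_1$. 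You cannot instead obtain the extra power of $h$ by integrating the pointwise bound (\ref{26-2-40}) in $\tau$ to get $e_1$: since $e_1(x,x,0)=-\int_{-\infty}^{0} e(x,x,\tau)\,d\tau$ and the integration range has length $O(1)$, this preserves the order $h^{1-d}$ and gains nothing. So you have the Tauberian error but still need the ``Weyl error,'' i.e.\ the paper's claim (\ref{26-2-56}), and that is not free for a potential $A$ that is only $\sC^{1+\theta}$, $\theta<2$. The paper's proof of (\ref{26-2-56}) (parts (b)--(c)) replaces $A$ by the mollification $A_\varepsilon$, notes that the naive error $Ch^{-d}\|A-A_\varepsilon\|_{\sC}\lesssim Ch^{\theta+1-d-O(\delta)}\|\partial A\|_{\sC^\theta}$ is marginally worse than $h^{2-d}$, uses this to establish a preliminary weaker version of (\ref{26-2-51})--(\ref{26-2-53}) with a $h^{-\delta_1}$ loss, and only then closes the argument with a two-term successive approximation in which the first-order correction (\ref{26-2-59}) kills the Weyl expression because $(\xi-A_\varepsilon)\cdot(A-A_\varepsilon)$ is odd in $\xi-A_\varepsilon$. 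None of this appears in your proposal; as written you are asserting exactly the inequality that this bootstrap exists to prove.
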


\begin{proof}
(a) In virtue of (\ref{26-2-39}) the Tauberian error when calculating
$\Tr (H^-_{A,V})$ does not exceed the right-hand expression of (\ref{26-2-39}) multiplied by $C T^{-2}$ i.e.
\begin{equation}
Ch^{1-d}T^{-2} +
C  h^{-d+\theta}  \bigl(\3 A\3_{\theta+1}+\3 V \3_{\theta+1}\bigr).
\label{26-2-54}
\end{equation}
Assumption (\ref{26-2-50}) allows us to simplify this expression and take $T\asymp (1+\mu)^{-1}$; applying estimate (\ref{26-2-45}) we conclude that the Tauberian error does not exceed
\begin{equation}
C(1+\mu)^2 h^{2-d}+ C(\kappa +\|\partial A\|') h^{2-d}.
\label{26-2-55}
\end{equation}
We claim that
\begin{claim}\label{26-2-56}
Weyl error\footnote{\label{foot-26-5} I.e. error when we replace Tauberian expression by Weyl expression.} when calculating $\Tr (H^-_{A,V})$ also does not exceed (\ref{26-2-55}).
\end{claim}
Then
\begin{multline}
\E (A) \ge \\
\Weyl_1 - C(1+\mu)^2 h^{2-d}- C(\kappa +\|\partial A\|') h^{2-d} +
\kappa^{-1} h^{1-d} \|\partial A\|^2 \ge \\
\Weyl_1 - Ch^{2-d} +  \frac{1}{2\kappa } h^{1-d} \|\partial A\|^2
\label{26-2-57}
\end{multline}
because $\mu \le C\|\partial A\|' +1$ due to (\ref{26-2-45}) and assumption (\ref{26-2-50}). This implies an estimate of $\E^*$ from below and combining with the estimate $\E^* \le \E^*(0) =\Weyl_1 + Ch^{2-d}$ from above we arrive to (\ref{26-2-51}) and (\ref{26-2-52}) and then (\ref{26-2-53}) due to (\ref{26-2-45}) and assumption (\ref{26-2-50}).

\bigskip\noindent
(b) To prove (\ref{26-2-56}) let us plug $A_\varepsilon$ instead of $A$ into $e_1(x,x,0)$. Then in virtue of the rough microlocal analysis contribution to Weyl error of $\{t: T_* \le |t|\le \epsilon\}$ with $T_*=h^{1-\delta}$ would be negligible and contribution of $\{t: |t|\le T_*\}$ would be
$\Weyl_1 + O(h^{2-d})$.

\bigskip\noindent
(c) Now let us calculate an error which we made plugging $A_\varepsilon$ instead of $A$ into $e_1(x,x,0)$. Obviously it does not exceed
$Ch^{-d}\|A-A_\varepsilon\|_\sC $ and since
$\|A-A_\varepsilon\|_\sC \le
C\varepsilon ^{\theta+1}\|\partial A\|_{\sC^\theta}$ this error does not exceed
$Ch^{\theta+1-d-4\delta}\|\partial A\|_{\sC^\theta}$ which is marginally worse than what we are looking for.

However it is good enough to recover a weaker version of (\ref{26-2-51}) and (\ref{26-2-52}) with an extra factor $h^{-\delta_1}$ in their right-hand expressions. Then (\ref{26-2-45}) implies a bit  weaker version of (\ref{26-2-53}) and in particular that its left-hand expression does not exceed $C$.

Knowing this let us consider the two term approximation. With the above knowledge one can prove easily that the error in two term approximation does not exceed $Ch^{3-d -\delta'}$ with $\delta '= 100\delta$.

Then the second term in the Tauberian expression is
\begin{equation}
\int \bigl((H_{A,V}-H_{A_\varepsilon,V})e^\T_{(\varepsilon)}(x,y,0)\bigr)
\bigr|_{y=x}\,dx.
\label{26-2-58}
\end{equation}
where subscript $_{(\varepsilon)}$ means that we plugged $A_\varepsilon$ instead of  $A$ and superscript $^\T$ means that we consider Tauberian expression with $T=T^*=\epsilon$. But then the contribution of $\{t: T_*\le |t|\le T^*\}$ is also negligible and modulo $Ch^{\theta+2-d-4\delta}\|\partial A\|_{\sC^\theta}$ we get a Weyl expression. However
\begin{equation}
(H_{A,V}-H_{A_\varepsilon,V}) = -2(\xi -A_\varepsilon)\cdot (A-A_\varepsilon)+
|A-A_\varepsilon|^2
\label{26-2-59}
\end{equation}
and the first term kills Weyl expression as an integrand is odd with respect to $(\xi -A_\varepsilon)$ while the second as one can see easily makes it smaller than $Ch^{3-d -\delta'}$. Therefore (\ref{26-2-56}) has been proven.
\end{proof}

\begin{remark}\label{rem-26-2-18}
\begin{enumerate}[fullwidth, label=(\roman*)]
\item\label{rem-26-2-18-i}
For $d=2$ we cannot drop assumption (\ref{26-2-36}) at the stage we did it for $d\ge 3$. However results of the next section allow us to cure this problem using partition-and-rescaling technique.

\item\label{rem-26-2-18-ii}
Actually  as $V\in \sC^{2,1}$ we have an estimate
\begin{equation}
|\partial  A(x)-\partial A(y)|\le C \kappa |x-y| (|\log |x-y||+1) + C\mu.
\label{26-2-60}
\end{equation}
Combining with (\ref{26-2-52}) we conclude that
\begin{equation}
\|\partial A\|_{\sC} \le C \kappa^{(d+1)/(d+2)}|\log h|^{d/(d+2)} h^{1/(d+2)}
\label{26-2-61}
\end{equation}
\item\label{rem-26-2-18-iii}
If (\ref{26-2-50}) holds for $(h\partial )^m V$ with $m\in \bZ^+$ then (\ref{26-2-52}) and (\ref{26-2-52}) also hold for $(h\partial )^m A$ instead of $A$; further, if  $(h\partial )^m V\in \sC^{2,1}$ then (\ref{26-2-60}) and (\ref{26-2-61}) also hold for $(h\partial )^m A$ instead of $A$.
\end{enumerate}
\end{remark}

\subsection{Classical dynamics and sharper estimates}
\label{sect-26-2-2-3}

Now we want to improve remainder estimate $O(h^{2-d})$ to $o(h^{2-d})$. Sure, we need to impose condition to the classical dynamical system and as
$|\partial A |=O(h^\delta)$ with $\delta>0$ due to (\ref{26-2-61}) it should be dynamical system associated with the Hamiltonian flow generated by $H_{0,V}$:

\begin{claim}\label{26-2-62}
The set of periodic points of the dynamical system associated with Hamiltonian flow generated by $H_{0,V}$ has measure $0$ on the energy level $0$.
\end{claim}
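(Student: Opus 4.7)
The plan is to prove (\ref{26-2-62}) by combining a far-field escape argument for the Hamiltonian flow of $H_{0,V}(x,\xi) = |\xi|^2 - V(x)$ with a real-analytic rigidity result ruling out open families of periodic orbits on the zero-energy surface.

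First I decompose the periodic set by period. Equilibria $\{(x,0): V(x)=0,\ \nabla V(x)=0\}$ form a proper analytic subvariety of the zero-energy surface and thus have surface measure zero, so it suffices to control the set $P$ of non-equilibrium periodic points. Writing $P = \bigcup_{n\in\bN} P_n$ with $P_n = \{z \in \{H_{0,V}=0\} : \Phi_T z = z \text{ for some } T\in [1/n,n]\}$, I reduce the claim to showing $\mes(P_n) = 0$ for every $n$.

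Second, I confine $P_n$ to a compact region of phase space via the far-field virial identity. The potential $V$ in the paper is positive with $V(x)\to 0$ at infinity and satisfies $x\cdot\nabla V(x) < 0$ for $|x|$ large (Coulomb decay at the nuclei and $V^{\TF}(x)\sim c|x|^{-4}$ in the far field). On any zero-energy trajectory, $|\xi|^2 = V(x)$, and Hamilton's equations give
\begin{equation*}
\frac{d}{dt}(x\cdot\xi) = 2|\xi|^2 - x\cdot\nabla V(x) = 2V(x) - x\cdot\nabla V(x),
\end{equation*}
which is strictly positive in the far field. Thus $x\cdot\xi$ is strictly monotone once a trajectory leaves a sufficiently large ball, precluding periodicity, so $P_n$ is contained in a fixed compact set $K_n$.

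Third, I close the argument on $K_n$ by real-analytic rigidity. Away from the Coulomb singularities $V$ is real-analytic and so is $\Phi_T$ jointly in $(T,z)$. If $\mes(P_n)>0$, a Fubini/coarea argument applied to the smooth return-period function on the regular part of $P_n$ yields a single $T_0\in[1/n,n]$ for which $F_{T_0}:= \{z : \Phi_{T_0} z = z\}\cap\{H_{0,V}=0\}$ has positive surface measure. Since $F_{T_0}$ is a real-analytic subvariety of the energy surface, positive measure forces it to contain a full connected component, i.e.\ $\Phi_{T_0} = \mathrm{id}$ there; this contradicts the escape property of the previous step and completes the proof modulo the singularity issue.

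The main obstacle is the Coulomb-type singularity of $V$ at each nucleus, where the flow fails to be smooth and the analyticity argument cannot be applied naively. I would handle this by a Levi--Civita/Kustaanheimo--Stiefel regularization local to each singularity, under which the zero-energy Kepler flow becomes a smooth analytic flow on an extended phase space; the analyticity argument then applies to the regularized system, and one checks separately that collision trajectories (those hitting a nucleus in finite time) form a set of measure zero on $\{H_{0,V}=0\}$ by a standard scattering estimate for the Kepler problem near each nucleus. The most delicate point is obtaining uniform regularity when several nuclei of different charges are present; this is handled by localizing to one nucleus at a time and treating near-collision but non-collision trajectories as regular perturbations of the regularized Kepler flow.
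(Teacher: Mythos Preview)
The fundamental issue is that \textup{(\ref{26-2-62})} is not a theorem in the paper at all: it is a \emph{standing hypothesis} on the potential $V$, introduced precisely because the sharper remainder $o(h^{2-d})$ in Theorem~\ref{thm-26-2-19} cannot be obtained without some global dynamical assumption. Throughout Section~\ref{sect-26-2} the potential $V$ is only assumed to satisfy $V\in\sC^{2,1}$ (see footnote~\ref{foot-26-2}) and a mild growth bound; nothing more. For such $V$ the periodic set on $\{H_{0,V}=0\}$ can certainly have positive measure (take $V$ radial so that every orbit is periodic), so \textup{(\ref{26-2-62})} is genuinely an extra condition and there is nothing to prove. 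The paper confirms this reading explicitly: Theorem~\ref{thm-26-2-19}, Proposition~\ref{prop-26-2-22}, Theorem~\ref{thm-26-2-28}(ii), etc.\ all say ``let assumption \textup{(\ref{26-2-62})} be fulfilled''.

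Even if one reinterprets your task as verifying \textup{(\ref{26-2-62})} for the specific Thomas--Fermi potential $W^{\TF}$ used in the application (Sections~\ref{sect-26-3}--\ref{sect-26-4}), your argument breaks down at the analyticity step. The Thomas--Fermi potential is not real-analytic: it is determined by the nonlinear equation $\Delta W^{\TF}= c(W^{\TF}+\nu)_+^{3/2} - 4\pi\sum Z_m\delta_{\y_m}$, and the map $t\mapsto t_+^{3/2}$ destroys analyticity wherever $W^{\TF}+\nu$ vanishes. Hence the set $\{z:\Phi_{T_0}z=z\}$ need not be an analytic subvariety, and the ``positive measure $\Rightarrow$ open'' step fails. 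The paper handles this issue not by proving \textup{(\ref{26-2-62})} directly but by invoking the detailed analysis of the Thomas--Fermi Hamiltonian flow carried out in Chapter~\ref{book_new-sect-24} (see the proof of Theorem~\ref{thm-26-3-22}\ref{thm-26-3-22-ii}), which is an entirely different and much more specialized argument than the one you sketch.
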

Recall that on $\{(x,\xi): H_{0,V}(x,\xi)=\tau\}$ a natural density $d\upmu_\tau= dxd\xi :dH|_{H=\tau}$ is defined.

The problem is we do not have a quantum propagation theory for $H_{A,V}$ as $A$ is not a ``rough'' function. However it is rather regular function, almost $\sC^2$, and $(A-A_\varepsilon)$ is rather small:
$|A-A_\varepsilon|\le \eta \Def Ch^{2-3\delta}$ and $|\partial (A-A_\varepsilon)|\le Ch^{1-3\delta}$ and therefore we can apply a method of successive approximations with the unperturbed operator $H_{A_\varepsilon, V}$ as long as $\eta T/h\le h^\sigma$ i.e. as $T\le h^{1-4\delta}$. Here we however have no use for such large $T$ and consider $T=O(h^{-\delta})$.

Consider
\begin{equation}
F_{t\to h^{-1} \tau} \chi_T(t) U(x,y,t),
\label{26-2-63}
\end{equation}
and consider terms of successive approximations. Then if we forget about microhyperbolicity arguments the first term will be $O(h^{-d}T)$, the second $O(h^{-1-d}\eta T^2)= O(h^{1-d-\delta'})$ and the error
$O(h^{-2-d}\eta^2 T^3)=O(h^{2-d-\delta''})$.

Therefore as our goal is $O(h^{1-d})$ we need to consider the first two terms only. The first term is the same expression (\ref{26-2-63}) with $U$ replaced by $U_{(\varepsilon)}$.

Consider the second term, it corresponds to $U'_{(\varepsilon)}(x,y,t)$ which is the Schwartz kernel of operator
\begin{gather}
\sfU'_{(\varepsilon)}\Def  i h^{-1}\int _0^t
e^{i(t-t') h^{-1}H_{A_\varepsilon,V} } \bigl(H_{A,V}-H_{A_\varepsilon,V}\bigr) e^{it' h^{-1}H_{A_\varepsilon,V} } \,dt'
\label{26-2-64}\\
\shortintertext{and then}
\Tr \bigl(\sfU'_{(\varepsilon)}  \psi \bigr)= ih^{-1} \Tr \Bigl(\bigl(H_{A,V}-H_{A_\varepsilon,V}\bigr)
  e^{ih^{-1}tH_{A_\varepsilon,V}} \psi^1 (t) \Bigr)\label{26-2-65}\\
\shortintertext{with}
\psi^1(t) \Def \int _0^t e^{ih^{-1}t'H_{A_\varepsilon,V}} \psi e^{-ih^{-1}t'H_{A_\varepsilon,V}}\,dt'\notag
\end{gather}
is $h$-pseudo-differential operator with a rough symbol and $\psi^1 (t) \sim t$.

Really, one can prove  easily studying first the Hamiltonian flow equation and then the transport equations that
$\psi_t\Def
e^{ih^{-1}t H_{A_\varepsilon,V}}\psi e^{-ih^{-1}t H_{A_\varepsilon,V}}$  is a $h$-pseudo-differential operator with a rough symbol and its corresponding norm is bounded.

Note that
\begin{equation*}
ih^{-1} F_{t\to h^{-1}\tau}  e^{ih^{-1}tH_{A_\varepsilon,V}} \psi^1 (t)=
(2\pi) \int \bigl(F_{t\to h^{-1}\tau'}  e^{ih^{-1}tH_{A_\varepsilon,V}} \bigr) \hat{f}(h^{-1}(\tau -\tau')) \,d\tau'
\end{equation*}
with $\hat{f}=F_{t\to \tau} f_t$, $f_t=\chi_T(t)\psi^1(t)$ and therefore (\ref{26-2-64})--(\ref{26-2-65}) imply that
\begin{equation}
|F_{t\to h^{-1}\tau}\chi_T(t) \Tr \sfU'_{(\varepsilon)} \psi |\le
C\eta T^2 h^{-d}
\label{26-2-66}
\end{equation}
where in comparison with the trivial estimate we gained factor $h$.

We can plug here $T'\in (T_*,T)$ instead of $T$ and taking summation by $T'$ from $T_*=\epsilon$ to $T$ we conclude that (\ref{26-2-66}) also holds for $\chi_T(t)$ replaced by  $\bigl(\bar{\chi}_T(t)-\bar{\chi}_{T_*}(t)\bigr)$ (provided $\bar{\chi}=1$ on $(-\frac{1}{2},\frac{1}{2})$) and  since
$\eta T^2\le h^{1+\delta}$ as $T\le h^{-\delta}$ the right-hand expression (\ref{26-2-66}) does not exceed $Ch^{1-d+\delta}$.

On the other hand, our traditional methods imply that as $d\ge 3$
\begin{equation}
|F_{t\to h^{-1} \tau} \chi_T(t)
\Tr  \bigl(e^{it h^{-1}H_{A_\varepsilon,V} } \psi  \bigr)|\le
Ch^{1-d} T \upmu (\Pi_{T, \zeta}) + C_{T,\zeta} h^{1-d+\delta}
\label{26-2-67}
\end{equation}
where $\Pi _T$ is the set of  points on energy level $0$, periodic with periods not exceeding $T$,  $\Pi _{T,\zeta}$ is its $\zeta$-vicinity, $\zeta>0$ is arbitrarily small.

Here again we can plug any $T'\in (T_*,T)$   instead of $T$  and after summation with respect to $T'$ we conclude that (\ref{26-2-67}) also holds with $\chi_T(t)$ replaced by  $\bigl(\bar{\chi}_T(t)-\bar{\chi}_{T_*}(t)\bigr)$.

Combining with estimate for $\bigl(e^{it h^{-1}H_{A_\varepsilon,V} }-
 e^{it h^{-1}H_{A_\varepsilon,V} }\bigr)$ we conclude that
\begin{equation*}
|F_{t\to h^{-1} \tau} \bigl(\bar{\chi}_T(t)-\bar{\chi}_{T_*}(t)\bigr)
\Tr  \bigl(e^{it' h^{-1}H_{A,V} } \psi  \bigr)|\le
Ch^{1-d} T \upmu (\Pi_{T, \zeta}) + C_{T,\zeta} h^{1-d+\delta}
\end{equation*}
and since
\begin{equation*}
|F_{t\to h^{-1} \tau} \bar{\chi}_{T_*}(t)
\Tr  \bigl(e^{it' h^{-1}H_{A,V} } \psi  \bigr)|\le Ch^{1-d}
\end{equation*}
we conclude that
\begin{multline}
|F_{t\to h^{-1} \tau} \bar{\chi}_T(t)
\Tr  \bigl(e^{it' h^{-1}H_{A,V} } \psi  \bigr)|\le \\
Ch^{1-d}
+Ch^{1-d} T \upmu (\Pi_{T, \zeta}) + C_{T,\zeta} h^{1-d+\delta}
\label{26-2-68}
\end{multline}
and then the Tauberian error does not exceed the right-hand expression of (\ref{26-2-68}) multiplied by $Ch T^{-2}$ and it is less than $CT^{-1}h^{2-d}$.

Consider now the Tauberian expression and again apply two-term approximation for $e^{ih^{-1}tH_{A,V}}$ considering $e^{ih^{-1}tH_{A_\varepsilon,V}}$ as an unperturbed operator; then the error will be less than $Ch^{2-d+\delta}$.

Consider the second term after taking trace; it is $O(h^{2-d-4\delta})$, so it is just slightly too large. Further, if $\psi=I$ one can calculate it easily and observe that it is $O(h^{2-d+\delta})$ provided $V\in \sC^{2,1}$.

Finally, the first term is what we get for $e^{ih^{-1}tH_{A_\varepsilon,V}}$ and in virtue of rough microlocal analysis contribution of $\{t:\,T_*\le |t|\le T\}$ does not exceed
$Ch^{2-d}  \upmu (\Pi_{T, \zeta}) + C_{T,\zeta} h^{2-d+\delta}$ and contribution of $\{t:\, |t|\le T_*\}$ is $\Weyl_1+ O(h^{2-d+\delta})$.

Then we arrive to

\begin{theorem}\label{thm-26-2-19}
Let $\kappa\le c$,  \textup{(\ref{26-2-44})} and \textup{(\ref{26-2-50})} be fulfilled, and let $d= 3$. Furthermore, let condition \textup{(\ref{26-2-62})} be fulfilled\,\footnote{\label{foot-26-6} I.e. $\upmu _0(\Pi_\infty)=0$.}. Then
\begin{gather}
\E^*= \Weyl^*_1 +o(h^{2-d})\label{26-2-69}\\
\shortintertext{where}
\Weyl^*_1 =\Weyl_1 +\varkappa h^{2-d}\int V_+^{\frac{d}{2}}\Delta V  \,dx
\label{26-2-70}
\end{gather}
calculated in the standard way for $H_{0,V}$ and a minimizer $A$ satisfies similarly improved versions of \textup{(\ref{26-2-52})} and \textup{(\ref{26-2-53})}.
\end{theorem}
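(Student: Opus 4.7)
The plan is to sharpen Theorem~\ref{thm-26-2-17} from $O(h^{2-d})$ to $o(h^{2-d})$ by combining (a) mollification of the minimizer $A$ at microscale $\varepsilon\asymp h^{1-\delta}$, (b) a two-term Duhamel expansion of $e^{ith^{-1}H_{A,V}}$ around $e^{ith^{-1}H_{A_\varepsilon,V}}$, and (c) long-time propagation under the non-periodicity hypothesis (\ref{26-2-62}) to extract the sharp two-term Weyl asymptotics. The upper bound $\E^*\le \Weyl^*_1+o(h^{2-d})$ is immediate: take $A=0$ and quote the classical sharp two-term Weyl formula for $H_{0,V}$ under (\ref{26-2-62}), valid because $V\in\sC^{2,1}$ by (\ref{26-2-50}). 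The substance is the matching lower bound at a minimizer $A$.

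First I would use Theorem~\ref{thm-26-2-17} together with the H\"older-improved bound (\ref{26-2-61}) to guarantee that any minimizer has $\|\partial A\|_\sC=O(h^\delta)$. Mollifying at scale $\varepsilon=h^{1-\delta}$ yields $|A-A_\varepsilon|\le\eta\Def Ch^{2-3\delta}$ and $|\partial(A-A_\varepsilon)|\le Ch^{1-3\delta}$, with $A_\varepsilon$ smooth enough to feed into the rough microlocal apparatus of Section~\ref{book_new-sect-2-3}. I would then run the method of successive approximations with $H_{A_\varepsilon,V}$ as the unperturbed operator and time cutoff $\chi_T$ with $T=h^{-\delta}$. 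Crude Duhamel bounds give $O(h^{1-d-\delta'})$ for the first correction and $O(h^{2-d-\delta''})$ for the cumulative error of the third and higher terms; since the target for the Tauberian expression is $o(h^{2-d})$, only the zeroth and first corrections require detailed analysis.

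For the zeroth-order term $F_{t\to h^{-1}\tau}\bar{\chi}_T\Tr(e^{ith^{-1}H_{A_\varepsilon,V}}\psi)$, I would observe that $|\partial A_\varepsilon|=O(h^\delta)$ makes the Hamiltonian flow of $H_{A_\varepsilon,V}$ uniformly $O(h^\delta)$-close on $|t|\le T$ to that of $H_{0,V}$. Consequently the periodic set $\Pi_{T,\zeta}$ for $H_{A_\varepsilon,V}$ sits inside a slightly enlarged version of the periodic set for $H_{0,V}$, and (\ref{26-2-62}) gives $\upmu_0(\Pi_{T,\zeta})\to 0$ as $T\to\infty$ and $\zeta\to 0$. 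Plugging into (\ref{26-2-67})--(\ref{26-2-68}) and running the Tauberian formula yields $\Weyl^*_1=\Weyl_1+\varkappa h^{2-d}\int V_+^{d/2}\Delta V\,dx$ modulo $o(h^{2-d})$, with the subprincipal term arising from the short-time $|t|\le T_*$ regime via the standard symbolic computation (legitimate because $V\in\sC^{2,1}$).

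The main obstacle will be the first-order Duhamel term (\ref{26-2-64})--(\ref{26-2-65}), whose naive bound $O(h^{2-d-4\delta})$ overshoots the target by a polynomial factor in $h^{-\delta}$. I would exploit the algebraic split (\ref{26-2-59}): the piece linear in $(A-A_\varepsilon)$ is odd in $(\xi-A_\varepsilon)$ and therefore annihilates the Weyl symbol after integration in $\xi$, while the quadratic piece is pointwise of size $|A-A_\varepsilon|^2=O(h^{4-6\delta})$ and contributes $O(h^{3-d-\delta'})=o(h^{2-d})$. The long-time tail of the first-order term is controlled by (\ref{26-2-66}) together with the same non-periodicity argument used for the zeroth term. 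Assembling everything into the variational inequality $\E^*\ge\E(A)\ge\Weyl^*_1-o(h^{2-d})+\frac{1}{2\kappa}h^{1-d}\|\partial A\|^2$ (the refined analogue of (\ref{26-2-57})) delivers (\ref{26-2-69})--(\ref{26-2-70}); feeding the improved bound on $\|\partial A\|$ back through Proposition~\ref{prop-26-2-16} then produces the sharpened versions of (\ref{26-2-52}) and (\ref{26-2-53}).
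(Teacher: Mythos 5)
Your proposal tracks the paper's argument essentially step by step: mollification at scale $\varepsilon\asymp h^{1-\delta}$ using (\ref{26-2-61}), a two-term Duhamel expansion with $H_{A_\varepsilon,V}$ as the unperturbed operator, the non-periodicity hypothesis controlling the zeroth-order long-time tail via (\ref{26-2-67})--(\ref{26-2-68}), and the oddness cancellation from (\ref{26-2-59}) plus the smallness of $|A-A_\varepsilon|^2$ disposing of the first-order term. The one small imprecision is that the paper controls the long-time first-order tail by (\ref{26-2-66}) alone (the factor $\eta T^2$ is already $O(h^{1+\delta})$) rather than by a second appeal to non-periodicity, but this does not affect the conclusion.
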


\begin{remark}\label{rem-26-2-20}
\begin{enumerate}[label=(\roman*), fullwidth]
\item\label{rem-26-2-20-i}
Under stronger assumptions to the Hamiltonian flow one can recover better estimates like  $O(h^{2-d}|\log h|^{-2})$ or even $O(h^{2+\delta-d})$ (like in  Subsubsection~\ref{book_new-sect-4-4-4-3}.3 ``\nameref{book_new-sect-4-4-4-3}'').

\item\label{rem-26-2-20-ii}
We leave to the reader to calculate the numerical constants $\varkappa_*$ here and in (\ref{26-2-75}) below, $\varkappa=\varkappa_1-\frac{2}{d}\varkappa_2$.

\item\label{rem-26-2-20-iii}
However, even if $\psi\ne I$ we can observe that it is sufficient to consider only principal terms and then the second  term in approximations is also $O(h^{2-d+\delta})$ provided $V\in \sC^{2,1}$ as long as principal symbol of $\psi (x)$ is even with respect to $\xi$, in particular, if $\psi=\psi(x)$.
\end{enumerate}
\end{remark}

\section{Local theory}
\label{sect-26-2-3}

\subsection{Localization and estimate from above}
\label{sect-26-2-3-1}

The results of the previous Subsection have two shortcomings: first, they impose the excessive initial requirement (\ref{26-2-21}) to $\kappa$ as a priory $M\le c h^{-3}$; second, they are not local. However curing the second shortcoming we make the way to addressing the first one as well using the partition and rescaling technique.

We can localize $\Tr^-(H)=\Tr(H^-)$ which is the first term in $\E(A)$ either in  our traditional way as $\Tr (H^-\psi^2)$ or in the way favored by some mathematical physicists\footnote{\label{foot-26-7} See f.~e.  L.~Erd\"os, S.~Fournais. and J.~P.~Solovej \cite{EFS1}.}: namely, we take
$\Tr^- (\psi H\psi )$ where in both cases
$\psi\in \sC^\infty_0(B(0,\frac{1}{2}))$, $0\le \psi \le 1$ and some other  conditions will be imposed to it later.

Note that
\begin{equation}
\Tr^-(\psi H\psi ) \ge \Tr (\psi H^-\psi )=\int e_1(x,x,0) \psi^2(x)\,dx.
\label{26-2-71}
\end{equation}
Really, decompose operator $H=H \uptheta(-H) + H(1-\uptheta(-H))$ where $\uptheta (\tau-H)$ is a spectral projector of $H$ and therefore in the operator sense $H\ge H^-\Def H \uptheta(-H)$ and $\psi H\psi \ge \psi H^-\psi$ and therefore all negative eigenvalues of  $\psi H\psi$ are greater than or equal to eigenvalues of the negative operator $\psi H^-\psi$ and then
\begin{equation}
\Tr^-(\psi H\psi)\ge \Tr (\psi H ^- \psi)=
\Tr \int^0_{-\infty} \tau d_\tau \uptheta(\tau-H) \psi^2
\label{26-2-72}
\end{equation}
which is exactly the right-hand expression of (\ref{26-2-71}).

\begin{remark}\label{rem-26-2-21}
Each approach has its own advantages.
\begin{enumerate}[label=(\roman*), fullwidth]
\item\label{rem-26-2-21-i}
In particular, no need to localize $A$ (see (ii)) and the fact that proposition~\ref{prop-26-2-5} obviously remains true are advantages of
$\Tr^-(\psi H\psi)$-localization.

\item\label{rem-26-2-21-ii}
Further, as $\Tr^-(\psi H\psi )$ does not depend on $A$ outside of $B(0,\frac{3}{4})$ we may assume that $A=0$ outside of $B(0,1)$. Really, we can always subtract a constant from $A$ without affecting traces and also cut-off $A$ outside of $B(0,1)$ in a way such that $A'=A$ in $B(0,\frac{3}{4})$ and
$\|\partial A'\| \le c\|\partial A\|_{B(0,1)}$; the price is to multiply $\kappa$ by $c^{-1}$--as long as principal parts of asymptotics coincide.

\item\label{rem-26-2-21-iii}
On the other hand, additivity rather than sub-additivity (\ref{26-2-87}) and the trivial estimate from the above are advantages of
$\Tr (\psi H ^-\psi)$-localization; therefore it is more advantageous

\item\label{rem-26-2-21-iv}
In the next Chapter~\ref{book_new-sect-27} (in Section~\ref{book_new-sect-27-2}) we will use  more $\Tr^-(\psi H\psi )$-localization for preliminary estimates from below and simplify many arguments of this Section. We apply these modifications and simplifications to this Section in the final version of the Book.
\end{enumerate}
\end{remark}

We will use both methods and here we provide an upper estimate for larger
expression $\Tr^-(\psi H\psi )$ and a lower estimate for lesser expression $\Tr (\psi H ^- \psi)$. Let us estimate from the above:

\begin{proposition}\label{prop-26-2-22}
Assume that $V\in \sC^{2,1}$, $d\ge 2$.  Let $\ell(x)$ be a scaling function\footnote{\label{foot-26-8} I.e. $\ell \ge 0$ and
$|\partial \ell|\le \frac{1}{2}$.} and $\psi$ be a function such that  $|\partial^\alpha \psi|\le c\psi \ell^{-\sigma|\alpha|}$ for all
$\alpha:|\alpha |\le 2$  and $|\psi |\le c\ell^{\sigma(1+\delta)}$ with $\sigma>1$ and $\delta>0$\,\footnote{\label{foot-26-9} Such compactly supported functions obviously exist.}.

Then, as $A=0$,
\begin{gather}
\Tr^- (\psi H\psi) = \int \Weyl_1(x)\psi^2(x)\,dx + O(h^{2-d})
\label{26-2-73}\\
\intertext{and under assumption \textup{(\ref{26-2-62})}}
\Tr^- (\psi H\psi) = \int \Weyl^*_1(x)\psi^2(x)\,dx + o(h^{2-d})
\label{26-2-74}\\
\shortintertext{with}
\Weyl^*_1(x)= \Weyl_1(x) + \varkappa_1 h^{-1}V_+^{\frac{d}{2}}\Delta V +
\varkappa_2 h^{-1}V_+^{\frac{d}{2}-1}|\nabla V|^2\label{26-2-75}
\end{gather}
calculated in the standard way for $H_{0,V}$.
\end{proposition}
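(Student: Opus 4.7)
The plan is to squeeze $\Tr^-(\psi H\psi)$ between its elementary lower bound from \textup{(\ref{26-2-71})} and a variational upper bound obtained by testing against the trial density matrix $\gamma = \uptheta(-H)$; both sides are then reduced to the standard local semiclassical trace asymptotics for the scalar Schr\"odinger operator $H_{0,V}$, which under $A=0$ and $V\in\sC^{2,1}$ is well-developed. The scaling function $\ell$ enters only through a partition-and-rescaling argument, with the hypotheses on $\psi$ providing uniform control on each rescaled piece.

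\emph{Lower bound.} Inequality \textup{(\ref{26-2-71})} gives directly
\begin{equation*}
\Tr^-(\psi H\psi)\;\ge\;\int e_1(x,x,0)\,\psi^2(x)\,dx.
\end{equation*}
By the standard local Weyl asymptotics for $H_{0,V}$, applied after a partition-and-rescaling subordinate to $\ell$, the right-hand side equals $\int \Weyl_1(x)\psi^2\,dx + O(h^{2-d})$, and under \textup{(\ref{26-2-62})} it sharpens to $\int \Weyl_1^*(x)\psi^2\,dx + o(h^{2-d})$ via the two-term Tauberian argument with the non-periodicity hypothesis.

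\emph{Upper bound.} Since $A=0$, a direct computation yields the exact operator identity $[\psi,[H,\psi]] = [\psi,[(hD)^2,\psi]] = 2h^2|\nabla\psi|^2$, hence
\begin{equation*}
\psi H\psi \;=\; \tfrac12(\psi^2 H + H\psi^2) + h^2|\nabla\psi|^2.
\end{equation*}
Taking the trace against $\gamma = \uptheta(-H)$, which commutes with $H$, and using cyclicity,
\begin{equation*}
\Tr^-(\psi H\psi)\;\le\;\Tr(\psi H\psi\,\gamma)\;=\;\int e_1(x,x,0)\psi^2\,dx + h^2\int|\nabla\psi|^2 e(x,x,0)\,dx.
\end{equation*}
The commutator correction is $O(h^{2-d})$, since $e(x,x,0) = O(h^{-d})$ and the scaling hypotheses bound $\int|\nabla\psi|^2\,dx$. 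Combined with the lower bound this gives \textup{(\ref{26-2-73})}.

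\emph{Main obstacle and sharpening to $o(h^{2-d})$.} The commutator correction is only $O(h^{2-d})$, so the trial density matrix $\uptheta(-H)$ is too crude for \textup{(\ref{26-2-74})}. The fix I would use is to replace it by the Weyl quantization of the characteristic function $\uptheta(-H(x,\xi))$ mollified on scale $h^{1-\delta}$: in the classically allowed interior the symbolic calculus error is $O(h^\infty)$, while the boundary layer around $\{H=0\}$ is handled by the same partition-and-rescaling and contributes only $o(h^{2-d})$ under \textup{(\ref{26-2-62})}. The heart of the matter is the Tauberian analysis of the propagator of $H_{0,V}$ at times up to the first recurrence of closed orbits, precisely where the non-periodicity assumption enters; once this is in place, both bounds collapse to $\int\Weyl_1^*\psi^2\,dx + o(h^{2-d})$.
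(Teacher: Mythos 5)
Your argument for \textup{(\ref{26-2-73})} is correct and takes a genuinely different route from the paper's. You avoid analysing the degenerate operator $\psi H\psi$ entirely: for $A=0$ the exact IMS identity
$\psi H\psi = \tfrac12(\psi^2 H + H\psi^2) + h^2|\nabla\psi|^2$
localizes the discrepancy between $\Tr^-(\psi H\psi)$ and $\Tr(\psi H^-\psi)$ into the explicit commutator $h^2|\nabla\psi|^2$, and testing against the trial $\gamma=\uptheta(-H)$ (which commutes with $H$, so cyclicity collapses the symmetric half) gives the upper bound
$\int e_1(x,x,0)\psi^2\,dx + h^2\int|\nabla\psi|^2\,e(x,x,0)\,dx$.
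Since $|\nabla\psi|\le c\psi\ell^{-\sigma}\le c^2\ell^{\sigma\delta}$ is bounded (this is where $\sigma>1$ and $\delta>0$ are used) and $e(x,x,0)=O(h^{-d})$, the commutator contribution is $O(h^{2-d})$. Together with the lower bound from \textup{(\ref{26-2-71})} and the standard local Weyl estimate for the non-degenerate $H_{0,V}$ applied to $\int e_1\psi^2\,dx$, this yields \textup{(\ref{26-2-73})}. The paper instead treats $\tilde{H}=\psi H\psi$ directly as a degenerate semiclassical Hamiltonian and runs a $\gamma=\epsilon\ell^\sigma$ partition-and-rescaling on its spectral density $\tilde{e}_1$; your IMS-plus-variational route is algebraically cleaner and entirely bypasses the degenerate-operator analysis.

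However, your sketch for the sharpened asymptotics \textup{(\ref{26-2-74})} has a genuine gap. The commutator term is not merely $O(h^{2-d})$ --- it has a definite semiclassical limit,
\begin{equation*}
h^2\int|\nabla\psi|^2\,e(x,x,0)\,dx = c_d\,h^{2-d}\int|\nabla\psi|^2\,V_+^{d/2}\,dx + o(h^{2-d}),
\end{equation*}
with a generically nonzero positive coefficient. Under \textup{(\ref{26-2-62})} one has $\int e_1\psi^2\,dx = \int\Weyl^*_1\psi^2\,dx + o(h^{2-d})$, so your two bounds on $\Tr^-(\psi H\psi)$ differ by a definite amount $\asymp h^{2-d}$, and \textup{(\ref{26-2-74})} asserts that the truth sits at the \emph{lower} endpoint of that bracket: the trial $\gamma=\uptheta(-H)$ overshoots by exactly $\asymp h^{2-d}$. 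Your proposed fix --- replacing $\uptheta(-H)$ by a mollified Weyl quantization of $\uptheta(-H(x,\xi))$ --- does not repair this, because the overshoot is driven by the spatial profile of $\psi$ in the transition layer where it decays, not by the energy cutoff; smoothing the cutoff in $H(x,\xi)$ leaves the $h^2\Tr(|\nabla\psi|^2\gamma)$ term unchanged to leading order. To recover $o(h^{2-d})$ in this framework you would need a trial density matrix tracking $\uptheta(-\psi H\psi)$ itself to $o(h^{2-d})$ accuracy inside the transition layer, which is in effect what the paper's direct partition-and-rescaling analysis of $\tilde{H}$ delivers and what a $\psi$-independent trial cannot.
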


\begin{proof}
Let us consider $\tilde{H}=\psi H\psi $ as a Hamiltonian and let $\tilde{e}(x,y,\tau)$ be the Schwartz kernel of its spectral projector. Then
\begin{equation}
\Tr^- (\psi H\psi)= \int  \tilde{e}_1 (x,x,0)\,dx =
\sum_j \int  \tilde{e}_1 (x,x,0)\psi_j^2\,dx
\label{26-2-76}
\end{equation}
where $\psi_j^2$ form a partition of unity in $\bR^d$ and we need to calculate the right hand expression. The problem is that $\tilde{H}$ is not a usual Schr\"odinger operator because of degenerating factor $\psi$ on each side.

Consider first an $\epsilon \ell$-admissible partition of unity in $B(0,1)$. Let us consider $\gamma$-scale in such element where
$\gamma = \epsilon \ell^\sigma $ and we will use $1$ scale in $\xi$.  Then after rescaling $x\mapsto x \gamma^{-1}$ the semiclassical parameter rescales
$h\mapsto h_\new = h \gamma^{-1}$ and the contribution of each $\gamma$-subelement to a semiclassical remainder  does not exceed
$C\psi^2 (h/\gamma)^{2-d}$ with $\psi \le \gamma^{1+\delta}$ having the same magnitude over element as $\gamma \ge 2h$. Then contribution of $\ell$-element to a semiclassical error does not exceed
$C\psi^2 (h/\gamma)^{2-d}\times \ell^d \gamma^{2-d} \asymp Ch^{2-d} \psi^2\gamma^{-2} \ell^{d} \le Ch^{2-d}\ell^{d+2\delta}$.

Note that expression (\ref{26-2-76}) only increases if we sum only with respect to elements where $\ell^\sigma \ge h$. Therefore we arrive to estimate
\begin{equation*}
\Tr^- (\psi H\psi)\le \int \Weyl_1(x)\psi^2(x)\,dx + Ch^{2-d}
\end{equation*}
where integration is taken over domain $\{x:\, \ell(x)\ge h^{1/\sigma}\}$. Note that we can extend this integral to $\bR^d$: really, it will add negative term with absolute value not exceeding $Ch^{-d}\times h^{2+\delta}$ as
$\psi \le h^{1+\delta}$ there and it is absorbed by the remainder estimate. \end{proof}

\begin{corollary}\label{cor-26-2-23}
In the framework of proposition~\ref{prop-26-2-22}
\begin{gather}
\E_\psi ^* \Def  \inf_{A} \E_\psi (A) \le
\int \Weyl_1(x)\psi^2(x)\,dx + Ch^{2-d}
\label{26-2-77}\\
\intertext{and under assumption \textup{(\ref{26-2-62})}}
\E_\psi ^*  \le \int \Weyl_1(x)\psi^2(x)\,dx + Ch^{2-d}\label{26-2-78}\\
\shortintertext{with}
\E _\psi (A)\Def \Tr^- (\psi H\psi)+ \frac{1}{\kappa h^2}\int |\partial A|^2\,dx \label{26-2-79}.
\end{gather}
\end{corollary}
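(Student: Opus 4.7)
The plan is to derive both (\ref{26-2-77}) and (\ref{26-2-78}) from Proposition~\ref{prop-26-2-22} by choosing the trivial trial configuration $A=0$ in the definition (\ref{26-2-79}) of $\E_\psi(A)$. At $A=0$ the magnetic field energy term $(\kappa h^2)^{-1}\int|\partial A|^2\,dx$ vanishes identically, so
\[
\E_\psi^* = \inf_A \E_\psi(A) \le \E_\psi(0) = \Tr^-(\psi H_{0,V}\psi),
\]
and it remains only to bound the right-hand side by the Weyl integral.

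For (\ref{26-2-77}) I would invoke (\ref{26-2-73}) from Proposition~\ref{prop-26-2-22}, which gives the two-sided asymptotics $\Tr^-(\psi H_{0,V}\psi) = \int \Weyl_1(x)\psi^2(x)\,dx + O(h^{2-d})$ under the standing hypotheses $V \in \sC^{2,1}$ and the scaling/size conditions on $\psi$; absorbing the two-sided error into a one-sided $Ch^{2-d}$ produces the claim. For (\ref{26-2-78}) the argument is structurally the same, now appealing to the sharper statement (\ref{26-2-74}) which holds under the non-periodicity condition (\ref{26-2-62}); the $o(h^{2-d})$ remainder is dominated by $Ch^{2-d}$, and the lower-order corrections appearing in $\Weyl_1^*$ relative to $\Weyl_1$ (cf.\ (\ref{26-2-75})) are likewise absorbed into the same error order.

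The substantive content is entirely packaged into Proposition~\ref{prop-26-2-22}: the corollary merely records the observation that $\E_\psi^*$, being an infimum over $A$, is automatically bounded above by any single evaluation, and that the evaluation at $A=0$ coincides — up to the zero magnetic-field energy — with the localized Schr\"odinger trace to which the Weyl asymptotics applies. There is therefore no genuine obstacle; no microlocal propagation, minimizer analysis, or regularity estimate is required for this direction, which is exactly why the upper estimate is elementary while the matching lower estimate (pursued later in the section) is not.
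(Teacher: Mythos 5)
Your proof is correct and is exactly the paper's argument; the paper's own proof of this corollary is the single sentence ``Really, we just pick $A=0$,'' which is precisely the observation you make that $\E_\psi^*$ is bounded by $\E_\psi(0)=\Tr^-(\psi H_{0,V}\psi)$, after which Proposition~\ref{prop-26-2-22} does all the work. You also correctly flag that as written (\ref{26-2-78}) repeats (\ref{26-2-77}) verbatim (evidently a slip, with $\Weyl_1^*$ and $o(h^{2-d})$ intended), and your remark that the $\Weyl_1^*$ corrections of (\ref{26-2-75}) and the $o(h^{2-d})$ error both fit under $Ch^{2-d}$ handles this cleanly.
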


Really, we just pick $A=0$.

\subsection{Estimate from below}
\label{sect-26-2-3-2}

Now let us estimate \emph{redefined\/} $\E_\psi (A)$,
\begin{equation}
\E_\psi (A) \Def\int e_1(x,x,0)\psi ^2(x) \,dx +
\frac{1}{\kappa h^{d-1}}\int  |\partial A|^2\,dx.
\label{26-2-80}
\end{equation}
from below. However we need an equation for an optimizer and it would be easier for us to deal with even lesser expression involving $\tau$-regularization. Let us rewrite the first term in the right-hand expression in the form
\begin{multline*}
\int^0_{-\infty}  \bar{\varphi}(\tau/L) \tau\, d_\tau e(x,x,\tau) +
\int^0_{-\infty}  (1-\bar{\varphi}(\tau/L)) \tau\, d_\tau e(x,x,\tau)\ge\\
\int ^L_{-\infty}\Bigl( \bar{\varphi} (\tau/L) (\tau-L)\, d_\tau e(x,x,\tau) +
  (1-\bar{\varphi}(\tau/L)) \tau\, d_\tau e(x,x,\tau)\Bigr)
\end{multline*}
where $\bar{\varphi}\in \sC^\infty_0 ([-1,1])$ equals $1$ in $[-\frac{1}{2},\frac{1}{2}]$ and let us estimate from  below
\begin{multline}
\E'_\psi (A)\Def
\int \Bigl(\int^L_{-\infty} \bar{\varphi}(\tau/L) (\tau-L) d_\tau e(x,x,\tau)(x) + \\
(1-\bar{\varphi}(\tau/L)) (\tau-L)\, d_\tau e(x,x,\tau)\Bigr)\psi ^2(x)\,dx+ %\\
\frac{1}{kh^{1-d}}\int |\partial A|^2\,dx  %(\varrho(x)+K^{-1}).
\label{26-2-81}
\end{multline}

Let us generalize Proposition~\ref{prop-26-2-4}:

\begin{proposition}\label{prop-26-2-24}
Let $A$ be a minimizer of $\E'_\psi (A)$. Then
\begin{multline}
\frac{2}{\kappa h^{1-d}}\Delta A_j (x)= \Phi_j\Def \\
\Re\tr \upsigma_j\Bigl(  (hD-A)_x \cdot \boldupsigma \cK (x,y,\tau) +
\cK (x,y,\tau) \,^t (hD-A)_y \cdot \boldupsigma\Bigr)\Bigr|_{y=x}
\label{26-2-82}
\end{multline}
with
\begin{multline*}
\cK=
\int ^L_{-\infty} {\mathsf{SK}}\Bigl[\bar{\varphi} (\tau/L) (\tau-L) \Res_\bR (\tau -H)^{-1} \psi^2 (\tau -H)^{-1}  +\\
(1-\bar{\varphi}(\tau/L)) \tau (\tau-L)\Res_\bR (\tau -H)^{-1}
\psi^2 (\tau -H)^{-1} \Bigr] (x,y)\,d\tau \
\end{multline*}
where we use a temporary notation ${\mathsf{SK}}[B](x,y)$ for the Schwartz kernel of operator $B$.
\end{proposition}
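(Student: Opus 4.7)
The plan is to adapt the variational argument of Proposition~\ref{prop-26-2-4} to the weighted, regularized functional $\E'_\psi(A)$. The only genuinely new features are the spatial weight $\psi^2(x)$ inside the trace and the two-piece smooth $\tau$-cutoff involving $\bar\varphi(\tau/L)$ and $(\tau-L)$; both enter linearly through the spectral projector of $H$, so the strategy is identical: represent the spectral piece via the contour formula \textup{(\ref{26-2-15})}, differentiate using the resolvent identity $\updelta(\tau-H)^{-1} = (\tau-H)^{-1}(\updelta H)(\tau-H)^{-1}$, reduce to the form $\Tr((\updelta H)\cK)$ by cyclicity of the trace, and then read off $\Phi_j$ by identifying the coefficient of $\updelta A_j$.

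Concretely, I would first write
\begin{equation*}
\int d_\tau e(x,x,\tau)\,\psi^2(x)\,dx = d_\tau \Tr\bigl(\uptheta(\tau-H)\psi^2\bigr),
\end{equation*}
reducing the spectral piece of $\E'_\psi(A)$ to a Stieltjes integral of a scalar function of $\tau$. Plugging in \textup{(\ref{26-2-15})} and taking the variation, then using cyclicity to bring $\psi^2$ between the two resolvents, produces the sandwich $(\tau-H)^{-1}\psi^2(\tau-H)^{-1}$ appearing in the statement. An integration by parts in $\tau$ transfers the $d_\tau$ onto the $\tau$-weight, redistributing it into the combinations $\bar\varphi(\tau/L)(\tau-L)$ and $(1-\bar\varphi(\tau/L))\tau(\tau-L)$ that appear in $\cK$; the boundary term at $\tau=L$ vanishes because of the factor $(\tau-L)$, while the behaviour at $\tau=-\infty$ is controlled by the Lieb--Thirring-type bounds already used in Proposition~\ref{prop-26-2-1}. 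The outcome is $\updelta[\text{spectral piece}] = \Tr((\updelta H)\cK)$ with $\cK$ as stated.

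The second ingredient is the explicit form of $\updelta H$ for $H=((hD-A)\cdot\boldupsigma)^2-V$. Setting $P=(hD-A)\cdot\boldupsigma$, we have $\updelta P = -\updelta A\cdot\boldupsigma$ and hence
\begin{equation*}
\updelta H = (\updelta P)P + P(\updelta P) = -(\updelta A\cdot\boldupsigma)\,P - P\,(\updelta A\cdot\boldupsigma).
\end{equation*}
Substituting this into $\Tr((\updelta H)\cK)$, using cyclicity to collect both terms as $-\Tr((\updelta A\cdot\boldupsigma)(P\cK+\cK P))$, and rewriting the trace as a spatial integral of the diagonal of the corresponding Schwartz kernel (with the inner $\tr$ running over Pauli indices) produces an expression of the form $\sum_j\int \tr\upsigma_j\bigl(P_x\cK(x,y) + \cK(x,y)\,{}^tP_y\bigr)\bigr|_{y=x}\,\updelta A_j(x)\,dx$, with the $\Re$ in the final formula coming from Hermiticity of the underlying spectral projector. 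The magnetic-energy term contributes $-\tfrac{2}{\kappa h^{d-1}}\int \Delta A_j\,\updelta A_j\,dx$ after one integration by parts in $x$, and equating coefficients of $\updelta A_j$ in $\updelta\E'_\psi(A)=0$ produces \textup{(\ref{26-2-82})}.

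The main technical point I foresee is pure bookkeeping: tracking the two pieces of the $\tau$-cutoff and checking that the boundary terms produced by the integrations by parts really vanish. At $\tau=L$ the factor $(\tau-L)$ is precisely what makes the boundary term disappear; at $\tau=-\infty$ the decay of $\Tr(\uptheta(\tau-H)\psi^2)$ is supplied by the trace-class bounds underlying Proposition~\ref{prop-26-2-1}; at spatial infinity the compact support of $\psi$ handles the $x$-integration. No substantially new conceptual difficulty arises beyond those already present in the proof of Proposition~\ref{prop-26-2-4}.
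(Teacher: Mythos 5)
Your proof follows exactly the same route as the paper: Ivrii's own proof of Proposition~\ref{prop-26-2-24} is the single line ``Follows immediately from the proof of Proposition~\ref{prop-26-2-4},'' and what you have written out is precisely that adaptation — representing the spectral piece via the resolvent formula \textup{(\ref{26-2-15})}, using cyclicity of the trace to move $\psi^2$ into the sandwich $(\tau-H)^{-1}\psi^2(\tau-H)^{-1}$, observing that the $(\tau-L)$ factor kills the boundary term at $\tau=L$ in the $\tau$-integration by parts, and computing $\updelta H = -(\updelta A\cdot\boldupsigma)P - P(\updelta A\cdot\boldupsigma)$ to read off $\Phi_j$. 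Your reconstruction is correct and fills in exactly the details the paper deliberately leaves to the reader.
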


\begin{proof}
Follows immediately from the proof of Proposition~\ref{prop-26-2-4}.
\end{proof}

\begin{proposition}\label{prop-26-2-25}
Let $d=3$ and assumptions  \textup{(\ref{26-2-20})} and \textup{(\ref{26-2-21})} be fulfilled. Then as $\tau\le c$
\begin{enumerate}[label=(\roman*), fullwidth]
\item\label{prop-26-2-25-i}
Operator norm in $\sL^2$ of $(hD)^k (\tau -H)^{-1}$ does not exceed
$C|\Im \tau|^{-1}$ for $k=0,1,2$;

\item\label{prop-26-2-25-ii}
Operator norm in $\sL^2$ of
$(hD)^2\bigl((hD-A)\cdot\boldupsigma\bigr) (\tau -H)^{-1}$ does not exceed $C|\Im \tau|^{-1}$ for $k=0,1,2$.
\end{enumerate}
\end{proposition}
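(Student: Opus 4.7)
The natural approach is to mirror the proof of Proposition~\ref{prop-26-2-6}, replacing the spectral projector $\uptheta(\tau-H)$ by the resolvent $(\tau-H)^{-1}$ and keeping careful track of the $|\Im\tau|^{-1}$ factors. Set $u=(\tau-H)^{-1}f$, so $(\tau-H)u=f$. Since $H$ is self-adjoint, $(Hu,u)\in\bR$, and equating imaginary parts of $\tau\|u\|^2-(Hu,u)=(f,u)$ gives the standard bound
\begin{equation*}
\|u\|\le |\Im\tau|^{-1}\|f\|.
\end{equation*}
Taking real parts and using $\Re\tau\le c$ yields $(Hu,u)\le c\|u\|^2+\|f\|\|u\|\le C|\Im\tau|^{-2}\|f\|^2$, which is the resolvent analogue of the inequality $(Hu,u)\le c\|u\|^2$ available for $u=\uptheta(\tau-H)f$.

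First I would rerun the chain of estimates from the proof of Proposition~\ref{prop-26-2-6}\ref{prop-26-2-6-i}. The Pauli identity $((hD-A)\cdot\boldupsigma)^2=H+V$ gives $\|((hD-A)\cdot\boldupsigma)u\|^2=(Hu,u)+(Vu,u)$. The term $(Vu,u)$ is controlled by H\"older using $V\in \sL^{5/2}\cap \sL^4$ together with Sobolev $\sL^6$ embedding in $d=3$; the magnetic contribution $(h|B|u,u)$ is handled by $\|\partial A\|\le C(\kappa Mh^2)^{1/2}$, exactly as in (\ref{26-2-23})--(\ref{26-2-24}). Under \textup{(\ref{26-2-20})} and \textup{(\ref{26-2-21})} all the extra small terms are absorbed, yielding
\begin{equation*}
\|(hD-A)u\|\le C|\Im\tau|^{-1}\|f\|,\qquad \|hDu\|\le C|\Im\tau|^{-1}\|f\|,
\end{equation*}
which is $k=0,1$ of~\ref{prop-26-2-25-i}. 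For $k=2$ one uses the identity $(hD)^2=(hD-A)^2+A(hD-A)+AhD-h[D,A]$, bounds $\|A\|_{\sL^6}\le C(\kappa M)^{1/2}h$, the previously obtained $k=0,1$ bounds, and Sobolev embedding, reproducing the step $\|h^2D^2u\|\le C\|u\|^2+C\|AhDu\|$ and then $\|h^2D^2u\|\le C|\Im\tau|^{-1}\|f\|$.

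Part~\ref{prop-26-2-25-ii} follows by composing: write $(hD)^k((hD-A)\cdot\boldupsigma)(\tau-H)^{-1}=(hD)^k\bigl[((hD-A)\cdot\boldupsigma)(\tau-H)^{-1}\bigr]$ and use that $v=((hD-A)\cdot\boldupsigma)u$ satisfies the same kind of a priori control: its $\sL^2$ norm is already bounded by $C|\Im\tau|^{-1}\|f\|$ from step \ref{prop-26-2-25-i}, and commutator estimates of $hD$ through $(hD-A)\cdot\boldupsigma$, identical to those used for $k=0,1$ in Proposition~\ref{prop-26-2-6}\ref{prop-26-2-6-ii}, promote this to $(hD)^k$. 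Again the assumption (\ref{26-2-21}) is what makes the $\kappa Mh$-type remainders negligible.

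The main (and only mild) obstacle is bookkeeping of the $|\Im\tau|^{-1}$ power: the projector proof uses $\|u\|\le\|f\|$, but here $(Hu,u)$ is bounded by $\|f\|\|u\|$, which after invoking $\|u\|\le |\Im\tau|^{-1}\|f\|$ produces exactly one factor of $|\Im\tau|^{-1}$ per line. One must avoid letting this factor get squared through careless application of Cauchy--Schwarz; this is achieved by splitting $(Hu,u)\le c\|u\|^2+\|f\|\|u\|$ and always keeping one $\|u\|$ paired with $\|f\|$ rather than bounding it immediately. Apart from this, the argument is line-for-line analogous to the proof of Proposition~\ref{prop-26-2-6}.
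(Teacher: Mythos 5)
Your proposal is correct and follows essentially the same route the paper intends: the paper's own proof of Proposition~\ref{prop-26-2-25} is a one-line reference to the proof of Proposition~\ref{prop-26-2-6}, and you have carried out that adaptation correctly, in particular identifying the two resolvent identities — $\|u\|\le|\Im\tau|^{-1}\|f\|$ from the imaginary part and $(Hu,u)\le c\|u\|^2+\|f\|\|u\|$ from the real part of $((\tau-H)u,u)=(f,u)$ — that replace the projector bounds $\|u\|\le\|f\|$, $(Hu,u)\le c\|u\|^2$, and noting the bookkeeping needed to avoid squaring the $|\Im\tau|^{-1}$ factor.
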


\begin{proof}
Proof follows the same scheme as the proof of Proposition~\ref{prop-26-2-6}.
\end{proof}

\begin{proposition}\label{prop-26-2-26}
Let $d=3$  and assumptions \textup{(\ref{26-2-20})} and \textup{(\ref{26-2-21})} be fulfilled. Then $|\Phi (x)|\le Ch^{-3}$.
\end{proposition}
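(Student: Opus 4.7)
The target $|\Phi(x)|\le Ch^{-3}$ is a pointwise bound on the diagonal of the Schwartz kernel of the operator $(hD-A)\boldupsigma\,\mathbf{K} + \mathbf{K}\,{}^t(hD-A)\boldupsigma$, where $\mathbf{K}$ is the operator whose Schwartz kernel is $\cK$ from Proposition~\ref{prop-26-2-24}. The plan is to mimic the derivation of Corollary~\ref{cor-26-2-7}: promote the $\sL^2\to\sL^2$ resolvent bounds of Proposition~\ref{prop-26-2-25} to $\sL^2\to\sL^\infty$ bounds via the semiclassical Sobolev embedding available in $d=3$, and then convert these into the pointwise diagonal kernel bound through a suitable factorization.

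First I would rewrite $\mathbf{K}$ via spectral/functional calculus. The weight appearing in the integrand of (\ref{26-2-82}) assembles into a smooth function $f(\tau)$ whose \emph{derivative} $f'$ is bounded. Using the identity $\Res_\bR[R(\tau)\psi^2R(\tau)] = dE(\tau)\psi^2 R^+(\tau) + R^-(\tau)\psi^2 dE(\tau)$ and the spectral decomposition $H=\sum_k \lambda_k P_k$, one obtains
\[
\mathbf{K} = \sum_{j,k} f[\lambda_j,\lambda_k]\, P_k\,\psi^2\, P_j,
\qquad f[\lambda_j,\lambda_k] \Def \frac{f(\lambda_j)-f(\lambda_k)}{\lambda_j-\lambda_k},
\]
with $|f[\lambda_j,\lambda_k]|\le \|f'\|_{\sL^\infty}$. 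The Fourier identity $f[\lambda_j,\lambda_k]=\int_0^1\!\!\int \hat f'(\xi)\, e^{is\xi\lambda_k} e^{i(1-s)\xi\lambda_j}\,d\xi\,ds$ then yields
\[
\mathbf{K} = \int \hat f'(\xi)\int_0^1 e^{is\xi H}\,\psi^2\, e^{i(1-s)\xi H}\,ds\,d\xi.
\]
Combined with Proposition~\ref{prop-26-2-25}(i)--(ii) and elementary commutator estimates for $[(hD)^2,H]$ and $[(hD-A)\boldupsigma,H]$ (controlled by (\ref{26-2-44})), this representation gives $h$-independent $\sL^2\to\sL^2$ bounds for $(hD)^k\mathbf{K}$, $\mathbf{K}(hD)^k$ ($k\le 2$) and for $(hD)^2(hD-A)\boldupsigma\,\mathbf{K}$ (the polynomial-in-$|\xi|$ factors produced by commuting $(hD)^k$ past $e^{is\xi H}$ being absorbed by the Schwartz decay of $\hat f'$).

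In $d=3$ the semiclassical Sobolev embedding $\|u\|_{\sL^\infty}\le Ch^{-3/2}(\|u\|_{\sL^2}+\|(hD)^2 u\|_{\sL^2})$ then promotes these to $\|\mathbf{K}\|_{\sL^2\to\sL^\infty},\,\|(hD-A)\boldupsigma\,\mathbf{K}\|_{\sL^2\to\sL^\infty}\le Ch^{-3/2}$, and, by duality, $\|\mathbf{K}\|_{\sL^1\to\sL^2}\le Ch^{-3/2}$. For each fixed $(s,\xi)$ the operator $e^{is\xi H}\psi^2 e^{i(1-s)\xi H}$ factors as $A(s,\xi)\,B(s,\xi)$ with $A=e^{is\xi H}\psi$, $B=\psi e^{i(1-s)\xi H}$; the composition $(hD-A)\boldupsigma\,A(s,\xi)$ is $\sL^2\to\sL^\infty$ of norm $\le Ch^{-3/2}$, and $B(s,\xi)$ is $\sL^1\to\sL^2$ of norm $\le Ch^{-3/2}$, whence its Schwartz kernel is bounded pointwise by $Ch^{-3}$. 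Integration against the Schwartz weight $\hat f'$ preserves this, the symmetric term $\mathbf{K}\,{}^t(hD-A)\boldupsigma$ is identical, and summing gives $|\Phi(x)|\le Ch^{-3}$. The main technical obstacle is the last factorization: since $\cK$ is built from a ``double resolvent'' $R\psi^2 R$ rather than a projector squared as in Corollary~\ref{cor-26-2-7}, no direct product splitting is available, forcing one to use the double-operator-integral / Fourier expansion above, and to verify via commutator analysis of $e^{itH}$ that $\|(hD)^2 e^{itH}\|_{\sL^2\to\sL^2}$ and $\|(hD-A)\boldupsigma e^{itH}\|_{\sL^2\to\sL^\infty}$ grow at most polynomially in $|t|$ so that pairing with the rapidly decaying $\hat f'(\xi)$ keeps all constants $h$-independent.
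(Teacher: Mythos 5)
Your approach (a double-operator-integral/Fourier representation of $\mathbf{K}$, followed by semiclassical Sobolev and a factorization argument for the diagonal kernel) is genuinely different from the paper's. The paper works directly on the resolvent integral (\ref{26-2-83}): it uses Proposition~\ref{prop-26-2-25} to bound the integrand by $Ch^{-3}|\Im\tau|^{-2}$, gets $Ch^{-3}$ for each dyadic scale $L=2^nh$, notices that naive summation over $n=0,\ldots,\lfloor|\log h|\rfloor$ produces $Ch^{-3}|\log h|$, and then removes the log by the commutator identity (\ref{26-2-84}) (each commutator contributes a factor $h/L$, making the dyadic sum geometric).

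The critical gap in your proposal is the assertion that the scalar weight $f$ is smooth enough that $\hat f'$ is rapidly decaying. It is not. Tracing back from (\ref{26-2-80})--(\ref{26-2-82}), the weight is an $L$-regularization of $\tau\mapsto\tau_-$, whose derivative is a step function regularized at scale $L$; thus $\hat f'(\xi)$ decays only like $L(L|\xi|)^{-N}$, i.e.\ is concentrated up to $|\xi|\sim L^{-1}$ with total mass $O(1)$, and for the present asymptotics $L$ is taken down to scale $h$. This is the \emph{same} obstruction the paper confronts in the frequency picture: the dyadic sum in $L$ is the contour-integral avatar of the slow decay of $\hat f'$. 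Your argument then compounds the problem: to reach the pointwise bound you commute $(hD)^2$ (and $(hD-A)\boldupsigma$) past $e^{is\xi H}$, which costs factors polynomial in $|\xi|$ (there is no version of Proposition~\ref{prop-26-2-25} for the propagator; that proposition is about the resolvent). The needed integral $\int|\hat f'(\xi)|\,|\xi|^2\,d\xi$ is then $\sim L^{-2}\sim h^{-2}$, destroying the bound entirely rather than merely producing a $|\log h|$ loss. Put differently, you have no counterpart to (\ref{26-2-84}): the place in the paper's argument where a commutator $[H,\psi^2]$ trades one power of the spectral denominator for a factor of $h$ is precisely what makes the dyadic sum converge, and nothing in the Fourier/propagator picture you set up plays that role. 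Without it the proposal does not close.

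A secondary but related point: your invocation of Proposition~\ref{prop-26-2-25} is misplaced. That proposition bounds $(hD)^k(\tau-H)^{-1}$ with $|\Im\tau|^{-1}$ weight; it says nothing about $\sL^2\to\sL^2$ bounds for $(hD)^k e^{itH}$, which would have to be derived separately (via commutators with polynomial growth in $|t|$, reinforcing the problem above).
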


\begin{proof}
Let us  estimate
\begin{equation}
|\int \tau \varphi (\tau/L)  \Res_\bR
\mathsf{SK} \Bigl[ T (\tau -H)^{-1} \psi ^2 (\tau -H)^{-1} \Bigr] (x,y)\,d\tau|
\label{26-2-83}
\end{equation}
where  $L\le c$ and $\varphi \in \sC^\infty_0 ([-1,1])$  and also a similar expression with a factor $(\tau -L)$ instead of $\tau$; here either $T=I$, or $T=(hD_k-A_k)_x$ or $T=(hD_k-A_k)_y$.

Proposition \ref{prop-26-2-25} implies that the Schwartz kernel of the integrand does not exceed $Ch^{-3}|\Im \tau|^{-2}$ and therefore expression (\ref{26-2-83}) does not exceed $CL^2 \times h^{-3}L^{-2}= Ch^{-3}$.

Then what comes out in $\Phi$ from the term with the factor
$\bar{\phi}(\tau /h)$  does not exceed $Ch^{-3}$.

Then representing
$\bigl(1-\bar{\phi}(\tau /h)\bigr)$ as a sum of $\varphi (\tau /L)$ with $L= 2^n h$ with $n=0,\ldots, \lfloor |\log h|\rfloor +c$ we estimate the output of each term by $Ch^{-3}$ and thus the whole sum by $Ch^{-3}|\log h|$.

\medskip
To get rid off the logarithmic factor we use equality
\begin{equation}
(\tau -H)^{-1}\psi (\tau -H)^{-1}=-\partial (\tau -H)^{-1}\psi
+ (\tau -H)^{-2}[h,\psi ](\tau -H)^{-1};
\label{26-2-84}
\end{equation}
if we plug only the second part we recover a factor $h/L$ where $h$ comes from the commutator and $1/L$ from the increased singularity; an extra operator factor in the commutator is not essential. Then summation over partition results in $Ch^{-3}$.

Plugging only the first part we do not use the above decomposition but an equality
$\Res_\bR (\tau -H)^{-1}\, d\tau =d_\tau \uptheta (\tau -H)$.
\end{proof}

\begin{corollary}\label{cor-26-2-27}
Let $d=3$, \textup{(\ref{26-2-20})} and \textup{(\ref{26-2-21})} be fulfilled and $A$ be a minimizer. Then \textup{(\ref{26-2-27})} and \textup{(\ref{26-2-28})} hold.
\end{corollary}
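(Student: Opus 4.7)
The plan is to transplant the argument of Corollary~\ref{cor-26-2-8} verbatim, the two only substitutions being that the Euler--Lagrange identity (\ref{26-2-14}) for a minimizer of $\E(A)$ is replaced by its regularized counterpart (\ref{26-2-82}) for a minimizer of $\E'_\psi(A)$, and that the pointwise bound $|\Phi(x)|\le Ch^{-3}$ on the right-hand side is now supplied by Proposition~\ref{prop-26-2-26} in place of its derivation via Corollary~\ref{cor-26-2-7} in the original setting.

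With this substitution in hand I would first combine (\ref{26-2-82}) with Proposition~\ref{prop-26-2-26}, which for $d=3$ gives $\|\Delta A\|_{\sL^\infty}\le C\kappa h^{-1}$; simultaneously (\ref{26-2-20}) and (\ref{26-2-21}) furnish the $\sL^2$ bound $\|\partial A\|\le C(\kappa Mh^{2})^{1/2}\le Ch^{1/2}$. Standard interior elliptic regularity for the Laplacian then yields the H\"older estimate (\ref{26-2-27}). From (\ref{26-2-27}) I would obtain (\ref{26-2-28}) by repeating the measure-theoretic argument in the proof of Corollary~\ref{cor-26-2-8}: if $|\partial A(y)|\gtrsim \mu$ at some point, then $|\partial A|\gtrsim \mu$ persists on a ball of radius $\epsilon(\mu h/\kappa)^{1-\delta}$ (or, in the regime $\mu\ge \kappa h^{-1}$, on a ball of radius $\epsilon$); integrating $|\partial A|^{2}$ over that ball and confronting the result with $\|\partial A\|^{2}\le C\kappa Mh^{2}$ delivers $\mu^{5-3\delta}\le C\kappa^{4-3\delta}h^{-1+3\delta}M$, whence (\ref{26-2-21}) together with the a priori upper bound $M\le Ch^{-3}$ produces (\ref{26-2-28}).

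The only genuinely non-routine input is Proposition~\ref{prop-26-2-26} itself, whose delicate step---already carried out---is the removal of a spurious $|\log h|$ factor coming from the dyadic decomposition over scales $L=2^{n}h$, achieved by means of the resolvent identity (\ref{26-2-84}). Granted that pointwise bound on $\Phi$, the rest of the derivation is a mechanical repetition of the reasoning already used in Corollary~\ref{cor-26-2-8}, and I do not foresee any further obstacle.
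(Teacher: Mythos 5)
Your proposal is correct and follows exactly the approach the paper itself invokes: the paper's entire proof of Corollary~\ref{cor-26-2-27} consists of the single sentence ``Proof follows the proof of corollary~\ref{cor-26-2-8},'' and you have accurately identified the two substitutions required to make that transplant work---replacing the Euler--Lagrange identity (\ref{26-2-14}) by its regularized counterpart (\ref{26-2-82}), and sourcing the pointwise bound $|\Phi|\le Ch^{-3}$ from Proposition~\ref{prop-26-2-26}---after which the elliptic-regularity and measure-theoretic steps of Corollary~\ref{cor-26-2-8} go through verbatim.
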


\begin{proof}
Proof follows the proof of corollary~\ref{cor-26-2-8}.
\end{proof}

Now we can recover both proposition~\ref{prop-26-2-16} and main theorems~\ref{thm-26-2-17} and~\ref{thm-26-2-19}:

\begin{theorem}\label{thm-26-2-28}
Let $d=3$ and assumptions \textup{(\ref{26-2-20})} and $\kappa\le c$ be fulfilled. Then
\begin{enumerate}[label=(\roman*), fullwidth]
\item The following estimate holds:
\begin{equation}
\E^*_\psi- \int \Weyl_1(x)\psi^2(x)\,dx  = O(h^{2-d})
\label{26-2-85}
\end{equation}
and and a minimizer $A$ satisfies \textup{(\ref{26-2-52})} and \textup{(\ref{26-2-53})};
\item Furthermore, let assumption \textup{(\ref{26-2-62})} be fulfilled (i.e.
$\upmu _0(\Pi_\infty)=0$). Then
\begin{equation}
\E^*_\psi- \int \Weyl^*_1(x)\psi^2(x)\,dx  =o(h^{2-d})\label{26-2-86}
\end{equation}
and a minimizer $A$ satisfies similarly improved versions of \textup{(\ref{26-2-52})} and \textup{(\ref{26-2-53})}.
\end{enumerate}
\end{theorem}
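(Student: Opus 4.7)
The upper bounds $\E^*_\psi \le \int \Weyl_1(x)\psi^2(x)\,dx +Ch^{2-d}$ and the improved $+o(h^{2-d})$ version are already delivered by Corollary~\ref{cor-26-2-23}, obtained by trial field $A=0$ together with Proposition~\ref{prop-26-2-22}. The task is therefore the matching lower bound plus the regularity assertions on the minimizer. The plan is to parallel the proofs of Theorems~\ref{thm-26-2-17} and~\ref{thm-26-2-19}, but starting from the a priori bound (\ref{26-2-20}) rather than from the pointwise bound (\ref{26-2-44}); the localization factor $\psi$ forces us to work with $\E'_\psi$ in place of $\E_\psi$ so that an Euler--Lagrange equation is available.

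\emph{Stage 1 (a priori regularity of a minimizer of $\E'_\psi$).} Replace $\E_\psi$ by the smaller $\tau$-regularized functional $\E'_\psi$ of (\ref{26-2-81}); the two agree modulo $O(Lh^{-d+1})\,|\supp\psi|$, negligible for $L$ chosen below the target remainder. Let $A$ be a minimizer of $\E'_\psi$; then Proposition~\ref{prop-26-2-24} gives equation (\ref{26-2-82}). Assumption~(\ref{26-2-20}) with $M\asymp h^{-3}$ provides (\ref{26-2-21}), so Propositions~\ref{prop-26-2-25}--\ref{prop-26-2-26} apply and yield $|\Phi_j|\le Ch^{-3}$. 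Corollary~\ref{cor-26-2-27} then delivers the Hölder estimates (\ref{26-2-27})--(\ref{26-2-28}); in particular $\|\partial A\|_{\sC}\le Ch^{-4/5-\delta}$, so the hypothesis (\ref{26-2-44}) of the microlocal machinery of Section~\ref{sect-26-2-2} is verified a posteriori.

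\emph{Stage 2 (sharp lower bound and proof of~\ref{26-2-85}).} With the regularity of Stage 1, apply Proposition~\ref{prop-26-2-14} in an $\ell$-partition of $\supp\psi$ exactly as in the proof of Proposition~\ref{prop-26-2-22}, optimizing $T$ according to (\ref{26-2-43}). Integrating the pointwise Weyl formula against $\psi^2$ and bootstrapping~(\ref{26-2-45}) to absorb the $\partial A$-dependent terms on the right of the remainder (\ref{26-2-55}), we arrive at the localized analog of (\ref{26-2-57}),
\begin{equation*}
\E'_\psi(A)\ge \int\Weyl_1(x)\psi^2(x)\,dx - Ch^{2-d}+\tfrac1{2\kappa}h^{1-d}\|\partial A\|^2.
\end{equation*}
Comparing with the upper bound from Corollary~\ref{cor-26-2-23} gives both (\ref{26-2-85}) and the field-energy estimate (\ref{26-2-52}); feeding (\ref{26-2-52}) back into (\ref{26-2-45}) yields (\ref{26-2-53}).

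\emph{Stage 3 (periodic trajectories and proof of~\ref{26-2-86}).} Under (\ref{26-2-62}), run the successive-approximation argument of Subsubsection~\ref{sect-26-2-2-3}: freeze $A_\varepsilon$ with $\varepsilon=h^{1-\delta}$, use $H_{A_\varepsilon,V}$ as the unperturbed operator, and bound the first-order perturbation term by (\ref{26-2-66}) and the unperturbed term by (\ref{26-2-67}). The non-periodicity assumption makes $\upmu_0(\Pi_{T,\zeta})\to 0$ as $\zeta\to 0$ and $T\to\infty$, improving the Tauberian remainder to $o(h^{2-d})$; the two-term Weyl approximation together with (\ref{26-2-59}) absorbs the $(H_{A,V}-H_{A_\varepsilon,V})$-term (its principal contribution vanishes by parity in $\xi-A_\varepsilon$). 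Integration against $\psi^2$ then gives (\ref{26-2-86}) with $\Weyl_1^*$ from (\ref{26-2-75}), and the improved estimates on $A$ follow from the improved version of (\ref{26-2-45}).

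\emph{Main obstacle.} The real difficulty is Stage 1: we must derive (\ref{26-2-27})--(\ref{26-2-28}) from (\ref{26-2-82}), which is more delicate than (\ref{26-2-14}) because $\psi^2$ is sandwiched between two resolvents rather than between two spectral projectors, forcing the $|\log h|$ loss in Proposition~\ref{prop-26-2-26} to be eliminated via identity~(\ref{26-2-84}). Second, verifying that the microlocal Propositions~\ref{prop-26-2-11}--\ref{prop-26-2-14} remain valid under only the weak hypothesis (\ref{26-2-20}) requires a careful bootstrap that (\ref{26-2-44}) is self-consistent; this is precisely what the chain (\ref{26-2-20})$\Rightarrow$(\ref{26-2-21})$\Rightarrow$Corollary~\ref{cor-26-2-27}$\Rightarrow$(\ref{26-2-44}) accomplishes.
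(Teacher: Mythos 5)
Your overall plan follows the paper's route, which is left implicit in the text: after establishing Propositions~\ref{prop-26-2-24}--\ref{prop-26-2-26} and Corollary~\ref{cor-26-2-27}, the paper simply says it ``recovers'' Proposition~\ref{prop-26-2-16} and Theorems~\ref{thm-26-2-17}, \ref{thm-26-2-19}. Your three stages correctly reconstruct this: pass to the $\tau$-regularized functional $\E'_\psi$ to obtain the Euler--Lagrange equation \textup{(\ref{26-2-82})}, use the resolvent bounds and the identity \textup{(\ref{26-2-84})} to eliminate the $|\log h|$ loss and get $|\Phi|\le Ch^{-3}$, pass through Corollary~\ref{cor-26-2-27} to obtain the H\"older estimates \textup{(\ref{26-2-27})}--\textup{(\ref{26-2-28})}, and then rerun the microlocal machinery of Subsection~\ref{sect-26-2-2} localized by $\psi^2$.

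However, there is a genuine logical error in Stage~1: the claim that ``assumption \textup{(\ref{26-2-20})} with $M\asymp h^{-3}$ provides \textup{(\ref{26-2-21})}'' is false. Conditions \textup{(\ref{26-2-20})} and \textup{(\ref{26-2-21})} are independent; with $M\asymp h^{-3}$, condition \textup{(\ref{26-2-21})} reads $\kappa h^{-2}\le c$, i.e.\ $\kappa\le ch^2$, which is far stronger than the assumed $\kappa\le c$ and is not implied by any bound on $\|\partial A\|$. The precondition \textup{(\ref{26-2-21})} is an additional hypothesis that Propositions~\ref{prop-26-2-25}--\ref{prop-26-2-26} and Corollary~\ref{cor-26-2-27} all require; Theorem~\ref{thm-26-2-28} carries it implicitly, and the paper explicitly devotes Subsection~\ref{sect-26-2-4-1} (leading to Theorem~\ref{thm-26-2-29}) to removing this precondition by rescaling, starting precisely from the observation that with the a~priori $M\asymp h^{-3}$ of Proposition~\ref{prop-26-2-1} the precondition fails unless $\kappa\le ch^2$. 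Your Stage~1 should invoke \textup{(\ref{26-2-21})} as a hypothesis rather than attempt to derive it from \textup{(\ref{26-2-20})}. A secondary quantitative slip: the discrepancy between $\E_\psi$ and the $\tau$-regularized $\E'_\psi$ is of order $Lh^{-d}|\supp\psi|$ (the shift $\tau\mapsto\tau-L$ acting on a density of states $\sim h^{-d}$), not $Lh^{1-d}|\supp\psi|$; for $d=3$ this forces $L\lesssim h^2$ if one wants that discrepancy below the target remainder $O(h^{-1})$, whereas your estimate would permit $L\lesssim h$. This does not affect the structure of the argument, since one only needs $\E'_\psi\le\E_\psi$ for the lower bound and the Weyl term shift to be absorbable, but the bookkeeping should be corrected.
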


\section{Rescaling}
\label{sect-26-2-4}

We consider only $d=3$ here.

\subsection{Case \texorpdfstring{$\kappa \le 1$}{\textkappa  \textle 1}}
\label{sect-26-2-4-1}

We already have an upper estimate: see Corollary~\ref{cor-26-2-23}. Let us prove a lower estimate\footnote{\label{foot-26-10} But only for $\E_\psi(A)$ defined by (\ref{26-2-79}).}. Consider an error
\begin{equation}
\Bigl(\int \Weyl_1(x)\psi^2\,dx - \E_{\psi}(A)\Bigr)_+.
\label{26-2-87}
\end{equation}
Obviously $\Tr^-$ is sub-additive
\begin{equation}
\Tr^-(\sum _j \psi_j H\psi_j)\ge \sum_j \Tr^-(\psi_j H\psi_j)
\label{26-2-88}
\end{equation}
and therefore so is $\E_\psi (A)$  under assumption that
$\psi_j\in \sC^2_0(B(x_j,\frac{1}{2}\ell_j))$ where multiplicity of covering by $B(x_j,\ell_j))$ does not exceed $C_0$ and we are allowed to replace $\kappa$ by $C_1 \kappa$ in the right-hand expression\footnote{\label{foot-26-11} Really, $\|\partial A\|^2 \ge c\sum_j \|\partial A_j\|^2 $ with
$A_j(x)= (A(x)  -A_j(x_j))\psi'_j$ with
$\psi'_j\in \sC_0^2(B(x_j,\frac{7}{8}\ell_j))$ equal $1$ in $B(x_j,\frac{3}{4}\ell_j))$.}.

Then we need to consider each partition element and use a lower  estimate for it. While considering partition we use so called \emph{ISM identity}\index{ISM identity}: as
\begin{equation}
\sum _j\psi_j^2=1
\label{26-2-89}\\[-2pt]
\end{equation}
we have
\begin{multline}
H=
\sum_j \bigl(\psi_j H\psi_j + \frac{1}{2}[[H,\psi_j],\psi_j]\bigr)=\\
\sum_j \psi_j \bigl(H +
\frac{1}{2} \sum_k [[H,\psi_k],\psi_k] \bigr)\psi_j
\label{26-2-90}
\end{multline}
where the second equality is due to the fact that $[[H,\psi_j],\psi_j$ is an ordinary function.

In virtue of Proposition~\ref{prop-26-2-5}, from the very beginning we need to consider
\begin{equation}
M= \kappa^\beta h^{-1-\alpha}
\label{26-2-91}
\end{equation}
with $\alpha=2$, $\beta=0$ and $\kappa \le c$. But we need to satisfy precondition (\ref{26-2-21}) which is then
\begin{equation}
\kappa ^{\beta+1} h^{-\alpha}\le c.
\label{26-2-92}
\end{equation}
Therefore, if condition (\ref{26-2-92}) is fulfilled with $\alpha=0$ we conclude that the final error is indeed $O(h^{-1})$ or even $o(h^{-1})$ (under assumption (\ref{26-2-62})) without any precondition.

Let precondition (\ref{26-2-92}) fail. Let us use $\gamma$-admissible partition of unity $\psi_j$ with $\psi_j$ satisfying after rescaling assumptions of Proposition~\ref{prop-26-2-22}.

Note that rescaling $x\mapsto x\gamma^{-1}$ results in
$h\mapsto h_\new=h \gamma^{-1}$ and after rescaling in the new coordinates  $\|\partial  A\|^2$ acquires factor $\gamma^{d-2}$ and thus factor $\kappa^{-1}h^{-2}$ becomes $\kappa^{-1}h^{-2}\gamma^{d-2}  =
\kappa_\new^{-1} h_\new^{-2}$ with $\kappa \mapsto \kappa_\new=\kappa \gamma$.

Then \emph{after rescaling\/} precondition (\ref{26-2-92}) is satisfied provided \emph{before rescaling\/}
$\kappa^{\beta+1}h^{-\alpha}\gamma^{\alpha +\beta +1}\le c$. Let us pick up
$\gamma = \kappa^{-(\beta+1)/(\alpha +\beta +1) }h^{\alpha/(\alpha +\beta +1)}$.
Obviously if before rescaling condition (\ref{26-2-92}) fails, then
$h\ll \gamma\le 1$.

But then expression (\ref{26-2-87}) with $\psi$ replaced by $\psi_j$
does not exceed $Ch_\new^{-1}=C(h\gamma^{-1})^{-1}$ and the total expression (\ref{26-2-87})  does not exceed $C(h\gamma^{-1})^{-1}\gamma^{-3}=
Ch^{-1}\gamma^{-2}= C\kappa ^{\beta'} h^{-1-\alpha'}$ with
\begin{equation*}
\beta'= 2(\beta+1)/(\alpha+\beta+1),\qquad \alpha'=2\alpha/(\alpha+\beta+1);
\end{equation*}

So, actually we can pick up $M$ with $\alpha,\beta$ replaced by $\alpha',\beta'$ and we have a precondition (\ref{26-2-92}) with these new $\alpha',\beta'$ and we do not need an old precondition. Repeating the rescaling procedure again we derive a proper estimate with again weaker precondition etc.

One can see easily that $\alpha'+\beta'+1=3$ and therefore on each step $\alpha+\beta+1=3$ and we have recurrent relation for $\alpha'$: $\alpha'=\frac{2}{3}\alpha$;
and therefore we have sequence for $\alpha$ which decays and  becomes arbitrarily small. Therefore precondition (\ref{26-2-92}) has been reduced to $\kappa \le h^\delta$ and estimate $M=O(h^{-1})$ has been established. After this under assumption (\ref{26-2-62}) we can prove even sharper asymptotics.

To weaken assumption $\kappa \le h^\delta$ to $\kappa\le c$ we can use rescaling $x\mapsto x\gamma^{-1}$ with $\gamma =h^\delta$. We arrive to the error estimate $O(h^{-1-\delta})$ and therefore optimizer satisfies
$\|\nabla \times A\|\le h^{\frac{1}{2}-\delta}$ (where $\delta$ is increased if necessary but remains arbitrarily small). Then instead of
$\|\Delta A\|_{\sL^\infty} =O(1)$ we arrive to
$\|\Delta A\|_{\sL^\infty} = O(h^{-\delta})$ and to
$\|\partial^2 A\|_{\sL^\infty} =O(h^{-\delta})$; then
$\|\partial  A\|_{\sL^\infty} =O(h^{\frac{1}{2}-\delta})$; it is more than sufficient to unleash microlocal analysis without any need to appeal to Proposition~\ref{prop-26-2-6} which is the only place where we needed assumption~(\ref{26-2-21}).

Thus we arrive to

\begin{theorem}\label{thm-26-2-29}
Let $d=3$, $V\in \sC^{2,1}$, $\kappa \le c$ and let $\psi$ satisfy assumption of proposition~\ref{prop-26-2-22}. Then

\begin{enumerate}[label=(\roman*), fullwidth]
\item Asymptotics \textup{(\ref{26-2-85})} holds;
\item Further, if assumption \textup{(\ref{26-2-62})} is fulfilled then asymptotics \textup{(\ref{26-2-85})} holds;
\item  If \textup{(\ref{26-2-85})} or \textup{(\ref{26-2-86})} holds for
$E_\psi (A)$  (we need only an estimate from below) then
$\|\partial A\|=O((\kappa h)^{\frac{1}{2}})$ or
$\|\partial A\|=o((\kappa h)^{\frac{1}{2}})$ respectively.
\end{enumerate}
\end{theorem}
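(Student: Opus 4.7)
The upper bound is already available: taking $A=0$ in Corollary~\ref{cor-26-2-23} gives (\ref{26-2-85}), and under (\ref{26-2-62}) also (\ref{26-2-86}). All the work is in the matching lower bound for $\E_\psi(A)$, and the plan is to bootstrap the local theorem~\ref{thm-26-2-28}, which requires the precondition (\ref{26-2-92}), down to the blanket assumption $\kappa\le c$ by partition-and-rescaling.

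The first step is to localize. I would take a $\gamma$-admissible partition $\{\psi_j\}$ with $\psi_j\in\sC^\infty_0(B(x_j,\tfrac{1}{2}\gamma))$ and write $H$ via the ISM identity (\ref{26-2-90}), after which the commutator terms $[[H,\psi_j],\psi_j]$ are bounded multiplication operators (of size $O(\gamma^{-2})$) and hence harmless at the required remainder order. Sub-additivity of $\Tr^-$ as in (\ref{26-2-88}) then reduces the lower bound on $\E_\psi$ to a sum of lower bounds on each $\psi_j$-element, at the price of replacing $\kappa$ by $C_1\kappa$ (see footnote~11). Now rescale $x\mapsto x\gamma^{-1}$: one has $h\mapsto h_{\new}=h\gamma^{-1}$ and, because $\int|\partial A|^2\,dx$ acquires $\gamma^{d-2}=\gamma$, the coupling transforms as $\kappa\mapsto\kappa_{\new}=\kappa\gamma$. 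Hence the condition (\ref{26-2-92}) with parameters $(\alpha,\beta)$ \emph{after} rescaling is equivalent to $\kappa^{\beta+1}h^{-\alpha}\gamma^{\alpha+\beta+1}\le c$ \emph{before} rescaling, and I would pick $\gamma=\kappa^{-(\beta+1)/(\alpha+\beta+1)}h^{\alpha/(\alpha+\beta+1)}$, which makes the rescaled precondition an equality up to constants and satisfies $h\ll\gamma\le 1$ precisely in the regime where the original precondition fails.

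Plugging this $\gamma$ into Theorem~\ref{thm-26-2-28} on each element gives an error $O(h_{\new}^{-1})=O(h^{-1}\gamma)$ per element, and summing $\gamma^{-3}$ elements produces a global error $M'=C\kappa^{\beta'}h^{-1-\alpha'}$ with $\alpha'=2\alpha/(\alpha+\beta+1)$, $\beta'=2(\beta+1)/(\alpha+\beta+1)$. Starting from $(\alpha,\beta)=(2,0)$, one checks $\alpha+\beta+1=3$ is preserved and $\alpha'=\tfrac{2}{3}\alpha$, so iterating the scheme drives $\alpha$ down to any prescribed $\delta>0$, reducing the hypothesis to $\kappa\le h^\delta$ and yielding the remainder $O(h^{-1})$. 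To lift the residual hypothesis to $\kappa\le c$, I would perform one further rescaling with $\gamma=h^\delta$ for arbitrarily small $\delta>0$. This produces a remainder $O(h^{-1-\delta})$ and, via Proposition~\ref{prop-26-2-5} and elliptic regularity, a minimizer with $\|\partial A\|_{\sL^\infty}=O(h^{1/2-\delta})$. This estimate is strong enough to drive the microlocal arguments of Section~\ref{sect-26-2-2} directly, bypassing Proposition~\ref{prop-26-2-6} (the only place where (\ref{26-2-21}) was essential), so one can apply Theorem~\ref{thm-26-2-28} globally and recover (\ref{26-2-85}); under (\ref{26-2-62}) the identical argument with Theorem~\ref{thm-26-2-28}(ii) delivers (\ref{26-2-86}).

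Statement (iii) is a direct consequence of the two-sided bound: from $\E_\psi(A)\le\E^*_\psi+o(1)\le\int\Weyl_1\psi^2\,dx+O(h^{2-d})$ together with the definition (\ref{26-2-79}), a near-minimizer satisfies
\begin{equation*}
\frac{1}{\kappa h^{d-1}}\int|\partial A|^2\,dx \le \int\Weyl_1(x)\psi^2(x)\,dx - \Tr^-(\psi H_{A,V}\psi) + O(h^{2-d}),
\end{equation*}
and plugging the lower bound just proved into the right-hand side gives $\|\partial A\|^2=O(\kappa h)$, resp.\ $o(\kappa h)$ under (\ref{26-2-62}). The anticipated main obstacle is the bookkeeping in the rescaling recursion: verifying that $\alpha+\beta+1=3$ is an invariant, that the commutator contributions from the ISM identity and the enlarged coupling $C_1\kappa$ do not spoil $\alpha'=\tfrac{2}{3}\alpha$, and that after the final $h^\delta$-rescaling the derivative bounds on the minimizer are strong enough to substitute for (\ref{26-2-21}) in every use of the local theory.
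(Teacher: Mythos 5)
Your proposal follows the paper's own argument essentially line by line: upper bound from Corollary~\ref{cor-26-2-23} with $A=0$, lower bound by $\gamma$-admissible partition with the ISM identity~(\ref{26-2-90}) and sub-additivity~(\ref{26-2-88}), the rescaling $h\mapsto h\gamma^{-1}$, $\kappa\mapsto\kappa\gamma$ with the same choice of $\gamma$, the invariant $\alpha+\beta+1=3$ driving $\alpha'=\tfrac{2}{3}\alpha$ down to reduce the precondition to $\kappa\le h^\delta$, the final $\gamma=h^\delta$ rescaling to reach $\kappa\le c$ with $\|\partial A\|_{\sL^\infty}=O(h^{1/2-\delta})$ bypassing Proposition~\ref{prop-26-2-6}, and (iii) by feeding the lower bound back into the energy identity. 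The only nit is that the double commutator $[[H,\psi_j],\psi_j]$ is of size $O(h^2\gamma^{-2})$, not $O(\gamma^{-2})$, but you use it correctly as a harmless lower-order perturbation.
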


\subsection{Case \texorpdfstring{$1\le \kappa \le h^{-1}$}{1\textle \textkappa  \textle 1/h}}
\label{sect-26-2-4-2}

We can consider even the case $1\le \kappa \le h^{-1}$. The simple rescaling-and-partition arguments with $\gamma=\kappa^{-1}$  lead to the following
\begin{claim}\label{26-2-93}
As $1\le \kappa \le h^{-1}$ remainder estimate $O(\kappa^2h^{-1})$ holds and for a minimizer $A$ satisfies $\|\partial A\|^2\le C\kappa^3 h$.
\end{claim}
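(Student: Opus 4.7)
The plan is to reduce the regime $1\le \kappa\le h^{-1}$ to the already-handled regime $\kappa_\new\le c$ of Theorem~\ref{thm-26-2-29} by combining the partition-of-unity localization of Subsection~\ref{sect-26-2-3} with the scaling identities of Subsection~\ref{sect-26-2-4-1}. Concretely, I would cover the support of $\psi$ by a $\gamma$-admissible partition of unity $\{\psi_j^2\}$ on scale $\gamma=c_0\kappa^{-1}$ with $c_0$ small, and in each ball $B(x_j,\gamma)$ perform the rescaling $x\mapsto y=(x-x_j)/\gamma$. Following Subsection~\ref{sect-26-2-4-1}, one obtains $h_\new=h\gamma^{-1}=h\kappa\le 1$ (thanks to $\kappa\le h^{-1}$) and $\kappa_\new=\kappa\gamma=c_0\le c$, so the rescaled problem falls within the hypotheses of Theorem~\ref{thm-26-2-29}. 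The smoothness hypothesis on $V$ is automatic since $V_\new(y)=V(x_j+\gamma y)$ satisfies $\|\partial^k V_\new\|_\sC=\gamma^k\|\partial^k V\|_\sC\le c$ whenever $\gamma\le 1$.

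Applying Theorem~\ref{thm-26-2-29} in each rescaled ball gives the per-ball estimate
\begin{equation*}
\E^*_{\psi_j}=\int \Weyl_1(x)\psi_j^2(x)\,dx+O\bigl(h_\new^{-1}\bigr)=\int \Weyl_1(x)\psi_j^2(x)\,dx+O\bigl((h\kappa)^{-1}\bigr).
\end{equation*}
The number of partition elements is $O(\gamma^{-d})=O(\kappa^3)$, so summing via the sub-additivity (\ref{26-2-88}) of $\Tr^-(\psi H\psi)$ and the ISM identity (\ref{26-2-90}) yields a total remainder of order $\kappa^3\cdot (h\kappa)^{-1}=\kappa^2 h^{-1}$. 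The commutator corrections $[[H,\psi_j],\psi_j]$ are pointwise $O(h^2\gamma^{-2})=O(h^2\kappa^2)$ and, after pairing with $e(x,x,0)=O(h^{-3})$ and integrating over volume $\gamma^d$ per ball, contribute $O(h^{-1}\gamma)$ per ball, i.e.\ $O(h^{-1}\gamma^{-2})=O(\kappa^2 h^{-1})$ in total, which fits the target. Consequently,
\begin{equation*}
\E(A)\ge \Weyl_1-C\kappa^2 h^{-1}.
\end{equation*}

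The bound on the minimizer then follows by comparing with the trivial upper bound $\E^*\le \E(0)=\Weyl_1+O(h^{-1})\le \Weyl_1+O(\kappa^2 h^{-1})$ (using $\kappa\ge 1$). For a minimizer $A$,
\begin{equation*}
\frac{1}{\kappa h^{2}}\|\partial A\|^2=\E(A)-\Tr^-(H_{A,V})\le \E^* -\bigl(\Weyl_1-C\kappa^2 h^{-1}\bigr)\le C\kappa^2 h^{-1},
\end{equation*}
giving $\|\partial A\|^2\le C\kappa^3 h$. The main obstacle is bookkeeping: one must verify that the implicit constant $C_1$ in the sub-additivity replacement $\kappa\mapsto C_1\kappa$ (footnote to (\ref{26-2-88})) is absorbed by the freedom to choose $c_0$ small so that $\kappa_\new=C_1\kappa\gamma=C_1 c_0\le c$ still holds, and that the ISM commutator contributions, once summed over the $O(\kappa^3)$ partition elements, do not exceed the target $O(\kappa^2 h^{-1})$ — both of which work out precisely because the critical scale is $\gamma=c_0\kappa^{-1}$.
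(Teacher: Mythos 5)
Your argument is essentially the paper's: the paper justifies this claim in a single sentence ("The simple rescaling-and-partition arguments with $\gamma=\kappa^{-1}$ lead to the following"), and your proposal spells out exactly that rescaling—choose $\gamma\asymp\kappa^{-1}$ so that $\kappa_\new=\kappa\gamma=O(1)$ and $h_\new=h\gamma^{-1}\le 1$, invoke Theorem~\ref{thm-26-2-29} on each $\gamma$-ball, and sum the $O(\gamma^{-3})$ per-ball remainders $O(h_\new^{-1})=O(\gamma h^{-1})$ to get the total $O(\gamma^{-2}h^{-1})=O(\kappa^2h^{-1})$. Your treatment of the sub-additivity constant (absorbed by shrinking $c_0$) and the ISM commutator correction ($O(h^2\gamma^{-2})\cdot O(h^{-3})\cdot\gamma^3\cdot\gamma^{-3}=O(\kappa^2h^{-1})$) are both correct and match what the paper's earlier Subsections~\ref{sect-26-2-3}--\ref{sect-26-2-4-1} establish.

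The only imprecision is in the final step deducing $\|\partial A\|^2\le C\kappa^3 h$. You write $\Tr^-(H_{A,V})\ge\Weyl_1-C\kappa^2h^{-1}$ as if it followed directly from the lower bound $\E(A)\ge\Weyl_1-C\kappa^2h^{-1}$, but the latter contains the full magnetic energy on its left side and so gives only $\Tr^-(H_{A,V})+\kappa^{-1}h^{-2}\|\partial A\|^2\ge\Weyl_1-C\kappa^2h^{-1}$, which is useless for bounding $\|\partial A\|$. The standard fix (Proposition~\ref{prop-26-2-5} and the mechanism visible in (\ref{26-2-57})) is to run your partition argument with $\gamma=c_0(2\kappa)^{-1}$, which proves the lower bound for the energy functional with $\kappa$ replaced by $2\kappa$ and the same $O(\kappa^2h^{-1})$ remainder, i.e.\ $\Tr^-(H_{A,V})+\frac{1}{2\kappa h^2}\|\partial A\|^2\ge\Weyl_1-C\kappa^2h^{-1}$. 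Combining with the upper bound $\Tr^-(H_{A,V})+\kappa^{-1}h^{-2}\|\partial A\|^2=\E^*\le\Weyl_1+C\kappa^2h^{-1}$ and subtracting yields $\frac{1}{2\kappa h^2}\|\partial A\|^2\le 2C\kappa^2h^{-1}$, hence $\|\partial A\|^2\le C'\kappa^3h$. This is a standard move the paper uses throughout, so the omission is cosmetic rather than substantive, but it should be written out.
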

However we would like to improve it and, in particular to prove that if $\kappa$ is moderately large then the remainder estimate is still $O(h^{-1})$ and even $o(h^{-1})$ under non-periodicity assumption.

\begin{theorem}\label{thm-26-2-30}
Let $d=3$, $V\in \sC^{2,1}$,  and let $\psi$ satisfy assumptions of proposition~\ref{prop-26-2-22}. Then
\begin{enumerate}[label=(\roman*), fullwidth]
\item \label{thm-26-2-30-i}
As
\begin{equation}
\kappa \le \kappa^*_h\Def \epsilon h^{-\frac{1}{4}}|\log h|^{-\frac{3}{4}}
\label{26-2-94}
\end{equation}
asymptotics \textup{(\ref{26-2-85})} holds;
\item \label{thm-26-2-30-ii}
Furthermore as $\kappa=o(\kappa^*_h)$ and assumption \textup{(\ref{26-2-62})} is fulfilled then   asymptotics \textup{(\ref{26-2-86})} holds;
\item \label{thm-26-2-30-iii}
As   $\kappa^*_h \le \kappa \le ch^{-1}$
 the following estimate holds:
\begin{equation}
|\E^*_\psi - \int \Weyl_1(x)\psi^2(x) \,dx |\le
Ch^{-3} (\kappa h)^{\frac{8}{3}} |\log \kappa h|^2
\label{26-2-95}
\end{equation}
\end{enumerate}
\end{theorem}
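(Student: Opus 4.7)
The strategy is to bootstrap the crude estimate~\textup{(\ref{26-2-93})} by iterating three relations satisfied by any minimizer $A$ of $\E_\psi$: first, the Tauberian remainder bound of the form~\textup{(\ref{26-2-55})}
\begin{equation*}
M \;\le\; C(1+\mu)^2 h^{-1}+C(\kappa+\|\partial A\|')h^{-1},\qquad \mu\Def\|\partial A\|_\sC,
\end{equation*}
which controls the Weyl error for $\E_\psi(A)$ under $V\in \sC^{2,1}$; second, the minimizer energy inequality (a local form of Proposition~\ref{prop-26-2-5}) $\|\partial A\|^2\le C\kappa h^2 M$; and third, the H\"older interpolation of Remark~\ref{rem-26-2-18}\ref{rem-26-2-18-ii} and~\textup{(\ref{26-2-61})},
\begin{equation*}
\mu \;\le\; C\kappa^{3/5}\|\partial A\|^{2/5}|\log h|^{3/5},
\end{equation*}
obtained by balancing the log-Lipschitz bound~\textup{(\ref{26-2-60})} against the local $\sL^2$-average of $\partial A$. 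Initialising with $M_0=O(\kappa^2 h^{-1})$ and $\|\partial A\|^2\le C\kappa^3 h$ from~\textup{(\ref{26-2-93})}, the chained inequality $M_{k+1}\le C\kappa^{8/5}h^{-1/5}M_k^{2/5}|\log h|^{6/5}$ yields $M_{k+1}/M_k\le C(\kappa h)^{2/5}|\log h|^{6/5}\le 1$ for $\kappa\le h^{-1}$, so the sequence contracts toward its fixed point.

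Solving the fixed-point system algebraically gives
\begin{equation*}
\mu_\infty\le C\kappa^{4/3}h^{1/3}|\log \kappa h|,\quad \|\partial A\|_\infty^2\le C\kappa^{11/3}h^{5/3}|\log \kappa h|^2,\quad M_\infty\le C\kappa^{8/3}h^{-1/3}|\log \kappa h|^2,
\end{equation*}
and the threshold $\kappa^*_h=\epsilon h^{-1/4}|\log h|^{-3/4}$ of~\textup{(\ref{26-2-94})} is precisely the value at which $\mu_\infty\asymp 1$ (equivalently $M_\infty\asymp h^{-1}$). Hence for $\kappa\le \kappa^*_h$ one has $\mu_\infty=O(1)$; combining the resulting lower bound $\E_\psi(A)\ge \int \Weyl_1\psi^2\,dx - Ch^{-1}$ with the upper bound of Corollary~\ref{cor-26-2-23} yields~\textup{(\ref{26-2-85})} and proves~\ref{thm-26-2-30-i}. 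For~\ref{thm-26-2-30-ii} one reruns the same bootstrap but uses the $o(h^{-1})$ refinement from Theorem~\ref{thm-26-2-29}(ii), which under the non-periodicity hypothesis~\textup{(\ref{26-2-62})} upgrades each step of the iteration. For $\kappa^*_h\le \kappa\le ch^{-1}$ the fixed-point bound on $M_\infty$ is exactly~\textup{(\ref{26-2-95})}, so~\ref{thm-26-2-30-iii} follows after absorbing the upper estimate $\E^*_\psi\le\int\Weyl_1\psi^2\,dx+Ch^{-1}$, which is dominated by $M_\infty$ in this regime.

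The main obstacle will be justifying that each bootstrap step remains admissible once $\mu$ exceeds $1$: the regularity hypothesis~\textup{(\ref{26-2-44})} of Proposition~\ref{prop-26-2-14} and Theorem~\ref{thm-26-2-17} (namely $\mu\le Ch^{-1+\delta}$) must persist throughout the iteration, which is automatic from $\kappa^{4/3}h^{1/3}\le h^{-1+\delta}$ whenever $\kappa\le h^{-1}$. In addition, the local results Theorems~\ref{thm-26-2-28}--\ref{thm-26-2-29} must be applied to the full localised functional $\E_\psi(A)$ rather than partition-element-wise, which is legitimate because the three relations above are intrinsically global and compatible with the ISM/partition decomposition of Subsection~\ref{sect-26-2-4-1}. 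The sharper logarithm $|\log \kappa h|$ in~\textup{(\ref{26-2-95})} arises because the natural scale in~\textup{(\ref{26-2-60})} is $\mu/\kappa$ rather than $h$; one checks that $\log(\kappa/\mu)=O(|\log \kappa h|)$ at the fixed point, and that only finitely many such logarithmic factors accumulate across the contractive iteration.
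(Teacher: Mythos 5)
Your approach to parts \ref{thm-26-2-30-i} and \ref{thm-26-2-30-ii} is essentially the paper's: the three relations you list are (\ref{26-2-98}), (\ref{26-2-99}) and the scaling dichotomy leading to (\ref{26-2-100}), and solving the resulting closed inequality is exactly how the paper shows $\mu\le 1$ when $\kappa\le\kappa^*_h$ and derives the threshold (the paper packages your iteration as the single relation (\ref{26-2-100}) and observes $\mu\ge 1$ forces $\kappa\ge h^{-1/4}|\log h|^{-3/4}$, but the algebra is the same). One technical remark: the $|\log h|$ in (\ref{26-2-100}) actually enters through the mollification estimate (\ref{26-2-96}), $|\partial^2 A_\varepsilon|\le C\kappa\bar\mu|\log h|$ with $\varepsilon\asymp h$, rather than through the log-Lipschitz bound (\ref{26-2-60}); your derivation of relation~3 from (\ref{26-2-60}) alone quietly uses $\bar\mu=1$, which is fine for (i)--(ii) but not harmless in regime (iii).

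For part \ref{thm-26-2-30-iii} there is a genuine gap. You assert that the admissibility condition (\ref{26-2-44}), $\mu\le Ch^{-1+\delta}$, persists throughout the bootstrap because $\kappa^{4/3}h^{1/3}\le h^{-1+\delta}$ whenever $\kappa\le h^{-1}$. This is arithmetically false: $\kappa\le ch^{-1}$ gives $\kappa^{4/3}h^{1/3}\le c^{4/3}h^{-1}$, which is $\gtrsim h^{-1}$, not $h^{-1+\delta}$. So as $\kappa$ approaches $ch^{-1}$ the fixed point $\mu_\infty$ (and even the initial $\mu_0$ from (\ref{26-2-93}) together with (\ref{26-2-45})) leaves the regime where the microlocal machinery of Subsection~\ref{sect-26-2-2} is applicable, and the iteration cannot even be set up. The paper's proof of (iii) does not iterate in the original variables: it applies an admissible partition at scale $\gamma=\epsilon\kappa^{-4/3}h^{-1/3}|\log(\kappa h)|^{-1}$, rescales $h\mapsto h'=h\gamma^{-1}$, $\kappa\mapsto\kappa'=\kappa\gamma$ so that $\kappa'\le\kappa^*_{h'}$, then invokes part (i) element-by-element, and sums to get the total error $Ch^{-1}\gamma^{-2}=Ch^{-3}(\kappa h)^{8/3}|\log\kappa h|^2$; the $|\log\kappa h|$ comes from $|\log h'|\asymp|\log\kappa h|$ under this rescaling, not from comparing $\log(\kappa/\mu)$ with $\log(\kappa h)$ at a fictitious fixed point. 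The rescaling step is the idea your proposal is missing.
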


\begin{proof}
(i) From (\ref{26-2-45}) we conclude as $\kappa \ge c $ that
\begin{equation*}
h^{1-\theta}|\partial A|_{\sC^\theta}\le C\kappa (\kappa +\bar{\mu}).
\end{equation*}
Then using arguments of subsection~\ref{sect-26-2-2-2} one can prove easily that for $\kappa \le h^{\sigma-\frac{1}{2}}$
\begin{equation*}
|F_{t\to h^{-1}\tau} \bar{\chi}_T(t) (hD_x)^\alpha (hD_x)^\beta
\bigl(U(x,y,t)-U_{(\varepsilon)} (x,y,t)- U'_{(\varepsilon)} (x,y,t)\bigr)|\le C h^{1-d}
\end{equation*}
where we use the same $2$-term approximation, $T=\epsilon \bar{\mu}^{-1}$. Let  us take then $x=y$, multiply by $\varepsilon^{-d}\psi ( \varepsilon ^{-1}(y-z))$ and integrate over $y$.  Using rough microlocal analysis one can prove easily that from both $U_{(\varepsilon)} (x,y,t)$ and $U'_{(\varepsilon)}(x,y,t)$ we get $O(h^{-2})$ and in the end of the day we arrive to the estimate
$|\Delta A_\varepsilon |\le C\kappa\bar{\mu}$ which implies
\begin{equation}
|\partial^2 A_\varepsilon|\le C\kappa \bar{\mu} |\log h| +C\mu
\label{26-2-96}
\end{equation}
where obviously one can skip the last term. Here we used property of the Laplace equation. For our purpose it is much better than
$|\partial^2 A_\varepsilon|\le C\kappa ^2 |\log h| +C\mu $ which one could derive easily.

Again using arguments of subsection~\ref{sect-26-2-2-2} one can prove easily that
\begin{gather}
|\Tr(\psi H^-_{A,V}\psi )- \Tr(\psi H^-_{A_\varepsilon,V}\psi )|\le
C\bar{\mu}^2 h^{2-d}\label{26-2-97}\\
\shortintertext{and therefore}
|\Tr(\psi H^-_{A,V}\psi )- \int \Weyl_1(x)\psi^2(x)\,dx|\le  C\bar{\mu}^2 h^{2-d}\label{26-2-98}\\
\intertext{and finally for an optimizer}
\|\partial A\|^2 \le C\kappa \bar{\mu}^2 h .\label{26-2-99}
\end{gather}
Here $\mu$ and $\bar{\mu}$ were calculated for $A$, but it does not really matter as due to  $|\partial^2 A|\le C\kappa ^2h^{-\delta}$ we conclude that
$|\partial A- \partial A_\varepsilon |\le C\kappa ^2h^{-\delta}\varepsilon \le C$ due to restriction to $\kappa$.

Then, as $d=3$
\begin{equation}
\mu^2 \bigl(\mu /(\kappa \bar{\mu}|\log h|\bigr))^3 \le \kappa \bar{\mu}^2 h
\label{26-2-100}
\end{equation}
and if $\mu \ge 1$ we have $\bar{\mu}=\mu$ and (\ref{26-2-100}) becomes
$\kappa^{-3}|\log h|^{-3} \le C \kappa h$ which impossible under (\ref{26-2-94}).

So, $\mu \le 1$ and (\ref{26-2-100}) implies (\ref{26-2-85}) and (\ref{26-2-99}), (\ref{26-2-100}) imply that for an optimizer
$\|\partial A \|\le C(\kappa h)^{\frac{1}{2}}$ and
$\mu \le C\kappa^4 h |\log h|^d$. So (i) is proven.

\bigskip\noindent
(ii) Proof of (ii) follows then in virtue of arguments of subsection~\ref{sect-26-2-2-2}.

\bigskip\noindent
(iii)  If $\kappa^*_h\le \kappa \le h^{-1}$ we apply partition-and-rescaling. So, $h\mapsto h'=h\gamma^{-1}$ and $\kappa \mapsto \kappa'= \kappa \gamma$ and to get into (\ref{26-2-94}) we need $\gamma = \epsilon \kappa^{-\frac{4}{3}}h^{-\frac{1}{3}}|\log (\kappa h)|^{-1}$ leading to the remainder estimate  $Ch^{-1}\gamma^{-2}$ which proves (ii).
\end{proof}

\begin{Problem}\label{Probem-26-2-31}
Repeat arguments of Subsubsections~\ref{sect-26-2-1-2}.2 and \ref{sect-26-2-1-3}.3 and of this Subsection as $d\ne 3$. When they hold?
\end{Problem}

\chapter{Global trace asymptotics in the case of Coulomb-like singularities}
\label{sect-26-3}

\section{Problem}
\label{sect-26-3-1}

We consider the same operator (\ref{26-1-4}) as before in $\bR^3$ but now we assume that $V$ has Coulomb-like singularities. Namely let $\y_m\in \bR^3$ ($m=1,\ldots,M$, where $M$ is fixed) be singularities (``nuclei''). We assume that
\begin{gather}
V=\sum_{1\le m\le M} \frac{z_m} {|x-\y_m|} + W(x)
\label{26-3-1}\\
\shortintertext{where}
z_m\ge 0, \ z_1+\ldots +z_M \asymp 1,
\label{26-3-2}
\end{gather}
and
\begin{multline}
|D^\alpha W|\le C_\alpha \sum_{1\le m\le M}  z_m\bigl(|x-\y_m|+1\bigr)^{-1}|x-\y_m|^{-|\alpha|}\\
\forall \alpha:|\alpha|\le 2
\label{26-3-3}
\end{multline}
but at the first stages we will use some weaker assumptions. Later we assume that $V(x)$ decays at infinity sufficiently fast. Let us define $\E^*)$ and $\E(A)$ by (\ref{26-2-2})--(\ref{26-2-1}). Finally, let
$\ell(x) \min_{1\le m\le M}\ell_m(x)$ \ with \ $\ell_m(x)\Def\frac{1}{2}|x-\y_m|$.

In this and next Sections we assume that
\begin{claim}\label{26-3-4}
$\kappa \in (0,\kappa^*]$ where $0<\kappa^*$ is a small constant.
\end{claim}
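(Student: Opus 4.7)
The stated item (\ref{26-3-4}) is a standing hypothesis fixing the parameter regime $\kappa \in (0,\kappa^*]$ for the forthcoming Coulomb-singular analysis, not a proposition with independent proof; there is no logical content to derive. My proposal therefore sketches how this hypothesis is to be exploited in the remainder of the chapter, rather than how it is obtained.

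The plan is to reduce the global problem to the local semiclassical trace estimates of Section~\ref{sect-26-2} via a multiscale partition around the nuclei $\y_m$. On each shell where $\ell(x) \asymp \ell_0 \ge h^{2/3}$, I would rescale $x\mapsto (x-\y_m)/\ell_0$: the semiclassical parameter becomes $h_{\new}=h\ell_0^{-3/2}$ and the magnetic coupling becomes $\kappa_{\new}=\kappa\ell_0$. Since $\ell_0\le 1$, the assumption $\kappa\le\kappa^*$ yields $\kappa_{\new}\le\kappa^*$, placing the rescaled problem exactly in the range where Theorems~\ref{thm-26-2-29} and~\ref{thm-26-2-30} give sharp local trace asymptotics with remainder $O(h_{\new}^{-1})$. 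Using ISM sub-additivity (\ref{26-2-88})--(\ref{26-2-90}), one then sums these contributions over the dyadic scales $\ell_0$.

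The inner zone $\ell(x)\lesssim h^{2/3}$ cannot be treated semiclassically and must be handled by the Scott-type analysis of Section~\ref{book_new-sect-12-5}; this is what produces the $2S(\kappa)h^{-2}$ term and an error of order $O(\kappa|\log\kappa|^{1/3}h^{-4/3})$, giving (\ref{26-1-5}) in the single-nucleus case. The hard step is the multi-center bound (\ref{26-1-6}): because $\int|\nabla\times A|^2\,dx$ is non-local, single-nucleus asymptotics cannot be naively summed. I would introduce a cut-off of a minimizer $A$ near each $\y_m$, use the elliptic equation (\ref{26-2-14}) to derive decay $|\partial A(x)|\lesssim \kappa\,\ell_m(x)^{-2}$ far from $\y_m$, and bound cross-terms by Hardy-type inequalities; the minimal separation $a$ then enters through the additional error $\kappa a^{-3}h^{-2}$. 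The main obstacle is that the cut-off $A$ is no longer a minimizer, so one must show the penalty incurred in the trace term does not dominate the magnetic-energy gain --- this is precisely where the smallness of $\kappa^*$ is used decisively.
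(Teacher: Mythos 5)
You are right that (\ref{26-3-4}) is a standing hypothesis, not a proposition: the paper introduces it with the words ``In this and next Sections we assume that,'' and nothing in the chapter derives it. So there is no ``paper proof'' to compare against, and your decision to sketch the role of the hypothesis rather than attempt a proof is the correct reading.

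Your sketch of the multiscale strategy is broadly right, but it contains scaling slips that matter precisely for the point you are making. In the Coulomb shells of Section~\ref{sect-26-3}, the rescaling is $x\mapsto x/\ell$, $\tau\mapsto\tau/\zeta^2$ with $\zeta=\ell^{-1/2}$, which gives $h\mapsto\hbar=h\ell^{-1/2}$ (not $h\ell^{-3/2}$) and, crucially, leaves $\kappa$ \emph{invariant} --- the paper remarks on this explicitly in the proof of Proposition~\ref{prop-26-3-1}, contrasting it with Subsection~\ref{sect-26-2-4} where $\kappa\mapsto\kappa\gamma$. The rule $\kappa_{\new}=\kappa\ell_0$ that you wrote is the flat ($V\asymp 1$) rescaling of Subsection~\ref{sect-26-2-4}, not the Coulomb one; conflating them obscures the structural fact that makes (\ref{26-3-4}) a clean hypothesis here, namely that $\kappa\le\kappa^*$ is preserved \emph{exactly} (not merely improved) at every Coulomb scale. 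Consequently the inner cutoff is $\ell_*=h^2$ (where $\hbar=h\ell_*^{-1/2}\asymp 1$), not $h^{2/3}$. Finally, the decay estimate for a minimizer in the far zone is $|\partial A|\lesssim\kappa\ell^{-3}$ (and $|A|\lesssim\kappa\ell^{-2}$), per Proposition~\ref{prop-26-3-9}; you wrote $|\partial A|\lesssim\kappa\ell^{-2}$, which is off by one power and would not produce the $\kappa a^{-3}h^{-2}$ cross-term you correctly anticipate.
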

As $\kappa=0$ we set $A=0$ and consider $\E\Def \Tr^- (H_{A,V})$; then our results will not be new.

\section{Estimates of the minimizer}
\label{sect-26-3-2}

Let us consider a Hamiltonian with potential $V$ and let $A$ be a   minimizing expression (\ref{26-2-2}) magnetic field. We say that $A$ is a \emph{minimizer\/} and in the framework of our problems we will prove it existence.

\subsection{Preliminary analysis}
\label{sect-26-3-2-1}

First we start from the roughest possible estimate:
\begin{proposition}\label{prop-26-3-1}
Let $V$ satisfy \textup{(\ref{26-3-1})}--\textup{(\ref{26-3-3})} and let
$\kappa \le \kappa^*$. Then the near-minimizer $A$ satisfies
\begin{gather}
|\int \bigl(\tr e_{A,1}(x,x,0)-\Weyl_1(x)\bigr)\,dx|\le Ch^{-2}
\label{26-3-5}\\
\shortintertext{and}
\|\partial A\| \le C\kappa^{\frac{1}{2}}.
\label{26-3-6}
\end{gather}
\end{proposition}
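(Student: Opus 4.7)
The plan is to derive both (26-3-5) and (26-3-6) simultaneously from the two-sided bound $\E^{*} = \Weyl_1 + O(h^{-2})$ together with the near-minimizer condition. Indeed, once we have $\E(A) \le \E^{*} + O(h^{-2}) \le \Weyl_1 + Ch^{-2}$ \emph{and} a matching a priori lower bound $\Tr^-(H_{A,V}) \ge \Weyl_1 - Ch^{-2}$ for every admissible $A$, the definition (26-2-2) of $\E(A)$ immediately traps $(\kappa h^{2})^{-1}\|\partial A\|^2 \le 2Ch^{-2}$, which is (26-3-6), and $|\Tr^-(H_{A,V}) - \Weyl_1| \le 2Ch^{-2}$, which is (26-3-5).

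The upper bound $\E^{*} \le \Weyl_1 + Ch^{-2}$ is delivered by the admissible choice $A=0$ combined with the classical Coulomb trace asymptotics $\Tr^-(H_{0,V}) = \Weyl_1 + O(h^{-2})$: partition by the scaling function $\ell(x) = \tfrac{1}{2}\min_{m}|x-\y_m|$; Coulomb-rescale $x \mapsto x/\ell$, $h \mapsto h_{\new} = h\ell^{-1/2}$; invoke the local Weyl remainder $O(h_{\new}^{-1}) = O(h^{-1}\ell^{1/2})$ on each $\ell$-element; sum over $\ell \gtrsim h^{2}$; absorb the cores $\ell \lesssim h^2$ into $O(h^{-2})$ by a trivial bound on the number of bound states of the rescaled hydrogen-like operator. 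For the matching lower bound I would take an $\ell$-scale partition $\sum \psi_j^{2} = 1$ with $\psi_j$ satisfying the assumptions of Proposition~\ref{prop-26-2-22}, invoke the ISM identity (26-2-90) together with the subadditivity (26-2-88), and split the magnetic energy via the localized potential $A_j' = (A - \bar A_j)\chi_j'$ (cf.\ Footnote~11), for which $\sum_j \|\partial A_j'\|^{2} \le C\|\partial A\|^{2}$. On each piece with $\ell_j \gg h^{2}$ the Coulomb rescaling produces a smooth unit-scale problem with new semiclassical parameter $h_{\new}$ and new coupling $\kappa_{\new}$ still covered by Theorem~\ref{thm-26-2-29} (or, for smaller $\ell_j$, by Theorem~\ref{thm-26-2-30}\ref{thm-26-2-30-iii}), yielding
\begin{equation*}
\Tr^-(\psi_j H_{A,V}\psi_j) + \frac{1}{\kappa h^{2}}\|\partial A_j'\|^{2} \ge \int \Weyl_1(x)\psi_j^{2}\,dx - Ch^{-1}\ell_j^{-1/2},
\end{equation*}
whose summation over $\ell_j \gtrsim h^{2}$ yields $\E(A) \ge \Weyl_1 - Ch^{-2}$.

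The main technical obstacle is the treatment of the cores $\{\ell(x) \lesssim h^{2}\}$ around the nuclei. There the Coulomb rescaling produces effective parameters $h_{\new}, \kappa_{\new} \gtrsim 1$, so the microlocal local theory of Section~\ref{sect-26-2} is no longer directly applicable; one must show by hand that the core contribution to $\Tr^-(H_{A,V})$ is $O(h^{-2})$ uniformly in $A$. For the \emph{Pauli} operator $\bigl((hD-A)\cdot\boldupsigma\bigr)^{2} - V$ this demands a magnetic Pauli--Hardy type estimate rather than a scalar one, and the smallness $\kappa \le \kappa^{*}$ is genuinely used to rule out paramagnetic binding producing more than $O(1)$ extra eigenvalues per nucleus; once this uniform core bound is in place, the global estimate $\E(A) \ge \Weyl_1 - Ch^{-2}$ follows and closes the argument.
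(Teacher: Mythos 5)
The overall architecture of your plan matches the paper's: two-sided bound from an $\ell$-scale partition, Coulomb rescaling, Section~\ref{sect-26-2} in the bulk, an Erd\"os--Solovej-type bound in the cores, and $\kappa\le\kappa^*$ to close the argument. The gap is in the bookkeeping, and it is specific to the \emph{Pauli} operator. In your opening paragraph you assert a uniform-in-$A$ lower bound $\Tr^-(H_{A,V})\ge \Weyl_1 - Ch^{-2}$ for every admissible $A$ and then trap $\|\partial A\|^2$ by subtraction. No such uniform bound exists: the Pauli operator enjoys no diamagnetic inequality, and by driving up $B=\nabla\times A$ one can push $\Tr^-(H_{A,V})$ to $-\infty$ --- this is exactly why the magnetic Lieb--Thirring inequality (\ref{26-2-9}) carries a $\|\partial A\|^2$-dependent term. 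Your middle paragraph, as written, only delivers a lower bound on $\E(A)$ (you put the localized magnetic energies with coefficient $\kappa^{-1}h^{-2}$ on the left), not on $\Tr^-$ alone. But the two-sided estimate $\E(A)=\Weyl_1+O(h^{-2})$ by itself is consistent with, say, $\Tr^-\approx\Weyl_1-h^{-3}$ balanced against $\kappa^{-1}h^{-2}\|\partial A\|^2\approx h^{-3}$, i.e.\ $\|\partial A\|^2\sim\kappa h^{-1}$, which would violate (\ref{26-3-6}). Your core discussion inherits the same issue: the cores are \emph{not} bounded in $\Tr^-$ uniformly in $A$; the Erd\"os--Solovej lemma (reproduced in Appendix~\ref{sect-26-A-3}) controls the core only after including the local field energy of $A$.

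What is actually needed --- and what the paper proves via (\ref{26-3-9})--(\ref{26-3-11}) --- is the $A$-\emph{dependent} lower bound
$\Tr^-(H_{A,V})\ge \Weyl_1 - Ch^{-2} - Ch^{-2}\|\partial A\|^2$,
with a constant $C$ \emph{independent of $\kappa$}: when you rescale Theorem~\ref{thm-26-2-29} Coulomb-wise the coefficient of the localized field energy scales to $Ch^{-2}$ (the factor $\zeta^2\ell$ is $O(1)$), not to $\kappa^{-1}h^{-2}$, and the same happens at the core via the Erd\"os--Solovej bound. Adding the global magnetic energy then yields
$\E(A)\ge \Weyl_1 - Ch^{-2} + (\kappa^{-1}-C)h^{-2}\|\partial A\|^2$,
and the smallness $\kappa\le\kappa^*$ makes $\kappa^{-1}-C\gtrsim\kappa^{-1}$, so that the near-minimizer upper bound $\E(A)\le\Weyl_1+Ch^{-2}$ traps $\|\partial A\|^2\lesssim\kappa$ (which is (\ref{26-3-6})) and in turn (\ref{26-3-5}). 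So the role of $\kappa^*$ is not, as you suggest, to make the core bound uniform in $A$; it is to make the net coefficient of $\|\partial A\|^2$ positive at the reassembly stage.
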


\begin{proof}
Definitely (\ref{26-3-5})--(\ref{26-3-6}) follow from the results of L.~Erd\"os,  S.~Fournais,  and J. P.  Solovej \cite{EFS3} but we give an independent easier proof based on our Subsection~\ref{26-2-1}.

\bigskip\noindent
(i) First, let us pick up  $A=0$ and consider
$\Tr \bigl(\psi_\ell \uptheta(-H)\psi_\ell\bigl)$ with  cut-offs
$\psi_\ell (x)=\psi((x-\y_m)/\ell)$ where $\psi\in \sC_0^\infty (B(0,1))$ and equals $1$ in $B(0,\frac{1}{2})$. Here and below
$\uptheta (\tau -H_{A,V})$ is a spectral projector of $H$.

Then
\begin{equation}
|\Tr \bigl(\psi_\ell  H_{A,V} ^-(0)\psi_\ell\bigr)|\le Ch^{-2}\qquad
\text{as\ \ } \ell=\ell_*\Def h^2.
\label{26-3-7}
\end{equation}
On the other hand, contribution of $B(x,\ell )$ with $\ell(x)\ge \ell_*$ to the Weyl error does not exceed  $C\zeta ^2\hbar^{-1}=C\zeta^3 \ell h^{-1}$ where $\hbar = h/(\zeta \ell)$ in the rescaling; so  after summation over $\ell\ge \ell_*$ we also get $O(h^{-2})$ provided $\zeta^2\le C\ell^{-1}$. Therefore we arrive to the following rather easy inequality:
\begin{equation}
|\int \bigl(\tr e_{0,1}(x,x,0)-\Weyl_1(x)\bigr)\,dx |\le Ch^{-2} .
\label{26-3-8}
\end{equation}
This is what rescaling method gives us without careful study of the singularity.

\bigskip\noindent
(ii) On the other hand, consider $A\ne 0$. Let us prove first that
\begin{equation}
\Tr^- (\psi_\ell H \psi_\ell) \ge -Ch^{-2} -Ch^{-2}\int |\partial A|^2\,dx
\qquad\text{as\ \ } \ell=\ell_*.
\label{26-3-9}
\end{equation}
Rescaling $x\mapsto (x-\y_m) /\ell$ and  $\tau \mapsto \tau/\ell$ and therefore $h\mapsto h \ell^{-\frac{1}{2}}\asymp 1$ and $A\mapsto A\ell^{\frac{1}{2}}$  (because singularity is Coulomb-like),  we arrive to the same  problem with the same $\kappa$ (in contrast to Subsection~\ref{sect-26-2-4}  where
$\kappa\mapsto \kappa \ell$ because of different scale in $\tau$ and $h$) and with $\ell=h=1$.

However this estimate follows from L.~Erd\"os,  J. P. Solovej~\cite{erdos:solovej} (we reproduce Lemma 2.1 of this paper in Appendix~\ref{sect-26-A-1}.

\bigskip\noindent
(iii) Consider now $\psi_\ell$  as in (i) with $\ell\ge \ell_*$. Then according to theorem~\ref{thm-26-2-29}  rescaled
\begin{multline}
\Tr^- \bigl(\psi_\ell H_{A,V} \psi_\ell\bigr) -
\int \Weyl_1 (x)\psi_\ell^2(x) \,dx \\
\ge -C \zeta^3 \ell h^{-1} -
Ch^{-2} \int _{B(x, 2\ell/3)} |\partial A|^2\,dx.
\label{26-3-10}
\end{multline}
Really, rescaling of the first part is a standard one and in the second part we should have  in the front of the integral a coefficient
$\kappa^{-1}h^{-2}\zeta^2 \times \zeta^{-2} \ell  (h/\zeta \ell)^{-2}$ where factor $\zeta^2$ comes from the scaling of the spectral parameter, factor $\zeta^{-2}$ comes from the scaling of the magnitude of $A$, factor $\ell=\ell^3 \times \ell^{-2}$ comes from the scaling of $dx$ and $\partial$ respectively, and $\hbar\Def h/(\zeta \ell)$ is a semiclassical parameter after rescaling. Therefore this expression  acquires a factor $\zeta^2 \ell \le C$.

Then
\begin{equation}
\int \bigl(\tr e_{A,1}(x,x,0)-\Weyl_1(x)\bigr)\,dx \ge
-Ch^{-2}- Ch^{-2} \int |\partial A|^2\,dx
\label{26-3-11}
\end{equation}
and adding the magnetic field energy $\kappa^{-1}h^{-2}\|\partial A\|^2$ we find out that the left-hand expression of (\ref{26-3-5}) is greater than the same expression with $A=0$ plus
$(\kappa^{-1}-C)h^{-2}\|\partial A\|^2$ minus $Ch^{-2}$ which implies (\ref{26-3-5}) and (\ref{26-3-6}) as $A$ is supposed to be a near-minimizer.
\end{proof}

\begin{proposition}\label{prop-26-3-2}
Let $V$ satisfy \textup{(\ref{26-3-1})}--\textup{(\ref{26-3-3})}.
Then there exists a minimizer $A$.
\end{proposition}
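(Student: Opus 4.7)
The plan is to adapt the compactness argument of Proposition~\ref{prop-26-2-2} to the present setting, where the only structural difference is the loss of the confining hypothesis \textup{(\ref{26-2-11})}. Take a minimizing sequence $\{A_j\}$ for $\E(A)$. By Proposition~\ref{prop-26-3-1} applied to near-minimizers, $\|\partial A_j\|\le C\kappa^{1/2}$, so $\{A_j\}$ is bounded in $\sH^1_0$. Extract a subsequence with $A_j\rightharpoonup A_\infty$ weakly in $\sH^1_0$ and $\sL^6$, and strongly in $\sL^p_{\loc}$ for each $p<6$. Weak lower semicontinuity of the Dirichlet energy gives $\liminf_j\|\partial A_j\|^2\ge \|\partial A_\infty\|^2$, so the whole problem reduces to proving $\liminf_j \Tr^-(H_{A_j,V})\ge \Tr^-(H_{A_\infty,V})$.

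I would run the same eigenvalue-by-eigenvalue analysis as in Proposition~\ref{prop-26-2-2}. Order the negative eigenvalues $\lambda_{j,k}$ of $H_{A_j,V}$ with $\sL^2$-orthonormal eigenfunctions $u_{j,k}$. The bound $\|Du_{j,k}\|\le K$ follows exactly as there, using $\|A_j\|_{\sL^6}\le C$, Hardy's inequality to handle each (fixed) Coulomb singularity at $\y_m$, and $V_+\in \sL^{5/2}\cap \sL^4$. Pass to a diagonal subsequence so that $\lambda_{j,k}\to \lambda_{\infty,k}\le 0$ and $u_{j,k}\rightharpoonup u_{\infty,k}$ weakly in $\sH^1$ and strongly in $\sL^p_{\loc}$; one then passes to the limit in the eigenvalue equation to identify $u_{\infty,k}$ as an eigenfunction of $H_{A_\infty,V}$ with eigenvalue $\lambda_{\infty,k}$, provided no mass is lost at infinity.

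The main obstacle, and the only genuinely new point, is preventing this loss of mass. I would invoke a standard exponential-decay (Agmon-type) estimate: since $V(x)\to 0$ at infinity by \textup{(\ref{26-3-1})}--\textup{(\ref{26-3-3})}, and since the Pauli operator satisfies a diamagnetic comparison (the spin term $h\,\nabla\times A\cdot\boldupsigma$ being form-relatively compact with respect to $(hD-A)^2$ thanks to $\nabla\times A\in\sL^2$), eigenfunctions with $\lambda_{j,k}\le -\eta$ satisfy $\int e^{2\alpha|x|}|u_{j,k}|^2\,dx\le C(\eta)$ uniformly in $j$ for some $\alpha=\alpha(\eta)>0$. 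This yields $\sL^2$-tightness, hence strong $\sL^2$-convergence and preservation of orthonormality for eigenfunctions whose eigenvalues are bounded away from $0$, exactly as in the confining case of Proposition~\ref{prop-26-2-2}.

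Finally, since the negative spectrum of $H_{A_j,V}$ may now accumulate at $0$ (unlike in Proposition~\ref{prop-26-2-2}, where the number of negative eigenvalues was finite and bounded uniformly in $j$ by $N_h$), one truncates at level $-\eta$ to deduce $\liminf_j \sum_{\lambda_{j,k}\le -\eta}\lambda_{j,k}\ge \sum_{\lambda_{\infty,k}\le -\eta}\lambda_{\infty,k}$ from the previous step, and then sends $\eta\to 0$. The tail $\sum_{-\eta<\lambda_{j,k}\le 0}|\lambda_{j,k}|$ is controlled uniformly in $j$ by the magnetic Lieb--Thirring bound of Proposition~\ref{prop-26-2-1} applied after a partition-and-rescaling that separates the Coulomb singularities from the exterior region where $V\in \sL^{5/2}\cap \sL^4$, and tends to $0$ as $\eta\to 0$. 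Combining, $\E(A_\infty)\le \liminf \E(A_j)=\E^*$, so $A_\infty$ is the desired minimizer.
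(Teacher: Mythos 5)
Your proposal follows essentially the same compactness route that the paper sketches: use the a priori bound $\|\partial A_j\|\le C\kappa^{1/2}$ from Proposition~\ref{prop-26-3-1}, extract a weak $\sH^1$-limit along a minimizing sequence, and transport the lower-semicontinuity argument of Proposition~\ref{prop-26-2-2} to conclude $\E(A_\infty)\le\E^*$. The new obstruction you identify --- the loss of the confining hypothesis \textup{(\ref{26-2-11})} and hence of $\sL^2$-tightness of eigenfunctions --- is exactly the right one, and the Agmon-type decay estimate is the right tool. But the parenthetical appeal to form-relative compactness of the spin term is neither needed nor safe for $\nabla\times A$ merely in $\sL^2$ in dimension $3$: since $\bigl((hD-A)\cdot\boldupsigma\bigr)^2\ge 0$ as a square and $V(x)\to 0$ as $|x|\to\infty$, one already has $H_{A,V}-\lambda\ge -V-\lambda\ge\eta/2$ in the quadratic-form sense on functions supported outside a sufficiently large ball whenever $\lambda\le-\eta$, which is all the Agmon construction requires and is manifestly uniform in $j$.

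Where the argument is thinner than it looks is the tail step. For genuinely Coulomb-like $V\sim|x|^{-1}$ at infinity one has $V\notin\sL^{5/2}$, so $\Tr^-(H_{A,V})=-\infty$ and the minimization problem is not even well posed without regularization; and even when $V\in\sL^{5/2}$, the single magnetic Lieb--Thirring bound of Proposition~\ref{prop-26-2-1} controls $\sum_k|\lambda_{j,k}|$ uniformly in $j$ but does not by itself give uniform integrability, so the assertion that $\sum_{-\eta<\lambda_{j,k}\le 0}|\lambda_{j,k}|\to 0$ uniformly in $j$ as $\eta\to 0$ needs a separate input --- e.g.\ a Lieb--Thirring bound with a moment $\gamma>1$, or the regularized limiting procedure of Proposition~\ref{prop-26-3-16}. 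The paper's own proof is even terser than yours, but it does flag precisely this as the one concrete modification (``If $V\in\sL^{5/2}$ no change would be required but for $V\notin\sL^{5/2}$ one needs to consider modifications as in Remark~\ref{rem-26-3-3}''), and that caveat is the one point your proposal does not engage with.
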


\begin{proof}
After Proposition~\ref{prop-26-3-1} has been proven we just repeat arguments of the proof of Proposition~\ref{prop-26-2-2}. If $V\in \sL^{\frac{5}{2}}$ no change would be required but for $V\notin \sL^{\frac{5}{2}}$ one needs to consider modifications as in Remark~\ref{rem-26-3-3}(i) below.
\end{proof}

\begin{remark}\label{rem-26-3-3}
We are a bit ambivalent about convergence of $\int \Weyl_1(x)\, dx$ at infinity, as for Coulomb potential it diverges. In this case however we can \underline{either} assume in addition that $V\in \sL^{\frac{5}{2}}$, \underline{or} tackle it as in Proposition~\ref{prop-26-3-16} below.
\end{remark}

\subsection{Estimates to a minimizer. I}
\label{sect-26-3-2-2}

Let us repeat arguments of Subsection~\ref{sect-26-2-1-3}.3. However our task now is much more complicated: while we know a priory that
$\|\partial A\|^2\le C\kappa$ we will not be able to improve it significantly (or at all as $\kappa \asymp 1$).

Recall equation (\ref{26-2-14}) for a minimizer $A$. After rescaling
$x\mapsto x/\ell$, $\tau\mapsto \tau/\zeta^2$,
$h\mapsto \hbar=h/(\zeta\ell)$, $A\mapsto A\zeta^{-1}\ell$ this equation becomes
\begin{multline}
\Delta A_j=\\
-2\kappa \zeta^2\ell  \hbar^2 \Re \tr \upsigma_j \Bigl(
( \hbar Dk-\zeta^{-1} A)_x \cdot\boldupsigma e (x,y,\tau) +
e (x,y,\tau) \,^t(\hbar D-\zeta^{-1}A)_y\cdot\boldupsigma
  \Bigr)\Bigr|_{y=x}
\label{26-3-12}
\end{multline}
and since so far $\zeta^2\ell=1$ we arrive to
\begin{multline}
\Delta A_j=\\
-2\kappa  \hbar^2
\Re \tr \upsigma_j\Bigl(
 ( \hbar D-\zeta^{-1} A)_x \cdot \boldupsigma   e (x,y,\tau)  +
e (x,y,\tau)\,^t ( \hbar D-\zeta^{-1} A)_y \cdot \boldupsigma    \Bigr)\Bigr|_{y=x}.
\label{26-3-13}
\end{multline}

\smallskip\noindent
(i) Plugging for $u=\psi \uptheta (-H)f$ with cut-off function $\psi$ and repeating arguments of Subsubsection~\ref{sect-26-2-1-3}.3 we conclude that in the rescaled coordinates
\begin{multline}
\|(\hbar D_x \cdot \boldupsigma )u \|\le
\|((\hbar D_x -A)\cdot\boldupsigma)u\|+
C\|A\|_{\sL^6}\cdot  \|u\|_{\sL^3}  \\[2pt]
\shoveright{\le \|((\hbar D_x -A)\cdot\boldupsigma)u\|+
C\hbar ^{-\frac{1}{2}}\|A\|_{\sL^6} \cdot \|u\|^{\frac{1}{2}}
\cdot  \|\hbar D_x u\|^{\frac{1}{2}}\;\ }\\[2pt]
\le
\|((\hbar D_x -A)\cdot\boldupsigma)u\|+ \frac{1}{2} \|\hbar  D_x u\|
+ C (\hbar^{-\frac{1}{2}}\|A\|_{\sL^6})^2  \|u\|
\label{26-3-14}
\end{multline}
where $\|A\|_{\sL^6}$ calculated in the rescaled coordinates is equal to $\|A_\orig\|_{\sL^6,\orig}$ (where subscripts ``$\mathsf{orig}$''means that the norm is calculated in the original coordinates and $A$) which does not exceed
$C\kappa^{\frac{1}{2}}$ due to (\ref{26-3-6})\footnote{\label{foot-26-12} As usual we assume that the average of $A$ over $B(x,1 )$ is $0$.}
and therefore (since $\|(\hbar D_x \cdot \boldupsigma )u \| =\|\hbar D_x u \| $)
\begin{equation}
\|\hbar D_x u \| \le C\bigl( 1+\kappa\hbar^{-1}\bigr)\|f\|.
\label{26-3-15}
\end{equation}
Continuing arguments of Subsubsection~\ref{sect-26-2-1-3}.3 we conclude that in the rescaled coordinates
\begin{gather}
\|(\hbar D_x)^k u \| \le C\bigl( 1+\kappa\hbar^{-1}\bigr)^k\|f\|,
\label{26-3-16}\\[3pt]
\|(\hbar D_x)^k ((\hbar D_x -A)\cdot \boldupsigma ) u \| \le
C( 1+\kappa\hbar^{-1})^k \|f\|,
\label{26-3-17}
\end{gather}
for $k=0,1,2$ and therefore
\begin{equation}
\|\Delta A\|_{\sL^\infty(B(x,1))} \le
C\kappa \hbar^{-1} ( 1+\kappa\hbar^{-1})^3 ,
\label{26-3-18}
\end{equation}
Here we estimate different norms of $A$ locally. Then \underline{either}
\begin{equation}
\|\partial    A\|_{\sL^\infty(B(x,\frac{3}{4}))} +
\hbar^{\delta}\|\partial^2    A\|_{\sL^\infty(B(x,\frac{3}{4}))} \le
C \kappa \hbar^{-1}( 1+\kappa\hbar^{-1})^3  \label{26-3-19}
\end{equation}
\underline{or}
\begin{multline}
\|\partial    A\|_{\sL^\infty(B(x,1-\epsilon))}  +
\hbar^\delta \|\partial^2    A\|_{\sL^\infty(B(x,1-\epsilon))}\\
\le
C \|\partial A\|=
C\|\partial A_\orig\|_{\orig}\le
C\kappa^{\frac{1}{2}}
\label{26-3-20}
\end{multline}
In the latter case (\ref{26-3-20}) we have in the original coordinates
\begin{equation}
\|\partial  A\|_{\sL^\infty(B(x,\ell ))}\le
C\kappa^{\frac{1}{2}} \ell^{-\frac{3}{2}}
\label{26-3-21}
\end{equation}
and we are rather happy because then the effective intensity of the magnetic field in $B(x,\ell)$ is
$\zeta^{-1}\ell \|\partial  A\|_{\sL^\infty(B(x,1-\epsilon ))} \le C\kappa^{\frac{1}{2}}$.

\medskip\noindent
(ii) The former case (\ref{26-3-19}) is much more complicated because our estimate is really poor as $\kappa \asymp 1$ and we are going to act only in this assumption. Assume that
\begin{equation}
\| \partial  A\| _{\sL^\infty(B(x,1-\epsilon ))}\le \mu
\label{26-3-22}
\end{equation}
with $\mu \ge \hbar^{-\sigma} $. Selecting $u=\psi \uptheta(-H)f$ with $\gamma$-admissible $\psi$  we conclude that
$\| (A\cdot \boldupsigma) u\|\le \|A\|_{\sL^\infty}\|u\|\le C\mu  \gamma\|u\|$ (assuming without any loss of the generality that $A=0$ at some point of $\supp \psi$) and that
\begin{gather*}
\|(\hbar D)^k u\|\le C(1+\hbar\gamma^{-1}+\mu  \gamma)^k,\\[2pt]
\|(\hbar D)^k ((\hbar D -A)\cdot \boldupsigma ) u\|\le C(1+\hbar\gamma^{-1}+\mu  \gamma)^{k+1}\\
\intertext{and therefore}
|\Gamma_x  (\hbar D_x -A)\cdot \boldupsigma )e(.,.,0)|\le
C\hbar^{-3}(1+\hbar\gamma^{-1}+\mu  \gamma)^{\frac{7}{2}}\\
\shortintertext{and then}
\|\Delta A\|_{\sL^\infty(B(x,1-\epsilon ))} \le
C\hbar^{-1}(1+\hbar\gamma^{-1}+\mu  \gamma)^{\frac{7}{2}}.
\end{gather*}
Optimizing with respect to $\gamma = \mu ^{-\frac{1}{2}} h^{\frac{1}{2}}$ we conclude that either
\begin{equation*}
\|\partial^2  A\|_{\sL^\infty(B(x,1-\epsilon ))} \le
C\hbar^{-1-\delta}(1+\hbar\mu  )^{\frac{7}{4}}
\end{equation*}
or (\ref{26-2-21}) holds. In the former case using the second of estimates
\begin{align}
&\| A\| _{\sL^\infty(B(x,1-\epsilon ))}\le
C \| \partial^2 A\| _{\sL^\infty(B(x,1-\epsilon ))} ^{\frac{1}{5}}
\| \partial A\|  ^{\frac{4}{5}},
\label{26-3-23}\\
&\| \partial A\| _{\sL^\infty(B(x,1-\epsilon))}\le
C\| \partial^2 A\| _{\sL^\infty(B(x,1-\epsilon ))} ^{\frac{3}{5}}
\| \partial A\|  ^{\frac{2}{5}}
\label{26-3-24}
\end{align}
we conclude that (\ref{26-3-22}) with
\begin{equation*}
\mu  \Def \hbar^{-\frac{3}{5}-\delta} (1+\hbar\mu  )^{\frac{21}{20}}
\end{equation*}
and one can see easily that starting from $\mu  =\hbar^{-4}$ as given by (\ref{26-3-19}) we can arrive after number of iterations to
$\mu  = \hbar^{-\frac{3}{5}-\delta}$ and therefore
\begin{equation}
\| \partial^k  A\| _{\sL^\infty(B(x,1-\epsilon ))}\le C\hbar^{-\frac{1}{5}(1+2k)-\delta} \qquad k=0,1,2.
\label{26-3-25}
\end{equation}

\smallskip\noindent
(iii) This estimate is good enough to launch microlocal arguments. Assuming (\ref{26-3-22}) with $\mu \le h^{\sigma -1}$ we estimate as in Section~\ref{sect-26-2}
\begin{gather*}
|\Gamma _x ((\hbar D_x-A)\cdot \boldupsigma)e (.,.,0)|\le C\mu \hbar^{-\delta}\\
\shortintertext{and then}
\|\partial^2 A\|_{B(x,1-\epsilon)} \le C\mu h^{-\delta}\\
\shortintertext{and therefore}
\|\partial A\|_{B(x,1-\epsilon)} \le C\mu ^{\frac{3}{5}} h^{-\delta}\\
\end{gather*}
resulting in
$\mu\Def \mu ^{\frac{3}{5}} h^{-\delta}$ and after a number of iterations we get $\mu =h^{-\delta}$ and therefore iterating this procedure one more time and taking into account factor $\kappa$ we arrive to
\begin{claim}\label{26-3-26}
Either (\ref{26-3-20}) holds or
\begin{equation}
\| \partial^2 A\| _{\sL^\infty(B(x,1-\epsilon ))}\le C\kappa h^{-\delta}.
\label{26-3-27}
\end{equation}
\end{claim}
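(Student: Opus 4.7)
The plan is to run a bootstrap that alternates (a) a microlocal upper bound for the spectral data in terms of the currently available $L^\infty$-bound on $\partial A$, with (b) the minimizer equation (\ref{26-3-13}) combined with interior elliptic regularity and the Gagliardo--Nirenberg inequality (\ref{26-3-24}). The whole argument is carried out in the rescaled variables of Subsubsection~\ref{sect-26-3-2-2}, where the semiclassical parameter is $\hbar$, and is started from the crude bound (\ref{26-3-25}) already proven in step~(ii), i.e.\ $\mu_0 = \hbar^{-1/5-\delta}$.

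Concretely, suppose (\ref{26-3-22}) holds with some $\mu \le \hbar^{\sigma-1}$. The rough microlocal arguments of Section~\ref{sect-26-2} (admissible partition at scale $\gamma$, optimized as in step~(ii)) give
\begin{equation*}
\bigl|\Gamma_x \bigl((\hbar D_x -A)\cdot\boldupsigma\bigr) e(\cdot,\cdot,0)\bigr| \le C\mu\hbar^{-\delta}.
\end{equation*}
Feeding this into the minimizer equation (\ref{26-3-13}) yields
$\|\Delta A\|_{\sL^\infty(B(x,1-\epsilon))} \le C\mu\hbar^{-\delta}$, and interior elliptic regularity upgrades this to $\|\partial^2 A\|_{\sL^\infty(B(x,1-\epsilon))} \le C\mu\hbar^{-\delta}$. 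Interpolating via (\ref{26-3-24}) against the global $L^2$-bound $\|\partial A\|\le C\kappa^{1/2}$ of (\ref{26-3-6}) produces
\begin{equation*}
\|\partial A\|_{\sL^\infty(B(x,1-\epsilon))} \le C\mu^{3/5}\hbar^{-\delta},
\end{equation*}
which serves as the new value of $\mu$ in (\ref{26-3-22}), strictly smaller than the old one whenever $\mu \ge \hbar^{-\delta'}$ for suitable small $\delta'>0$.

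Iterating the contraction $\mu \mapsto \mu^{3/5}\hbar^{-\delta}$ a finite number of times (redefining $\delta$ at each stage to absorb the losses), the exponent on $\hbar$ shrinks geometrically and one arrives at $\mu = \hbar^{-\delta}$. A final pass through the loop, this time explicitly tracking the prefactor $\kappa$ in front of the right-hand side of (\ref{26-3-13}), delivers $\|\partial^2 A\|_{\sL^\infty(B(x,1-\epsilon))} \le C\kappa\hbar^{-\delta}$, which is (\ref{26-3-27}). If at any stage the freshly produced $\mu$ drops below the threshold set by the $L^2$-bound $\|\partial A\|\le C\kappa^{1/2}$, then (\ref{26-3-20}) is activated and the alternative of the dichotomy holds; this is what gives the ``either/or'' structure of (\ref{26-3-26}).

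The main obstacle is justifying the microlocal input uniformly along the iteration: the symbol $A$ is only controlled by the current $\mu$, and $\partial^2 A$ only by $\mu\hbar^{-\delta}$, so one has to verify that Propositions~\ref{prop-26-2-11} and~\ref{prop-26-2-14} together with Remark~\ref{rem-26-2-12} really do go through at each step with these Hölder/Sobolev norms, and that the $\hbar^{-\delta}$-losses accumulated over the finitely many iterations remain absorbable into an arbitrarily small overall $\delta$. A secondary point is to check that the Coulomb rescaling $x\mapsto (x-\y_m)/\ell$, $\zeta^2\ell=1$ actually transfers the factor $\kappa$ out of (\ref{26-3-13}) into the final estimate without an extra factor of $\hbar^{-1}$; this is precisely what the final separate pass through the loop, performed only after $\mu$ has been reduced to $\hbar^{-\delta}$, is designed to achieve.
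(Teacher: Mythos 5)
Your argument is correct and mirrors the paper's proof exactly: bootstrapping the pointwise bound $\mu=\|\partial A\|_{\sL^\infty}$ via the microlocal estimate $|\Gamma_x((\hbar D_x-A)\cdot\boldupsigma)e|\lesssim\mu\hbar^{-\delta}$, the minimizer equation (\ref{26-3-13}) and interior elliptic regularity, and the Gagliardo--Nirenberg interpolation (\ref{26-3-24}) against the $\sL^2$-bound, iterating $\mu\mapsto\mu^{3/5}\hbar^{-\delta}$ until $\mu=\hbar^{-\delta}$ and then running one final pass that reinstates the $\kappa$-prefactor. The only slip is the stated seed value $\mu_0=\hbar^{-1/5-\delta}$, which by (\ref{26-3-25}) with $k=1$ should be $\hbar^{-3/5-\delta}$, but this does not affect the iteration since both lie below the threshold $\hbar^{\sigma-1}$.
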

However to  prove that  the effective magnetic field  $O(1)$  we need to modify these arguments, and we do it in the next Subsubsection.

\subsection{Estimates to a minimizer. II}
\label{sect-26-3-2-3}

In this step we repeat arguments of Subsubsection~\ref{sect-26-2-2-1}.1  but now we have a problem: we cannot use $\mu =\|\partial A\|_\infty$ as we have  domains $\cX_r= \{x: \ell(x)\ge r\}$ rather than the whole space. So we get the following analogue of (\ref{26-2-47}) where $A$ is still rescaled and the norms are calculated  in the rescaled coordinates:
\begin{multline}
\|\Delta A\|_{\sC (B(x, \frac{3}{4})} +
\hbar \|\Delta \partial A\|_{\sC (B(x, \frac{3}{4})} \le \\
C\kappa \Bigl( 1 +  |\partial A|_{\sC (B(x, 1)} +
 h^{\frac{1}{2}(\theta-1)}
\|\partial  A\|_{\sC^{\theta} \frac{1}{2}(B(x, 1)}\Bigr)
\label{26-3-28}
\end{multline}
which implies
\begin{multline}
\|\partial A\|_{\sC (B(x, \frac{1}{2}))}+
\hbar^{\theta-1}  \|\partial A\|_{\sC^\theta (B,(x,\frac{1}{2}))} \le\\[3pt]
\epsilon  \hbar^{(\theta-1)}\zeta^{-1}
 \|\partial A\|_{\sC^\theta (B(x,1))} +
 C\kappa \|\partial A\|_{\sC (B(x, 1))} +
 C \|\partial A\|_{\sL^2 (B(x, 1))}
 \label{26-3-29}
\end{multline}
and the last term in the right-hand expression does not exceed $C\kappa^{\frac{1}{2}}$.

Let $\nu (r)=\sup_{x:\,\ell(x)\ge r} f(x)$ where $f(x)$ is the left-hand expression of (\ref{26-3-15}) calculated for given $x$ in the rescaled coordinates. Then (\ref{26-3-29}) implies that for $\kappa \in (0,\kappa^*)$ (where $\kappa^*>0$ is a small constant)
\begin{gather*}
\nu (r)\le \frac{1}{2}\nu (\frac{1}{2} r) +C\kappa ^{\frac{1}{2}}\\
\shortintertext{which in turn implies that}
\nu (r) \le \frac{1}{2}\nu (2^{-n}r) + 2C\kappa^{\frac{1}{2}},\qquad n\ge 1,\\
\shortintertext{and therefore}
\nu(r)\le 4C\kappa^{\frac{1}{2}}+ 4 \sup_{C_0 h^2\le \ell(x)\le 2C_0h^2}  f(x) \le C_1\kappa^{\frac{1}{2}}
\end{gather*}
due to the rough estimate (because $\hbar\asymp 1$  as $\ell(x)\asymp h^2$). Then returning to the original (not rescaled) coordinates and to the original (not rescaled) potential $A$ we arrive to estimates (\ref{26-3-30}) and (\ref{26-3-31}) below:

\begin{proposition}\label{prop-26-3-4}
Let $\kappa\le \kappa^*$, $\zeta=c\ell^{-\frac{1}{2}}$. Let $A$ be a minimizer. Then   for $\ell(x)\ge \ell_*=h^2$ estimate \textup{(\ref{26-3-21})} holds and also
\begin{gather}
|\partial^2 A (x)-\partial^2 A(y)|\le C\kappa^{\frac{1}{2}}\ell^{-\frac{5}{2}}|x-y|^{\theta} \ell^{\theta/2}\ell_*^{-\theta/2}\qquad 0<\theta <1,
\label{26-3-30}\\
\shortintertext{and}
|\partial A (x)-\partial A(y)|\le C\kappa^{\frac{1}{2}}\ell^{-\frac{5}{2}}|x-y| (1+|\log |x-y||)  .
\label{26-3-31}
\end{gather}
\end{proposition}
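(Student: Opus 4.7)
The plan is to localize to the ball $B(x_0,\ell)$ with $\ell=\ell(x_0)\ge \ell_*=h^2$ and rescale $x\mapsto y=(x-x_0)/\ell$ with $\zeta=c\ell^{-1/2}$, so the semiclassical parameter becomes $\hbar=h/(\zeta\ell)=h\ell^{-1/2}$, the potentials transform as $\tilde A(y)=\zeta^{-1}A(x_0+\ell y)$ and $\tilde V=\zeta^{-2}V$, and the Euler-Lagrange equation (\ref{26-2-14}) becomes its rescaled version (\ref{26-3-13}) with the coupling $\kappa$ \emph{unchanged}. Because $x_0$ sits at distance $\asymp 1$ (after rescaling) from every nucleus, $\tilde V$ is uniformly $\sC^{2,1}$ on $B(0,1)$; moreover, the global $\sL^2$ bound $\|\partial A\|\le C\kappa^{1/2}$ from Proposition~\ref{prop-26-3-1} is scale-invariant in $d=3$, so the free input $\|\partial \tilde A\|_{\sL^2(B(0,1))}\le C\kappa^{1/2}$ is available.

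The first substantive step is to apply the pointwise bound (\ref{26-2-47}) from Proposition~\ref{prop-26-2-14} to the rescaled operator, obtaining (\ref{26-3-28}), and then upgrade it via Schauder interior regularity for the Poisson equation on $\Delta\tilde A$, yielding the absorption inequality (\ref{26-3-29}) on nested balls $B(y,\tfrac12)\subset B(y,1)$. With this in hand I would run a nested-balls iteration: setting $\nu(r)=\sup_{\ell(x)\ge r} f(x)$, where $f(x)=\|\partial\tilde A\|_{\sC}+\hbar^{\theta-1}\|\partial\tilde A\|_{\sC^\theta}$ computed in $x$'s own rescaling, (\ref{26-3-29}) forces $\nu(r)\le \tfrac12\nu(r/2)+C\kappa^{1/2}$ once $\kappa\le\kappa^*$ and the absorption parameter are chosen small enough. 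Iterating down to $r=\ell_*$ and using that at $\ell\asymp\ell_*$ one has $\hbar\asymp 1$, so Proposition~\ref{prop-26-3-1} combined with the fixed-coupling elliptic $\sC^2$-theory supplies $\nu(\ell_*)\le C\kappa^{1/2}$, one obtains $\nu(r)\le C_1\kappa^{1/2}$ uniformly in $r\in[\ell_*,1]$. Translating $\partial A=\ell^{-3/2}\partial\tilde A$ and $\partial^2 A=\ell^{-5/2}\partial^2\tilde A$ back to the original coordinates, and tracking powers of $\ell$, $\hbar=h\ell^{-1/2}$ and $\ell_*=h^2$, yields (\ref{26-3-21}) together with the H\"older estimate (\ref{26-3-30}).

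For the log-Lipschitz refinement (\ref{26-3-31}) I would invoke the observation in Remark~\ref{rem-26-2-18}\ref{rem-26-2-18-ii} that produces (\ref{26-2-60}): since $\tilde V$ is genuinely $\sC^{2,1}$ in the rescaled picture, the $\sC^\theta$ bound on $\partial^2\tilde A$ upgrades to a log-Lipschitz bound $|\partial^2\tilde A(y)-\partial^2\tilde A(y')|\le C\kappa^{1/2}|y-y'|(1+|\log|y-y'||)$, and one final unrescaling produces (\ref{26-3-31}). The main obstacle I anticipate is not any single algebraic step but rather keeping the iteration honest: verifying that the contraction coefficient in $\nu(r)\le\tfrac12\nu(r/2)+C\kappa^{1/2}$ is uniformly strictly less than $1$ throughout $r\in[\ell_*,1]$---which ties the smallness of $\kappa^*$ to the absorption parameter in (\ref{26-3-29})---and ensuring that the base case at $\ell\asymp h^2$ is handled without the semiclassical microlocal machinery, since there $\hbar\asymp 1$ and one is essentially solving a fixed-coupling elliptic problem with a bounded input from Proposition~\ref{prop-26-3-1}.
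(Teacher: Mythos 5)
Your proposal reproduces the paper's own argument almost exactly: the same rescaling with $\zeta=c\ell^{-1/2}$ leaving $\kappa$ fixed, the same passage from the rescaled Euler--Lagrange equation to (\ref{26-3-28}) and then to the absorption inequality (\ref{26-3-29}) by interior elliptic regularity, the same iterated contraction $\nu(r)\le\tfrac12\nu(r/2)+C\kappa^{1/2}$ anchored at $\ell\asymp h^2$ where $\hbar\asymp1$, and the same unrescaling and appeal to the $\sC^{2,1}$-theory (as in Remark~\ref{rem-26-2-18}\ref{rem-26-2-18-ii}) for the log-Lipschitz refinement. The only item to tighten is the phrasing of the base case: since $\nu(\ell_*)$ is by definition a supremum over the entire region $\{\ell(x)\ge\ell_*\}$, writing ``$\nu(\ell_*)\le C\kappa^{1/2}$'' as an input would be circular; what the iteration really consumes, and what the rough $\hbar\asymp1$ estimate from Proposition~\ref{prop-26-3-1} actually delivers, is a bound on $\sup_{\ell(x)\asymp\ell_*}f(x)$ over the innermost shell alone, as in the paper's display.
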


\begin{remark}\label{rem-26-3-5}
\begin{enumerate}[label=(\roman*), fullwidth]
\item\label{rem-26-3-5-i}
So far we used only assumption that
\begin{equation}
|\partial ^\alpha V|\le C\zeta^2 \ell^{-|\alpha|}\qquad \forall
\alpha :|\alpha|\le 2
\label{26-3-32}
\end{equation}
with $\zeta =\ell^{-\frac{1}{2}}$ but even this was excessive;

\item\label{rem-26-3-5-ii}
In this framework however we cannot prove better estimates as (\ref{26-3-21}) always remains a valid alternative even if $\zeta  \ll \ell^{-\frac{1}{2}}$;

\item\label{rem-26-3-5-iii}
Originally we need an assumption (\ref{26-2-36})  $|V|\ge \epsilon_0$, but for $d= 3$  one can easily get rid off it  by the standard rescaling technique.
\end{enumerate}
\end{remark}

Consider now zone $\{x:\ \ell(x) \le \ell_*\}$:

\begin{proposition}\label{prop-26-3-6}
Let $\kappa\le \kappa^*$, $\zeta\le c\ell^{-\frac{1}{2}}$. Let $A$ be a minimizer.   Then $|\partial A|\le C \kappa^{\frac{1}{2}} h^{-3}$ as
$\ell(x) \le \ell_*=h^2$.
\end{proposition}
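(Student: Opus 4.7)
Since Proposition~\ref{prop-26-3-4} already delivers the bound $|\partial A|\le C\kappa^{1/2}h^{-3}$ on the boundary sphere $\partial B(\y_m,h^2)$ — exactly the target bound — the task reduces to propagating this estimate inward through the Coulomb ball. The difficulty is that on this ball the effective semiclassical parameter $\hbar=h/(\zeta\ell)=h\ell^{-1/2}$ reaches $1$ at $\ell\asymp h^2$, so the microlocal machinery of Subsection~\ref{sect-26-3-2-2} is no longer available and must be replaced by genuinely quantum (atomic-scale) bounds.

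Fix $\y_m$; the argument at each nucleus is identical. Writing the minimizer equation \textup{(\ref{26-2-14})} as $\Delta A_j=\tfrac12\kappa h^{-2}\Phi_j$, split $A=A^{\mathsf h}+A^{\mathsf b}$ in $B(\y_m,h^2)$ where $A^{\mathsf h}$ is harmonic with the same boundary trace as $A$ and $A^{\mathsf b}$ vanishes on $\partial B(\y_m,h^2)$. Maximum principle together with interior gradient estimates for harmonic functions delivers
\begin{equation*}
|\partial A^{\mathsf h}(x)|\le C\sup_{\partial B(\y_m,h^2)}|\partial A|\le C\kappa^{1/2}h^{-3}\qquad\forall x\in B(\y_m,h^2),
\end{equation*}
so only $A^{\mathsf b}$ remains. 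To handle it, rescale $y=(x-\y_m)/h^2$: the Pauli operator becomes a unit-scale atomic Hamiltonian $(D_y-\tilde A)^2-\tilde V$ with $\hbar=1$ and Coulomb potential $\tilde V(y)\sim z_m/|y|$, while $\tilde A(y)=hA(\y_m+h^2y)$ inherits $\|\partial_y\tilde A\|_{L^2(B(0,1))}\le C\kappa^{1/2}$ from \textup{(\ref{26-3-6})} (and hence $\|\tilde A\|_{L^6(B(0,1))}\le C\kappa^{1/2}$ by Sobolev). The rescaled Poisson equation for $\tilde A^{\mathsf b}$ has source proportional to $\kappa$ times the rescaled current density $\tilde\Phi$. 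Crucially, when $\tilde A=0$ the Pauli Hamiltonian is time-reversal invariant and $\tilde\Phi$ vanishes identically; a linear-response expansion combined with the non-semiclassical atomic trace bound of Erd\H os–Solovej recalled in Appendix~\ref{sect-26-A-1} yields an $L^p$-estimate on $\tilde\Phi$ controlled by $\|\tilde A\|_{L^6}\le C\kappa^{1/2}$. Feeding this into standard Calder\'on–Zygmund / Newton-potential regularity for the Dirichlet problem on $B(0,1)$ gives $|\partial_y\tilde A^{\mathsf b}|\le C\kappa^{1/2}$, which upon unscaling reads $|\partial A^{\mathsf b}|\le C\kappa^{1/2}h^{-3}$. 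Adding this to the harmonic part completes the bound.

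The hard part is the linear-response estimate for $\tilde\Phi$ at the Coulomb scale: the pointwise saturation bound $|\tilde\Phi|=O(1)$ is far too crude, and one must exploit the fact that $\tilde\Phi$ vanishes for $\tilde A=0$ to extract the $\kappa^{1/2}$ gain. This is precisely where the operator-theoretic atomic estimate from Appendix~\ref{sect-26-A-1} replaces the semiclassical/microlocal arguments that powered Propositions~\ref{prop-26-2-14} and \ref{prop-26-3-4} in the region $\ell\ge h^2$.
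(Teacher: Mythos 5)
Your overall scaffold — rescale the Coulomb ball to unit atomic scale $\hbar=1$, read the minimizer equation~(\ref{26-2-14}) there, and conclude by elliptic regularity for the Poisson equation — matches the intent of the paper's terse proof. But two of your central steps do not hold up as stated, and one of them is aimed at solving a problem that does not exist.

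\emph{The linear-response step is both unnecessary and unsupported.} You assert that ``the pointwise saturation bound $|\tilde\Phi|=O(1)$ is far too crude'' and that one must extract a $\kappa^{1/2}$ gain by linear response. This is backwards. The rescaled equation is $\Delta_y\tilde A=-2\kappa\,\tilde\Phi$ with $\hbar=1$. Since $\kappa\le\kappa^*\le 1$, the crude pointwise bound $|\tilde\Phi|\le C$ (the $\hbar=1$ analogue of Corollary~\ref{cor-26-2-7}, obtained from Proposition~\ref{prop-26-2-6}-type $\sL^2$ bounds with $\|\partial_y\tilde A\|_{\sL^2}\le C\kappa^{1/2}$) already gives $|\Delta_y\tilde A|\le C\kappa$, hence $\|\partial_y\tilde A^{\mathsf b}\|_{\sL^\infty}\le C\kappa$ for the Dirichlet part — which is \emph{better} than the target $C\kappa^{1/2}$. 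You are trying to improve an estimate that already suffices; the ``gain'' you want to manufacture buys nothing. Moreover, the linear-response claim itself — an $\sL^p$ bound on the current $\tilde\Phi$ linear in $\|\tilde A\|_{\sL^6}$, deduced from time-reversal symmetry plus the Erd\H os--Solovej lemma — is not established anywhere. Differentiating the spectral projector $\uptheta(-H)$ in $A$ is a genuinely delicate operator-theoretic step, and Lemma~\ref{lemma-26-A-4} is a Lieb--Thirring-type lower bound on the Hamiltonian, not a current estimate; it cannot be used as you describe.

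\emph{The harmonic-part argument misstates the maximum principle.} The maximum principle controls $\sup_B|A^{\mathsf h}|$ by $\sup_{\partial B}|A|$, not $\sup_B|\partial A^{\mathsf h}|$ by $\sup_{\partial B}|\partial A|$, and ``interior gradient estimates'' degenerate as $x\to\partial B$, so the bound you assert ``for all $x\in B(\y_m,h^2)$'' does not follow from the tools you invoke. This can be repaired (e.g.\ work on a slightly larger ball where Proposition~\ref{prop-26-3-4} still applies and use interior estimates, or use $C^{1,\theta}$ boundary regularity from~(\ref{26-3-30})), but the proof as written has a gap here.

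The simplest correct route — and what the paper's one-line proof is pointing at — avoids the harmonic/Dirichlet split altogether: on $B(0,2)$ in rescaled coordinates one has $|\Delta_y\tilde A|\le C\kappa$ and $\|\partial_y\tilde A\|_{\sL^2(B(0,2))}\le C\kappa^{1/2}$ from~(\ref{26-3-6}), and the standard interior estimate $\|\partial_y\tilde A\|_{\sL^\infty(B(0,1))}\le C\|\Delta_y\tilde A\|_{\sL^\infty(B(0,2))}+C\|\partial_y\tilde A\|_{\sL^2(B(0,2))}\le C\kappa+C\kappa^{1/2}\le C\kappa^{1/2}$ unscales directly to $|\partial_x A|\le C\kappa^{1/2}h^{-3}$.
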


\begin{proof}
Proof is standard, based on the rescaling (then $\hbar=1$) and equation (\ref{26-2-14}) for a minimizer $A$. We leave easy details to the reader.
\end{proof}

Let us slightly improve estimate to a minimizer $A$. We already know that
$|\partial A (x)|\le C_0\beta $ with $\beta =\ell^{-\frac{3}{2}}$ and using the  standard rescaling technique we conclude that
\begin{equation}
|\Delta A|\le C\kappa \zeta^2 \beta + C\kappa \zeta^3 \ell^{-1}
\label{26-3-33}
\end{equation}
which does not exceed $C\kappa \ell^{-\frac{5}{2}}$ which implies

\begin{proposition}\label{prop-26-3-7}
In our framework
\begin{enumerate}[label=(\roman*), fullwidth]
\item\label{prop-26-3-7-i}
As $\ell (x)\ge h^2$
\begin{gather}
|A|\le C\kappa \ell^{-\frac{1}{2}},\qquad
|\partial A|\le C\kappa \ell ^{-\frac{3}{2}}\label{26-3-34}\\
\shortintertext{and}
|\partial A (x)-\partial A(y)|\le C_\theta\kappa \ell^{-\frac{3}{2}-\theta}|x-y|^\theta \qquad \text{as\ \ } |x-y|\le \frac{1}{2}\ell (x)
\label{26-3-35}
\end{gather}
for any $\theta \in (0,1)$;
\item\label{prop-26-3-7-ii}
As $\ell (x)\le \ell_*= h^2$  these estimates hold with $\ell (x)$ replaced by $\ell^*$.
\end{enumerate}
\end{proposition}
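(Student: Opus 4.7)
The key input is the pointwise bound $|\Delta A(y)|\le C\kappa\ell(y)^{-5/2}$ just established in (\ref{26-3-33}); the analogous bound with $\ell(y)$ replaced by $\ell^{*}=h^{2}$ in the singular zone $\{\ell\le h^{2}\}$ follows from the same rescaling argument as in Proposition~\ref{prop-26-3-6} (which returns $|\Delta A|\le C\kappa(\ell^{*})^{-5/2}$ upon unscaling since $\hbar\asymp 1$ after rescaling). My plan is to invert the Laplacian via the Newton potential representation
\begin{equation*}
A(x)=\frac{1}{4\pi}\int\frac{-\Delta A(y)}{|x-y|}\,dy,\qquad
\partial_{i}A(x)=\frac{1}{4\pi}\int\frac{(x-y)_{i}\,\Delta A(y)}{|x-y|^{3}}\,dy,
\end{equation*}
valid because $A\in\sH^{1}_{0}(\bR^{3})$ decays at infinity and, under the standing hypotheses on $V$, $\Delta A$ is integrable against the Newton kernel.

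For the pointwise bounds (\ref{26-3-34}), I would fix $x$ with $r\Def\ell(x)\ge h^{2}$ and partition $\bR^{3}$ into the local ball $B(x,r/4)$ (where $\ell(y)\asymp r$), dyadic shells $\{2^{j}r\le|y-\y_{m}|<2^{j+1}r\}$ around each nucleus $\y_{m}$, and a far region. On $B(x,r/4)$ the integrals $\int\ell(y)^{-5/2}|x-y|^{-k}\,dy$ are dominated by $Cr^{-1/2}$ for $k=1$ and $Cr^{-3/2}$ for $k=2$. On a shell of scale $\rho$ about $\y_{m}$, the change of variable $y=\y_{m}+\rho z$ produces $\rho^{1/2-k}$ times the scale-invariant integral $\int_{|z|\asymp 1}|z|^{-5/2}|e-z|^{-k}\,dz$ with $e=(x-\y_{m})/\rho\in\bR^{3}$, uniformly bounded in $e$; the dyadic sum in $\rho$ converges and produces the desired $|A|$ and $|\partial A|$ bounds, with the far contribution controlled by the decay of $V$.

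For the H\"older estimate (\ref{26-3-35}), set $\delta=|x-x'|\le r/2$ and split the difference of Newton kernels in $\partial A(x)-\partial A(x')$ into three parts: near ($|y-x|\le 2\delta$), middle ($2\delta<|y-x|\le r/2$), and far ($|y-x|>r/2$). The near part is bounded by estimating each term separately using $|y-x|^{-2}$ and $|y-x'|^{-2}$, yielding $C\kappa r^{-5/2}\delta$. The middle part uses $\ell(y)\asymp r$ and the mean-value gradient bound for the Newton kernel, producing $C\kappa r^{-5/2}\delta\log(r/\delta)$. The far part combines that same mean-value bound with the dyadic estimates from the previous paragraph, producing $C\kappa r^{-5/2}\delta$. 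All three are then bounded by $C_{\theta}\kappa r^{-3/2-\theta}\delta^{\theta}$ using $s\log(1/s)\le C_{\theta}s^{\theta}$ for $s=\delta/r\in(0,1/2]$ and any $\theta\in(0,1)$, which yields (\ref{26-3-35}). Part~(ii) is obtained by the same arguments after the rescaling $x\mapsto(x-\y_{m})/\ell^{*}$.

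The main technical obstacle is the bookkeeping for the H\"older step: absorbing the $\log(r/\delta)$ factor into the H\"older loss, and controlling contributions from regions near nuclei other than the one closest to $x$ with the correct $\kappa$-linear (rather than $\kappa^{1/2}$) constant. The improvement from $\kappa^{1/2}$ to $\kappa$ relative to Proposition~\ref{prop-26-3-4} is the whole point of the proposition and relies on using the \emph{pointwise} bound on $\Delta A$ through the Newton kernel, rather than merely the $\sL^{2}$ bound $\|\partial A\|\le C\kappa^{1/2}$.
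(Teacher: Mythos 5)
Your proposal is correct and takes essentially the same route the paper intends: the paper's text between (\ref{26-3-33}) and the Proposition consists only of the pointwise bound $|\Delta A|\le C\kappa\ell^{-5/2}$ followed by ``which implies,'' and the implication is precisely the Newton-potential integration you carry out, with the same dyadic decomposition around each nucleus (the paper uses the identical representation $A_j(x)=-\frac{1}{4\pi}\int|x-y|^{-1}\Phi_j(y)\,dy$ explicitly in the next Subsubsection). Your observation that the improvement from $\kappa^{1/2}$ to $\kappa$ comes from convolving the \emph{pointwise} $\Delta A$-bound rather than using the $\sL^2$ bound matches Remark~\ref{rem-26-3-8}\ref{rem-26-3-8-i}, and the absorption of the $\log(r/\delta)$ loss into a H\"older exponent $\theta<1$ is the standard and correct way to obtain (\ref{26-3-35}).
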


\begin{remark}\label{rem-26-3-8}
\begin{enumerate}[label=(\roman*), fullwidth]
\item\label{rem-26-3-8-i}
Here in comparison with old estimates we replaced factor $\kappa^{\frac{1}{2}}$ by $\kappa$ which is an advantage;

\item\label{rem-26-3-8-ii}
These estimates imply that $\int _{\{x:\, \ell(x)\le 1\}} |\partial A|^2\,dx \le C \kappa^2 |\log h|$ while in fact it must not exceed $C\kappa^2$.
\end{enumerate}
\end{remark}

\subsection{Estimates to a minimizer. III}
\label{sect-26-3-2-4}

Consider now external zone $\cY\Def \{x:\,\ell(x) \ge 1\}$ and assume that
\begin{equation}
\zeta (x)\le C\ell(x)^{-\nu}\qquad \text{as\ \ } \ell(x) \ge 1
\label{26-3-36}
\end{equation}
with $\nu >1$.

Then if also
$|\partial A (x)| =O(\ell(x)^{-\nu_1})$ as $\ell(x)\ge 1$ then the right hand expression of (\ref{26-3-33}) does not exceed
$C\kappa (\ell^{-3\nu-1}+\ell^{-\nu_1-2\nu})$ and therefore we \emph{almost\/} upgrade estimate to $|\partial A (x)|$ to $O(\ell^{-3\nu}+\ell^{-\nu_1-2\nu+1})$ and repeating these arguments sufficiently many times to $O(\ell^{-3\nu})$.

However, there are obstacles to this conclusion: first, as $\nu >1$ we conclude that
\begin{equation*}
A_j = \sum_m \upalpha_{j,m} |x-\y_m|^{-1} + O(\ell^{-1-\delta})
\end{equation*}
with constant $\upalpha_{j,m}$; however assumption $\nabla\cdot A=0$ implies $\upalpha_{j,m}=0$ and we pass this obstacle.

Indeed, let our equation be $\Delta A_j=\Phi_j$ and therefore
\begin{equation*}
A_j(x)=-\frac{1}{4\pi}\int |x-y|^{-1} \Phi_j(y)\,dy.
\end{equation*}
Let $a$ be the minimal distance between nuclei, $1=\sum_{0\le m\le M} \phi_m$ where $\phi_m$ is supported in $\frac{1}{3}a$-vicinity of $\y_m$ and equals $1$ in $\frac{1}{4}a$-vicinity of $\y_m$, $m=1,\ldots, 1$. Let
\begin{equation*}
I_{j,m}=\int \Phi_j(y)\phi_m(y)\,dy,\qquad \eta =\max_{1\le m \le M}|I_{j,m}|.
\end{equation*}
Then as $x$ belomgs to $b$-vicinity of $\y_m$ with $b\le \epsilon a$ one can prove easily that
\begin{equation*}
|\partial_{x_k}\int |x-y|^{-1} \Phi_j(y)\phi_{m'}(y)\,dy|\le
C\eta a^{-2}+Ca^{-3}
\end{equation*}
 as $m'=0,1,\ldots, M$, $m'\ne m$.

Also one can prove easily that
\begin{equation*}
|\partial_{x_j} \Bigl(\int|x-y|^{-1} \Phi_j(y)\phi_{m'}(y)\,dy- |x-\y_m|^{-1}I_{j,m}\Bigr)|\le C|x-\y_m|^{-3}
\end{equation*}
and combining with the previous inequality and $\nabla \cdot A=0$ we conclude that $|I_{m,j}| \le C\eta a^{-2}b^2+Ca^{-3}b^2 +C b^{-1}$ as $b\le \epsilon a$. Then selecting $b=\epsilon_1a$ with sufficiently small constant $\epsilon_1$ we conclude that $\eta \le Ca^{-1}$ which in turn implies that
$|\partial_k A_j(x)|\le C\ell^{-3}$.

\medskip
The second obstacle
\begin{equation*}
A_j = \sum_{k,m} \upalpha_{jk,m} (x_k-\y_{k,m})|x-\y_m|^{-3} +
O(\ell^{-2})
\end{equation*}
with constant $\upalpha_{jk,m}$ we cannot pass as assumption $\nabla\cdot A=0$ implies only that modulo gradient
$A=\sum_m \upbeta_m\times \nabla \ell_m^{-1}$ with constant vectors $\upbeta_m$ and one cannot pass this obstacle.

Therefore we upgrade (\ref{26-3-34})--(\ref{26-3-35}) there:

\begin{proposition}\label{prop-26-3-9}
In our framework assume additionally that  \textup{(\ref{26-3-36})} holds. Then as $\nu>\frac{4}{3}$
\begin{gather}
|A|\le C\kappa \ell^{-2},\qquad
|\partial A|\le C\kappa \ell ^{-3}\label{26-3-37}\\
\shortintertext{and}
|\partial A (x)-\partial A(y)|\le C_\theta\kappa \ell^{-3-\theta}|x-y|^\theta \qquad \text{as\ \ } |x-y|\le \frac{1}{2}\ell (x)
\label{26-3-38}
\end{gather}
as $\ell(x)\ge 1$ (for all $\theta \in (0,1)$);
\end{proposition}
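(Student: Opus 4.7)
The plan is to exploit the minimizer equation $\Delta A_j = \Phi_j$ of Proposition~\ref{prop-26-2-4}, upgrading the pointwise bounds of Proposition~\ref{prop-26-3-7} via a bootstrap combined with a multipole analysis of the Newtonian potential representation of $A_j$. Substituting $\zeta \le C\ell^{-\nu}$ into the rescaled inequality (\ref{26-3-33}) gives, in the external zone,
\begin{equation*}
|\Phi_j(x)| = |\Delta A_j(x)| \le C\kappa\bigl(\ell(x)^{-2\nu}\,|\partial A(x)| + \ell(x)^{-3\nu-1}\bigr).
\end{equation*}
Coupled with interior elliptic estimates applied on balls of radius $\asymp \ell(x)$, an inductive hypothesis $|\partial A| \le C\kappa\,\ell^{-\nu_1}$ yields the improved bound $|\partial A| \le C\kappa(\ell^{-3\nu} + \ell^{-\nu_1 - 2\nu + 1})$. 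Starting from $\nu_1 = 3/2$ (from Proposition~\ref{prop-26-3-7}) and iterating, the exponent rapidly exceeds $3$ under the hypothesis $\nu > 4/3$, so the purely local bootstrap already outperforms the target pointwise decay.

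The main obstacle is therefore global rather than local: the Newtonian potential representation $A_j(x) = -(4\pi)^{-1}\int|x-y|^{-1}\Phi_j(y)\,dy$ carries a multipole expansion at infinity whose monopole term would, a priori, give only the Coulomb rate $|A| \sim \ell^{-1}$. The task is to show that the monopole coefficients $I_{j,m} \Def \int\Phi_j(y)\phi_m(y)\,dy$, associated to cutoffs $\phi_m$ localized near each nucleus $\y_m$, are small enough not to spoil the $\ell^{-2}$ rate. Following the sketch preceding the proposition, fix the partition $1 = \sum_{m=0}^M \phi_m$ with $\phi_m$ supported in the $(a/3)$-neighborhood of $\y_m$ and equal to $1$ on the $(a/4)$-neighborhood, where $a$ is the minimal internuclear distance. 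For $x$ at distance $b \le \epsilon a$ from $\y_m$, the contributions to $\partial_{x_k} A_j(x)$ from the terms with $m' \ne m$ are controlled by $C\eta a^{-2} + Ca^{-3}$ with $\eta = \max_m |I_{j,m}|$, while the local contribution from $\phi_m$ differs from $\partial_{x_k}(|x-\y_m|^{-1}I_{j,m})$ by at most $C|x-\y_m|^{-3}$.

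Combining these estimates with the gauge constraint $\nabla\cdot A(x) = 0$, evaluated at $b = \epsilon_1 a$ with sufficiently small $\epsilon_1$, forces $|I_{j,m}| \le Ca^{-1}$, which is precisely what is needed to reduce the monopole contribution to $A$ to $O(\ell^{-2})$ and to $\partial A$ to $O(\ell^{-3})$. The residual dipole term $\sum_m \upbeta_m \times \nabla |x-\y_m|^{-1}$ cannot be eliminated by any further use of $\nabla\cdot A = 0$, and accounts for the sharpness of the estimates (\ref{26-3-37}); the H\"older estimate (\ref{26-3-38}) then follows from interior Schauder estimates for $\Delta A_j = \Phi_j$ on balls $B(x, \ell(x)/4)$, combined with the pointwise bound on $\Phi_j$ established at the first step. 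The hardest step is the $|I_{j,m}| \le Ca^{-1}$ bound: it requires keeping careful track of the mutual interaction between the different nuclear regions, and is precisely where the threshold $\nu > 4/3$ enters, ensuring that the remainder integrals produced by the multipole expansions at $\y_m$ converge fast enough to dominate the cross-terms at scale $a$.
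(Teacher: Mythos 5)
Your proposal reconstructs the paper's own argument almost verbatim: the local bootstrap from (\ref{26-3-33}) with $\zeta\lesssim\ell^{-\nu}$ and the starting exponent $\nu_1=\tfrac{3}{2}$ of Proposition~\ref{prop-26-3-7}, then the Newtonian-potential multipole expansion with the cutoffs $\phi_m$ and the coefficients $I_{j,m}$, the elimination of the monopole via $\nabla\cdot A=0$, and the acceptance of the dipole $\sum_m\upbeta_m\times\nabla\ell_m^{-1}$ as the irremovable obstruction that caps the decay of $\partial A$ at $\ell^{-3}$, with the H\"older estimate from interior Schauder on $\ell$-sized balls. This is precisely the prose immediately preceding the proposition, so the approach is the same. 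The one speculative claim is your attribution of the threshold $\nu>\tfrac{4}{3}$ specifically to the bound $|I_{j,m}|\le Ca^{-1}$; the paper does not explain where $\tfrac{4}{3}$ enters, and it is at least as plausible that the constraint comes from needing the very first bootstrap step (starting from $\nu_1=\tfrac{3}{2}$) to produce a source term $|\Delta A|\lesssim\ell^{-p}$ with $p$ large enough to make the tail of the Newton integral subordinate to the dipole contribution.
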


\begin{remark}\label{rem-26-3-10}
\begin{enumerate}[label=(\roman*), fullwidth]
\item\label{rem-26-3-10-i}
In application to the ground state energy we are interested in $\nu=2$;

\item\label{rem-26-3-10-ii}
Observe that for $a\ge 1$
\begin{equation}
\int_{\{\ell(x)\asymp a\}} |\partial A|^2\,dx = O(\kappa^2 a^{-3});
\label{26-3-39}
\end{equation}
\item\label{rem-26-3-10-iii}
We were not able to improve (\ref{26-3-37})--(\ref{26-3-39}) no matter how fast $\zeta$ decays.
\end{enumerate}
\end{remark}

\section{Basic trace estimates}
\label{sect-26-3-3}

Recall that the standard Tauberian theory results in the remainder estimate $O(h^{-2})$. Really, as the effective magnetic field intensity is no more than $C\kappa $, contribution of $B(x,\ell(x))$ to the Tauberian error\footnote{\label{foot-26-13} Or Weyl error as we will explain transition from Tauberian to Weyl estimates below.} does not exceed
$C\zeta^2 \times \hbar^{-1}= C\zeta^3 \ell h^{-1}$ which as
$\zeta \asymp \ell^{-\frac{1}{2}}$ translates into $C\ell^{-\frac{1}{2}}h^{-1}$ and summation over $\{x: \ell(x)\ge \ell_*=h^2\}$ results in $Ch^{-2}$. On the other hand, contribution of $\{x: \ell(x)\le \ell_*=h^2\}$ into asymptotics does not exceed $C\hbar^{-3}\ell_*^{-1}= Ch^{-2}$ as $\hbar=1$.

However now we can unleash  arguments of V.~Ivrii, V. and I.~M.~Sigal~\cite{ivrii:ground}.  Recall that we are looking at
\begin{equation}
\Tr (\psi H^-_{A,V}\psi)=  \Tr (\phi_1 H^-_{A,V}\phi_1) +
\Tr (\phi_2 H^-_{A,V}\phi_2)
\label{26-3-40}
\end{equation}
where $\psi^2=\phi_1^2+\phi_2^2$, $\supp \phi_1\subset \{x, |x|\le 2r\}$,
$\supp \phi_2\subset \{x, r\le |x|\le b\}$  and we compare it with the same expression calculated for $H_{A,V^0}$ with $V^0=Z_m|x|^{-1}$. Here we assume that
\begin{gather}
a \le 1, \qquad z\asymp 1
\label{26-3-41}\\
\shortintertext{and}
|D^\alpha (V-V^0)|\le c_0 a^{-1} \ell^{-|\alpha|}\qquad \forall \alpha:|\alpha|\le 3.
\label{26-3-42}
\end{gather}
The latter assumption is too restrictive and could be weaken. Then as $\phi(x)$ is an $\ell$-admissible partition element
\begin{gather}
\Tr  \bigl(\uptheta (-H^-_{A,V})\phi^2\bigr)=
\int \Weyl (x)\phi^2(x) \,dx + O(r h^{-2})
\label{26-3-43}\\
\shortintertext{and}
\Tr  \bigl( H^-_{A,V} \phi^2\bigr)=
\int \Weyl_1 (x)\phi^2(x) \,dx + O(r^{-\frac{1}{2}} h^{-1})
\label{26-3-44}
\end{gather}
where the error estimates are $O(\hbar^{-2})$ and $O(\zeta^2\hbar^{-1})$ respectively. One can justify transition from the Tauberian to Weyl errors by considering Tauberian expressions and considering $H_{A_\varepsilon,V}$ and $H_{A,V}$ as unperturbed and perturbed operators respectively; their difference is $O(\zeta^3 \varepsilon^2)$ with $\varepsilon =\hbar^{1-\delta}$.

Then the contribution\footnote{\label{foot-26-14} After standard rescaling $x\mapsto x\ell^{-1}$,
$\xi\mapsto \xi \zeta^{-1}$, $h\mapsto\hbar$, $\tau\mapsto \tau\zeta^{-2}$, and $t\mapsto t \zeta \ell^{-1}$.} of time interval  $\{t:\,t \asymp T\}$ to Tauberian expression for (\ref{26-3-43}) of the first term in the approximation does not exceed $C\hbar^{-4} T \times  (\hbar T^{-1})^s$, of the second term
$C\hbar^{-4} T \times  (\hbar T^{-1})^s T \hbar^{-1} \varepsilon^2$, and of the third term
$C\hbar^{-4} T \times  (\hbar T^{-1}) T^2\hbar^{-2}\varepsilon^4$. In the end the first tem gives us Weyl expression, the second term turns out to be $0$, and the third term is less than the announced error.

Similarly, the contribution\footref{foot-26-14} of time interval
$\{t:\,t \asymp T\}$ to Tauberian expression for (\ref{26-3-44}) of the first term in the approximation does not exceed
$C\zeta^2\hbar^{-4} T \times  (\hbar T^{-1})^s$,
of the second term
$C\zeta^2\hbar^{-4} T \times  (\hbar T^{-1})^s T \hbar^{-1} \varepsilon^2$ and of the third term
$C\hbar^{-4} T \times  (\hbar T^{-1})^2 T^2\hbar^{-2}\varepsilon^4$. Again, in the end the first tem gives us Weyl expression, the second term turns out to be $0$, and the third term is less than the announced error.

The same estimates also hold for operator $H_{A,V^0}$ and then using $\ell$-admissible partition of unity we conclude that
\begin{multline}
\Tr \bigl(\phi_2 (H^-_{A,V} - H^-_{A,V^0})\phi_2\bigr)= \\[3pt]
\int \bigl(\Weyl_1(x) -\Weyl_1^0(x)\bigr)\,\phi^2_2(x) \,dx + O(r^{-\frac{1}{2}}h^{-1})
\label{26-3-45}
\end{multline}
where $\Weyl^0_1$ and $\Weyl^0$ are calculated for operator with potential $V^0$. Indeed, we just proved this for each operator $H_{A,V}$ and $H_{A,V^0}$ separately.

On the other hand, considering $V^\eta= V^0 (1-\eta) +V \eta = V^0 + W\eta$ and following  V.~Ivrii, V. and I.~M.~Sigal~\cite{ivrii:ground} we can rewrite the similar expression albeit for $\phi_2=1$ as
\begin{equation}
\Tr \int_0^1 W\uptheta (-H_{A,V^\eta})\, d\eta
\label{26-3-46}
\end{equation}
and applying the semiclassical approximation (under temporary assumption that $W$ is supported in $\{x:\ |x|\le 4 r\}$) one can prove that for $\phi_1=1$
\begin{multline}
\Tr \bigl(\phi_1 (H^-_{A,V} - H^-_{A,V^0})\phi_1\bigr)= \\[3pt]
\int \bigl(\Weyl_1(x) -\Weyl_1^0(x)\bigr)\,\phi^2_1(x) \,dx + O(a^{-1}rh^{-2}).
\label{26-3-47}
\end{multline}
Really, due to (\ref{26-3-43}) and (\ref{26-3-42}) the contribution of ball $B(x,\ell(x))$ does not exceed $Ca ^{-1} \hbar^{-2}= Ca^{-1} \ell (x)h^{-2}$ and summation with respect to partition as $\ell (x)\le 4r$ returns $Ca^{-1}rh^{-2})$; meanwhile contribution of $\{x: \ell(x)\le \ell_*\}$ does not exceed $Ca^{-1}\hbar^{-2}=Ca^{-1}$ as there $\hbar=1$.

One can get easily rid off the temporary assumption and take $\phi_1$ supported in $\{x:\ell(x)\le 2r\}$ instead.

Therefore we arrive to

\begin{proposition}\label{prop-26-3-11}
Under assumption \textup{(\ref{26-3-42})}
\begin{multline}
\Tr \bigl(\psi (H^-_{A,V} - H^-_{A,V^0})\psi\bigr)= \\[3pt]
\int \bigl(\Weyl_1(x) -\Weyl_1^0(x)\bigr)\,\psi^2(x) \,dx + O\bigl(a^{-\frac{1}{3}}h^{-\frac{4}{3}}\bigr)
\label{26-3-48}
\end{multline}
\end{proposition}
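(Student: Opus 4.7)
The plan is to combine the two estimates already derived in the preceding subsection, namely \textup{(\ref{26-3-45})} for the outer region and \textup{(\ref{26-3-47})} for the inner region, and then to choose the cut-off radius $r$ so as to balance the two error terms.

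First I would use the partition of unity $\psi^2=\phi_1^2+\phi_2^2$ with $\supp\phi_1\subset\{|x|\le 2r\}$ and $\supp\phi_2\subset\{r\le |x|\le b\}$ to write
\begin{equation*}
\Tr\bigl(\psi(H^-_{A,V}-H^-_{A,V^0})\psi\bigr)=
\Tr\bigl(\phi_1(H^-_{A,V}-H^-_{A,V^0})\phi_1\bigr)+
\Tr\bigl(\phi_2(H^-_{A,V}-H^-_{A,V^0})\phi_2\bigr).
\end{equation*}
For the outer contribution I apply \textup{(\ref{26-3-45})}, whose proof relies on the semiclassical estimates \textup{(\ref{26-3-43})}--\textup{(\ref{26-3-44})} for each of the operators $H_{A,V}$ and $H_{A,V^0}$ separately together with the $\ell$-admissible partition; this yields a remainder $O(r^{-\frac{1}{2}}h^{-1})$. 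For the inner contribution I apply \textup{(\ref{26-3-47})}, which is obtained by the interpolation device $V^\eta=V^0+\eta W$, rewriting the trace difference as $\Tr\int_0^1 W\,\uptheta(-H_{A,V^\eta})\,d\eta$ and invoking \textup{(\ref{26-3-43})} together with the bound $|W|\le C_0 a^{-1}$ from \textup{(\ref{26-3-42})}; this produces the remainder $O(a^{-1}rh^{-2})$.

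Adding the two Weyl main terms gives $\int(\Weyl_1-\Weyl_1^0)\psi^2\,dx$, while the total remainder is
\begin{equation*}
O\bigl(r^{-\frac{1}{2}}h^{-1}+a^{-1}rh^{-2}\bigr).
\end{equation*}
Optimizing in $r$ by equating the two terms gives $r\asymp (ah)^{\frac{2}{3}}$ (which lies in the admissible range $h^2\ll r\ll 1$ since $a\le 1$ and $h$ is small), and plugging back yields the claimed remainder $O(a^{-\frac{1}{3}}h^{-\frac{4}{3}})$.

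The only real obstacle is making sure that in the inner estimate \textup{(\ref{26-3-47})} the interpolation argument is legitimate for all $\eta\in[0,1]$: one needs the estimates on minimizers from Section~\ref{sect-26-3-2} (in particular \textup{(\ref{26-3-34})}--\textup{(\ref{26-3-35})}) to apply uniformly to $H_{A,V^\eta}$, and one must verify that the temporary assumption $\supp W\subset\{|x|\le 4r\}$ used to get \textup{(\ref{26-3-47})} can be removed by letting $\phi_1$ be supported in $\{\ell(x)\le 2r\}$ rather than absorbing the full $W$, as indicated in the paragraph following \textup{(\ref{26-3-47})}. Everything else is a routine combination of the already-established ingredients.
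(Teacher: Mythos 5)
Your proposal matches the paper's proof exactly: the paper proves (\ref{26-3-48}) by summing the inner estimate (\ref{26-3-47}) and the outer estimate (\ref{26-3-45}), then optimizing the combined remainder $r^{-\frac{1}{2}}h^{-1}+a^{-1}rh^{-2}$ by choosing $r\asymp (ah)^{\frac{2}{3}}$, noting as you do that $h^2\le r_*\le a$ keeps the cut-off in range. Your added remark about the uniformity of the interpolation in $\eta$ and the removal of the temporary support assumption is consistent with the text surrounding (\ref{26-3-47}) and does not change the argument.
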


Really, $a^{-\frac{1}{3}}h^{-\frac{4}{3}}$ is
$r^{-\frac{1}{2}}h^{-1}+ a^{-1}rh^{-2}$ optimized by
$r\asymp r_*\Def (ah)^{\frac{2}{3}}$; as $h^2 \le a$ we note that
$h^2\le r_* \le a$.

\begin{corollary}\label{cor-26-3-12}
\begin{enumerate}[label=(\roman*), fullwidth]
\item\label{cor-26-3-12-i}
As $M=1$ equality \textup{(\ref{26-3-48})} remains valid with $\psi=1$ and
$a= 1$.
\item\label{cor-26-3-12-ii}
As $M\ge 2$ and $a\ge h^2$ equality  \textup{(\ref{26-3-48})} becomes
\begin{multline}
\Tr \bigl(\psi (H^-_{A,V} - H^-_{A,V^0})\psi\bigr)= \\[3pt]
\int \bigl(\Weyl_1(x) -\Weyl_1^0(x)\bigr)\,\psi^2(x) \,dx + O\bigl((a^{-\frac{1}{3}}+1)h^{-\frac{4}{3}}\bigr)
\label{26-3-49}
\end{multline}
where we reset case $a\ge 1$ to $a=1$.
\end{enumerate}
\end{corollary}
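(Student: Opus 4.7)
Both parts reduce to applications of Proposition~\ref{prop-26-3-11}, after choosing the comparison potential $V^0$ appropriately.

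For part \ref{cor-26-3-12-i}, when $M=1$ the single nucleus $\y_1$ carries the entire Coulomb singularity. Taking $V^0(x)=z_1|x-\y_1|^{-1}$, assumption \textup{(\ref{26-3-3})} immediately yields \textup{(\ref{26-3-42})} globally with $a=1$, since no nuclear separation parameter is needed to measure the regular part $W=V-V^0$. Proposition~\ref{prop-26-3-11} then applies directly with $\psi\equiv 1$: the two-piece split into $\phi_1,\phi_2$ used inside that proposition was only a device to separate the neighborhood of $\y_1$ from the decaying tail, and one may simply take $r$ of order $1$ in the optimization, or equivalently combine the two pieces into a single cutoff equal to~$1$. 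The remainder becomes $O(h^{-4/3})$ as claimed.

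For part \ref{cor-26-3-12-ii}, I would deploy a partition of unity adapted to the $M\ge 2$ nuclei. Choose $\psi_m\in\sC_0^\infty(B(\y_m,a/3))$ with $\psi_m\equiv 1$ on $B(\y_m,a/4)$ for $m=1,\ldots,M$, and let $\psi_0$ smoothly complete the identity $\sum_{m=0}^{M}\psi_m^2=1$ off of the nuclei. On $\supp\psi_m$ for $m\ge 1$, the Coulomb contributions from the other nuclei are smooth with derivatives bounded by $C_\alpha a^{-1-|\alpha|}$, so that in coordinates centered at $\y_m$ the function $V-z_m|x-\y_m|^{-1}$ satisfies \textup{(\ref{26-3-42})} with parameter $a$. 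Applying Proposition~\ref{prop-26-3-11} to each nucleus separately (with its own $V^0_m$ and cutoff $\psi_m$) produces a local version of \textup{(\ref{26-3-48})} with remainder $O(a^{-1/3}h^{-4/3})$. On $\supp\psi_0$ the potential $V$ is smooth on scale $a$ and no Coulomb singularity is present; the standard smooth semiclassical trace estimate (e.g.\ via \textup{(\ref{26-3-43})}--\textup{(\ref{26-3-44})} with $\hbar\asymp 1$) contributes only $O(h^{-1})$, which is dominated. Since $M$ is fixed, summing the $M+1$ pieces yields the total remainder $O((a^{-1/3}+1)h^{-4/3})$, giving \textup{(\ref{26-3-49})}. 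The reset $a\mapsto 1$ for $a\ge 1$ just records that past $a=1$ nothing is gained.

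The main obstacle is controlling the cross terms generated by the partition. The commutators $[H_{A,V},\psi_m]$ are supported in annular transition regions of width comparable to $a$, and a naive $\sL^\infty$ estimate on $|\partial\psi_m|\lesssim a^{-1}$ gives contributions of order $a^{-2}h^{-2}$ per region, which would swamp the target remainder for small $a$. This is circumvented by invoking the ISM-type identity \textup{(\ref{26-2-90})} together with the sub-additivity \textup{(\ref{26-2-88})} of $\Tr^-$, so that only the diagonal pieces $\psi_m H\psi_m$ enter the trace, each of which is handled by the one-center comparison of Proposition~\ref{prop-26-3-11} after a translation of the origin to $\y_m$. The bookkeeping is the $(M+1)$-region analog of the two-region split carried out between \textup{(\ref{26-3-40})} and \textup{(\ref{26-3-47})}, with the same optimization in $r=r_*\asymp(ah)^{2/3}$ governing each local remainder.
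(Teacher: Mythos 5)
Part~\ref{cor-26-3-12-i} is fine and matches the intended reading: with $M=1$, assumption (\ref{26-3-3}) yields (\ref{26-3-42}) with $a=1$, and Proposition~\ref{prop-26-3-11} applies with $\psi=1$.

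For part~\ref{cor-26-3-12-ii} you have promoted the statement to a global one and then re-derived it by decoupling the nuclei with a partition of unity. But (\ref{26-3-49}) still carries the cutoff $\psi$ around a single nucleus, exactly as in (\ref{26-3-48}); the corollary is local, and the intended content is considerably shorter. Assumption (\ref{26-3-42}) bounds $|D^\alpha(V-V^0)|$ by $c_0 a^{-1}\ell^{-|\alpha|}$, but when $a\ge 1$ the true bound near $\y_m$ --- the $W$-part obeys (\ref{26-3-3}) and the other Coulomb tails give $O(a^{-1})\le O(1)$ --- is only $O(\ell^{-|\alpha|})$, i.e.\ (\ref{26-3-42}) with $a$ replaced by $\min(a,1)$. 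So for $a\ge 1$ one applies Proposition~\ref{prop-26-3-11} with $a$ reset to $1$, obtaining $O(h^{-4/3})$; for $h^2\le a\le 1$ one applies it as is, obtaining $O(a^{-1/3}h^{-4/3})$; both are absorbed by $O((a^{-1/3}+1)h^{-4/3})$. No partition over the nuclei is needed at this point.

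The decoupling you sketch (partition $\sum_{m=0}^M\psi_m^2=1$ with comparison potential $V^0_m$ near each $\y_m$) is genuinely what the paper does, but in Section~\ref{sect-26-3-6}, not here, and two of your intermediate claims would need repair. First, on $\supp\psi_0$, i.e.\ the zone $\ell(x)\gtrsim a$, the semiclassical Weyl error for each operator sums over scales $\ell\ge a$ to give $O(a^{-1/2}h^{-1})$, not $O(h^{-1})$, and $\hbar\asymp h a^{-1/2}\le 1$ there precisely because $a\ge h^2$; the estimate is still dominated by $(a^{-1/3}+1)h^{-4/3}$, so your conclusion survives, but the intermediate bound as stated is wrong. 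Second, the commutator concern and the ISM identity (\ref{26-2-90}) together with sub-additivity (\ref{26-2-88}) pertain to the localization $\Tr^-(\psi H\psi)$; they are unnecessary for $\Tr(\psi H^-\psi)$, which is what appears in (\ref{26-3-49}). That quantity is literally additive under a partition of unity by cyclicity of the trace, $\Tr(H^-\psi^2)=\sum_m\Tr(H^-\psi^2\psi_m^2)$, so there are no cross-terms to control. What does require care --- and is a separate step, handled in Subsection~\ref{sect-26-3-6-1} --- is that each summand still involves the full operator $H^-_{A,V}$ rather than $\psi_m H\psi_m$, so the reduction to a one-center comparison Hamiltonian is not achieved merely by the partition.
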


\section{Improved trace estimates}
\label{sect-26-3-4}

\subsection{Improved Tauberian estimates}
\label{sect-26-3-4-1}

Let us apply much more advanced arguments of Section~\ref{book_new-sect-12-5}; recall that these arguments are using long term propagation of singularities. Unfortunately using these arguments we are not able to improve the above results unless $\kappa\ll 1$.

First, let us consider $\psi $ which is $r$-admissible partition element located in $\{x:\ell(x)\asymp r\}$ and we need to estimate an absolute value of
\begin{equation}
F_{t\to h^{-1}\tau} \bar{\chi}_T(t) \Tr \bigl(e^{ih^{-1}tH}\psi\bigr)
\label{26-3-50}
\end{equation}
and to do it we need to estimate the same expression with $\bar{\chi}_T(t)$ replaced by $\chi _{T'}(t)$ with $t_0\le T'\le T$ with
$t_0=\epsilon  \ell \zeta^{-1}= \epsilon r^{\frac{3}{2}}$. We can break $\psi=\psi^+ + \psi^-$ with $\psi^\pm =\psi^\pm (x,hD)$ such that the trajectories in the positive (negative) time direction from support of its symbol $\psi^+(x,\xi)$ are going after time $Ct_0$ in the direction of increased $\ell(x)$ and since we consider trace we need to consider only $\psi^+$ and only
$\chi\in \sC^\infty ([\frac{1}{2},1])$.

The trouble is that we have not rough but non-smooth magnetic field\footnote{\label{foot-26-15} Or rough but with the roughness parameter $\hbar$ which is a bit too short.}; so let us consider
$t_0+t_1+\ldots +t_n \asymp T'$ where $t_j= \epsilon r_j^{\frac{3}{2}}$, $r_j=c^j r$, $j=0,1,\ldots,n$ and estimate an error appearing when we replace in (modified) (\ref{26-3-50}) $e^{ih^{-1}tH}\psi^+$ by
\begin{equation}
e^{ih^{-1}(t-t_n) H} \psi^+_{n+1} e^{ih^{-1}t_n H} \psi_n^+  \cdots e^{ih^{-1}t_1H} \psi_1^+ e^{ih^{-1}t_0H}\psi^+
\label{26-3-51}
\end{equation}
with $\psi^+_j$ defined similarly and Hamiltonian flow from $\supp (\psi^+_j)$ for $t=t_j$ is inside of $\{(x,\xi):\,\psi^+_{j+1}(x,\xi)=1\}$. Therefore we need to estimate an error when we insert $\psi^+_j$.

According to our propagation results (namely, Proposition~\ref{prop-26-2-11}) after $\psi^+_1,\ldots,\psi^+_{j-1}$ were inserted,  insertion of $\psi^+_j$ brings a relative error not exceeding
$C \bigl(\hbar_j^{\theta}\3\partial A\3_{\theta, Y_j}  +  \hbar_j^{s+1}\bigr)$ where $\hbar_j= h r_j^{-\frac{1}{2}}$ and  $Y_j$ is an $\epsilon r_j$-vicinity of $x$-projection of $\supp (\psi_j)$; $s$ is an arbitrarily large exponent.

Recall that $\3\partial A\3_{\theta, Y_j}\le C\kappa \hbar_j^{1-\theta}$ as
$\theta \in (1,2)$; therefore this relative error does not exceed
$C \hbar_j (\kappa+\hbar_j^s)$. So inserting all $\psi^+_j$ brings a relative error
$C\sum_{j\ge 0} \hbar_j(\kappa  +\hbar_j^s)\asymp C \hbar (\kappa+\hbar^s)$ and since a priory expression (\ref{26-3-50}) is bounded by $C\hbar^{-3} T $ we conclude that
\begin{claim}\label{26-3-52}
An absolute value of expression (\ref{26-3-50}) with $\bar{\chi}_T(t)$ replaced by $\chi_{T'}(t)$ with
$T_*\asymp r^{\frac{3}{2}}\le T'\le T^*$\,\footnote{\label{foot-26-16} We discuss the choice of $T^*$ later.} does not exceed
$C\hbar^{-2}(\kappa +\hbar^s)T'$.
\end{claim}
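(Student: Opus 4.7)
The plan is to combine a microlocal decomposition with a telescoped successive‑approximation along the outward Hamiltonian flow, using Proposition~\ref{prop-26-2-11} as the basic propagation tool and the estimates of Proposition~\ref{prop-26-3-4} on $\partial A$ at each scale. First, split $\psi = \psi^+ + \psi^-$ microlocally as already set up before (\ref{26-3-51}): $\psi^\pm = \psi^\pm(x,hD)$ project onto the outgoing/incoming cones so that bicharacteristics from $\supp \psi^+$ flow in positive time towards larger $\ell(x)$. Because $\chi_{T'}$ is supported in positive times $\asymp T'$ and we take a trace, the contribution of $\psi^-$ is negligible modulo $C\hbar^s$ by the standard finite–speed statement (Proposition~\ref{prop-26-2-11}), so it suffices to estimate the $\psi^+$ piece.

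Second, set up the outward cascade: choose $r_j = c^j r$, $t_j = \epsilon r_j^{3/2}$ with $\sum_{j=0}^{n} t_j \asymp T'$, and construct microlocal cutoffs $\psi^+_j$ at scale $r_j$ such that the classical flow from $\supp \psi^+_j$ in time $t_j$ lands in the interior of $\{\psi^+_{j+1}=1\}$. Then replace $e^{ih^{-1}tH}\psi^+$ in (\ref{26-3-50}) by the telescoped product (\ref{26-3-51}); the task reduces to bounding the error produced by inserting each $\psi^+_j$.

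Third, estimate each insertion. In the $r_j$-ball, rescaling yields local semiclassical parameter $\hbar_j = h r_j^{-1/2}$ and a rough magnetic potential $A$. By Proposition~\ref{prop-26-2-11} applied in these rescaled coordinates, inserting $\psi^+_j$ between propagators yields a relative error
\begin{equation*}
C\bigl(\hbar_j^{\theta}\, \3\partial A\3_{\theta,Y_j} + \hbar_j^{s+1}\bigr),
\end{equation*}
where $Y_j$ is an $\epsilon r_j$-vicinity of the $x$-projection of $\supp \psi^+_j$. From (\ref{26-3-30})--(\ref{26-3-31}) (rescaled) one has $\3\partial A\3_{\theta,Y_j} \le C\kappa \hbar_j^{1-\theta}$ for $\theta \in (1,2)$, so the relative error per step is $C\hbar_j(\kappa + \hbar_j^s)$. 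Since $\hbar_j = \hbar c^{-j/2}$, the total relative error is the geometric sum $C\sum_{j\ge 0}\hbar_j(\kappa+\hbar_j^s) \asymp C\hbar(\kappa+\hbar^s)$.

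Finally, combine with the a priori bound $|F_{t\to h^{-1}\tau}\chi_{T'}(t)\Tr(e^{ih^{-1}tH}\psi)| \le C\hbar^{-3}T'$, which follows from the pointwise kernel bound (Proposition~\ref{prop-26-2-6} after rescaling) integrated over $\supp \psi$, with $\chi_{T'}$ contributing the factor $T'$. Multiplying the relative error by the a priori bound gives the desired $C\hbar^{-2}(\kappa + \hbar^s)T'$, which is (\ref{26-3-52}).

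The main obstacle is that $A$ is not smooth enough at scale $\hbar|\log \hbar|$, so classical smooth‑symbol propagation cannot be invoked directly; one must use the rough‑coefficient error term in (\ref{26-2-35}) and control it via the explicit Hölder bounds on $\partial A$ from Proposition~\ref{prop-26-3-4}. The two miracles that make the cascade work are (i) the gain of one factor of $\hbar_j$ produced by combining the roughness bound $\3\partial A\3_\theta \lesssim \kappa \hbar_j^{1-\theta}$ with the $\hbar_j^\theta$ in the propagation estimate, and (ii) the geometric summability $\sum_j \hbar_j \asymp \hbar$, which is exactly why one can afford arbitrarily many insertions out to time $T^*$ without losing a logarithmic or power factor. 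A subsidiary technical point is the consistent microlocal arrangement of the $\psi^+_j$ across the widening $r_j$-scales; this is routine given that the radial Coulomb drift is uniformly outgoing on the outgoing cone $\supp \psi^+$.
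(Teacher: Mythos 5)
Your proposal reproduces the paper's argument step for step: the outgoing/incoming microlocal split, the outward cascade of $\psi^+_j$ at geometrically growing scales $r_j=c^jr$, the per-insertion relative error $C(\hbar_j^\theta\3\partial A\3_{\theta,Y_j}+\hbar_j^{s+1})$ coming from the rough-coefficient term in Proposition~\ref{prop-26-2-11}, the H\"older bound $\3\partial A\3_{\theta,Y_j}\le C\kappa\hbar_j^{1-\theta}$ collapsing this to $C\hbar_j(\kappa+\hbar_j^s)$, the geometric summation to $C\hbar(\kappa+\hbar^s)$, and the multiplication by the a~priori bound $C\hbar^{-3}T'$. The only quibble is one of attribution: the H\"older bound you cite is not literally $(\ref{26-3-30})$--$(\ref{26-3-31})$ (those carry $\kappa^{1/2}$ and concern $\partial^2A$ for $\theta\in(0,1)$); the $\kappa\,\hbar_j^{1-\theta}$ form used for $\theta\in(1,2)$ is what the rescaled version of $(\ref{26-2-45})$/$(\ref{26-3-28})$--$(\ref{26-3-29})$ (via Proposition~\ref{prop-26-3-7}) gives, and is exactly what the paper recalls; your stated inequality and its use are nevertheless correct.
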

Then an absolute value of expression (\ref{26-3-50}) with $\bar{\chi}_T(t)$ replaced by $\bar{\chi}_T(t)-\bar{\chi}_{T_*}$ does not exceed
$C\hbar^{-2}(\kappa +\hbar^s)$ and since expression (\ref{26-3-50}) with $T=T_*$ does not exceed $C\hbar^{-2}t_0$ we conclude that
\begin{claim}\label{26-3-53}
An absolute value of expression (\ref{26-3-50})  with $T_*\le T\le T^*$ does not exceed $C\hbar^{-2}T_*+C\hbar^{-2}(\kappa +\hbar^s)T$.
\end{claim}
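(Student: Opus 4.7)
The plan is to derive \textup{(\ref{26-3-53})} from the preceding claim \textup{(\ref{26-3-52})} by a standard dyadic decomposition of the time cutoff, splitting $\bar{\chi}_T$ into a short-time piece controlled by a trivial a priori estimate and a medium-time piece handled term-by-term by \textup{(\ref{26-3-52})}.

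Concretely, I would write
\begin{equation*}
\bar{\chi}_T(t) \;=\; \bar{\chi}_{T_*}(t) \;+\; \sum_{j=0}^{N} \chi_{T_j}(t),
\qquad T_j = 2^j T_*,\quad N = \lfloor \log_2(T/T_*)\rfloor,
\end{equation*}
with $\chi\in\sC^\infty_0\bigl([-1,-\tfrac{1}{2}]\cup[\tfrac{1}{2},1]\bigr)$ chosen so that the dyadic pieces $\chi_{T_j}$ telescope to $\bar{\chi}_T - \bar{\chi}_{T_*}$. Each $T_j$ lies in $[T_*,T^*]$, so \textup{(\ref{26-3-52})} is applicable to every summand.

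For the short-time piece $\bar{\chi}_{T_*}$, I would invoke the a priori estimate quoted in the excerpt just before \textup{(\ref{26-3-53})}: the uniform bound $|U(x,x,t)|\le C\hbar^{-3}$ on the diagonal of the Schwartz kernel of $e^{ih^{-1}tH}$, valid in the rescaled coordinates where $\psi$ is supported in a unit-scale ball, combined with the standard short-time microlocal Tauberian estimate, yields
\begin{equation*}
\bigl|\,F_{t\to h^{-1}\tau}\,\bar{\chi}_{T_*}(t)\,\Tr\bigl(e^{ih^{-1}tH}\psi\bigr)\bigr|\;\le\; C\hbar^{-2}T_*,
\end{equation*}
using $T_*\asymp t_0$. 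This is precisely the first summand of \textup{(\ref{26-3-53})}.

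For the medium-time piece, applying \textup{(\ref{26-3-52})} to each $\chi_{T_j}$ produces a per-term bound $C\hbar^{-2}(\kappa+\hbar^s)T_j$, and since $\sum_j T_j$ is a geometric series dominated by its largest term $T_j\asymp T$, summation collapses to $C\hbar^{-2}(\kappa+\hbar^s)T$, the second summand of \textup{(\ref{26-3-53})}. The genuine microlocal labour --- inserting the cutoffs $\psi^+_j$ and controlling the commutator errors via the H\"older bounds on $\partial A$ --- was already carried out in proving \textup{(\ref{26-3-52})}, so here there is no real obstacle; the only routine check is that the geometric summation introduces no logarithmic loss, which it does not because the $T_j$ grow dyadically and the per-term bounds are linear in $T_j$.
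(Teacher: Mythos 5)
Your proof is correct and matches the paper's own derivation: it, too, splits $\bar{\chi}_T = \bar{\chi}_{T_*} + (\bar{\chi}_T - \bar{\chi}_{T_*})$, bounds the short-time piece by $C\hbar^{-2}T_*$ via the standard short-time estimate, and obtains the second piece from \textup{(\ref{26-3-52})} by dyadic summation (the geometric series being dominated by its top term $\asymp T$). The only loose wording is the invocation of a pointwise bound ``$|U(x,x,t)|\le C\hbar^{-3}$'' --- $U(x,x,t)$ is not bounded in $t$ and that is not what the paper quotes; what is actually used is the a~priori bound on the Fourier-localized trace (or equivalently the bound $e(x,x,\tau)\le C\hbar^{-3}$ fed into the Tauberian machinery), after which the rescaling $t\mapsto t\zeta/\ell$ supplies the factor $T_*\asymp t_0$ you need.
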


Then
\begin{claim}\label{26-3-54}
An error when we replace $\Tr \bigl(\theta (- H_{A,V})\psi\bigr)$ by its Tauberian expression  with ``time'' $T$ does not exceed
$C\hbar^{-2}\bigl(T_* T^{-1}+ \kappa +\hbar^s\bigr)$
\end{claim}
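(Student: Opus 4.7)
The plan is to pass from the Fourier-side estimate of claim \textup{(\ref{26-3-53})} to a spectral-side Tauberian estimate by invoking the standard Tauberian theorem used throughout the monograph. The abstract principle asserts that if $\Gamma(t) = \Tr(e^{ih^{-1}tH_{A,V}}\psi)$ satisfies $|F_{t\to h^{-1}\tau}\bar{\chi}_T(t)\Gamma(t)| \le M$ uniformly for $\tau$ in a small neighborhood of $0$, then the error between $\Tr(\uptheta(-H_{A,V})\psi)$ and its Tauberian expression with time parameter $T$ is bounded by $CMT^{-1}$.

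First I would identify the inputs. The quantity \textup{(\ref{26-3-50})} is precisely $F_{t\to h^{-1}\tau}\bar{\chi}_T(t)\Gamma(t)$, and claim \textup{(\ref{26-3-53})} supplies the bound $M = C\hbar^{-2}T_* + C\hbar^{-2}(\kappa+\hbar^s)T$ for $T_* \le T \le T^*$. Feeding this into the Tauberian machinery and dividing by $T$ produces $CMT^{-1} = C\hbar^{-2}(T_*T^{-1}+\kappa+\hbar^s)$, which is exactly the content of \textup{(\ref{26-3-54})}.

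The one technical point that needs checking is uniformity of the bound in claim \textup{(\ref{26-3-53})} across a neighborhood of $\tau = 0$, rather than at a single value of $\tau$. Inspection of its derivation---via the microlocal splitting $\psi = \psi^+ + \psi^-$, the chain of successive approximations \textup{(\ref{26-3-51})}, and the propagation estimates from Proposition~\ref{prop-26-2-11}---shows that all intermediate estimates depend on $\tau$ only through $F_{t\to h^{-1}\tau}$ applied to compactly supported cut-offs in $t$, hence are uniform in $\tau$ on any fixed compact interval where $H_{A,V}$ has no exceptional behaviour. Since we are working at the non-critical level $\tau = 0$ (where the microhyperbolic estimate of Proposition~\ref{prop-26-2-11}\ref{prop-26-2-11-iii} applies), this causes no obstruction.

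I expect no genuinely hard step here: the substantial analytic work was already concentrated in the construction of the approximating product \textup{(\ref{26-3-51})} and in the propagation bounds that underlie claims \textup{(\ref{26-3-52})}--\textup{(\ref{26-3-53})}; the present statement is essentially their Tauberian restatement. The main thing to be careful about in the write-up is a clean accounting of the two sources of error on the Fourier side (the trivial contribution $C\hbar^{-2}T_*$ from $|t|\le T_*$ and the successive-approximation contribution $C\hbar^{-2}(\kappa+\hbar^s)T$ from $T_*\le |t|\le T$), so that the final $T^{-1}$-division distributes correctly to give the announced three-term bound.
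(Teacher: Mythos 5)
Your proposal is correct and matches the paper's (implicit) argument: the paper treats \textup{(\ref{26-3-54})} as an immediate Tauberian consequence of \textup{(\ref{26-3-53})}, exactly as you do, by dividing the Fourier-side bound $C\hbar^{-2}T_* + C\hbar^{-2}(\kappa+\hbar^s)T$ by $T$ (this is the same ``divide by $T$'' step used earlier in the proof of Proposition~\ref{prop-26-2-14}). Your additional check of uniformity in $\tau$ near the non-critical level $\tau=0$ is the right thing to verify and causes no difficulty here.
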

and
\begin{claim}\label{26-3-55}
An error when we replace $\Tr \bigl(H_{A,V}^-\psi \bigr)$ by its Tauberian expression  with ``time'' $T$ does not exceed
$C\hbar^{-1}\bigl(T_* T^{-1}+ \kappa +\hbar^s\bigr)T_* T^{-1}\zeta^2$.
\end{claim}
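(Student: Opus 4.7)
The plan is to derive \textup{(\ref{26-3-55})} from the counting-function Tauberian remainder \textup{(\ref{26-3-54})} by integrating in the spectral parameter $\tau$ and exploiting the additional decay that becomes available on the Fourier side once this integration has been performed.

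I would start by writing $\Tr(H_{A,V}^-\psi)=\int_{-c\zeta^2}^{0}N_\psi(\tau)\,d\tau$ via integration by parts in $\tau$, where $N_\psi(\tau)=\Tr(\uptheta(\tau-H_{A,V})\psi)$; the effective range of integration has length $O(\zeta^2)$ because after rescaling $H_{A,V}\ge-c\zeta^2$ on $\supp\psi$. A naive multiplication of the pointwise error from \textup{(\ref{26-3-54})} by $\zeta^2$ would yield only $C\hbar^{-2}\zeta^2(T_*T^{-1}+\kappa+\hbar^s)$, which is weaker than \textup{(\ref{26-3-55})} by a factor $\hbar^{-1}TT_*^{-1}$; the improvement must come from a Fourier-side analysis. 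Using Parseval in $\tau$, the error can be recast as
\[
h^{-1}\!\int (1-\bar\chi_T(t))\,\Phi_L(t)\,\Tr\bigl(e^{ith^{-1}H_{A,V}}\psi\bigr)\,dt,
\]
where $\Phi_L(t)$ is the Fourier-dual of the characteristic function of $[-L,0]$, $L\asymp\zeta^2$, and satisfies $|\Phi_L(t)|\lesssim \min(L,\,h|t|^{-1})$. The new weight $h|t|^{-1}$ supplies precisely the missing factor $\hbar T_*T^{-1}$: decomposing $|t|\ge T$ dyadically into scales $T'$, invoking \textup{(\ref{26-3-52})} (which gives $C\hbar^{-2}(\kappa+\hbar^s)T'$ on each scale) and the contribution $C\hbar^{-2}T_*$ from $|t|\le T_*$, the $t$-integrals produce $C\hbar^{-1}\zeta^2\,T_*T^{-1}\,(T_*T^{-1}+\kappa+\hbar^s)$, which is exactly \textup{(\ref{26-3-55})}.

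The main obstacle is that \textup{(\ref{26-3-52})} was obtained via the chain of pseudodifferential insertions \textup{(\ref{26-3-51})} at geometric times $t_j\asymp r_j^{3/2}$, designed to circumvent the non-smoothness of the optimizing magnetic field $A$. I would therefore need to verify that the same bound survives against the more general Fourier weight $(1-\bar\chi_T(t))\Phi_L(t)$ rather than the simple cutoff $\chi_{T'}(t)$. This is accomplished by decomposing $(1-\bar\chi_T(t))\Phi_L(t)$ dyadically into pieces of $\chi_{T'}$-type and summing: each scale contributes an estimate of the expected size, and the relative insertion errors $C\hbar_j(\kappa+\hbar_j^s)$ from Proposition~\ref{prop-26-2-11}\ref{prop-26-2-11-ii} still form a geometric series in $\hbar$, so nothing is lost. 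The kinematic term $T_*T^{-1}$ is unaffected by the insertions, since it arises entirely from the short-time interval $|t|\le T_*$ where the trivial a priori bound is used.
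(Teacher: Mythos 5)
Your identification of the deficit factor $\hbar T_*T^{-1}$ and the plan to recoup it from Fourier-side decay is correct, but the Parseval formula you write down omits a factor $\sim h/|t|$, and without it the dyadic sum does not close. When $\mathbf{1}_{[-L,0]}$ is paired against $N_\psi-N_\psi^T$ via Parseval, what appears on the $t$-side is $\widehat{N_\psi-N_\psi^T}$, and since $N_\psi(\tau)=\int_{-\infty}^\tau d\mu_\psi$ is an antiderivative of the density $\mu_\psi$ whose Fourier transform at frequency $t/h$ is $\Tr(e^{-ith^{-1}H}\psi)$, one has (up to sign conventions and a constant) $\widehat{N_\psi-N_\psi^T}(t/h)=\frac{h}{it}(1-\bar{\chi}_T(t))\Tr(e^{-ith^{-1}H}\psi)$, not $(1-\bar{\chi}_T(t))\Tr(e^{-ith^{-1}H}\psi)$. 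So the true weight in your $t$-integral is $\asymp h^{-1}|\Phi_L(t)|\cdot\frac{h}{|t|}\lesssim\min(L|t|^{-1},\,ht^{-2})$, decaying like $h/t^2$, not the $h^{-1}|\Phi_L(t)|\lesssim\min(Lh^{-1},|t|^{-1})$, decaying only like $1/|t|$, that you have. This is not cosmetic: with your weight, a dyadic scale $T'\in[T,T^*]$, bounded via (\ref{26-3-52}), contributes $T'^{-1}\cdot C\hbar^{-2}(\kappa+\hbar^s)T'=C\hbar^{-2}(\kappa+\hbar^s)$, constant across scales, so the sum is $C\hbar^{-2}(\kappa+\hbar^s)\log(T^*/T)$, exceeding the target $\hbar^{-1}\zeta^2T_*T^{-1}(\kappa+\hbar^s)$ by a factor $\asymp h^{-1}T\log(T^*/T)$. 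With the corrected weight each scale contributes $\frac{h}{T'}\hbar^{-2}(\kappa+\hbar^s)$, the series is geometric and dominated by $T'\asymp T$, and it sums to $\frac{h}{T}\hbar^{-2}(\kappa+\hbar^s)=\hbar^{-1}\zeta^2T_*T^{-1}(\kappa+\hbar^s)$, which is the claimed bound. (Your observation that the pointwise gain at $|t|\asymp T$ equals $\hbar T_*T^{-1}$ is accurate; the point is that without the extra $h/|t|$ the scale-by-scale contributions do not decay, so the integral is not controlled by $|t|\asymp T$.)

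Two further remarks. First, the $T_*^2T^{-2}$ piece of (\ref{26-3-55}) is not literally ``the contribution from $|t|\le T_*$'' of your $t$-integral, since that integral is supported on $|t|\ge T$. What the paper is doing (``remember how everything scales'') is applying the standard Tauberian theorem for the trace: the remainder does not exceed the full bound (\ref{26-3-53}) on $|F\bar{\chi}_T(\ldots)|$ multiplied by $ChT^{-2}$, and since $h\hbar^{-2}=\hbar^{-1}\zeta^2T_*$ this reproduces (\ref{26-3-55}) exactly, the $T_*^2T^{-2}$ piece coming from the short-time part $C\hbar^{-2}T_*$ of (\ref{26-3-53}). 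Second, that Tauberian route uses monotonicity of $N_\psi$ and therefore needs the Fourier bound only for $|t|\lesssim T$; your Parseval route, even after the correction, tacitly needs control of the propagator trace for $|t|>T^*$, where (\ref{26-3-52}) has not been established. That is a genuine advantage of the argument the paper invokes.
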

In the latter statement we need to remember how everything scales.

Observe that presence of the magnetic field due to its estimates relatively perturbs dynamics by $O(\kappa)$ and therefore as $\kappa$ is sufficiently small does not affect $T^*$. Then assuming that
\begin{multline}
|\nabla^\alpha (V-V^0)|\le \epsilon a^{-1}r^{-|\alpha|}\qquad \forall \alpha:|\alpha|\le 1, \qquad V^0=Zr^{-1}\\
\text{with \ \ } Z\asymp 1,\ \ a\ge h^{2-\delta}
\label{26-3-56}
\end{multline}
we can take $T^* \asymp a^{\frac{3}{2}}$ and therefore we conclude that
The Tauberian error in (\ref{26-3-55}) does not exceed
\begin{equation}
Ch^{-1}a^{-\frac{3}{2}}r\bigl(r^{\frac{3}{2}} a^{-\frac{3}{2}}+ \kappa+ h^s r^{-\frac{1}{2}s}\bigr)
\label{26-3-57}
\end{equation}
and we arrive to statement \ref{prop-26-3-13-i} below.

Meanwhile The Tauberian error in
$\Tr \bigl( (H_{A,V}^- - H_{A,V^0}^-)\psi \bigr)$ does not exceed
\begin{equation}
Ca^{-1}h^{-2}r \bigl(r^{\frac{3}{2}} a^{-\frac{3}{2}}+ \kappa +h^sr^{-\frac{1}{2}s}\bigr)
\label{26-3-58}
\end{equation}
and we arrive to statement \ref{prop-26-3-13-ii} below:

\begin{proposition}\label{prop-26-3-13}
Assume that \textup{(\ref{26-3-56})} is fulfilled and let  $\psi $ be $\ell$-admissible function supported in
$\{x:\, \ell(x)\asymp r \}$ with $h^2 \le r\le a$. Let $A$ satisfy minimizer estimate. Then
\begin{enumerate}[label=(\roman*), fullwidth]
\item\label{prop-26-3-13-i}
The Tauberian error with $T=T^*\asymp a^{\frac{3}{2}}$ in
$\Tr \bigl(H_{A,V}^-\psi  \bigr)$ does not exceed \textup{(\ref{26-3-57})};

\item\label{prop-26-3-13-ii}
The Tauberian error with $T=T^*\asymp a^{\frac{3}{2}}$ in $\Tr \bigl((H_{A,V}^- - H_{A,V^0}^-)\psi \bigr)$
does not exceed \textup{(\ref{26-3-58})}.
\end{enumerate}
\end{proposition}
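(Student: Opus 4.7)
My plan is to assemble the conclusion directly from claims (26-3-52)--(26-3-55), which have just been established in the paragraphs preceding the proposition; the substantive microlocal work has already been done there, and what remains is a bookkeeping exercise in rescaling and in choosing the two time scales $T_*$ and $T^*$.

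First I would rescale $x \mapsto x/r$, noting that in the Coulomb regime $\zeta \asymp r^{-1/2}$ and hence $\hbar \Def h/(\zeta r) = h r^{-1/2}$, while time rescales by $r^{3/2}$. Under this rescaling the support condition $\ell(x)\asymp r$ becomes unit scale, $T_* \asymp r^{3/2}$ becomes of order $\epsilon$, and $T^* \asymp a^{3/2}$ becomes $(a/r)^{3/2}\ge 1$. By Proposition 26-3-4 applied on $\supp \psi$, the minimizer $A$ after rescaling satisfies $\|\partial A\|_{\sC^{\theta-1}} + \hbar^{\theta-1}\|\partial A\|_{\sC^\theta} \le C\kappa$, which is precisely the regularity needed to invoke Proposition 26-2-11 and the successive-insertion argument behind (26-3-52).

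Next I would justify the admissible choice $T = T^* \asymp a^{3/2}$. The Coulomb Hamiltonian flow for $V^0$ starting from $B(x,r)$ escapes from the nuclear cluster on a time scale of order $a^{3/2}$ (the minimal internuclear distance raised to $3/2$), and assumption (26-3-56) together with the minimizer bounds $\|\partial A\|_{\sC} = O(\kappa)$ ensures that the perturbations $(V-V^0)$ and $A$ do not disrupt this escape structure up to time $T^*$. With this choice $T_*/T^* \asymp (r/a)^{3/2}$, so in rescaled variables claim (26-3-55) produces a Tauberian error of the schematic form $C\zeta^2\hbar^{-1}\bigl[T_*/T^* + \kappa + \hbar^s\bigr]$; unrescaling (the prefactor $\zeta^2\hbar^{-1}$ combined with the natural spectral scale $r$ gives $h^{-1}r a^{-3/2}$ once the factor $(a/r)^{-3/2}$ is absorbed, and $\hbar^s = h^s r^{-s/2}$) is exactly (26-3-57), proving (i).

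For part (ii), I would use the interpolation identity (26-3-46): writing $V^\eta = V^0 + \eta W$ with $W = V - V^0$ gives $\Tr\bigl((H_{A,V}^- - H_{A,V^0}^-)\psi^2\bigr) = \int_0^1 \Tr\bigl(W\,\uptheta(-H_{A,V^\eta})\psi^2\bigr)\,d\eta$. Each $V^\eta$ retains the same Coulomb-plus-small-perturbation structure as $V$ uniformly in $\eta\in[0,1]$, so the flow-escape argument of the previous paragraph applies uniformly. This reduces the task to a counting-function estimate, for which claim (26-3-54) yields the Tauberian error $C\hbar^{-2}\bigl[T_*/T^* + \kappa + \hbar^s\bigr]$ weighted by the size of $W$. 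Since $|W|\le \epsilon a^{-1}$ on $\supp\psi$, we gain an extra factor $a^{-1}$ and the $h^{-2}$ scaling characteristic of the counting function, producing exactly (26-3-58). The only genuine delicacy throughout is that in (26-3-52) the accumulated non-smooth magnetic error $\sum_j \hbar_j\kappa$ summed over the telescopic partition $r_j = c^j r$ must remain $O(\hbar\kappa)$; this is a geometric series in $j$ and converges precisely because $\kappa\le \kappa^*$ is small and $\hbar_j = h r_j^{-1/2}$ decays geometrically.
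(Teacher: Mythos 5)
Your proposal is correct and follows essentially the same route as the paper: the paper's one-line ``proof'' merely points to the machinery in Section 12.5 of the book, while the substantive argument is carried by claims (26-3-52)--(26-3-55) in the paragraphs immediately preceding the proposition, which you correctly assemble and unrescale (and your arithmetic reproducing (26-3-57) and (26-3-58) checks out). The only small imprecision is that the rescaled bound $\|\partial A\|_{\sC^{\theta-1}}+\hbar^{\theta-1}\|\partial A\|_{\sC^{\theta}}\le C\kappa$ comes from Proposition~\ref{prop-26-3-7} (and the paper's explicit statement $\3\partial A\3_{\theta,Y_j}\le C\kappa\hbar_j^{1-\theta}$) rather than Proposition~\ref{prop-26-3-4}, which only gives the weaker $\kappa^{1/2}$ bounds; this does not affect the argument.
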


\begin{proof}
An easy proof following arguments of Section~\ref{book_new-sect-12-5} is left to the reader.
\end{proof}

Note that summation in \ref{prop-26-3-13-i} with respect to $r: b\le r\le a$ returns $Ch^{-1}a^{-\frac{1}{2}}$ and  summation in \ref{prop-26-3-13-ii} with respect to $r: h^2\le r\le b$ returns $Ch^{-2}(a^{-\frac{5}{2}}b^{\frac{5}{2}}+ \kappa a^{-1}b\bigr)+C a^{-1}$. Note also that Statement~\ref{prop-26-3-13-ii} remains true for $r$-admissible function supported in $\{x:\, \ell(x)\le r\}$ with $r\asymp h^2$. Then we arrive to

\begin{corollary}\label{cor-26-3-14}
Assume that \textup{(\ref{26-3-56})} is fulfilled. Then
\begin{enumerate}[label=(\roman*), fullwidth]
\item\label{cor-26-3-14-i}
Let $\phi_2 $ be $\ell$-admissible function supported in
$\{x:\, b\le \ell(x)\le a\}$.  Then the Tauberian error with
$T=T^*\asymp a^{\frac{3}{2}}$  in
$\Tr \bigl(H_{A,V}^-\phi_2  \bigr)$ does not exceed $Ch^{-1}a^{-\frac{1}{2}}$;

\item\label{cor-26-3-14-ii}
Let $\phi_1$ be $\ell$-admissible function supported in
$\{x:\,  \ell(x)\le b\}$.  Then the Tauberian error with
$T=T^*\asymp a^{\frac{3}{2}}$ in
$\Tr \bigl((H_{A,V}^- - H_{A,V^0}^-)\phi_1 \bigr)$ does not exceed
$Ch^{-2}(a^{-\frac{5}{2}}b^{\frac{5}{2}}+ \kappa a^{-1}b\bigr)+C a^{-1}$.
\end{enumerate}
\end{corollary}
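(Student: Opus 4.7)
\smallskip
\noindent\textbf{Proof proposal for Corollary~\ref{cor-26-3-14}.}
The plan is to derive both statements from Proposition~\ref{prop-26-3-13} by dyadic partitioning in the scaling variable $r = \ell(x)$. Concretely, I would fix a Littlewood--Paley-type $\ell$-admissible partition $1 = \sum_{r} \psi_r^2$ subordinate to a covering of $\bR^3\setminus\{\y_m\}$ by annuli on which $\ell(x)\asymp r$, with $r$ ranging over a dyadic sequence. Writing $\phi_i = \sum_r \phi_i \psi_r$ ($i=1,2$), the Tauberian error with $T=T^*\asymp a^{\frac{3}{2}}$ for the corresponding trace splits as a sum over scales, and we bound it scale-by-scale using Proposition~\ref{prop-26-3-13}.

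For statement \ref{cor-26-3-14-i}, each scale $r\in[b,a]$ is admissible and Proposition~\ref{prop-26-3-13}\ref{prop-26-3-13-i} gives a contribution bounded by the expression \textup{(\ref{26-3-57})}, namely
\begin{equation*}
Ch^{-1}a^{-\frac{3}{2}}r\bigl(r^{\frac{3}{2}}a^{-\frac{3}{2}} + \kappa + h^s r^{-\frac{s}{2}}\bigr).
\end{equation*}
Dyadic summation over $r\in[b,a]$ is dominated by the largest scale: the first term sums to $\lesssim h^{-1}a^{-\frac{3}{2}}\cdot a^{\frac{5}{2}}\cdot a^{-\frac{3}{2}} = h^{-1}a^{-\frac{1}{2}}$; the $\kappa$-term sums to $\lesssim h^{-1}\kappa a^{-\frac{1}{2}}$, absorbed since $\kappa \le \kappa^*\le 1$; and the $h^s$-term is negligible for $s$ large. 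This yields the bound $Ch^{-1}a^{-\frac{1}{2}}$ claimed in \ref{cor-26-3-14-i}.

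For statement \ref{cor-26-3-14-ii}, I decompose $\phi_1\psi_r$ into two regimes. For $r$ dyadic in $[h^2, b]$, Proposition~\ref{prop-26-3-13}\ref{prop-26-3-13-ii} yields at each scale the bound \textup{(\ref{26-3-58})}, i.e.
\begin{equation*}
Ca^{-1}h^{-2}r\bigl(r^{\frac{3}{2}}a^{-\frac{3}{2}} + \kappa + h^s r^{-\frac{s}{2}}\bigr).
\end{equation*}
Dyadic summation is now dominated by the largest scale $r\asymp b$: the first term contributes $\lesssim h^{-2}a^{-\frac{5}{2}}b^{\frac{5}{2}}$, the $\kappa$-term contributes $\lesssim h^{-2}\kappa a^{-1}b$, and the $h^s$-term is harmless. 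For the innermost region $\ell(x)\lesssim h^2$, which cannot be handled by a propagation argument (since there $\hbar=h/(\zeta\ell)\asymp 1$), I use the remark following Proposition~\ref{prop-26-3-13} which extends statement \ref{prop-26-3-13-ii} to $r$-admissible functions supported in $\{\ell(x)\le r\}$ with $r\asymp h^2$: plugging $r=h^2$ into \textup{(\ref{26-3-58})} yields $Ca^{-1}(1+\kappa)\le Ca^{-1}$. Summing the two regimes gives the claimed estimate $Ch^{-2}(a^{-\frac{5}{2}}b^{\frac{5}{2}}+\kappa a^{-1}b)+Ca^{-1}$.

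The only subtlety worth flagging is the treatment of the innermost singular scale $\ell\lesssim h^2$: there the semiclassical parameter $\hbar$ degenerates to $1$, so the Tauberian propagation argument of Section~\ref{sect-26-3-4-1} does not immediately apply and one must invoke the rescaled version of Proposition~\ref{prop-26-3-13}\ref{prop-26-3-13-ii} alluded to in the remark preceding the corollary. All the other steps are routine dyadic bookkeeping, with each dyadic sum dominated by its largest term (the smallest for the $h^s$-term) because the exponents of $r$ in \textup{(\ref{26-3-57})}--\textup{(\ref{26-3-58})} are positive on the relevant factors.
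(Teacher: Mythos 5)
Your proof is correct and takes essentially the same route as the paper: dyadic summation of the scale-by-scale bounds (\ref{26-3-57}) and (\ref{26-3-58}) from Proposition~\ref{prop-26-3-13}, together with the observation (stated in the text right before the corollary) that statement~\ref{prop-26-3-13-ii} extends to an $r$-admissible function supported in $\{\ell(x)\le r\}$ with $r\asymp h^2$, handling the innermost scale where $\hbar\asymp 1$. Your bookkeeping of which end of the dyadic range dominates each term matches the paper's.
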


\begin{remark}\label{rem-26-3-15}
Obviously we do not need any new assumptions on $\kappa$ to estimate the sum of expressions obtained in Statements \ref{cor-26-3-14-i} and \ref{cor-26-3-14-ii} of Corollary~\ref{cor-26-3-14} by $Ch^{-1}a^{-\frac{1}{2}}$ (as $b\le a^{\frac{1}{2}}h$) here but we need to move from Tauberian expression to Weyl' expression.
\end{remark}

\subsection{Improved Weyl estimates}
\label{sect-26-3-4-2}

Note that in virtue of (\ref{26-3-52}) for element $\psi$ contribution of the time interval $\{t:\,|t|\asymp T'\}$ to the Tauberian expression for $\Tr\bigl(H^-_{A,V}\psi\bigr)$ does not exceed
$C\hbar^{-1}\bigl(\kappa +\hbar^s\bigr)T_* T^{\prime\,-1}\zeta^2$ and therefore replacing $\bar{\chi}_T(t)$ by $\bar{\chi}_{T_*}(t)$ introduces an error not exceeding
\begin{gather}
C\hbar^{-1}\bigl(\kappa +\hbar^s\bigr)\zeta^2\asymp
Ch^{-1}r^{-\frac{1}{2}} \bigl(\kappa + h^sr^{-\frac{1}{2}s}\bigr)
\label{26-3-59}\\
\intertext{and summation with respect to $r: b\le r\le a$ returns}
Ch^{-1}b^{-\frac{1}{2}} \bigl(\kappa + h^sb^{-\frac{1}{2}s}\bigr).
\label{26-3-60}
\end{gather}

On the other hand, also in virtue of (\ref{26-3-52}) for element $\psi$ contribution of the time interval $\{t:\,|t|\asymp T'\}$ to the Tauberian expression for
$\Tr\bigl((H^-_{A,V}-H^-_{A,V^0})\psi\bigr)$  does not exceed
$C\hbar^{-2}a^{-1}\bigl(\kappa +\hbar^s\bigr)$ and therefore replacing $\bar{\chi}_T(t)$ by $\bar{\chi}_{T_*}(t)$ introduces an error not exceeding
$Ch^{-2}a^{-1}r\bigl(\kappa +h^sr^{-\frac{1}{2}s}\bigr)|\log T/T_*|$.

Further, in virtue of (\ref{26-3-53}) the Tauberian error does not exceed $Ch^{-2}a^{-1}r\bigl(r^{\frac{3}{2}}T^{-1}+\kappa + h^s r^{-\frac{1}{2}s}\bigr)$ and adding these two errors together optimizing the sum by
$T\le a^{\frac{3}{2}}$ we get
$T\asymp r^{\frac{5}{2}}(\kappa + h^sr^{-\frac{1}{2}s})^{-1}$ and the sum
\begin{equation}
Ch^{-2}a^{-1}r\bigl(\kappa +h^sr^{-\frac{1}{2}s}\bigr)
\bigl(|\log (\kappa +h^sr^{-\frac{1}{2}s})|+ 1\bigr) +Ch^{-2}a^{-\frac{5}{2}}r^{\frac{5}{2}}.
\label{26-3-61}
\end{equation}

Meanwhile repeating arguments of Subsection~\ref{sect-26-3-3} one can see easily that
\begin{claim}\label{26-3-62}
The difference between Tauberian expression with $T=T_*$ and Weyl expression for $\Tr\bigl(H^-_{A,V}\psi\bigr)$ does not exceed (\ref{26-3-59}) with any $s<2$
\end{claim}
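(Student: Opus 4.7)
The plan is to establish (\ref{26-3-62}) by combining rescaling with a Duhamel-type two-term successive approximation, in parallel with the argument (\ref{26-2-58})--(\ref{26-2-59}) used in the proof of Theorem~\ref{thm-26-2-17} and the Tauberian-to-Weyl reduction carried out before (\ref{26-3-45}).

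First I would perform the standard rescaling $x\mapsto x/\ell$, $\tau\mapsto\tau/\zeta^2$, $h\mapsto\hbar=h/(\zeta\ell)\asymp hr^{-\frac12}$, $A\mapsto A/(\zeta\ell)$. After this reduction $\psi$ becomes $1$-admissible, $\ell\asymp\zeta\asymp 1$, and the target bound (\ref{26-3-59}) translates into $C\hbar^{-2}(\kappa+\hbar^s)$; tracking the factor $\zeta^2\hbar^{-1}$ coming from the spectral parameter and from integrating in $\tau$ recovers the stated power $h^{-1}r^{-1/2}(\kappa+h^sr^{-s/2})$.

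Next I would introduce the mollified potential $A_\varepsilon$ at scale $\varepsilon=\hbar^{1-\delta}$ (with $\delta>0$ chosen small at the end) and write, via Duhamel,
\begin{equation*}
e^{ih^{-1}tH_{A,V}}=e^{ih^{-1}tH_{A_\varepsilon,V}}+\sfU'_{(\varepsilon)}(t)+R_\varepsilon(t),
\end{equation*}
with $\sfU'_{(\varepsilon)}$ defined as in (\ref{26-2-64}). For the unperturbed term the symbol $A_\varepsilon$ satisfies $\|\partial^k A_\varepsilon\|_{\sC}\le C\kappa\varepsilon^{1-k}$ (from Propositions~\ref{prop-26-3-7} and~\ref{prop-26-3-9} combined with standard mollification), so Proposition~\ref{prop-26-2-11} applies to $H_{A_\varepsilon,V}$ in its rough microlocal form and the $\bar{\chi}_{T_*}$-truncated Tauberian expression reproduces $\int \Weyl_1\psi^2\,dx$ modulo $O(\hbar^s)$; the next-order piece in the small-time expansion is odd in $\xi-A_\varepsilon$ and is killed after integration by precisely the symmetry argument of (\ref{26-2-59}).

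Then I would handle the cross term $\sfU'_{(\varepsilon)}$, which is responsible for the $\kappa$ in (\ref{26-3-59}). From the Hölder estimates (\ref{26-3-35}) and (\ref{26-3-38}) one has $|A-A_\varepsilon|\le C\kappa\varepsilon^{1+\theta}$ and $|\partial(A-A_\varepsilon)|\le C\kappa\varepsilon^\theta$; feeding these into the bound (\ref{26-2-66}) for $F_{t\to\hbar^{-1}\tau}\chi_T(t)\Tr(\sfU'_{(\varepsilon)}\psi)$ and summing dyadically over $T_*\asymp\hbar\le T\le 1$ in the rescaled variables produces a contribution of size $C\hbar^{-2}\kappa$. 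The Duhamel remainder $R_\varepsilon$ carries an extra small factor $\kappa\varepsilon^{1+\theta}T_*\hbar^{-1}=O(\kappa\hbar^{\theta-\delta'})$ per iteration and is absorbed into $C\hbar^{-2}\kappa$.

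The main obstacle is keeping Proposition~\ref{prop-26-2-11} effective for $H_{A_\varepsilon,V}$ with constants \emph{uniform} in $\kappa\le\kappa^*$ despite the blow-up $\|\partial A_\varepsilon\|_{\sC^{\theta+1}}=O(\kappa\varepsilon^{-\theta})$ in higher derivatives; this is exactly why the statement only asks for $s<2$ rather than $s=\infty$, so one may afford to lose a factor $\hbar^\delta$ at each differentiation and still close the estimate by choosing $\delta>0$ in $\varepsilon=\hbar^{1-\delta}$ sufficiently small.
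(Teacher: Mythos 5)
Your proposal follows essentially the same route as the paper: the claim is established by ``repeating arguments of Subsection~\ref{sect-26-3-3}'', which is precisely the two/three-term successive approximation against $H_{A_\varepsilon,V}$ with $\varepsilon=\hbar^{1-\delta}$, rough microlocal analysis for the unperturbed term, and the odd-in-$(\xi-A_\varepsilon)$ cancellation from (\ref{26-2-59}). Two small slips are worth flagging. First, after rescaling $t\mapsto t\zeta\ell^{-1}$, the time $T_*\asymp r^{3/2}$ becomes a small \emph{constant}, not $\asymp\hbar$; the quantity you call $T_*\asymp\hbar$ is the short-time cutoff below which the Weyl expansion is read off directly, so the dyadic sum runs over $\hbar^{1-\delta}\lesssim T'\lesssim 1$. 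Second, the $\kappa$ in (\ref{26-3-59}) is not produced solely by the Duhamel cross term $\sfU'_{(\varepsilon)}$ — with $\eta\lesssim\kappa\varepsilon^{1+\theta}$ and $T\lesssim 1$ that contribution is $O(\kappa\hbar^{-2\delta}\zeta^2)\ll\kappa\hbar^{-1}\zeta^2$; the $\kappa$ actually enters already through the roughness term $CT^2h^{-d+\theta}\3 A\3_{\theta+1}$ in (\ref{26-2-38}) together with $\|\partial A\|_{\sC^\theta}\lesssim\kappa\hbar^{1-\theta}$, i.e.\ from the non-smooth commutators inside Proposition~\ref{prop-26-2-11}\ref{prop-26-2-11-ii}--\ref{prop-26-2-11-iii} applied to the genuine $A$, not from the $A-A_\varepsilon$ cross term. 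Your bound is still a valid upper estimate, so the conclusion stands, but the attribution is off; likewise the stated $\|\partial^kA_\varepsilon\|_{\sC}\le C\kappa\varepsilon^{1-k}$ is an overestimate ($\le C\kappa\varepsilon^{1+\theta-k}$ for $k>1+\theta$ is what mollification gives). The restriction $s<2$ comes, as you intuit, from comparing against a one-term Weyl expression rather than from losing $\hbar^\delta$ per derivative.
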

and
\begin{claim}\label{26-3-63}
The difference between Tauberian expression with $T=T_*$ and Weyl expression for $\Tr\bigl((H^-_{A,V}- H^-_{A,V^0})\psi\bigr)$ does not exceed
\begin{equation}
Ch^{-2}a^{-1}r\bigl(\kappa +h^sr^{-\frac{1}{2}s}\bigr)
\label{26-3-64}
\end{equation}
with any $s<2$ and thus does not exceed (\ref{26-3-61}).
\end{claim}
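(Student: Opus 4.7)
The plan is to reduce (26-3-63) to a weighted analogue of the already-established (26-3-62), via the same successive approximation that drove (26-3-47) in Subsection~\ref{sect-26-3-3}.

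First I would interpolate by $V^\eta \Def (1-\eta)V^0 + \eta V$ and $W\Def V-V^0$, $\eta\in[0,1]$, and write
\begin{equation*}
\Tr\bigl((H^-_{A,V}-H^-_{A,V^0})\psi\bigr) = -\int_0^1 \Tr\bigl(W\,\uptheta(-H_{A,V^\eta})\,\psi\bigr)\,d\eta,
\end{equation*}
understood both in the Weyl and in the Tauberian-at-$T_*$ sense (so that the desired difference is the integral of the per-$\eta$ difference). By (\ref{26-3-42}), $W\psi$ is an $\ell$-admissible function supported in $\{x:\ell(x)\asymp r\}$ with $\sC^k$ norms bounded by $C a^{-1}\ell^{-k}$, i.e. a factor $a^{-1}$ smaller than a generic $\ell$-admissible cutoff. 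All $V^\eta$ satisfy (\ref{26-3-56}) uniformly in $\eta$, and the estimates of Subsection~\ref{sect-26-3-2} on the minimizer $A$ (which is fixed: it depends on $V$, not on $V^\eta$) drive the propagation bounds of Subsection~\ref{sect-26-3-4-1} applied to $H_{A,V^\eta}$ with $\eta$-independent constants.

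Next I would rerun the microlocal argument leading to (\ref{26-3-52}) with $\psi$ replaced by the smooth weight $W\psi$. That argument only exploits the $\ell$-admissible structure of the localizer, since the successive insertions of the microlocal cutoffs $\psi^+_j$ commute to principal order with pointwise multiplication by a smooth weight; the $a^{-1}$ magnitude of $W$ thus enters as a single multiplicative factor:
\begin{equation*}
\bigl|F_{t\to h^{-1}\tau}\chi_{T'}(t)\Tr\bigl(e^{ih^{-1}tH_{A,V^\eta}}W\psi\bigr)\bigr|\le C\hbar^{-2}a^{-1}(\kappa+\hbar^s)T'
\end{equation*}
for $T_*\le T'\le T^*\asymp a^{3/2}$ and any $s<2$. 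The Weyl-vs-Tauberian-at-$T_*$ difference for the weighted number-type trace $\Tr\bigl(\uptheta(-H_{A,V^\eta})W\psi\bigr)$ is then precisely the contribution of $|t|\ge T_*$ in the Fourier-Tauberian representation; dyadic summation over $T_*\le T'\le T^*$ of the per-scale bound above (divided by $T'$ as in the Tauberian formula) gives $C\hbar^{-2}a^{-1}(\kappa+\hbar^s)$, while the contribution of $|t|\ge T^*$ is absorbed into the $\hbar^s$ term by the same long-time microlocal considerations that handled (\ref{26-3-62}), since for $\kappa\le\kappa^*$ the magnetic perturbation does not distort the classical Coulomb flow before $T^*$.

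Rescaling via $\hbar=hr^{-1/2}$ turns $C\hbar^{-2}a^{-1}(\kappa+\hbar^s)$ into $Ch^{-2}a^{-1}r(\kappa+h^sr^{-s/2})$; the $\eta$-integration over $[0,1]$ is harmless and yields (\ref{26-3-64}). The second assertion (that this is bounded by (\ref{26-3-61})) is immediate since (\ref{26-3-61}) contains the same quantity with an additional $(|\log(\kappa+h^sr^{-s/2})|+1)\ge 1$ factor, plus an extra non-negative term. The main obstacle is control of the dyadic tail $T'\ge T^*$ uniformly in $\eta$: one must argue that the classical Hamiltonian flow of $H_{0,V^\eta}$ on $\{\ell(x)\asymp r\}\cap\{H=O(\zeta^2)\}$ does not return before time $T^*\asymp a^{3/2}$, and that the $O(\kappa)$-perturbation by $A$ (via the minimizer bounds of Propositions~\ref{prop-26-3-7} and~\ref{prop-26-3-9}) is small enough not to spoil this, for all $\eta\in[0,1]$; this is where the smallness of $\kappa^*$ enters in an essential way.
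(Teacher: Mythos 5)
Your proposal is built on a misidentification of the quantity in question. Claim (\ref{26-3-63}) compares the \emph{Tauberian expression with $T=T_*$} against the \emph{Weyl expression}; both involve only the short-time window $|t|\lesssim T_*$, so the estimate must come from a small-$t$ regularity argument, not from the long-time propagation bound (\ref{26-3-52}). You write instead that ``the Weyl-vs-Tauberian-at-$T_*$ difference\ldots\ is precisely the contribution of $|t|\ge T_*$'' and then sum dyadically over $T_*\le T'\le T^*$. That computation estimates the \emph{Tauberian$_T$-vs-Tauberian$_{T_*}$} correction, which the paper treats separately in the lines immediately preceding (\ref{26-3-62})--(\ref{26-3-63}) and which there carries an explicit $|\log T/T_*|$ factor; your claim that the dyadic sum is free of this logarithm (each scale contributes the same amount, so the sum over $\asymp\log(T^*/T_*)$ scales should accumulate one) is unsupported and, as the paper's own intermediate estimate shows, false.

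The step you are actually asked to justify is the one the paper dispatches with ``repeating arguments of Subsection~\ref{sect-26-3-3}'': take the $V^\eta$ interpolation as you do (this part is sound and is where the $a^{-1}$ factor enters, cf.\ (\ref{26-3-42}) and (\ref{26-3-46})--(\ref{26-3-47})), and then for the weighted number trace $\Tr\bigl(W\,\uptheta(-H_{A,V^\eta})\psi\bigr)$ compare its Tauberian$_{T_*}$ expression to Weyl via the three-term successive approximation with $H_{A_\varepsilon,V^\eta}$ as the unperturbed operator, $\varepsilon=\hbar^{1-\delta}$. There the first term reproduces the Weyl expression, the second vanishes by parity, and the third is the error, whose smallness — and in particular the factor $\kappa$ appearing in (\ref{26-3-64}) — comes from the minimizer bounds of Proposition~\ref{prop-26-3-7}, i.e.\ from $|A-A_\varepsilon|\lesssim\kappa\varepsilon^2\ell^{-5/2}$. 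In your scheme the factor $\kappa+\hbar^s$ enters via the long-time propagation estimate, which is the wrong mechanism for a short-time comparison; the genuine gap is that you never actually carry out any small-$t$ mollification argument, so you do not obtain (\ref{26-3-64}) but rather (a log-factor-deficient version of) the already-established $|t|\ge T_*$ correction.
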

Summation with respect to $r: h^2\le r\le b$  of (\ref{26-3-61}) returns
\begin{equation}
Ch^{-2}a^{-1}b \kappa |\log \kappa|  + Ch^{-2}b^{\frac{5}{2}}a^{-\frac{5}{2}}+ Ca^{-1};
\label{26-3-65}
\end{equation}
adding expression (\ref{26-3-60}) and optimizing the sum by $b:\,h^2\le b\le a$ we get
$b\asymp (ah |\log \kappa|)^{\frac{2}{3}}$ and expression
\begin{equation}
Ch^{-\frac{4}{3}}a^{-\frac{1}{3}}\kappa |\log \kappa|^{\frac{1}{3}}+
Ch^{-1}a^{-\frac{1}{2}}.
\label{26-3-66}
\end{equation}

Thus we have proven

\begin{proposition}\label{prop-26-3-16}
\begin{enumerate}[label=(\roman*), fullwidth]
\item \label{prop-26-3-16-i}
In the framework of Proposition~\ref{prop-26-3-11}
\begin{multline}
\Tr \bigl(\psi (H^-_{A,V} - H^-_{A,V^0})\psi\bigr)= \\[3pt]
\int \bigl(\Weyl_1(x) -\Weyl_1^0(x)\bigr)\,\psi^2(x) \,dx + O\bigl(h^{-\frac{4}{3}}a^{-\frac{1}{3}}\kappa |\log \kappa|^{\frac{1}{3}}+
h^{-1}a^{-\frac{1}{2}}\bigr).
\label{26-3-67}
\end{multline}
\item \label{prop-26-3-16-ii}
In particular as
\begin{equation}
\kappa \le c a^{-\frac{1}{6}}h^{\frac{1}{3}}
|\log ah^{-2}|^{-\frac{1}{3}}
\label{26-3-68}
\end{equation}
the error in \textup{(\ref{26-3-67})} does not exceed $Ch^{-1}a^{-\frac{1}{2}}$ as in the case without magnetic field.
\end{enumerate}
\end{proposition}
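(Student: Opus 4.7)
The plan is to decompose $\psi = \phi_1 + \phi_2$ where $\phi_2$ is supported in the annular region $\{b \le \ell(x) \le a\}$ and $\phi_1$ in the near-singular region $\{\ell(x) \le b\}$, with an auxiliary radius $b \in [h^2,a]$ to be optimized at the end. The two regions are then handled by different strategies: on the outer region the individual Weyl terms are integrable, so we may estimate $\Tr(\phi_2 H^-_{A,V}\phi_2)$ and $\Tr(\phi_2 H^-_{A,V^0}\phi_2)$ separately; on the inner region the Weyl terms are singular at each nucleus, so we must exploit the smallness of $V-V^0 = O(a^{-1})$ guaranteed by \textup{(\ref{26-3-56})} and work directly with the difference $H^-_{A,V} - H^-_{A,V^0}$.

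For the outer region, I would further decompose dyadically into $r$-admissible shells with $b \le r \le a$. On each such shell Corollary~\ref{cor-26-3-14}\ref{cor-26-3-14-i} gives the Tauberian error, which sums to $Ch^{-1}a^{-1/2}$. The Tauberian-to-Weyl transition is controlled via the propagation estimate \textup{(\ref{26-3-52})}: replacing $\bar{\chi}_T$ by $\bar{\chi}_{T_*}$ costs \textup{(\ref{26-3-59})} per shell, which after dyadic summation gives \textup{(\ref{26-3-60})}, namely $Ch^{-1}b^{-1/2}\kappa$ (plus negligible $h^s$-terms). The final transition from the truncated Tauberian expression to Weyl is justified exactly as in Subsection~\ref{sect-26-3-3} and is absorbed in the same bound, as in claim~\textup{(\ref{26-3-62})}.

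For the inner region I apply Corollary~\ref{cor-26-3-14}\ref{cor-26-3-14-ii} to bound the Tauberian error for the difference $H^-_{A,V} - H^-_{A,V^0}$ on each $r$-admissible piece. The key point is that the factor $a^{-1}$ from $V - V^0$ translates into an $a^{-1}r$ gain in \textup{(\ref{26-3-58})}. Combining this with the Tauberian-to-Weyl transition error \textup{(\ref{26-3-64})}, then optimizing in $T \le a^{3/2}$ between the $r^{3/2}T^{-1}$ term and the $\kappa$-dependent term, yields the per-shell bound \textup{(\ref{26-3-61})}. Summing dyadically over $r \in [h^2, b]$ produces \textup{(\ref{26-3-65})}: $Ch^{-2}a^{-1}b\kappa|\log\kappa| + Ch^{-2}b^{5/2}a^{-5/2} + Ca^{-1}$.

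The remaining step is to balance the inner contribution $Ch^{-2}a^{-1}b\kappa|\log\kappa|$ against the outer contribution $Ch^{-1}b^{-1/2}\kappa$, which picks the optimal $b$ and delivers the claimed remainder $Ch^{-4/3}a^{-1/3}\kappa|\log\kappa|^{1/3} + Ch^{-1}a^{-1/2}$, proving \ref{prop-26-3-16-i}; part \ref{prop-26-3-16-ii} is then immediate since \textup{(\ref{26-3-68})} is exactly the condition making the first term at most the second. The main obstacle is the Tauberian-to-Weyl transition for $H^-_{A,V} - H^-_{A,V^0}$ over long times $T \asymp a^{3/2}$: because the minimizer $A$ is only $\sC^{\theta}$ for $\theta \in (1,2)$ with the bound \textup{(\ref{26-3-38})} degenerating near the nuclei, one must insert repeated microlocal cutoffs $\psi_j^+$ along a sequence of scales $r_j = c^j r$, use successive approximation with $A_\varepsilon$ as the smooth unperturbed magnetic potential, and carefully track how each insertion costs a relative factor $\hbar_j(\kappa + \hbar_j^s)$ as in \textup{(\ref{26-3-52})}. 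The logarithmic loss $|\log\kappa|^{1/3}$ in the final bound comes directly from this propagation argument and appears unavoidable at this level of regularity.
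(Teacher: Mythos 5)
Your proposal follows essentially the same route as the paper: the inner/outer split at an auxiliary radius $b$, the dyadic shell decomposition, the per-shell Tauberian error from Corollary~\ref{cor-26-3-14} combined with the $\bar\chi_T\to\bar\chi_{T_*}\to$ Weyl transition controlled via the propagation estimate (\ref{26-3-52}), the per-shell optimization in $T$ giving (\ref{26-3-61}), the summations to (\ref{26-3-60}) and (\ref{26-3-65}), and the final optimization over $b$. This is the paper's proof in Subsection~\ref{sect-26-3-4-2}, and the argument is correct.
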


\begin{remark}\label{rem-26-3-17}
\begin{enumerate}[label=(\roman*), fullwidth]
\item\label{rem-26-3-17-i}
Obviously we could consider $a=1$ and then just rescale $x\mapsto xa^{-1}$, $\tau\mapsto \tau a$, $h\mapsto ha^{-\frac{1}{2}}$;

\item\label{rem-26-3-17-ii}
One may wonder if the same approach works for estimate of $A$.  First of all, there is no improvement for estimate for $|\partial^2 A|$ as it follows from the estimate for $|\Delta A|$ which is a pointwise estimate;

\item\label{rem-26-3-17-iii}
However as $\partial A$ and $A$ are mollifications of $\Delta A$ one can improve estimates for them as $\kappa \ll 1$ and $\ell \ll a$; however there are no improvements as either $\kappa \asymp 1$ or $\ell \ge a$. Since these improvements do not lead to the improvements of our final results we do not pursue them.
\end{enumerate}
\end{remark}

\section{Single singularity}
\label{sect-26-3-5}

\subsection{Coulomb potential}
\label{sect-26-3-5-1}

Consider now exactly Coulomb potential: $V=Z |x|^{-1}$. Let us establish the existence of the Scott correction:

\begin{proposition}\label{prop-26-3-18}
Let $V=Z |x|^{-1}$, $h>0 $, $Z>0$ and $0< \kappa \le \kappa^*$. Then
\begin{enumerate}[label=(\roman*), fullwidth]
\item\label{prop-26-3-18-i}
The following limit exists
\begin{multline}
\lim_{r\to\infty} \biggl( \inf_A \Bigl( \Tr \bigl( \phi_r H_{A,V} \phi_r\bigr)^- + \frac{1}{\kappa h^2} \int |\partial A|^2 \,dx  \Bigr)\\
- \int  \Weyl_1 (x)\phi_r ^2(x)\,dx \biggr) \Fed 2 Z^2 h^{-2} S(Z\kappa);
\label{26-3-69}
\end{multline}
\item\label{prop-26-3-18-ii}
And it coincides with
\begin{multline}
\lim_{\eta\to 0^+} \biggl( \inf_{A,V} \Bigl( \Tr \bigl(  H_{A,V}  +\eta\bigr)^- + \frac{1}{\kappa h^2}  \int |\partial A|^2 \,dx \Bigr) \\
-  \int \Weyl_1(H_{A,V}+\eta,x)\,dx \biggr)
\label{26-3-70}
\end{multline}
\item\label{prop-26-3-18-iii}
And  with
\begin{multline}
\inf_A \biggl( \int \Bigl(e_1(H_{A,V}; x,x,0)-
\Weyl_1(H_{A,V} ,x)\Bigr)\,dx+\\
\frac{1}{\kappa h^2}  \int |\partial A|^2 \,dx \biggr);
 \label{26-3-71}
 \end{multline}
\item\label{prop-26-3-18-iv}
We also can replace in \ref{prop-26-3-18-i}
$\Tr \bigl( \phi_r H_{A,V} \phi_r\bigr)^-$ by
$\Tr \bigl( \phi_r H^-_{A,V} \phi_r\bigr)$.
\end{enumerate}
Here $\phi \in \sC_0^\infty (B(0,1))$, $\phi=1$ in $B(0,\frac{1}{2})$, $\phi_r=\phi(x/r)$.
\end{proposition}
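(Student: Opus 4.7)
I would organize the proof into three steps: exploit the Coulomb scaling to isolate the dependence on $Z\kappa$; establish the limit in (i) by a Cauchy argument; then compare the four regularizations.

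The first step is a conjugation. Under the unitary $U_\mu\colon f\mapsto \mu^{3/2}f(\mu\,\cdot)$ with $\mu=Zh^{-2}$, the operator $H_{A,V}$ becomes $(Zh^{-1})^2\tilde H_{\tilde A,|y|^{-1}}$ with $\tilde h=1$ and $\tilde A(y)=(h\mu)^{-1}A(y/\mu)$; the magnetic energy rescales so that the effective coupling is $\tilde\kappa=Z\kappa$, while $\Weyl_1$ carries the prefactor $Z^2h^{-2}$. The bracketed expression in (i) then equals $2Z^2h^{-2}$ times the analogous one with $\tilde Z=\tilde h=1$, $\tilde\kappa=Z\kappa$, $\tilde r=\mu r$, so that the limit as $\tilde r\to\infty$ will define $S(Z\kappa)$.

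To establish the limit, I would show $\{F(r)\}$ is Cauchy, where $F(r)$ denotes the bracketed quantity in (i). For $r_1<r_2$, taking a near-minimizer $A^{(j)}$ of $F(r_j)$ as trial field in $F(r_{3-j})$ bounds $|F(r_2)-F(r_1)|$ by
\[
\sup_{A\in\{A^{(1)},A^{(2)}\}}\Bigl|\Tr\bigl(\phi_{r_2}H^-_{A,V}\phi_{r_2}\bigr)-\Tr\bigl(\phi_{r_1}H^-_{A,V}\phi_{r_1}\bigr)-\int\Weyl_1(\phi_{r_2}^2-\phi_{r_1}^2)\,dx\Bigr|.
\]
Since $\phi_{r_2}^2-\phi_{r_1}^2$ is supported in the annulus $\{r_1/2\le|x|\le 2r_2\}$, far from the singularity, a dyadic decomposition combined with Corollary~\ref{cor-26-3-14}\ref{cor-26-3-14-i} at each scale $\ell\asymp 2^k r_1$ will give an error $O(r_1^{-1/2}h^{-1})$, which becomes $o(1)$ after the rescaling of the previous paragraph. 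The decay estimates of Proposition~\ref{prop-26-3-9} ensure that the magnetic energy of $A^{(j)}$ on $\{|x|\ge r_1\}$ is $O(\kappa^2 r_1^{-3})=o(1)$ as well, so the Cauchy property follows.

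For the equivalences, (i)$\Leftrightarrow$(iv) combines \textup{(\ref{26-2-71})} with the IMS-type replacement of $\phi_r H\phi_r$ by $\phi_r^2 H+\tfrac12[[H,\phi_r],\phi_r]$; the commutator term is an $O(h^2 r^{-2})$ bounded function and thus negligible after rescaling (cf.\ Remark~\ref{rem-26-2-21}). For (ii), the $\eta$-regularization effectively truncates the Coulomb potential at $|x|\sim Z/\eta$, paralleling $\phi_r$ with $r\sim Z/\eta$, and Proposition~\ref{prop-26-3-16} supplies the uniform error estimates needed to match them as $\eta\to 0^+$ and $r\to\infty$. For (iii), the renormalized differences in (ii) will converge to those in (iii) as $\eta\to 0^+$ because $\Tr(H+\eta)^--\Tr H^-$ and $\int(\Weyl_1(H+\eta)-\Weyl_1(H))\,dx$ admit matching expansions in $\eta$ concentrated on the threshold region $V\sim\eta$. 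The hard part will be the uniform control of near-minimizers $A^{(r)}$ in the Cauchy argument, since neither uniqueness nor $r$-independence holds a priori; this is resolved by Proposition~\ref{prop-26-3-9}, whose pointwise decay estimates hold for every minimizer under $\kappa\le\kappa^*$. A secondary subtlety is that $\int\Weyl_1\phi_r^2\,dx$ itself diverges as $r\to\infty$ (pure Coulomb is not in $\sL^{5/2}$), so the argument must extract exactly matching divergent parts on the trace and Weyl sides, which is why the proposition is necessarily stated as a renormalized limit.
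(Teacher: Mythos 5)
Your overall plan — scale to $Z=h=1$, prove convergence by controlling the change when $r$ grows, and then match the four regularizations — is the same as the paper's. But the specific route you take in the convergence step diverges from the paper's and contains a genuine gap.

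The paper does not attempt a direct Cauchy bound $|F(r_2)-F(r_1)|=o(1)$ for near-minimizers. Instead it proves a one-sided inequality valid for \emph{every} $A$,
\begin{equation*}
Q(r',\kappa,A)\ge Q\bigl(r,(1+\epsilon)\kappa,A\bigr)-C'(\epsilon)\,r^{-1/2},
\end{equation*}
by splitting the magnetic energy as $\tfrac{1}{\kappa}=\tfrac{1}{(1+\epsilon)\kappa}+\tfrac{\epsilon}{(1+\epsilon)\kappa}$ and using the $\epsilon$-fraction on each annular shell together with the $\inf_A$ lower bound of Theorem~\ref{thm-26-2-29}. Taking $\liminf/\limsup$ and then $\epsilon\to 0$, using that $S(\kappa)$ is continuous (which follows from the uniform bound $\|\partial A\|^2\le C\kappa$ for near-optimizers), gives the limit. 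The key benefit is that no property of a \emph{specific} near-minimizer of $F(r)$ is ever required — only the already-proved results about $\inf_A$ on a shell.

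Your version instead fixes a near-minimizer $A^{(j)}$ of $F(r_j)$ and tries to bound the Weyl-error integral $\int(e_{A^{(j)},1}-\Weyl_1)(\phi_{r_2}^2-\phi_{r_1}^2)\,dx$ over the annulus. This requires the full regularity/estimates-for-minimizers machinery of Subsections~\ref{sect-26-3-2-2}--\ref{sect-26-3-2-4} to hold for $A^{(j)}$, which is a minimizer of a different functional (localized by $\phi_{r_1}$, hence harmonic outside $B(0,r_1)$) than those actually analyzed there. You would need to redo those subsections for the $F(r)$-minimizers before Corollary~\ref{cor-26-3-14}\ref{cor-26-3-14-i} can legitimately be invoked for $A^{(j)}$; the paper's $\epsilon$-trick is precisely designed to avoid this.

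There is also a concrete error: you cite Proposition~\ref{prop-26-3-9} to claim $\|\partial A^{(j)}\|^2_{\{|x|\ge r_1\}}=O(\kappa^2 r_1^{-3})$. But Proposition~\ref{prop-26-3-9} requires the decay hypothesis \textup{(\ref{26-3-36})} with $\nu>\tfrac{4}{3}$; for pure Coulomb $V=Z|x|^{-1}$ one has $\zeta\asymp\ell^{-1/2}$, i.e.\ $\nu=\tfrac12$, so that proposition simply does not apply. The available Coulomb-range estimate (Proposition~\ref{prop-26-3-7}) gives only $|\partial A|\lesssim\kappa\ell^{-3/2}$, whose squared $\sL^2$-norm over $\{|x|\ge r_1\}$ is logarithmically \emph{divergent}, not $O(r_1^{-3})$. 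In your set-up this particular claim is actually harmless (the magnetic energy term is literally the same integral over $\bR^3$ in $F(r_1)$ and $F(r_2)$ when the same trial $A$ is used, so it cancels), but the misapplication of Proposition~\ref{prop-26-3-9} signals that the intended control of $A^{(j)}$ at scales $\ge r_1$ has not actually been established. For (ii) and (iii) your heuristic picture (``$\eta$ truncates at $|x|\sim Z/\eta$'') is reasonable, but the paper instead passes to $r'=\infty$ in the same one-sided inequality and handles the exterior via the decomposition \textup{(\ref{26-3-75})} with an explicit Weyl-error bound $O(r^{-1/2})$; the two are consistent but not the same argument.
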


\begin{proof}
Due to scaling $x\mapsto Z h^{-2} x$, $A\mapsto Z^{-1} h  A $,
$\partial A\mapsto Z^{-2}h^3 \partial A$ one needs to consider only $Z=h=1$; all expressions on the left scale exactly as $Z^2h^{-2}S(Z\kappa)$.

\bigskip\noindent
(i) Let us compare
\begin{equation*}
Q(r,\kappa, A)\Def \Tr (\phi_r H_{A,V}\phi_r)^- -
\int \Weyl_1(H_{A,V} ,x)\phi_r^2 (x) \,dx+
\frac{1}{\kappa}  \int |\partial A|^2 \,dx
\end{equation*}
and $Q(r',\kappa,A)$ with $r\ge 1$ and $r'\ge 2r$. Note that
\begin{multline*}
Q(r',\kappa, A)\ge Q(r, (1+\epsilon)\kappa, A)+\\[3pt]
\sum _{1\le j\le J}
\Bigl(\Tr (\psi_{2^jr} H_{A,V}\psi_{2^jr})^--
\int \Weyl_1(H_{A,V} ,x)\psi_{2^jr}^2 (x) \,dx+
\frac{\epsilon}{2\kappa}  \int |\partial A|^2\bar{\psi}_{2^jr}^2(x)\,dx \Bigr)
\end{multline*}
where $\psi $  and $\bar{\psi}$ are smooth compactly supported functions, equal $0$ in $B(0,\frac{1}{2})$ and $\bar{\psi}=1$ in the vicinity of $\supp(\psi)$,
$J=\lfloor\log_2 r'/r\rfloor$, $\epsilon>0$ is arbitrarily small.

Therefore we can replace in the sum $A$ by $A_j$ and $\bar{\psi}_{2^jr}$ by $1$ in $\int |\partial A|^2\bar{\psi}_{2^jr}^2(x)\,dx$; but then in the virtue of Section~\ref{sect-26-2} each term in the sum is bounded from below by
$-C'(\epsilon)\int \rho^3 \ell^{-1}\bar{\psi}_{t}\,dx=
-C'(\epsilon)t^{-\frac{1}{2}}$ with $t=2^jr$. Then
\begin{equation}
Q(r',\kappa, A)\ge Q(r, (1+\epsilon)\kappa, A)-C'(\epsilon) r^{-\frac{1}{2}}.
 \label{26-3-72}
\end{equation}
We know that $Q(r, (1+\epsilon)\kappa, A)$ is bounded from below by $-C(r)$ but now we conclude that this bound is uniform with respect to $r$. Then
$2S(\kappa)\Def \liminf_{r\to +\infty}\inf _A Q(r,\kappa, A)>-\infty$. Further, (\ref{26-3-72}) implies that
\begin{gather}
2S(\kappa)+ C'(\epsilon) r^{-\frac{1}{2}}\ge \inf_A  Q(r, (1+\epsilon)\kappa, A)
\notag\\
\shortintertext{and therefore}
\limsup_{r\to +\infty}\inf_A Q(r,\kappa, A)\le 2S( (1+\epsilon)^{-1}\kappa ).
\label{26-3-73}
\end{gather}
Furthermore, plugging $A=0$ we can see easily that  $Q(r,\kappa, A)$ is uniformly bounded from above and therefore $2S(\kappa)<+\infty$; also our arguments imply that $\int |\partial A|^2\,dx$ is uniformly bounded for near optimizers and therefore $S(\kappa)$ is continuous with respect to $\kappa<\kappa^*$; combining with (\ref{26-3-73}) we arrive to Statement~\ref{prop-26-3-18-i}.

\bigskip\noindent
(ii) Similarly, (\ref{26-3-72}) holds with $H_{A,V}$ replaced by $H_{A,V}+\eta$ and then we can take $r'=\infty$ and apply $\inf_A$ to both sides arriving to
\begin{multline*}
\inf_A\biggl(\Tr \bigl( H_{A,V}+\eta\bigr)^- -
\int \Weyl_1(H_{A,V+\eta} ,x)\,dx+
\frac{1}{\kappa}  \int |\partial A|^2 \,dx\biggr)\ge \\
\inf_A\biggl( \Tr \bigl( \phi_r(H_{A,V}+\eta)\phi_r \bigr)^- -
\int \Weyl_1(H_{A,V+\eta} ,x)\phi_r^2 (x) \,dx+\\
\frac{1}{ (1+\epsilon)\kappa}  \int |\partial A|^2\biggr) \,dx-C'(\epsilon)r^{-\frac{1}{2}}.
\end{multline*}
After this as $\eta\to +0$ the right-hand expression tends to itself with $\eta=0$; tending $r\to+\infty$ we get there $2S((1+\epsilon)\kappa)$ in virtue of~Statement~\ref{prop-26-3-18-i} and tending $\epsilon\to+0$ we arrive to
\begin{multline}
\liminf_{\eta\to +0} \inf_A \biggl(\Tr \bigl( H_{A,V}+\eta\bigr)^- -
\int \Weyl_1(H_{A,V+\eta} ,x)\,dx+\\
\frac{1}{\kappa}  \int |\partial A|^2 \,dx\biggr)\ge
2S(\kappa).
\label{26-3-74}
\end{multline}
On the other hand, consider
\begin{equation*}
\Tr \bigl( H_{A,V}+\eta\bigr)^- -
\int \Weyl_1(H_{A,V+\eta} ,x)\,dx+
\frac{1}{\kappa}  \int |\partial A|^2 \,dx
\end{equation*}
and replace $\Tr \bigl( H_{A,V}+\eta\bigr)^-$ by
\begin{equation}
\underbracket{\Tr \bigl(\bigl( H_{A,V}+\eta\bigr)^-\phi_r^2\bigr)+
\frac{1}{\kappa}\int |\partial A|^2\,dx }+
\Tr \bigl(\bigl( H_{A,V}+\eta\bigr)^-(1-\phi_r^2)\bigr).
\label{26-3-75}
\end{equation}
Let $A$ be a minimizer of the first expression; then in virtue of Propositions~\ref{prop-26-2-24}--\ref{prop-26-2-26} this minimizer is sufficiently ``good'' on $\epsilon r$-vicinity of $\supp (1-\phi_r^2)$ that the the difference between the second term and its Weyl expression does not exceed $Cr^{-\frac{1}{2}}$; one can prove it easily by $\ell(x)$-admissible partition of unity as in part (i) of the proof and we leave details to the reader.

Observe that the first term in (\ref{26-3-75}) is
$\inf_A \Tr \bigl(\phi_r\bigl( H_{A,V}+\eta\bigr)^-\phi_r\bigr)$. In this expression we can take limit as $\eta\to +0$ just setting $\eta=0$ and therefore the left-hand expression in (\ref{26-3-74}) with $\liminf$ replaced by $\limsup$ does not exceed
\begin{equation*}
\inf_A\Bigl(\Tr \bigl(\phi_r H_{A,V}^- \phi_r\bigr)-
\int \Weyl_1(H_{A,V} ,x) \phi_r^2(x)\,dx+\\
\frac{1}{\kappa}  \int |\partial A|^2 \,dx \Bigr)+Cr^{-\frac{1}{2}};
\end{equation*}
taking limit as $r\to +\infty$ we conclude that the left-hand expression in (\ref{26-3-74}) with $\liminf$ replaced by $\limsup$ does not exceed
\begin{equation*}
\liminf_{r\to+\infty} \inf_A\Bigl(\Tr \bigl(\phi_r H_{A,V}^- \phi_r\bigr)-
\int \Weyl_1(H_{A,V} ,x) \phi_r^2(x)\,dx+
\frac{1}{\kappa}  \int |\partial A|^2 \,dx \Bigr).
\end{equation*}
This expression does not exceed (\ref{26-3-69}) and therefore combining with (\ref{26-3-74}) we prove Statements~\ref{prop-26-3-18-ii} and~\ref{prop-26-3-18-iv}.

\bigskip\noindent
(iii) Similarly
\begin{multline*}
 \int \Bigl(e_1(H_{A,V}; x,x,0)-
\Weyl_1(H_{A,V} ,x)\Bigr)\phi_r^2(x)\,dx+
\frac{1}{\kappa}  \int |\partial A|^2 \,dx \ge\\
\inf_A \Bigl(\Tr \bigl(\phi_r H_{A,V}\phi_r\bigr)^-
-\int \Weyl_1(x)\phi_r^2 (x)\,dx +
\frac{1}{\kappa}  \int |\partial A|^2\Bigr)-Cr^{-\frac{1}{2}}
 \end{multline*}
and therefore
\begin{multline*}
\inf_A \liminf_{r\to\infty} \int \Bigl(e_1(H_{A,V}; x,x,0)-
\Weyl_1(H_{A,V} ,x)\Bigr)\phi_r^2(x)\,dx+\\
\frac{1}{\kappa}  \int |\partial A|^2 \,dx \ge
2S(\kappa).
\end{multline*}
On the other hand, as in (ii) taking $A$ to be a minimizer of the first expression in (\ref{26-3-75}) we see that
\begin{multline*}
\inf_A \limsup_{r\to\infty} \int \Bigl(e_1(H_{A,V}; x,x,0)-
\Weyl_1(H_{A,V} ,x)\Bigr)\phi_r^2(x)\,dx+\\
\frac{1}{\kappa}  \int |\partial A|^2 \,dx \le
2S(\kappa)
\end{multline*}
and Statement~\ref{prop-26-3-18-iii} has been proven.
\end{proof}

\begin{remark}\label{rem-26-3-19}
\begin{enumerate}[label=(\roman*), fullwidth]
\item\label{rem-26-3-19-i}
Statements similar to\ref{prop-26-3-18-i}, \ref{prop-26-3-18-ii}  were proven in L.~Erd\"os,  S.~Fournais,  and J. P.  Solovej \cite{EFS3} (see Theorem~2.4 and Lemma~2.5 respectively).
\item\label{rem-26-3-19-ii}
Again as observed in in L.~Erd\"os,  S.~Fournais,  and J. P.  Solovej \cite{EFS3} we do not know if (a) $S(\kappa)<S(0)$  for $\kappa >0$ or just (b) $S(\kappa)=S(0)$ as $\kappa<\kappa^*$ and $S(\kappa)=-\infty$ as $\kappa>\kappa^*$. If we knew that the optimizer is unique then obviously $A=0$ and it would be relatively easy but rather unexciting the latter case.
\item\label{rem-26-3-19-iii}
While we assumed that $\kappa<\kappa^*$ with $\kappa^*>0$ and it is possible that $S(\kappa)=-\infty$ as $\kappa>\kappa*$ with some $\kappa^*<\infty$ we are not aware about any proof of this, so in fact it could happen that $\kappa^*=+\infty$ and then condition $\kappa <\kappa^*$ is superficial and one needs to study asymptotics of $S(\kappa)$ as $\kappa \to +\infty$.
\end{enumerate}
\end{remark}

\begin{proposition}\label{prop-26-3-20}
 As $0<\kappa <\kappa'$
 \begin{equation}
 S(\kappa')\le S(\kappa) \le S(\kappa') + C \kappa '(\kappa^{-1}-\kappa'^{-1}).
 \label{26-3-76}
 \end{equation}
\end{proposition}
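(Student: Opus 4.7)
Write $Q(r,\kappa,A)\Def \Tr(\phi_r H_{A,V}\phi_r)^- + \kappa^{-1}\int|\partial A|^2\,dx - \int \Weyl_1(x)\phi_r^2(x)\,dx$, so that by Proposition \ref{prop-26-3-18}\ref{prop-26-3-18-i} we have $2S(\kappa) = \lim_{r\to\infty}\inf_A Q(r,\kappa,A)$. For any fixed $A$ the map $\kappa\mapsto Q(r,\kappa,A)$ is non-increasing, because only the coefficient $\kappa^{-1}$ depends on $\kappa$. Taking the infimum in $A$ and then the limit $r\to\infty$ preserves this monotonicity, yielding the left inequality $S(\kappa')\le S(\kappa)$ immediately.

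For the reverse inequality the plan is the obvious one: evaluate the functional at parameter $\kappa$ on a near-minimizer for parameter $\kappa'$. Fix $\epsilon>0$. For each $r$ large enough choose $A_r'$ with
\[
Q(r,\kappa',A_r')\le \inf_A Q(r,\kappa',A)+\epsilon.
\]
Since the two functionals differ only in the magnetic-field coefficient,
\[
Q(r,\kappa,A_r')=Q(r,\kappa',A_r')+(\kappa^{-1}-\kappa'^{-1})\int|\partial A_r'|^2\,dx,
\]
so that
\[
\inf_A Q(r,\kappa,A)\le Q(r,\kappa',A_r')+(\kappa^{-1}-\kappa'^{-1})\int|\partial A_r'|^2\,dx.
\]
Sending $r\to\infty$ and then $\epsilon\to 0$ will give the desired estimate provided one has a uniform bound $\int|\partial A_r'|^2\,dx \le C\kappa'$.

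The main obstacle, and indeed the only nontrivial point, is this a priori bound on the magnetic field energy of a near-minimizer. But this is precisely the content of Proposition~\ref{prop-26-3-1} (applied after the Coulomb rescaling reducing to $Z=h=1$, under which the estimate $\|\partial A\|\le C\kappa^{1/2}$ is scale-invariant): the argument there is local, rests only on the magnetic Lieb--Thirring inequality of \cite{lieb:loss:solovej} together with the Erd\H{o}s--Solovej lower bound reproduced in Appendix~\ref{sect-26-A-1}, and hence produces a constant independent of the cut-off radius $r$. Inserting $\int|\partial A_r'|^2\,dx\le C\kappa'$ into the inequality above and letting $r\to\infty$, $\epsilon\to 0$ gives $2S(\kappa)\le 2S(\kappa')+C\kappa'(\kappa^{-1}-\kappa'^{-1})$, which (after absorbing the factor of $2$ into $C$) is the right inequality of \eqref{26-3-76}.
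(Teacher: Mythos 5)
Your proof is correct in outline, but it takes a genuinely different route from the paper, and the one nontrivial step is asserted rather than justified.

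The paper does not re-invoke the magnetic Lieb--Thirring machinery at all. Instead it compares a near-minimizer $A$ for $Q(r,\kappa',\cdot)$ against the infimum at a \emph{third} parameter $\kappa''\in(\kappa',\kappa^*]$: since $Q(r,\kappa',A)=Q(r,\kappa'',A)+(\kappa'^{-1}-\kappa''^{-1})\|\partial A\|^2$, and $Q(r,\kappa'',A)\ge 2S(\kappa'')-\varepsilon$ for $r$ large while $Q(r,\kappa',A)\le 2S(\kappa')+\varepsilon$, one gets $(\kappa'^{-1}-\kappa''^{-1})\|\partial A\|^2\le |S(\kappa'')-S(\kappa')|+2\varepsilon$. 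This bound on the field energy is extracted purely from the finiteness of $S$ (already proved in Proposition~\ref{prop-26-3-18}) and from monotonicity, with no appeal to a priori minimizer estimates; substituting it into the same comparison you wrote and taking $\kappa''=\kappa^*$ yields (\ref{26-3-76}), and in fact the stronger divided-difference inequality (\ref{26-3-77}), i.e.\ concavity of $\kappa^{-1}\mapsto S(\kappa)$, as a by-product.

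Your substitute is to cite Proposition~\ref{prop-26-3-1} for the bound $\|\partial A_r'\|^2\le C\kappa'$. That is the right order of magnitude, and the two proofs share the same underlying mechanism (a near-minimizer cannot afford large field energy because the lower bound on the trace only degrades linearly in $\|\partial A\|^2$). But Proposition~\ref{prop-26-3-1} as stated is for the global functional $\E(A)$, whereas $A_r'$ near-minimizes the cut-off functional $Q(r,\kappa',\cdot)$, which involves $\Tr(\phi_r H_{A,V}\phi_r)^-$. Your parenthetical remark (``the argument is local, hence the constant is independent of $r$'') is the crux, and you do not actually run it: one has to check that the partition-based lower bound of the form $\Tr(\phi_r H_{A,V}\phi_r)^- - \int\Weyl_1\phi_r^2\,dx \ge -C - C\|\partial A\|^2$ goes through with constants uniform in $r$. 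That adaptation is plausible (and the paper itself asserts a uniform bound on $\int|\partial A|^2$ for near-optimizers inside the proof of Proposition~\ref{prop-26-3-18}\ref{prop-26-3-18-i}), but as written this is a gap you should close rather than wave at. Also note that your route proves only (\ref{26-3-76}); it does not recover (\ref{26-3-77}).
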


\begin{proof}
Monotonicity of $S(\kappa)$ is obvious.

Let $0<\kappa <\kappa' <\kappa''\le \kappa^*$. Then for any $\varepsilon>0$ if $r=r_\varepsilon$ is large enough    then the left-hand expression in (\ref{26-3-69}) for $\kappa'$ (without $\inf$ and $\lim$) is greater than $S(\kappa'') -\varepsilon + (\kappa'^{-1}-\kappa''^{-1})\|\partial A\|^2$;   also, if $A$ is an almost minimizer there, it is less than $S(\kappa')+\varepsilon$.

Therefore
$(\kappa'^{-1}-\kappa''^{-1})\|\partial A\|^2\le |S(\kappa'') -S(\kappa') | +2\varepsilon$. But then
\begin{multline*}
S(\kappa) -\varepsilon \le S(\kappa')+\varepsilon + (\kappa^{-1}-\kappa'^{-1})\|\partial A\|^2\le \\[3pt]
S(\kappa')+\varepsilon +
C (\kappa^{-1}-\kappa'^{-1})(\kappa'^{-1}-\kappa''^{-1})^{-1}
\bigl(|S(\kappa'') -S(\kappa') | +2\varepsilon\bigr)
\end{multline*}
and therefore
\begin{equation}
(\kappa^{-1}-\kappa'^{-1}) ^{-1} |S(\kappa) - S(\kappa') |\le
(\kappa'^{-1}-\kappa''^{-1})^{-1}|S(\kappa') -S(\kappa'') |
\label{26-3-77}
\end{equation}
which for $\kappa''=\kappa^*$ implies (\ref{26-3-76}).
\end{proof}

\begin{remark}\label{rem-26-3-21}
Using global equation (\ref{26-2-14}) we conclude that for $Z=h=1$
\begin{align}
&|\partial^\alpha  A| \le C\kappa \ell^{-1-|\alpha|}\qquad
&&\text{as\ \ }\ell\ge 1,\ |\alpha|\le 1, \label{26-3-78}\\[3pt]
&|\partial^\alpha  A| \le C\kappa
&&\text{as\ \ } \ell\le 1,\ |\alpha|\le 1, \label{26-3-79}
\end{align}
\begin{gather}
\|\partial A\|^2 \le C\kappa^2.\label{26-3-80}\\
\shortintertext{Then}
 S'(\kappa) \le C, \qquad |S(\kappa (1+\eta))-S(\eta)|\le C\kappa \eta.
 \label{26-3-81}
\end{gather}
\end{remark}

\subsection{Main theorem}
\label{sect-26-3-5-2}

In the ``atomic'' case $M=1$ we arrive instantly to

\begin{theorem}\label{thm-26-3-22}
Let $M=1$ and $\kappa \le \kappa^*$. Then
\begin{enumerate}[label=(\roman*), fullwidth]
\item\label{thm-26-3-22-i}
Asymptotics holds
\begin{equation}
\E^* = \int \Weyl_1(x)\,dx + 2 z^2 S(z \kappa) h^{-2}+
O(h^{-\frac{4}{3}}\kappa |\log \kappa|^{\frac{1}{3}}+h^{-1});
\label{26-3-82}
\end{equation}
\item\label{thm-26-3-22-ii}
If $\kappa =o (h^{\frac{1}{3}}|\log h|^{-\frac{1}{3}})$  then
\begin{equation}
\E^* = \int \Weyl^*_1(x)\,dx + 2 z^2 S(z \kappa) h^{-2}+
o( h^{-1}).
\label{26-3-83}
\end{equation}
\end{enumerate}
\end{theorem}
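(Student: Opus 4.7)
The plan is to combine three ingredients: the trace-comparison formula of Proposition~\ref{prop-26-3-16} applied near the singularity, the Scott correction for the pure Coulomb potential from Proposition~\ref{prop-26-3-18}, and local Weyl asymptotics (Theorems~\ref{thm-26-2-29} and~\ref{thm-26-2-30}) in the far region. First, I translate so that the unique singularity sits at the origin and write $V = V^0 + W$ with $V^0 = z/|x|$ and $W$ smooth with bounded derivatives (the $M=1$ case of (\ref{26-3-1})--(\ref{26-3-3})). Then I fix a cutoff $\phi\in\sC_0^\infty(B(0,1))$ with $\phi\equiv1$ on $B(0,\tfrac12)$ and set $\bar\phi=\sqrt{1-\phi^2}$, separating the Coulomb singularity (where the Scott correction lives) from the smooth region (where standard semiclassical analysis applies).

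For the lower bound I take $A^\star$ to be a minimizer of $\E$ (existence by Proposition~\ref{prop-26-3-2}, a priori regularity by Propositions~\ref{prop-26-3-4}, \ref{prop-26-3-7}, \ref{prop-26-3-9}), split
\[
\Tr^-(H_{A^\star,V}) = \Tr(\phi H_{A^\star,V}^-\phi) + \Tr(\bar\phi H_{A^\star,V}^-\bar\phi),
\]
apply Proposition~\ref{prop-26-3-16} with $a=1$ to the first summand to replace $V$ by $V^0$ at the cost of $O(h^{-4/3}\kappa|\log\kappa|^{1/3}+h^{-1})$, and use a local Weyl estimate (Theorem~\ref{thm-26-2-29} against an $\ell$-admissible partition) on the smooth far summand to obtain $\int\Weyl_1\bar\phi^2\,dx + O(h^{-1})$. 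After adding the magnetic energy, the near-origin piece is bounded below by
\[
\inf_A\Bigl\{\Tr(\phi H_{A,V^0}^-\phi) + (\kappa h^2)^{-1}\|\partial A\|^2\Bigr\},
\]
and Proposition~\ref{prop-26-3-18}(iv) identifies this infimum as $2z^2 S(z\kappa)h^{-2} + \int\Weyl_1^0\phi^2\,dx$ up to the $r$-error controlled by (\ref{26-3-72}). Under the Coulomb rescaling $x\mapsto zh^{-2}x$ the order-one cutoff becomes rescaled radius $\sim zh^{-2}$, so this residual scales back to $O(h^{-1})$. Collecting and using $\int(\Weyl_1-\Weyl_1^0)\phi^2\,dx + \int\Weyl_1\bar\phi^2\,dx + \int\Weyl_1^0\phi^2\,dx = \int\Weyl_1\,dx$ yields (\ref{26-3-82}) from below.

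For the upper bound I take $A^\natural$ to be a near-minimizer of the Coulomb problem of Proposition~\ref{prop-26-3-18}(iv) with the same cutoff scale and plug it into $\E(A^\natural)\geq\E^\star$. The same decomposition runs in reverse: the Coulomb piece contributes exactly the Scott correction plus $\int\Weyl_1^0\phi^2\,dx$; Proposition~\ref{prop-26-3-16} converts the passage $V^0\to V$ into the Weyl difference using the decay estimates (\ref{26-3-78})--(\ref{26-3-80}) on $A^\natural$; and the far-zone trace is Weyl. For part~(ii), the smallness $\kappa = o(h^{1/3}|\log h|^{-1/3})$ is exactly regime (\ref{26-3-68}) in which the error of Proposition~\ref{prop-26-3-16} becomes $o(h^{-1})$; the non-periodicity assumption (\ref{26-2-62}) upgrades the far-zone remainder to $o(h^{-1})$ with $\Weyl_1$ replaced by $\Weyl_1^*$ via Theorem~\ref{thm-26-2-30} and Remark~\ref{rem-26-2-20}, while the Scott contribution is unchanged.

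The main obstacle will be calibrating the cutoff $\phi$: it must sit on a scale on which $|W|\ll V^0$ (so Proposition~\ref{prop-26-3-16} applies) but whose rescaled Coulomb radius is large enough to suppress the $r$-error in Proposition~\ref{prop-26-3-18}. The order-one scale above satisfies both, since after rescaling it becomes $\sim zh^{-2}\to\infty$. A secondary technical point is that the Coulomb near-minimizer $A^\natural$ is defined globally, and I must verify its far-zone regularity (Remark~\ref{rem-26-3-21}) is enough to run the semiclassical Weyl analysis there; this follows from the same arguments as for the full minimizer.
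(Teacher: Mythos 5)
Your proposal is correct and follows essentially the same route as the paper. The paper's own proof is very compressed: it applies Proposition~\ref{prop-26-3-16} globally to get the comparison $\Tr^-H_{A,V}-\int\Weyl_1 \equiv \Tr^-H_{A,V^0}-\int\Weyl_1^0 + O(\cdot)$, adds the magnetic energy, plugs in the minimizer for $V$ (lower bound) or for $V^0$ (upper bound) to obtain the two-sided comparison~(\ref{26-3-85}), then invokes Proposition~\ref{prop-26-3-18} to identify the $V^0$-infimum with the Scott term, with a one-line remark that the non-integrable tails must be regularized as in Subsection~\ref{sect-26-3-5-1}. Your near/far decomposition with the explicit cutoff $\phi$ is precisely that regularization carried out by hand, and the accounting $\int(\Weyl_1-\Weyl_1^0)\phi^2 + \int\Weyl_1\bar\phi^2 + \int\Weyl_1^0\phi^2 = \int\Weyl_1$ together with the rescaling check that the residual $O(r^{-1/2})$ of~(\ref{26-3-72}) at $r\sim zh^{-2}$ contributes $O(h^{-1})$ are exactly the details the paper leaves implicit. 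The two places you flagged as needing care — the calibration of $\phi$ and the far-zone regularity of the near-minimizer — are indeed the technical substance, and both are covered by Propositions~\ref{prop-26-3-4}--\ref{prop-26-3-9} (respectively Remark~\ref{rem-26-3-21} for the Coulomb near-minimizer), so there is no gap.
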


\begin{proof}
If $A$ satisfies minimizer properties  then in virtue of Proposition~\ref{prop-26-3-16}
\begin{multline}
\Tr^-  H_{A,V} -\int \Weyl_1 (x)\,dx \equiv
\Tr^-  H_{A,V^0} -\int \Weyl^0_1 (x)\,dx \\
+ O(h^{-\frac{4}{3}}\kappa |\log \kappa|^{\frac{1}{3}}+h^{-1})
\label{26-3-84}
\end{multline}
and adding magnetic energy and plugging either minimizer for $V$ or for $V^0$ we get
\begin{align}
&\inf_A \Bigl(\Tr^-  H_{A,V} -\int \Weyl_1 (x)\,dx +
\frac{1}{\kappa h^2}\int |\partial A|^2\,dx\Bigr)
\lesseqgtr  \label{26-3-85}\\
&\inf_A \Bigl(\Tr^-  H_{A,V^0} -\int \Weyl^0_1 (x)\,dx +
\frac{1}{\kappa h^2}\int |\partial A|^2\,dx\Bigr)\notag \\
&\qquad\qquad\qquad\qquad\qquad\qquad\qquad\qquad
\pm  C(h^{-\frac{4}{3}}\kappa |\log \kappa|^{\frac{1}{3}}+h^{-1}).
\notag
\end{align}
Sure as $V$ (and surely $V^0$) are not sufficiently fast decaying at infinity the left (and for sure the right hand) expression in (\ref{26-3-84}) should be regularized as in Subsection~\ref{sect-26-3-5-1}. However for potential decaying fast enough (faster than $|x|^{-2-\delta}$) regularization is not needed.

For $V^0$ we have an exact expression which concludes the proof of Statement~\ref{thm-26-3-22-i}.

The proof of (ii) is similar albeit with the small improvement based on the behavior of the classical dynamics (without magnetic field) exactly as in Chapter~\ref{book_new-sect-24}.
\end{proof}

\section{Several singularities}
\label{sect-26-3-6}

Consider now ``molecular'' case $M\ge 1$. Then we need more delicate arguments.

\subsection{Decoupling of singularities}
\label{sect-26-3-6-1}

Consider partition of unity $1=\sum_{0\le m\le M} \psi_m^2$ where $\psi_m$ is supported in $\frac{1}{3}a$-vicinity of $\y_m$ as $m=1,\ldots, M$ and $\psi_0=0$ in $\frac{1}{4}a$-vicinities of $\y_m$ (``near-nuclei'' and ``between-nuclei''partition elements).

\paragraph{Estimate from above.}
\label{sect-26-3-6-1-1}

Then
\begin{equation}
\Tr (H^-_{A,V}) =\sum_{0\le m\le M} \Tr  (\psi_m H^-_{A,V} \psi_m)
\label{26-3-86}
\end{equation}
and to estimate $\E^*$ from the above we impose an extra condition to $A$:
\begin{equation}
A=0 \qquad \text{as \ \ } \ell (x) \ge \frac{1}{5}a.
\label{26-3-87}
\end{equation}
Then in this framework we estimate
\begin{equation}
\Tr^- (\psi_0 H^-_{A,V} \psi_0)-\int \Weyl_1(x)\psi_0^2 (x)\,dx \le
C h^{-1} a^{-\frac{1}{2}}.
\label{26-3-88}
\end{equation}
Proof of this inequality is trivial by using $\ell$-admissible partition and applying results of the theory without any magnetic field.

So, to estimate $\E^*$ from above\footnote{\label{foot-26-17} Modulo error in (\ref{26-3-49}).} we just need to estimate from  above the minimum with respect to $A$ satisfying (\ref{26-3-87}) of the expression
\begin{equation}
\Tr (\psi_m H^-_{A,V} \psi_m)-\int \Weyl_1(x)\psi_m^2 (x)\,dx +
\frac{1}{\kappa h^2}  \int_{\{\ell_m(x)\le \frac{1}{5}a\}}  |\partial A|^2\,dx.
\label{26-3-89}
\end{equation}

\paragraph{Estimate from below.}
\label{sect-26-3-6-1-2}

In this case we use the same partition of unity $\{\psi_m^2\}_{j=0,1,\ldots, M}$ and estimate
\begin{gather}
\Tr (H^- _{A,V})\ge \sum_{0\le m\le M} \Tr^- (\psi_m H_{A,V'} \psi_m)
\label{26-3-90}\\
\shortintertext{with}
V'= V+ 2h^2 \sum _{0\le m\le M}  (\partial \psi )^2
\label{26-3-91}\\
\shortintertext{and we also use decomposition}
\int |\partial A|^2\,dx = \sum_{0\le m\le M} \int \omega_m^2 |\partial A|^2\, dx
\label{26-3-92}
\end{gather}
with
\begin{multline}
\omega_m(x) =1 \ \  \text{as\ \ } \ell_m(x)\le \frac{1}{10}a,\quad
\omega_m (x) \ge 1-C\varsigma  \ \  \text{as\ \ } \ell_m (x)\le \frac{1}{2}a
\\ m=1,\ldots, M,
\label{26-3-93}
\end{multline}
\begin{equation}
\omega_0 \ge \epsilon_0 \varsigma\quad \text{as\ \ } \ell (x)\ge \frac{1}{5}a.
\label{26-3-94}
\end{equation}
So far $\varsigma>0$ is a constant but later it will be a small parameter.
Then since
\begin{multline}
\Tr^- (\psi_0 H_{A,V'} \psi_0)-\int \Weyl_1(x)\psi_0^2 (x)\,dx +
\frac{1}{\kappa h^2} \int \omega_0^2 |\partial A|^2\,dx\ge \\Ch^{-1}a^{-\frac{1}{2}}
\label{26-3-95}
\end{multline}
(again proven by partition) in virtue of the previous Section~\ref{sect-26-2} we are left with the estimates from below for
\begin{equation}
\Tr^- (\psi_m H_{A,V'} \psi_m)-\int \Weyl_1(x)\psi_m^2 (x)\,dx +
\frac{1}{\kappa h^2} \int \omega_m^2 |\partial A|^2\,dx.
\label{26-3-96}
\end{equation}

\begin{remark}\label{rem-26-3-23}
\begin{enumerate}[label=(\roman*), fullwidth]
\item\label{rem-26-3-23-i}
Note that the error in $\Weyl_1$ when we replace $V'$ there by $V$ does not exceed  $Ch^{-1}(1+a^{-\frac{1}{2}})$ which is  less than the error in (\ref{26-3-49}). Here we can also assume that $A$ satisfies (\ref{26-3-87}); we need just to replace $\varsigma$ by $\epsilon_0\varsigma$  in (\ref{26-3-93})--(\ref{26-3-94}).

\item\label{rem-26-3-23-ii}
We can further go down by replacing $\Tr^- (\psi_m H_{A,V'} \psi_m)$ by
$\Tr  (\psi_m H^-_{A,V'} \psi_m)$.

\item\label{rem-26-3-23-iii}
Therefore we basically have the same object for both estimates albeit with marginally different potentials ($V$ in the estimate from above and $V'$ in the estimate from below)  and with a weight $\omega_m^2$ satisfying (\ref{26-3-93})--(\ref{26-3-94}); in both cases $\omega =1$ as
$\ell(x)\le \frac{1}{10}a$ but in the estimate from above $\omega (x)$ grows to $C_0$ and in the estimate from below $\omega(x)$ decays to $\varsigma$ as $\ell(x)\ge \frac{1}{3}a$ and in both cases condition (\ref{26-3-87}) could be either imposed or skipped.

\item\label{rem-26-3-23-iv}
From now on we consider a single singularity at $0$ and we skip subscript $m$. However if there was a single singularity from the beginning, all arguments of this and forthcoming paragraphs would be unnecessary.
\end{enumerate}
\end{remark}

\paragraph{Scaling.}
\label{sect-26-3-6-1-3}

\bigskip\noindent
(i) We are done as $Z\asymp 1$ but as $Z \ll 1$\,\footnote{\label{foot-26-18} As $Z$ denotes $Z_m$ now we assume only that $Z_1+\ldots+Z_M\asymp 1$.} we need a bit more fixing. The problem is that $V \asymp Z \ell ^{-1}$ only as $|x|\le aZ$; otherwise $V\lesssim a^{-1}$ (where we still assume that $a\le 1$). To deal with this we apply in the zone $\{x:\,  a Z \le |x|\le a\}$ the same procedure as before and its contribution to the error will be  $Ch^{-1} a^{-\frac{1}{2}}$ as
$\rho = a^{-\frac{1}{2}}$ here. Actually we also need to keep
$|x|\ge Z^{-1} h^2$; so we assume that $Z^{-1}h^2\le Z a$ i.e.
$Z  \ge a^{-\frac{1}{2}} h$.

Now let us scale $x\mapsto x'=x (aZ)^{-1} $, and multiply $H_{a,V}$ by $a$ and therefore also multiply $A$ by $a^{\frac{1}{2}}$, so
$A\mapsto A'=a^{\frac{1}{2}}A$, $h\mapsto h'=h a^{-\frac{1}{2}} Z^{-1}$; then the  magnetic energy becomes
$\kappa^{-1}h^{-2} Z\int \omega^2 (x) |\partial' A'|^2\, dx'$ where factors $a^{-1}$ and $aZ$ come from substitution $A=a^{-\frac{1}{2}}A'$ and scaling respectively. We need to multiply it by $a$ (as we multiplied an operator); then plugging $h^{-2}=h'^{-2} a^{-1} Z^{-2}$ we get the same expression as before but with $Z'=1$, $a'=1$ and $h'=h a^{-\frac{1}{2}} Z^{-1}\le 1$ and
$\kappa'=\kappa Z$ instead of $h$ and $\kappa$.

If we establish here an error
$O\bigl(h'^{-1} +\kappa' |\log \kappa'|^{\frac{1}{3}}h'^{-\frac{4}{3}}\bigr)$ the final error will be this expression multiplied by $a^{-1}$ i.e.
$O\bigl(a^{-\frac{1}{2}}Z h^{-1} + \kappa a^{-\frac{1}{2}} Z^{\frac{7}{3}}
|\log \kappa Z|^{\frac{1}{3}}h^{-\frac{4}{3}}\bigr)$ which is less than the same expression with $Z=1$.

\bigskip\noindent
(ii) On the other hand, let $Z  \le a^{-\frac{1}{2}} h$. Recall, we assume that $a\ge C_0 h^2$. Then  we can apply the same arguments as before but with $\bar{Z}=a^{-\frac{1}{2}} h$ and we arrive to the same situation as before albeit with $h'=1$, $a'=1$, $\kappa'= \kappa a^{-\frac{1}{2}} h$ and with
$Z'= Z \bar{Z}^{-1}$. Then we have the trivial error estimate  $O(a^{-1})$ which is less than $O(a^{-\frac{1}{2}}h^{-1})$.

\subsection{Main results}
\label{sect-26-3-6-2}

Combining results of the previous Subsubsections and Paragraphs with proposition~\ref{prop-26-3-9} we arrive to

\begin{theorem}\label{thm-26-3-24}
Let $M\ge 2$, $\kappa \le \kappa^*$ and \textup{(\ref{26-3-36})} hold with $\nu>\frac{4}{3}$. Then
\begin{enumerate}[label=(\roman*), fullwidth]
\item\label{thm-26-3-24-i}
Asymptotics holds
\begin{equation}
\E^* = \int \Weyl_1(x)\,dx + 2 \sum_{1\le m\le M} Z_m^2 S(Z_m \kappa) h^{-2}+ O(R_1+R_2)
\label{26-3-97}
\end{equation}
with
\begin{align}
R_1\, =\,&\left\{\begin{aligned}
&h^{-1}+\kappa |\log \kappa|^{\frac{1}{3}}h^{-\frac{4}{3}}\qquad\ \ &&\text{as\ \ } a\ge 1\\
&a^{-\frac{1}{2}}h^{-1}+\kappa |\log \kappa|^{\frac{1}{3}}a^{-\frac{1}{3}}h^{-\frac{4}{3}}\qquad
&&\text{as\ \ } h^2\le a\le 1
\end{aligned}\right.
\label{26-3-98}\\
\shortintertext{and}
R_2=\kappa h^{-2}
&\left\{\begin{aligned}
&a^{-3}\qquad\qquad\qquad\qquad &&\text{as\ \ } a\ge |\log h|^{\frac{1}{3}},\\
&|\log h^2a^{-1}|^{-1} \qquad &&\text{as\ \ }  h^2\le a \le |\log h|^{\frac{1}{3}};
\end{aligned}\right.
\label{26-3-99}
\end{align}
\item\label{thm-26-3-24-ii}
If $\kappa =o( h^{\frac{1}{3}}|\log h|^{-\frac{1}{3}})$, $\kappa a^{-3}=o(h)$ and $a^{-1}=o(1)$ then
\begin{equation}
\E^* = \int \Weyl^*_1(x)\,dx + 2 \sum_{1\le m\le M} Z_m^2 S(Z_m \kappa) h^{-2}+ o(h^{-1}).
\label{26-3-100}
\end{equation}
\end{enumerate}
\end{theorem}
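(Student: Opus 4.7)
The plan is to combine the single-singularity asymptotics of Theorem~\ref{thm-26-3-22} with the decoupling machinery already set up in Subsection~\ref{sect-26-3-6-1} (partition of unity, IMS identity, splitting of magnetic energy), and to account for the long-range tails of the near-minimizer $A$ via the decay estimates of Proposition~\ref{prop-26-3-9}. The upper and lower bounds will be treated separately and will match up to the error $O(R_1+R_2)$.

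For the \emph{upper bound}, I would pick a partition $1=\sum_{0\le m\le M}\psi_m^2$ with $\psi_m$ supported in $B(\y_m,a/3)$ for $m\ge 1$ and $\psi_0$ vanishing near each $\y_m$, and impose the extra condition (\ref{26-3-87}) that $A=0$ on $\{\ell(x)\ge a/5\}$. Then (\ref{26-3-86}) gives
\begin{equation*}
\E^* \le \sum_m \inf_{A_m}\Bigl(\Tr(\psi_m H^-_{A_m,V}\psi_m)-\int\Weyl_1\psi_m^2\,dx+\tfrac{1}{\kappa h^2}\int_{\{\ell_m\le a/5\}}|\partial A_m|^2\Bigr)+\int\Weyl_1\,dx.
\end{equation*}
The term $m=0$ is bounded by (\ref{26-3-88}). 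For each $m\ge 1$ I would rescale $x\mapsto (x-\y_m)/(aZ_m)$ as in Paragraph~\ref{sect-26-3-6-1-3} to reduce to the case $h'=ha^{-1/2}Z_m^{-1}$, $\kappa'=\kappa Z_m$ on an essentially bounded region, invoke Theorem~\ref{thm-26-3-22}\ref{thm-26-3-22-i}, and undo the scaling to produce the Scott term $2Z_m^2 S(Z_m\kappa)h^{-2}$ together with the rescaled single-atom remainder, which becomes $R_1$ (cases $a\ge 1$ and $h^2\le a\le 1$ handled as in (\ref{26-3-98})). In the sub-case $Z_m\le a^{-1/2}h$ the trivial bound $O(a^{-1})$ dominates.

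For the \emph{lower bound}, I would apply the IMS identity giving (\ref{26-3-90})--(\ref{26-3-91}), together with the magnetic-energy splitting (\ref{26-3-92}) with weights $\omega_m$ satisfying (\ref{26-3-93})--(\ref{26-3-94}); here $\varsigma$ is a small parameter chosen at the end. The $m=0$ block is disposed of by (\ref{26-3-95}), and each $m\ge 1$ block is, after the same rescaling, reducible to the single-atom bound of Proposition~\ref{prop-26-3-18}\ref{prop-26-3-18-i} (where the equivalent regularized form (\ref{26-3-69}) is exactly of the shape that appears). The cost of the weight $\omega_m<1$ on $\{\ell_m\ge a/10\}$ is an effective rescaling of $\kappa$ by $(1+C\varsigma)$, absorbed using continuity of $S$ (Proposition~\ref{prop-26-3-20}). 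The mismatch between $V$ and $V^0_m=Z_m|x-\y_m|^{-1}$ on the cell of radius $a/3$ is controlled by Proposition~\ref{prop-26-3-16}\ref{prop-26-3-16-i}, contributing $O(R_1)$.

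The main obstacle, and the source of $R_2$, is accounting for the tail of a global near-minimizer $A$ in the regions $\{\ell(x)\asymp a\}$ where different singularities ``see'' each other: when we force $A=0$ outside $B(\y_m,a/5)$ in the upper bound, or when the weights $\omega_m$ degenerate in the lower bound, we lose the magnetic energy contribution $\kappa^{-1}h^{-2}\!\int_{\{\ell\asymp a\}}|\partial A|^2\,dx$. Proposition~\ref{prop-26-3-9} together with Remark~\ref{rem-26-3-10}\ref{rem-26-3-10-ii} gives $\int_{\{\ell(x)\asymp a\}}|\partial A|^2\,dx=O(\kappa^2 a^{-3})$, so this slack is exactly $O(\kappa h^{-2}a^{-3})$, which is $R_2$ in the regime $a\ge |\log h|^{1/3}$; in the opposite regime the logarithmic summation of the per-scale $O(\kappa)$ contributions (from the $\log$ in (\ref{26-3-31})) yields the $|\log h^2a^{-1}|^{-1}$ form. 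Optimizing the small parameter $\varsigma$ in $\omega_m$ against $R_2$ then matches upper and lower bounds, proving \ref{thm-26-3-24-i}.

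For \ref{thm-26-3-24-ii}, I would upgrade each ingredient: in each rescaled single-atom problem the hypothesis $\kappa=o(h^{1/3}|\log h|^{-1/3})$ translates to $\kappa'=o(h'^{1/3}|\log h'|^{-1/3})$, allowing use of Theorem~\ref{thm-26-3-22}\ref{thm-26-3-22-ii} and thus the $o(h^{-1})$ Schwinger-corrected Scott term with $\Weyl^*_1$; the reduction of the bulk remainder from $O(h^{-1})$ to $o(h^{-1})$ uses the sharper Tauberian/Weyl estimates of Subsection~\ref{sect-26-3-4} combined with a non-periodicity argument for the unperturbed flow generated by $H_{0,V}$ exactly as in Chapter~\ref{book_new-sect-24}. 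The conditions $\kappa a^{-3}=o(h)$ and $a^{-1}=o(1)$ are precisely what is needed to make $R_2=o(h^{-1})$ and to make the cross-interaction term negligible.
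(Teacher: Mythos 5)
Your proposal follows essentially the same route as the paper: upper bound via $\Tr^-(\psi_m H\psi_m)$ with the constraint (\ref{26-3-87}), lower bound via IMS (\ref{26-3-90})--(\ref{26-3-92}) with weights $\omega_m$, rescaling of each atomic cell as in Paragraph~\ref{sect-26-3-6-1-3}, identification of the Scott term via Proposition~\ref{prop-26-3-18}, decay estimates of Proposition~\ref{prop-26-3-9} for the $a\ge 1$ regime, and the improved Tauberian/Weyl estimates for part~\ref{thm-26-3-24-ii}. So the approach is right.

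However, the explanation of $R_2$ in the regime $h^2\le a\le |\log h|^{\frac13}$ is not correct as stated. You write that the $|\log h^2a^{-1}|^{-1}$ factor comes from ``logarithmic summation of the per-scale $O(\kappa)$ contributions (from the $\log$ in (\ref{26-3-31}))''. Summation over shells can only make the error \emph{larger} (indeed (\ref{26-3-34}) gives $\int_{\{\ell\asymp b\}}|\partial A|^2\,dx=O(\kappa^2)$ per dyadic shell for $b\le 1$, and Remark~\ref{rem-26-3-8}\ref{rem-26-3-8-ii} records that the naive sum over $\log$ many shells gives $O(\kappa^2|\log h|)$), while the global estimate (\ref{26-3-80}), or equivalently (\ref{26-3-102}), says the total is only $O(\kappa^2)$. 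The factor $|\log(a/h^2)|^{-1}$ therefore comes from a \emph{pigeonhole/averaging} argument: there are $O(|\log(a/h^2)|)$ dyadic shells in $[h^2,a]$, so at least one shell $\{b\le\ell\le 2b\}$ carries magnetic energy $\le C\kappa^2|\log(a/h^2)|^{-1}$; the cut-off (and the transition region of the weights $\omega_m$) is then placed \emph{at that scale $b$}. The cost is $\kappa^{-1}h^{-2}\|\partial A\|^2_{\{b\le\ell\le 2b\}}\le C\kappa h^{-2}|\log(a/h^2)|^{-1}=R_2$. This is the content of (\ref{26-3-101})--(\ref{26-3-102}); (\ref{26-3-31}) plays no role here. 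Relatedly, the optimization that closes the argument is primarily over this cut-off radius $b$, with $\varsigma$ a subsidiary parameter entering through the continuity bound (\ref{26-3-76}) for $S(\kappa(1+C\varsigma))-S(\kappa)$; presenting $\varsigma$ as the main optimization variable obscures where $R_2$ comes from.
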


\begin{proof}
To prove theorem we need to prove the following estimate
\begin{equation}
\frac{1}{\kappa}\|\partial A \|^2 _{\{b\le \ell(x)\le 2b\}}
\le Cb^{-3}
\label{26-3-101}
\end{equation}
where $r_*\le b\le a$ is a ``cut-off''. On the other hand we know that
\begin{equation}
\frac{1}{\kappa}\|\partial A\|^2=-\frac{\partial S}{\partial \kappa}=O(1)
\label{26-3-102}
\end{equation}
and we need to recover the last factor in the definition of $R_2$.

As $a\ge 1$ we can have $\kappa a^{-3}$ because in virtue of (\ref{26-3-39}) the square of the partial norm in (\ref{26-3-102}) does not exceed $Ca^{-3}\kappa^2$.

On the other hand, as $h^2\le r_* \le a$ we can select $b: r_*\le b\le a$ such that  he partial norm in (\ref{26-3-102}) does not exceed
$C|\log (a/h^2) |^{-1}\cdot \|\partial A\|^2$.
\end{proof}

\begin{remark}\label{rem-26-3-25}
\begin{enumerate}[label=(\roman*), fullwidth]
\item\label{rem-26-3-25-i}
As $a\le |\log h|$  we do not need assumption (\ref{26-3-36});

\item\label{rem-26-3-25-ii}
Following arguments of Section~\ref{book_new-sect-12-5}  estimates (\ref{26-3-83}) and (\ref{26-3-102}) could be improved to $O(h^{-1+\delta})$ provided  $a\ge h^{-\delta_1}$,
$\kappa \le h^{\frac{1}{3}+\delta_1}|\log h|^{-\frac{1}{3}} $ and
$\kappa \le a^3 h^{1+\delta_1}$.
\end{enumerate}
\end{remark}

\subsection{Problems and remarks}
\label{sect-26-3-6-3}

\begin{Problem}\label{Problem-26-3-26}
\begin{enumerate}[label=(\roman*), fullwidth]
\item\label{Problem-26-3-26-i}
As $\kappa \in [0, \kappa^*]$ with small enough $\kappa^*>0$ does $S(\kappa)$ really depend on $\kappa$ or $S(\kappa)=S(0)$  (see Remark~\ref{rem-26-3-19})?

\item\label{Problem-26-3-26-ii}
If $S(\kappa)$ really depends on $\kappa$, what is asymptotic behavior of $S(\kappa)-S(0)$ as $\kappa\to +0$: can one improve $S(\kappa)-S(0)=O(\kappa)$?

\item\label{Problem-26-3-26-iii}
Do we really need an assumption $\kappa \in [0, \kappa^*]$  (again see Remark~\ref{rem-26-3-19})?

\item\label{Problem-26-3-26-iv}
 Can one improve estimates to minimizer?
\end{enumerate}
\end{Problem}

\chapter{Asymptotics of the ground state energy}
\label{sect-26-4}

\section{Problem}
\label{sect-26-4-1}

Now we are ready to tackle our original object (\ref{26-1-2})--(\ref{26-1-3}). So, let us consider our usual quantum Hamiltonian
\begin{gather}
\sfH=\sum_{1\le j\le N}H ^0_{x_j}+\sum_{1\le j<k\le N} |x_j-x_k| ^{-1}
\label{26-4-1}\\
\shortintertext{in}
\fH= \bigwedge_{1\le n\le N} \sfH, \qquad \sfH=\sL^2 (\bR^3, \bC^2)
\label{26-4-2}\\
\shortintertext{with}
H ^0=\bigl((i\nabla -A)\cdot \boldupsigma \bigr) ^2-V(x)
\label{26-4-3}
\end{gather}
We are interested in the \emph{ground state energy\/} $\E^*_N (A)$ of our system i.e. in the lowest eigenvalue of the operator $\sfH$ on $\fH$:
\begin{gather}
\E ^*_{N} (0)=\inf \Spec (\sfH) \qquad \text{on\ \ } \fH
\label{26-4-4}\\
\intertext{as $A=0$ and more generally in}
\E ^*_N=\inf_A \Bigl( \inf \Spec_{\fH}(\sfH) +
\frac{1}{\alpha} \int |\nabla \times A|^2\,dx\Bigr)
\label{26-4-5}\\
\shortintertext{where}
V(x)=\sum_{1\le m\le M} \frac{Z_m}{|x-\y_m|}
\label{26-4-6}\\
N \asymp Z\gg 1, \qquad Z\Def  Z_1+\ldots + Z_M, \qquad  Z_1 > 0,\ldots, Z_M>0\label{26-4-7}\\
\intertext{$M$ is fixed, under assumption}
0<\alpha \le \kappa^*Z^{-1}\label{26-4-8}
\end{gather}
with sufficiently small constant $\kappa^*>0$.

Our purpose is to derive an asymptotics
\begin{gather}
\E^*_N \approx  \cE^\TF_N + \sum_{1\le m\le M} 2Z_m^2 S(\alpha Z_m)
\label{26-4-9}
\intertext{and estimate an error (usually) provided}
b\Def \min _{1\le m <m'\le M} |\y_m-\y_{m'}| \ge Z^{-\frac{1}{3}}.
\label{26-4-10}
\end{gather}

Recall that Thomas-Fermi potential $W^\TF$ and Thomas-Fermi density $\rho^\TF$ satisfy equations
\begin{gather}
\rho^\TF = \frac{1}{3\pi^2} (W^\TF+\nu)_+^{\frac{3}{2}}\label{26-4-11}\\
\shortintertext{and}
W^\TF =V^0 + |x|^{-1}* \rho^\TF\label{26-4-12}
\end{gather}
where $\nu$ is a chemical potential.

\section{Lower estimate}
\label{sect-26-4-2}

Consider corresponding to $\sfH$ quadratic form exactly as in Sections~\ref{book_new-sect-24-2} and~\ref{book_new-sect-25-6}
\begin{multline}
\blangle \sfH \Psi, \Psi \brangle = \sum_j (H^0_{x_j}\Psi,\Psi) +
(\sum_{1\le j<k\le N} |x_j-x_k| ^{-1}\Psi,\Psi)=\\
\sum_j (H_{x_j}\Psi,\Psi) + ((V-W)\Psi,\Psi) +
(\sum_{1\le j<k\le N} |x_j-x_k| ^{-1}\Psi,\Psi)
\label{26-4-13}
\end{multline}
with
\begin{equation}
H =\bigl((i\nabla -\mathbf {A})\cdot \boldupsigma \bigr) ^2-W(x)
\label{26-4-14}
\end{equation}
where we select $W$ later. By Lieb-Oxford inequality the last term is estimated from below:
\begin{gather}
\blangle \sum_{1\le j<k\le N} |x_j-x_k| ^{-1}\Psi,\Psi\brangle \ge \frac{1}{2}\D(\rho_\Psi,\rho_\Psi) - C\int \rho_\Psi^{\frac{4}{3}}\,dx
\label{26-4-15}\\
\shortintertext{where}
\rho_\Psi (x)= N
\int |\Psi(x,x_2,\ldots, x_N)|^2
\,dx_2\cdots dx_N
\label{26-4-16}
\intertext{is a spatial density associated with $\Psi$ and}
\D(\rho,\rho')\Def \iint |x-y|^{-1}\rho(x)\rho'(y)\,dxdy.
\label{26-4-17}
\end{gather}
Therefore again repeating arguments of Section~\ref{book_new-sect-24-2}
\begin{multline}
\blangle \sfH \Psi, \Psi \brangle \ge \\
\sum_j (H_{x_j}\Psi,\Psi) - 2((V-W)\Psi,\Psi) +
\frac{1}{2} \D(\rho_\Psi,\rho_\Psi) -
C\int \rho_\Psi^{\frac{4}{3}}\,dx=\\
\shoveright{\sum_j (H_{x_j}\Psi,\Psi) - \D(\rho,\rho_\Psi) +
\frac{1}{2}\D(\rho_\Psi,\rho_\Psi) -
C\int \rho_\Psi^{\frac{4}{3}}\,dx=\quad\ }\\
\sum_j (H_{x_j}\Psi,\Psi) - \frac{1}{2} \D(\rho,\rho) +
\frac{1}{2}\D(\rho-\rho_\Psi,\rho-\rho_\Psi) -
C\int \rho_\Psi^{\frac{4}{3}}\,dx
\label{26-4-18}
\end{multline}
as
\begin{equation}
W-V= |x|^{-1}*\rho.
\label{26-4-19}
\end{equation}

Note that due to antisymmetricity of $\Psi$
\begin{equation}
\sum_j (H_{x_j}\Psi,\Psi)\ge  \sum_{1\le j\le N:\, \lambda_j<0} \lambda_j\ge \Tr^- (H)
\label{26-4-20}
\end{equation}
where $\lambda_j$ are eigenvalues of $H$.

To estimate the last term in (\ref{26-4-18}) we reproduce the proof of Lemma 4.3 from L. Erd\"os,  S. Fournais and   J. P. Solovej~\cite{EFS3}:

According to magnetic Lieb-Thirring inequality for $U\ge 0$:
\begin{equation}
\sum_{j\le N} \blangle (H^0_{x_j} -U)\Psi,\Psi \brangle \ge
-C\int U^{5/2}\,dx -C\gamma^{-3} U^4\,dx -\gamma \int \mathsf{B}^2dx
\label{26-4-21}
\end{equation}
where $\mathsf{B}=\nabla \times A$, $\gamma>0$ is arbitrary. Selecting
$U=\beta \min (\rho_\Psi^{5/3},\gamma \rho_\Psi^{4/3})$ with $\beta>0$ small but independent from $\gamma$ we ensure
$\frac{1}{2}U\rho_\Psi \ge CU^{5/2}+C\gamma^{-3}U^4$ and then
\begin{equation}
\sum_{j\le N}  \blangle (H^0_{x_j})\Psi,\Psi \brangle \ge
\epsilon \int \min (\rho_\Psi^{5/3} , \gamma \rho^{4/3})dx
-\gamma \int \mathsf{B}^2\,dx
\label{26-4-22}
\end{equation}
which implies
\begin{multline}
\int \rho_\Psi^{4/3}dx \le
\gamma^{-1} \int \min (\rho_\Psi^{5/3} , \gamma \rho^{4/3})dx +
\gamma \int \rho_\Psi dx \le \\
c\gamma^{-1} \sum_{j:\lambda_j<0} \blangle (H^0_{x_j})\Psi,\Psi \brangle +
c \int \mathsf{B}^2 dx +
c \gamma N
\label{26-4-23}
\end{multline}
where we use
$\int \rho_\Psi dx =N$.

\begin{remark}\label{rem-26-4-1}
As one can prove easily (see also L. Erd\"os,  S. Fournais and   J. P. Solovej~\cite{EFS3}) that
\begin{equation}
\sum_{j\le N} \blangle (H^0_{x_j})\Psi,\Psi \brangle \le CZ^{\frac{4}{3}}N
\label{26-4-24}
\end{equation}
even if $N \not\asymp Z$; then we conclude that
\begin{equation}
\int \rho_\Psi^{4/3}dx \le CZ^{\frac{2}{3}}N + C_1\int \mathsf{B}^2 dx.
\label{26-4-25}
\end{equation}
It is sufficient unless we want to recover Dirac-Schwinger terms which unfortunately is possible only as
$\alpha \ll Z^{-\frac{10}{9}}|\log Z|^{-\frac{1}{3}}$. To recover remainder estimate $o(Z^{\frac{5}{3}})$ (or marginally better) we just apply Theorem~\ref{book_new-thm-25-A-2}. We will do it later (see Theorem~\ref{26-4-5}).
\end{remark}

Therefore skipping the non-negative third term in the right-hand expression of (\ref{26-4-18}) we conclude that
\begin{multline}
\blangle \sfH \Psi, \Psi \brangle +
 \frac{1}{\alpha} \int |\nabla \times A|^2\,dx \ge \\
\Tr^-(H) + (\frac{1}{\alpha} -C_1) \int |\nabla \times A|^2\,dx
- \frac{1}{2} \D(\rho,\rho) - CZ^{\frac{5}{3}}.
\label{26-4-26}
\end{multline}
Applying Theorem~\ref{thm-26-3-24}  we conclude that
\begin{claim}\label{26-4-27}
The sum of the first and the second terms in the right-hand expression of (\ref{26-4-26}) is greater than
\begin{equation*}
\cE^\TF +\sum_m 2Z_m^2S(\alpha Z_m)-
C Z^{\frac{4}{3}} (R_1+R_2)
\end{equation*}
with $R_1$ and $R_2$ defined by (\ref{26-3-98}) and (\ref{26-3-99}) respectively with $\kappa =\alpha Z$, $h=Z^{-\frac{1}{3}}$ and
\begin{equation}
a\Def Z^{\frac{1}{3}}\min_{1\le m<m'\le M}|\y_m-\y_{m'}|.
\label{26-4-28}
\end{equation}
\end{claim}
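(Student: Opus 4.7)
The plan is to apply Theorem~\ref{thm-26-3-24} to the one-body Pauli operator $H_{A,W^\TF}$ obtained by choosing $\rho$ in (\ref{26-4-19}) so that $W=W^\TF$, the Thomas--Fermi potential defined by (\ref{26-4-11})--(\ref{26-4-12}). Under the Thomas--Fermi rescaling $x=Z^{-1/3}\tilde x$ with $\hat A(\tilde x)\Def Z^{-2/3}A(Z^{-1/3}\tilde x)$, the operator factors as $H_{A,W^\TF}=Z^{4/3}\tilde H$, where
\begin{equation*}
\tilde H=\bigl((hiD_{\tilde x}-\hat A)\cdot\boldupsigma\bigr)^2-\tilde W,\qquad h\Def Z^{-1/3},
\end{equation*}
and $\tilde W(\tilde x)\Def Z^{-4/3}W^\TF(Z^{-1/3}\tilde x)$ has Coulomb singularities of renormalized strengths $z_m\Def Z_m/Z$ (with $z_1+\cdots+z_M\asymp 1$) at the rescaled nuclei $Z^{1/3}\y_m$, with minimal separation $a=Z^{1/3}b$. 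The standard $O(|x|^{-4})$ decay of the molecular TF potential yields condition (\ref{26-3-36}) with $\nu>\tfrac{4}{3}$, and the magnetic-energy functional rescales as $\tfrac{1}{\alpha}\int|\nabla\times A|^2\,dx=Z^{4/3}\cdot\tfrac{1}{\kappa h^{d-1}}\int|\partial\hat A|^2\,d\tilde x$ with $\kappa\Def\alpha Z\in(0,\kappa^*]$ by (\ref{26-4-8}).

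Next I will apply Theorem~\ref{thm-26-3-24}\ref{thm-26-3-24-i} to $\tilde H$, whose hypotheses (\ref{26-3-1})--(\ref{26-3-3}) are met by the structure of $W^\TF$ (genuine Coulomb singularities of strength $z_m$ plus a smooth Hartree correction, together with the TF decay at infinity). Multiplying the resulting lower bound through by $h^{-4}=Z^{4/3}$, and using $z_m^2 h^{-2}\cdot Z^{4/3}=Z_m^2$ together with $z_m\kappa=\alpha Z_m$, restores the original variables and gives, for every admissible $A$,
\begin{equation*}
\Tr^-(H_{A,W^\TF})+\tfrac{1}{\alpha}\int|\nabla\times A|^2\,dx\ge\int\Weyl_1(W^\TF)\,dx+2\sum_m Z_m^2 S(\alpha Z_m)-CZ^{4/3}(R_1+R_2).
\end{equation*}
A direct computation based on the TF equation (\ref{26-4-11}) together with the definition $\cE^\TF=\int\tau(\rho^\TF)\,dx-\int V\rho^\TF\,dx+\tfrac{1}{2}\D(\rho^\TF,\rho^\TF)$ (with $\tau(\rho)=\tfrac{3}{5}(3\pi^2)^{2/3}\rho^{5/3}$) yields the identity $\int\Weyl_1(W^\TF)\,dx=\cE^\TF+\tfrac{1}{2}\D(\rho^\TF,\rho^\TF)\ge\cE^\TF$, and (\ref{26-4-27}) then follows after dropping the non-negative Coulomb self-energy.

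The minor perturbation $\tfrac{1}{\alpha}\mapsto\tfrac{1}{\alpha}-C_1$ in (\ref{26-4-26}) is absorbed by setting $\alpha'=\alpha/(1-\alpha C_1)=\alpha(1+O(\alpha))$ and $\kappa'=\alpha'Z$, which still lies in $(0,\kappa^*]$ after shrinking the constant; Proposition~\ref{prop-26-3-20} then gives $|S(\alpha'Z_m)-S(\alpha Z_m)|=O(\alpha)$, which is dominated by the stated error. The main technical point to check is that Theorem~\ref{thm-26-3-24} indeed applies to the molecular TF potential after rescaling --- namely, that the Hartree correction $|x|^{-1}\ast\rho^\TF$ acts only as a smooth bounded perturbation near each nucleus (so the singular part is genuine Coulomb of strength $z_m$) and that the TF decay at infinity controls (\ref{26-3-36}); both are standard facts about $W^\TF$, so the argument is routine modulo the rescaling bookkeeping.
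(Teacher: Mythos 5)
Your argument follows the same route as the paper: choose $W=W^{\mathsf{TF}}$, $\rho=\rho^{\mathsf{TF}}$, perform the Thomas--Fermi rescaling $x\mapsto xZ^{1/3}$ to bring in $h=Z^{-1/3}$ and $\kappa=\alpha Z$, invoke Theorem~\ref{thm-26-3-24}, and close with the identity relating $\int\Weyl_1(W^{\mathsf{TF}})\,dx$ to $\cE^{\mathsf{TF}}$. You are in fact slightly more explicit than the paper in two useful spots: absorbing the $-C_1$ perturbation by passing to $\alpha'=\alpha/(1-\alpha C_1)$ and invoking Proposition~\ref{prop-26-3-20}, and writing $\int\Weyl_1(W^{\mathsf{TF}})\,dx=\cE^{\mathsf{TF}}+\tfrac12\D(\rho^{\mathsf{TF}},\rho^{\mathsf{TF}})$ with the correct sign (the paper's (\ref{26-4-31}) has a sign slip), which also makes clear that the full right-hand side of (\ref{26-4-26}) still comes out to $\cE^{\mathsf{TF}}+\dots$ once the $-\tfrac12\D(\rho,\rho)$ term is reinstated.
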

\vspace{-5pt}

To prove this claim one needs just to rescale
\begin{phantomequation}\label{26-4-29}\end{phantomequation}
\begin{multline}
x\mapsto x Z^{\frac{1}{3}},\quad
a\mapsto a Z^{\frac{1}{3}},\quad
W\mapsto Z^{-\frac{4}{3}}W, \\
A\mapsto A,\quad
\nabla \times A \mapsto Z^{\frac{1}{3}} \nabla \times A
\tag*{$\textup{(\ref*{26-4-29})}_{1-5}$}
\end{multline}
and introduce
\begin{equation}
h=Z^{-\frac{1}{3}}, \quad \kappa=\alpha Z.
\label{26-4-30}
\end{equation}
Observe that due to $\textup{(\ref*{26-4-29})}_{3}$  we need to multiply our estimate by $Z^{\frac{4}{3}}$.

Here one definitely needs the regularity properties like in Section~\ref{sect-26-3} but we have them as $\rho=\rho^\TF$, $W=W^\TF$. Also one can see easily that ``$-C_1$'' brings correction  not exceeding
$C_2\alpha Z^2$ as $\alpha Z\le 1$.

Meanwhile  for $\rho=\rho^\TF$, $W=W^\TF$
\begin{equation}
\frac{2}{15\pi^2} \int W^{\frac{5}{2}}\,dx -\frac{1}{2}\D(\rho,\rho) =\cE^\TF.
\label{26-4-31}
\end{equation}
Lower estimate of Theorem~\ref{thm-26-4-3} below has been proven.

\begin{remark}\label{rem-26-4-2}
$\rho=\rho^\TF$, $W=W^\TF$ delivers the maximum of the right-hand expression of (\ref{26-4-31}) among $\rho, W$ satisfying (\ref{26-4-19}).
\end{remark}

\section{Upper Estimate}
\label{sect-26-4-3}

Upper estimate is easy. Plugging as in Section~\ref{book_new-sect-24-2} $\Psi$ the \emph{Slater determinant\/} (\ref{book_new-24-2-16})  of $\psi_1,\ldots, \psi_N$ where $\psi_1,\ldots,\psi_N$ are eigenfunctions of $H_{A,W}$  we get
\begin{multline}
\blangle \sfH \Psi, \Psi \brangle =\Tr^- (H_{A,W}-\lambda_N) +\lambda_N N + \\
\int (W-V)(x)\rho_\Psi(x)\, dx +
\frac{1}{2} \D (\rho_\Psi, \rho_\Psi ) -\\
\frac{1}{2}N(N-1)
\iint |x_1-x_2|^{-1} |\Psi (x_1,x_2,x_3,\ldots,x_N|^2\,dx_1\cdots dx_N
\label{26-4-32}
\end{multline}
where we do not care about last term as we drop it (again as long as we cannot get sharp enough estimate)  and the first term in the second line is in fact
$- \D(\rho  ,\rho_\Psi )$  provided (\ref{26-4-19}) holds. Thus we get
\begin{multline}
\Tr^- (H_{A,W}-\lambda_N) +\lambda_N N -\frac{1}{2}\D(\rho ,\rho) +
\frac{1}{2}\D (\rho_\Psi -\rho  , \rho_\Psi -\rho) +\\
\frac{1}{\alpha} \int |\partial A|^2\,dx
\label{26-4-33}
\end{multline}
where we added magnetic energy. Definitely we have several problems here: $\lambda_N$ depends on $A$ and there may be less than $N$ negative eigenvalues.

However in the latter case we can obviously replace $N$ by the lesser number $N'\Def \max (n\le N, \lambda_n\le 0)$ as $\E^*_N$ is decreasing function of $N$. In this case the first term  in (\ref{26-4-33}) would be just
$\Tr^- (H_{A,W})$ and the second would be $0$. Then we apply theory of the previous Section~\ref{sect-26-3} immediately without extra complications.

Consider $A$ a minimizer (or its mollification) for operator $H_{A,W}-\mu$ with potential $W=W^\TF$ and $\mu\le 0$. Then
\begin{equation}
\N(\mu)\Def \#\{ \lambda_k<\mu\} =\int (W +\mu )_+^{\frac{3}{2}}\, dx + O(Z^{\frac{2}{3}}).
\label{26-4-34}
\end{equation}
One can prove (\ref{26-4-34}) easily using the regularity properties of $A$ established in the previous Section~\ref{sect-26-3} and the same rescaling (\ref{26-4-29})--(\ref{26-4-30}) as before. We leave this easy proof to the reader.

Then repeating arguments of Subsubsection~\ref{book_new-sect-24-4-2-1}.1 ``\nameref{book_new-sect-24-4-2-1}'' we conclude that \underline{either}
$N\ge Z- C_0Z^{\frac{2}{3}}$ and then
$|\nu|\le C_1Z^{\frac{8}{9}}$ and we can take
$\mu=0$ and $|\lambda_{N'}|\le C_1Z^{\frac{8}{9}}$ \underline{or}
$N\le Z- C_0Z^{\frac{2}{3}}$ and then we take $\mu=\nu$,
$|\lambda_N|\asymp |\nu|\asymp (Z-N)_+^{\frac{4}{3}}$ and
$|\lambda_N-\nu|\le C_1|\nu|^{\frac{1}{4}}Z^{\frac{2}{3}}$.

Then following again to arguments of Subsection~\ref{book_new-sect-24-4-2} we conclude that

\begin{claim}\label{26-4-35}
Expression (\ref{26-4-33})  without term
$\frac{1}{2}\D (\rho_\Psi-\rho,\rho_\Psi-\rho)$ does not exceed
\begin{equation*}
\cE^\TF +\sum_m 2Z_m^2S(\alpha Z_m)+
C Z^{\frac{4}{3}} (R_1+R_2)
\end{equation*}
with $R_1$ and $R_2$ defined by (\ref{26-3-98}) and (\ref{26-3-99}) respectively with $\kappa =\alpha Z$, $h=Z^{-\frac{1}{3}}$ and $a$ defined by (\ref{26-4-28}).
\end{claim}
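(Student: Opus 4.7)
The plan is to mimic the structure of the upper estimate in Subsubsection~\ref{book_new-sect-24-4-2-1}.1, but with the Schr\"odinger operator replaced by the self-generated magnetic Pauli operator studied in Section~\ref{sect-26-3}. Take $W=W^\TF$ (so that (\ref{26-4-19}) holds with $\rho=\rho^\TF$) and let $A$ be a minimizer, or an appropriate mollification, of the functional associated with $H_{A,W}-\mu$ for a parameter $\mu\le 0$ to be chosen. All error terms should then be read off Theorem~\ref{thm-26-3-24} under the rescaling \ref{26-4-29} together with (\ref{26-4-30}), so that $\kappa=\alpha Z$, $h=Z^{-\frac{1}{3}}$, and $a$ is as in (\ref{26-4-28}); the factor $Z^{\frac{4}{3}}$ in front of $R_1+R_2$ comes from the rescaling of the operator.

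First I would dispatch the case $N\le Z-C_0Z^{\frac{2}{3}}$. Here I would take $\mu=\nu$, which forces $|\lambda_N|\asymp|\nu|\asymp(Z-N)_+^{\frac{4}{3}}$ and $|\lambda_N-\nu|\le C_1|\nu|^{\frac{1}{4}}Z^{\frac{2}{3}}$ by (\ref{26-4-34}). Writing $\lambda_N N = \nu N + (\lambda_N-\nu)N$ and using $\Tr^-(H_{A,W}-\lambda_N)\le \Tr^-(H_{A,W}-\nu)+(\lambda_N-\nu)\N(\nu)$, the sum $\Tr^-(H_{A,W}-\nu)+\nu N$ coincides, up to the TF identity (\ref{26-4-31}) applied to the pair $(W^\TF+\nu,\rho^\TF)$, with $\cE^\TF + \sum_m 2Z_m^2 S(\alpha Z_m)$ plus the error supplied by Theorem~\ref{thm-26-3-24} (after rescaling). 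The correction $(\lambda_N-\nu)(N-\N(\nu))$ is controlled by $C|\nu|^{\frac{1}{4}} Z^{\frac{2}{3}}\cdot Z^{\frac{2}{3}}$, which is of the same order as $Z^{\frac{4}{3}}R_1$ and hence absorbed.

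In the opposite case $N\ge Z-C_0 Z^{\frac{2}{3}}$ one has $|\nu|\le C_1 Z^{\frac{8}{9}}$, and I would take $\mu=0$. If $\lambda_N>0$, I replace $N$ by $N'\Def\max\{n\le N:\lambda_n\le 0\}$; this only decreases the left-hand side because $\E^*_N$ is nonincreasing in $N$, and then the first term of (\ref{26-4-33}) is simply $\Tr^-(H_{A,W})$, the second term vanishes, and (\ref{26-4-31}) together with Theorem~\ref{thm-26-3-24} yields (\ref{26-4-35}) directly. If $\lambda_N\le 0$, a Tauberian estimate for the number of eigenvalues (obtained from the propagation estimates of Section~\ref{sect-26-3} after the rescaling \ref{26-4-29}) bounds $|\lambda_N|$ and $N-\N(0)$ by $C_1 Z^{\frac{8}{9}}$ and $C Z^{\frac{2}{3}}$ respectively, so the discrepancy $\lambda_N\bigl(N-\N(0)\bigr)$ is again of admissible size.

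The principal technical point is that, because $\lambda_N$ and $\N(\mu)$ both depend on the chosen $A$, one cannot simply quote Section~\ref{sect-26-3} verbatim; one must verify that the minimizer $A$ enjoys the regularity established in Propositions~\ref{prop-26-3-4}, \ref{prop-26-3-7}, \ref{prop-26-3-9} uniformly with respect to $\mu\in[\nu,0]$, so that both the counting asymptotics (\ref{26-4-34}) and Theorem~\ref{thm-26-3-24} apply with the same remainder $Z^{\frac{4}{3}}(R_1+R_2)$. This is the step I expect to require the most care; once it is in place, choosing Slater determinants built from the eigenfunctions of $H_{A,W}$ and collecting the contributions above (the magnetic energy term is already built into $S(\alpha Z_m)$ via the minimizer) yields (\ref{26-4-35}), noting that the factor $\tfrac{1}{2}\D(\rho_\Psi-\rho,\rho_\Psi-\rho)$ discarded from (\ref{26-4-33}) is nonnegative and hence its omission gives a bona fide upper bound.
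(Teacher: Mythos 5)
Your proof follows the paper's route closely: pick $W=W^\TF$, take $A$ the minimizer (or its mollification) for $H_{A,W}-\mu$, use the dichotomy $N\gtrless Z-C_0 Z^{2/3}$ with $\mu=0$ or $\mu=\nu$, control the shift from $\lambda_N$ to $\nu$ via the counting asymptotics (\ref{26-4-34}), and rescale into Theorem~\ref{thm-26-3-24}; you also correctly isolate the uniformity of the minimizer's regularity (Propositions~\ref{prop-26-3-4}, \ref{prop-26-3-7}, \ref{prop-26-3-9}) as the technical point the paper leaves implicit. One small slip: the inequality $\Tr^-(H_{A,W}-\lambda_N)\le\Tr^-(H_{A,W}-\nu)+(\lambda_N-\nu)\N(\nu)$ has the wrong sign (since $\partial_\lambda\Tr^-(H-\lambda)=-\N(\lambda)$, the correct bound has $-(\lambda_N-\nu)\N(\nu)$), and the closing remark about discarding the nonnegative $\D$-term is somewhat beside the point since that term is bounded separately in (\ref{26-4-36})--(\ref{26-4-38}); nevertheless you arrive at the correct error $(\lambda_N-\nu)(N-\N(\nu))$, so the substance is sound.
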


Now we need to estimate properly $\D (\rho_\Psi-\rho,\rho_\Psi-\rho)$ which  as in i.e. Subsubsection~\ref{book_new-sect-24-4-2-2}.2 ``\nameref{book_new-sect-24-4-2-2}'' does not exceed the sum of
\begin{gather}
\D (e(x,x,\mu)-\Weyl(x,\mu),\, e(x,x,\mu)-\Weyl(x,\mu)),\label{26-4-36}\\[3pt]
\D(e(x,x,\lambda_N)-\Weyl(x,\lambda_N),\,e(x,x,\lambda_N)-\Weyl(x,\lambda_N)),
\label{26-4-37}\\
\shortintertext{and}
\D(\Weyl(x,\mu)-\Weyl(x,\lambda_N),\,\Weyl(x,\mu)-\Weyl(x,\lambda_N)).
\label{26-4-38}
\end{gather}

Following arguments of Subsubsection~\ref{book_new-sect-24-4-2-2}.2 ``\nameref{book_new-sect-24-4-2-2}'' one can prove easily that due to regularity properties of $A$ both two semiclassical terms do not exceed $CZ^{\frac{5}{3}}$ and due to estimates to $|\lambda_N-\mu|$ the last term does not exceed $CZ^{\frac{5}{3}}$ either.

This concludes the proof of the upper estimate in Theorem~\ref{thm-26-4-3} below.

\section{Main theorems}
\label{sect-26-4-4}

\begin{theorem}\label{thm-26-4-3}
\begin{enumerate}[label=(\roman*), fullwidth]
\item\label{thm-26-4-3-i}
Under assumptions \textup{(\ref{26-4-7})} and \textup{(\ref{26-4-8})}
\begin{equation}
\E^*_N = \cE^\TF_N + \sum_{1\le m\le M} 2Z_m^2 S(\alpha Z_m) + O\bigl(Z^{\frac{4}{3}}(R_1+R_2)\bigr)
\label{26-4-39}
\end{equation}
with $R_1$ and $R_2$ defined by \textup{(\ref{26-3-98})} and \textup{(\ref{26-3-99})} respectively with $\kappa =\alpha Z$, $h=Z^{-\frac{1}{3}}$ and $a$ defined by \textup{(\ref{26-4-28})}, $a=\infty$ as $M=1$;

\item\label{thm-26-4-3-ii}
In particular under assumption \textup{(\ref{26-4-10})}
the following estimate holds
\begin{multline}
\E^*_N = \cE^\TF_N + \sum_{1\le m\le M} 2Z_m^2 S(\alpha Z_m) +\\
O\bigl(\alpha |\log (\alpha Z)|^{\frac{1}{3}}  Z^{\frac{25}{9}}+ Z^{\frac{5}{3}}
+ \alpha a^{-3}Z ^2 \bigr)
\label{26-4-40}
\end{multline}
\end{enumerate}
\end{theorem}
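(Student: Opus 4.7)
The plan is to obtain matching upper and lower bounds for $\E^*_N$, each approximating $\cE^\TF_N + \sum_m 2Z_m^2 S(\alpha Z_m)$ with the required error, by reducing the many-body problem to the one-particle trace asymptotics of Theorem~\ref{thm-26-3-24}. For the lower bound I would start from the quadratic form \textup{(\ref{26-4-13})}, write the pair interaction in mean-field form via \textup{(\ref{26-4-19})} with a trial density $\rho$, and apply the Lieb--Oxford inequality \textup{(\ref{26-4-15})} to get a reminder of the form $-C\int \rho_\Psi^{4/3}\,dx$. By antisymmetry of $\Psi$, the kinetic sum dominates $\Tr^-(H_{A,W})$; by the magnetic Lieb--Thirring inequality \textup{(\ref{26-4-22})} the exchange remainder is controlled by an $O(Z^{5/3})$ term plus a small constant times $\int|\nabla\times A|^2\,dx$ which can be absorbed into the magnetic energy at the price of shrinking its coefficient from $1/\alpha$ to $1/\alpha - C_1$. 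Choosing $\rho = \rho^\TF$, $W = W^\TF$ makes the Hartree--like terms collapse to $\cE^\TF$ by \textup{(\ref{26-4-31})}, and the rescaling \textup{(\ref{26-4-29})}--\textup{(\ref{26-4-30})} reduces $\Tr^-(H_{A,W^\TF})$ (with the added magnetic energy, minimized over $A$) to exactly the object controlled by Theorem~\ref{thm-26-3-24}; this yields the lower bound with error $Z^{4/3}(R_1+R_2)$, as claimed in \textup{(\ref{26-4-27})}.

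For the upper bound I would use the Slater determinant of the first $N'\le N$ negative eigenfunctions of $H_{A,W^\TF}$, with $A$ a minimizer (or its mollification) for that operator. This yields \textup{(\ref{26-4-32})}--\textup{(\ref{26-4-33})}, and the regularity of $A$ provided by Section~\ref{sect-26-3} combined with Theorem~\ref{thm-26-3-24} gives the counting function asymptotics \textup{(\ref{26-4-34})} after the same rescaling. The dichotomy of Subsubsection~\ref{book_new-sect-24-4-2-1}.1 then controls the chemical potential: either $N\ge Z-C_0 Z^{2/3}$ (so we take $\mu=0$ and $|\lambda_{N'}|\le C_1 Z^{8/9}$) or $|\lambda_N-\nu|\le C_1 |\nu|^{1/4}Z^{2/3}$. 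Applying Theorem~\ref{thm-26-3-24} to $H_{A,W^\TF}$ then gives the leading Scott--corrected expansion \textup{(\ref{26-4-35})}; the leftover $\frac{1}{2}\D(\rho_\Psi-\rho,\rho_\Psi-\rho)$ splits into \textup{(\ref{26-4-36})}--\textup{(\ref{26-4-38})}, each of which is $O(Z^{5/3})$ by the argument of Subsubsection~\ref{book_new-sect-24-4-2-2}.2 once the regularity of the minimizer $A$ is in hand, completing part~\ref{thm-26-4-3-i}.

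Part~\ref{thm-26-4-3-ii} is then a direct substitution: \textup{(\ref{26-4-10})} together with \textup{(\ref{26-4-28})} gives $a\ge 1$ in the rescaled variables, so the first branches of \textup{(\ref{26-3-98})} and \textup{(\ref{26-3-99})} apply, and multiplying $R_1+R_2$ by $Z^{4/3}$ with $h=Z^{-1/3}$, $\kappa=\alpha Z$ produces the three explicit terms $\alpha|\log(\alpha Z)|^{1/3} Z^{25/9}$, $Z^{5/3}$, and $\alpha a^{-3} Z^2$ displayed in \textup{(\ref{26-4-40})}.

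The main technical obstacle is the absorption of the Lieb--Oxford exchange error in the lower bound: the magnetic Lieb--Thirring inequality \textup{(\ref{26-4-21})} naturally couples $\int \rho_\Psi^{4/3}\,dx$ to $\int|\nabla\times A|^2\,dx$, and one must verify carefully that, after using the a~priori bound \textup{(\ref{26-4-24})}, only an $O(Z^{5/3})$ energy is lost and only a small multiple of the magnetic energy is consumed, without destroying the $1/\alpha$ coefficient needed to activate Theorem~\ref{thm-26-3-24}. A secondary but essential point is checking that the assumptions underlying Section~\ref{sect-26-3}—in particular the decay condition \textup{(\ref{26-3-36})} with $\nu>4/3$ for $W^\TF$ at the relevant scales—hold after the rescaling, which is where the separation hypothesis \textup{(\ref{26-4-10})} becomes critical on the lower-bound side.
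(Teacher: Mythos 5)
Your proposal follows essentially the same route as the paper: the lower bound via the Lieb--Oxford inequality, the magnetic Lieb--Thirring absorption of $\int\rho_\Psi^{4/3}$ into a shrunk magnetic-energy coefficient, the choice $\rho=\rho^\TF$, $W=W^\TF$ collapsing the Hartree terms to $\cE^\TF$, and the rescaling $\textup{(\ref{26-4-29})}$--$\textup{(\ref{26-4-30})}$ reducing to Theorem~\ref{thm-26-3-24}; the upper bound via the Slater determinant of eigenfunctions of $H_{A,W^\TF}$, the counting asymptotics $\textup{(\ref{26-4-34})}$, the chemical-potential dichotomy, and the splitting of $\D(\rho_\Psi-\rho,\rho_\Psi-\rho)$ into $\textup{(\ref{26-4-36})}$--$\textup{(\ref{26-4-38})}$; and part~\ref{thm-26-4-3-ii} by substituting $a\ge 1$ into the branches of $R_1,R_2$. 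This is exactly the argument in Sections~\ref{sect-26-4-2}--\ref{sect-26-4-3} of the paper, and your identification of the Lieb--Thirring absorption as the delicate step is correct.
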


Recall that $\cE^\TF_N$ is a \emph{Thomas-Fermi energy\/} and $S(Z_m)Z_m^2$ are magnetic \emph{Scott correction terms\/}\index{Scott correction terms}.

\begin{theorem}\label{thm-26-4-4}
\begin{enumerate}[label=(\roman*), fullwidth]
\item\label{thm-26-4-4-i}
Let assumptions \textup{(\ref{26-4-7})} and  \textup{(\ref{26-4-8})}  be fulfilled and let $\Psi=\Psi_{\mathsf{A}}$ be a ground state for a near optimizer $\mathsf{A}$ of the original multiparticle problem. Then
\begin{equation}
\D(\rho_\Psi-\rho^\TF,\,  \rho_\Psi-\rho^\TF)\le CZ^{\frac{5}{3}};
\label{26-4-41}
\end{equation}
\item\label{thm-26-4-4-ii}
Furthermore, as $b\ge Z^{-\frac{1}{3}}$
\begin{equation}
\D(\rho_\Psi-\rho^\TF,\,  \rho_\Psi-\rho^\TF)\le CZ^{\frac{5}{3}}\bigl(Z^{-\delta} +(bZ^{\frac{1}{3}})^{-\delta}+ (\alpha Z)^{\delta}\bigr).
\label{26-4-42}
\end{equation}
\end{enumerate}
\end{theorem}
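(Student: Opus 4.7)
The plan is to recover an upper bound on $\D(\rho_\Psi - \rho^\TF, \rho_\Psi - \rho^\TF)$ from the gap between the lower and upper bounds on $\E^*_N$ established in Sections~\ref{sect-26-4-2} and~\ref{sect-26-4-3}, by retaining the non-negative term $\tfrac{1}{2}\D(\rho - \rho_\Psi, \rho - \rho_\Psi)$ with $\rho = \rho^\TF$ that was discarded in passing from (\ref{26-4-18}) to (\ref{26-4-26}). That term is precisely the quantity we wish to control, so carrying it through the derivation converts a lower bound on the energy into an upper bound on $\D(\rho_\Psi - \rho^\TF, \rho_\Psi - \rho^\TF)$ once the principal terms cancel against Theorem~\ref{thm-26-4-3}.

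Proceeding concretely: starting from (\ref{26-4-18}) with $\rho = \rho^\TF$, $W = W^\TF$, I would use the magnetic Lieb--Thirring bound (\ref{26-4-23}) to replace $-C\int \rho_\Psi^{4/3}$ by $-CZ^{5/3} - C_1\int|\nabla \times \mathsf{A}|^2$, absorb the latter into half of the magnetic energy (the loss $\alpha\to 2\alpha$ perturbs the Scott terms only by $O(\alpha Z^2) = O(Z)$ via Proposition~\ref{prop-26-3-20}), invoke antisymmetry to replace $\sum_j(H_{x_j}\Psi,\Psi)$ by $\Tr^-(H_{\mathsf{A},W^\TF})$, and then apply Theorem~\ref{thm-26-3-24} (with $\kappa = \alpha Z$, $h = Z^{-1/3}$, $a$ as in (\ref{26-4-28})). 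Combined with the identity (\ref{26-4-31}), this produces
\begin{equation*}
\E^*_N \ge \cE^\TF + \sum_{1\le m\le M} 2Z_m^2 S(2\alpha Z_m) + \tfrac{1}{2}\D(\rho_\Psi - \rho^\TF, \rho_\Psi - \rho^\TF) - C Z^{5/3} - C Z^{4/3}(R_1+R_2).
\end{equation*}
Subtracting the upper bound of Theorem~\ref{thm-26-4-3}(i) then yields $\D(\rho_\Psi - \rho^\TF, \rho_\Psi - \rho^\TF) \le CZ^{5/3} + CZ^{4/3}(R_1+R_2)$, which is the basic estimate we will sharpen. For part~(ii), the hypothesis $b \ge Z^{-1/3}$ gives $a \ge 1$, so $R_1 = h^{-1} + \kappa|\log\kappa|^{1/3}h^{-4/3}$ and $R_2 = \kappa h^{-2}a^{-3}$; the three summands in $Z^{4/3}(R_1+R_2)$, namely $Z^{5/3}$, $\alpha Z^{25/9}|\log(\alpha Z)|^{1/3}$, and $\alpha Z^3 a^{-3}$, each acquire one of the three smallness factors $Z^{-\delta}$, $(\alpha Z)^\delta$, $(bZ^{1/3})^{-\delta}$ after applying the refinements of Remark~\ref{rem-26-3-25}\ref{rem-26-3-25-ii} where these are valid, producing (\ref{26-4-42}) with a uniform $\delta>0$.

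The main obstacle is part~(i): the Scott-matched inequality just derived contains $CZ^{4/3}(R_1+R_2)$, and the contribution $Z^{4/3}\kappa|\log\kappa|^{1/3}h^{-4/3} \asymp \alpha Z^{25/9}|\log(\alpha Z)|^{1/3}$ is \emph{not} uniformly $\le CZ^{5/3}$ on the full range $\alpha Z \le \kappa^*$, since $Z^{25/9}/Z^{5/3} = Z^{10/9}$ is unbounded. To bypass this I would replace the Scott-sharp lower bound on $\Tr^-(H_{\mathsf{A},W^\TF}) + \tfrac{1}{2\alpha}\int|\nabla \times \mathsf{A}|^2$ by the cruder Weyl-principal lower bound (remainder $O(Z^{5/3})$ without the Scott term) and combine it with the convexity of the Thomas--Fermi functional, which gives $\cE^\TF[\rho_\Psi] - \cE^\TF[\rho^\TF] \ge \tfrac{1}{2}\D(\rho_\Psi - \rho^\TF, \rho_\Psi - \rho^\TF)$ from the quadratic Coulomb direct term. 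This route loses the Scott correction on both the lower-bound and upper-bound sides of the cancellation (so they still cancel) but retains the $\D$-term on the left with total error $O(Z^{5/3})$ from Lieb--Oxford, magnetic Lieb--Thirring and the Weyl remainder. Verifying that this Weyl/convexity variant is genuinely compatible with the magnetic setting --- in particular, checking that the kinetic convexity needed to pass from the quantum energy to $\cE^\TF[\rho_\Psi]$ survives the magnetic perturbation uniformly in $\alpha \le \kappa^* Z^{-1}$ --- and, for part~(ii), confirming that the three refinement hypotheses of Remark~\ref{rem-26-3-25}\ref{rem-26-3-25-ii} carve up the parameter region so that at least one of the three small factors in (\ref{26-4-42}) applies everywhere, is where the bulk of the technical work lies.
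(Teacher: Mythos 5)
Your basic idea — keep the non-negative term $\tfrac{1}{2}\D(\rho-\rho_\Psi,\rho-\rho_\Psi)$ in (\ref{26-4-18}) and subtract the upper bound — is correct, and you correctly spot that applying Theorem~\ref{thm-26-3-24} to the trace term introduces an error $Z^{4/3}(R_1+R_2)$ that can reach $Z^{16/9}|\log Z|^{1/3}$ on the full range $\alpha Z\le\kappa^*$, which overwhelms $Z^{5/3}$. But your proposed fix does not repair this. If you downgrade to a ``crude Weyl'' bound for $\Tr^-(H_{A,W^\TF})$, the remainder is the one of Proposition~\ref{prop-26-3-1}, namely $O(h^{-2})$ in semiclassical units, i.e.\ $O(Z^{4/3}\cdot Z^{2/3})=O(Z^{2})$ after the rescaling \ref{26-4-29}; that is even larger than the Scott-sharp remainder, not smaller, so the loss is worse, not better. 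The further idea of replacing the spectral quantity $\Tr^-(H_{A,W^\TF})$ by $\cE^\TF[\rho_\Psi]$ via a magnetic kinetic Lieb--Thirring inequality and then invoking convexity of $\cE^\TF$ is a genuinely different strategy that the paper does not use, and as you yourself note its uniformity in $\alpha$ is not verified here; more to the point, it is not needed.

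The observation you are missing is that the offending quantity need not be evaluated asymptotically at all. Both the lower estimate of Subsection~\ref{sect-26-4-2} and the upper estimate of Subsection~\ref{sect-26-4-3} contain, after reduction, the \emph{same} term
\begin{equation*}
\inf_A\Bigl(\Tr^-\bigl(H_{A,W^\TF}+\mu\bigr)+\frac{1}{\alpha}\int|\nabla\times A|^2\,dx\Bigr),
\end{equation*}
with the upper estimate constructed by taking the Slater determinant built from eigenfunctions of $H_{A',W^\TF}$ with $A'$ a minimizer of this very expression, and the lower estimate bounded below by the infimum over $A$ after absorbing the Lieb--Thirring loss into the magnetic energy ($\alpha\to\alpha'$, a perturbation of size $O(\alpha Z^2)=O(Z)$ by Proposition~\ref{prop-26-3-20}). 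When you subtract, this common term cancels identically — no asymptotics, no Scott correction, no $R_1+R_2$ enter. Everything else that survives the subtraction (Lieb--Oxford $O(Z^{5/3})$, the $\D$-terms (\ref{26-4-36})--(\ref{26-4-38}) from the Slater side, the $\mu$- and $\lambda_N$-shifts, the $\alpha$-perturbation) is $O(Z^{5/3})$, and the lower bound retains $\tfrac{1}{2}\D(\rho_\Psi-\rho^\TF,\rho_\Psi-\rho^\TF)$, which yields (\ref{26-4-41}) directly. Part~(ii) is then obtained not from $R_1+R_2$ refinements but by sharpening the $O(Z^{5/3})$ pieces themselves: improving (\ref{26-4-34}) to (\ref{26-4-44}), tightening the estimates on (\ref{26-4-36})--(\ref{26-4-38}), and accommodating the Dirac term on both sides.
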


\begin{proof}
(i) Note that all the terms in estimates from below and from above are $O(Z^{\frac{5}{3}})$ except the common term
\begin{equation}
\Tr^-(H_{A,W}+\mu) +\frac{1}{\alpha} \int |\nabla \times A|^2\,dx
\label{26-4-43}
\end{equation}
where $A$ is a minimizer for this term and therefore estimate (\ref{26-4-41}) has been proven because estimate from below also contains $\D(\rho_\Psi-\rho^\TF,\,  \rho_\Psi-\rho^\TF)$.

\bigskip\noindent
(ii) To prove Statement~\ref{thm-26-4-4-ii} one needs

\medskip\noindent
(a) To improve estimate (\ref{26-4-34}) to
\begin{equation}
\N(\mu)=\int (W +\mu )_+^{\frac{3}{2}}\, dx + O\bigl(Z^{\frac{2}{3}} \bigl[Z^{-\delta}+ (bZ^{\frac{1}{3}})^{-\delta}+
(\alpha Z)^{\delta}\bigr]\bigr),
\label{26-4-44}
\end{equation}
(b) To estimate terms (\ref{26-4-36})--(\ref{26-4-38}) by
the  right-hand expression of (\ref{book_new-24-4-44}), and

\medskip\noindent
(c) To accommodate Dirac term in both upper and lower estimates.

\bigskip
Tasks (a), (b) are easy and we leave it to the reader (cf. arguments of Subsection~\ref{book_new-24-4-3}); we use that after rescaling effective magnetic field intensity becomes $O(\alpha Z)$ in zone $\{x: \ell (x)\asymp Z^{-\frac{1}{3}}\}$ due to already established estimates to $A$.

To fulfill (c) note that in the upper estimate we already have term
\begin{equation}
-\frac{1}{2}\tr \iint |x-y|^{-1}e _N(x,y)e_N ^\dag (x,y)\,dx dy.
\label{26-4-45}
\end{equation}
On the other hand, in virtue of Theorem~\ref{book_new-thm-25-A-2} we replace  in the lower estimate term
$-C\int \rho_\Psi^{\frac{4}{3}}(x)\,dx$ by (\ref{26-4-6}) with  $O(Z^{\frac{5}{3}-\delta})$ error (again we leave easy details to the reader).

One can prove by the same arguments as as in the non-magnetic case that for $\alpha Z\le \kappa^*$ it is $\Dirac + O(Z^{\frac{5}{3}-\delta})$.
\end{proof}

Finally, combining arguments sketched in the proof of Theorem~\ref{thm-26-4-4} with the improved estimate of (\ref{26-4-43}) (see Theorem~\ref{thm-26-3-24}\ref{thm-26-3-24-ii}) we arrive to

\begin{theorem}\label{thm-26-4-6}
Let assumptions \textup{(\ref{26-4-8})} and \textup{(\ref{26-4-10})} be fulfilled, and let
$\alpha  \le Z^{-\frac{10}{9}}|\log Z|^{-\frac{1}{3}}$. Then
\begin{multline}
\E^*_N = \cE^\TF_N + \sum_{1\le m\le M} 2Z_m^2 S(\alpha Z_m) +\Dirac+\Schwinger +\\
O\bigl(\alpha |\log (\alpha Z)|^{\frac{1}{3}}  Z^{\frac{25}{9}}+ Z^{\frac{5}{3}-\delta} +  \alpha b^{-3}Z ^2 \bigr)
\label{26-4-46}
\end{multline}
where $\Dirac$ and $\Schwinger$ are \emph{Dirac\/} and \emph{Schwinger correction terms\/} defined exactly as in non-magnetic case by \textup{(\ref{book_new-24-1-29})} and  \textup{(\ref{book_new-24-1-30})} respectively.
\end{theorem}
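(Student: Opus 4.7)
The plan is to upgrade the arguments of Sections~\ref{sect-26-4-2} and~\ref{sect-26-4-3} on three fronts simultaneously. First, I would replace the local Weyl term $\Weyl_1$ by the sharper $\Weyl^*_1$ in the semiclassical treatment of the operator $H_{A,W}$, which, once summed over partition elements and integrated near each nucleus, produces the Schwinger correction. Second, I would replace the crude Lieb--Oxford remainder $-C\int\rho_\Psi^{4/3}\,dx$ in the lower estimate by the genuine Dirac term via Theorem~\ref{book_new-thm-25-A-2}. Third, the resulting exchange-like quantities will be controlled using the sharper $\D$-estimate \textup{(\ref{26-4-42})} of Theorem~\ref{thm-26-4-4}\ref{thm-26-4-4-ii}. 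The hypothesis $\alpha\le Z^{-10/9}|\log Z|^{-1/3}$ is exactly what, after the standard rescaling $h=Z^{-1/3}$, $\kappa=\alpha Z$ of \textup{(\ref{26-4-29})}--\textup{(\ref{26-4-30})}, translates into the condition $\kappa=o(h^{1/3}|\log h|^{-1/3})$ required to invoke Theorem~\ref{thm-26-3-24}\ref{thm-26-3-24-ii}.

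For the upper bound, I would plug into \textup{(\ref{26-4-32})} a Slater determinant built from the first $N$ (or $N'$) eigenfunctions of $H_{A,W^\TF}-\mu$, where $A$ is a near-minimizer of \textup{(\ref{26-4-43})}. The first four terms of \textup{(\ref{26-4-32})}, augmented with the magnetic energy, yield $\cE^\TF_N + \sum_m 2Z_m^2 S(\alpha Z_m)$ plus the Schwinger contribution from Theorem~\ref{thm-26-3-24}\ref{thm-26-3-24-ii}. The previously discarded negative exchange term $-\tfrac12\tr\iint|x-y|^{-1}e_N(x,y)e_N^\dagger(x,y)\,dx\,dy$ must now be retained; using the regularity estimates of Propositions~\ref{prop-26-3-4}, \ref{prop-26-3-7} and~\ref{prop-26-3-9} for the near-minimizer $A$, its semiclassical analysis proceeds as in the non-magnetic Chapter~\ref{book_new-sect-24} and produces $\Dirac$ up to $O(Z^{5/3-\delta})$, because the effective magnetic field intensity in the rescaled picture is $O(\alpha Z)=o(1)$.

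For the lower bound, I would follow \textup{(\ref{26-4-13})}--\textup{(\ref{26-4-26})} verbatim but substitute Theorem~\ref{book_new-thm-25-A-2} for the raw Lieb--Oxford inequality~\textup{(\ref{26-4-15})}, as already sketched in step~(c) of the proof of Theorem~\ref{thm-26-4-4}. This replaces $-C\int\rho_\Psi^{4/3}\,dx$ by $\Dirac(\rho_\Psi)$ plus an $O(Z^{5/3-\delta})$ remainder together with terms that are Cauchy--Schwarz-bounded by powers of $\D(\rho_\Psi-\rho^\TF,\rho_\Psi-\rho^\TF)$, which are handled by \textup{(\ref{26-4-42})}. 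Combining this with the improved local trace asymptotics of Theorem~\ref{thm-26-3-24}\ref{thm-26-3-24-ii} (yielding Schwinger) and the sharpened eigenvalue count~\textup{(\ref{26-4-44})} gives the matching lower bound. The three contributions to the remainder are then: $\alpha|\log(\alpha Z)|^{1/3}Z^{25/9}$ from rescaling $R_1$ in Theorem~\ref{thm-26-3-24}; $\alpha b^{-3}Z^2$ from rescaling $R_2$; and $Z^{5/3-\delta}$ from the Dirac--exchange analysis and from~\textup{(\ref{26-4-42})}.

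The main obstacle will be step~(c): namely, showing that once $A$ is the genuine multiparticle near-minimizer rather than a one-body minimizer, the pointwise and H\"older regularity of Section~\ref{sect-26-3-2} still suffice to propagate the non-magnetic Dirac/exchange estimate with the precise $O(Z^{5/3-\delta})$ control. The delicate point is that both the $\Dirac$ integrand $\rho_\Psi^{4/3}$ (via Theorem~\ref{book_new-thm-25-A-2}) and the Slater exchange kernel $e_N(x,y)$ must be analysed on the same rescaled microscale in which the Schwinger argument operates; here the saving is that $\alpha Z = o(Z^{-1/9}|\log Z|^{-1/3})$ places us strictly inside the regime where the magnetic perturbation of the classical Hamiltonian flow associated with $H_{0,W^\TF}$ is $o(1)$, so that the non-periodicity condition \textup{(\ref{26-2-62})} underlying $\Weyl^*_1$ and the non-magnetic Dirac analysis both transfer without modification.
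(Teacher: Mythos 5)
Your proposal follows essentially the same route as the paper: the paper's own proof is a single sentence pointing to the combination of step~(c) in the proof of Theorem~\ref{thm-26-4-4} (retain the exchange term \textup{(\ref{26-4-45})} in the upper bound, substitute Theorem~\ref{book_new-thm-25-A-2} for Lieb--Oxford in the lower bound) with the sharpened trace estimate of Theorem~\ref{thm-26-3-24}\ref{thm-26-3-24-ii}, and your three-front plan (Schwinger via $\Weyl^*_1$, Dirac via Theorem~\ref{book_new-thm-25-A-2}, control via the refined $\D$-bound \textup{(\ref{26-4-42})}) is exactly that. One small clarification: in the upper bound the magnetic potential $A$ fed to the Slater determinant and to the exchange kernel $e_N(x,y)$ is a \emph{one-body} near-minimizer of the localized trace functional \textup{(\ref{26-4-43})}, so the regularity of Section~\ref{sect-26-3-2} applies directly; the multiparticle near-minimizer $\mathsf{A}$ enters only through $\rho_\Psi$ in the lower bound, where it is controlled by the $\D$-estimate without appealing to smoothness of $\mathsf{A}$ --- so the obstacle you flag in your last paragraph, while genuine for the ionization estimate from below (see Remark~\ref{rem-26-5-5}), is not an issue here.
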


\section{Free nuclei model}
\label{sect-26-4-5}

Consider now free nuclei model (see Subsubsection~\ref{book_new-sect-24-4-4-2}.2).

\begin{theorem}\label{thm-26-4-7}
Let us consider $\y_m=\y_m^*$ minimizing the full energy
\begin{gather}
\widehat{\E}^*_N\Def \E^*_N +\sum _{1\le m <m'\le M} Z_mZ_{m'}|\y_m-\y_{m'}|^{-1}.
\label{26-4-47}\\
\shortintertext{Assume that}
Z_m \asymp N\qquad \forall m=1,\ldots,M.
\label{26-4-48}
\shortintertext{Then}
b\ge \min \bigl( Z^{-\frac{5}{21}+\delta},\,
Z^{-\frac{5}{21}}(\alpha Z)^{-\delta},\,
\alpha^{-\frac{1}{4}}Z^{-\frac{1}{2}}\bigr)
\label{26-4-49}
\end{gather}
and in the remainder estimates in \textup{(\ref{26-4-40})} and~\textup{(\ref{26-4-46})} one can skip $b$-connected terms; so we arrive to
\begin{equation}
\E^*_N = \cE^\TF_N + \sum_{1\le m\le M} 2Z_m^2 S(\alpha Z_m) +
O\bigl(\alpha |\log (\alpha Z)|^{\frac{1}{3}}  Z^{\frac{25}{9}}+ Z^{\frac{5}{3}}\bigr)
\label{26-4-50}
\end{equation}
and
\begin{multline}
\E^*_N = \cE^\TF_N + \sum_{1\le m\le M} 2Z_m^2 S(\alpha Z_m) +\Dirac+\Schwinger +\\
O\bigl(\alpha |\log (\alpha Z)|^{\frac{1}{3}}  Z^{\frac{25}{9}}+ Z^{\frac{5}{3}-\delta} \bigr)
\label{26-4-51}
\end{multline}
respectively and also the same asymptotics with $\widehat{\E}^*_N$ and $\widehat{\cE}^\TF_N$ instead of $\E^*_N$ and $\cE^\TF_N$.
\end{theorem}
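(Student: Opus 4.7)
The two parts of the theorem — the separation lower bound (\ref{26-4-49}) and the simplified asymptotics (\ref{26-4-50})--(\ref{26-4-51}) — are linked by a single mechanism: at the free‑nuclei minimizer, Thomas--Fermi antibinding (Teller/Brezis--Lieb) competes with the error terms in the already‑established asymptotics of Theorems~\ref{thm-26-4-3} and~\ref{thm-26-4-6}. Once (\ref{26-4-49}) is derived, the asymptotics with $b$--dependent remainder terms removed are obtained by plugging the bound back into (\ref{26-4-40}) and (\ref{26-4-46}); passing between $\E^*_N, \cE^\TF_N$ and the hatted versions only adds the nuclear repulsion on both sides and so is automatic. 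So the real task is the proof of (\ref{26-4-49}).

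\textbf{Comparison step.} First, I would compare the optimizer $\y^*=(\y^*_1,\ldots,\y^*_M)$ against a reference ``infinitely separated'' configuration $\y^{(\infty)}$ (atoms at mutual distances $\to\infty$, with neutral splitting $N_m=Z_m$), for which $\widehat{\cE}^\TF_N(\y^{(\infty)})=\bar{\cE}^\TF_N\Def \sum_m\widehat{\cE}^\TF_{Z_m}$ and $a^{(\infty)}=\infty$ kills the $\alpha a^{-3}Z^2$ remainder term. By optimality, $\widehat{\E}^*_N(\y^*)\le \widehat{\E}^*_N(\y^{(\infty)})$. Applying Theorem~\ref{thm-26-4-3}\ref{thm-26-4-3-ii} to both sides (the Scott contributions depend only on $Z_m$ and cancel) and rearranging yields
\begin{equation*}
\widehat{\cE}^\TF_N(\y^*)-\bar{\cE}^\TF_N\le C\bigl(\alpha|\log(\alpha Z)|^{\frac{1}{3}}Z^{\frac{25}{9}}+Z^{\frac{5}{3}}+\alpha b^{*\,-3}Z^{2}\bigr),
\end{equation*}
where $b^*=\min_{m<m'}|\y^*_m-\y^*_{m'}|$.

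\textbf{Teller/Brezis--Lieb input.} The key nonperturbative ingredient is the quantitative Teller no-binding theorem for neutral TF molecules (Brezis--Lieb): under (\ref{26-4-48}), for well-separated $\y$,
\begin{equation*}
\widehat{\cE}^\TF_N(\y)-\bar{\cE}^\TF_N\ge c\, b^{-7},
\end{equation*}
with $c>0$ independent of $Z$ (the Sommerfeld tail of $\rho^\TF_{Z_m}$ is $Z$-independent at large distance, so the pairwise interaction scales as $b^{-7}$). Combined with the display above this gives the master inequality
\begin{equation*}
c\,b^{*\,-7}\le C\bigl(\alpha|\log(\alpha Z)|^{\frac{1}{3}}Z^{\frac{25}{9}}+Z^{\frac{5}{3}}+\alpha b^{*\,-3}Z^{2}\bigr).
\end{equation*}
A three-way case analysis on which term dominates the right-hand side (equivalently, using that $b^*\gtrsim (A+B+D)^{-1/7}\asymp \min(A^{-1/7},B^{-1/7},D^{-1/7})$) yields exactly the three alternatives in (\ref{26-4-49}); the absorption of logarithmic factors accounts for the $Z^\delta$ and $(\alpha Z)^{-\delta}$ flexibility in the first two entries.

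\textbf{Closing the proof and main obstacle.} With $b^*$ satisfying (\ref{26-4-49}), one checks directly that $\alpha b^{*\,-3}Z^2$ is dominated by $\alpha|\log(\alpha Z)|^{\frac{1}{3}}Z^{\frac{25}{9}}+Z^{\frac{5}{3}}$ (respectively, by $\alpha|\log(\alpha Z)|^{\frac{1}{3}}Z^{\frac{25}{9}}+Z^{\frac{5}{3}-\delta}$ under the stronger hypothesis of Theorem~\ref{thm-26-4-6}), so the $b$--dependent remainder term may be dropped from (\ref{26-4-40}) and (\ref{26-4-46}); this gives (\ref{26-4-50}) and (\ref{26-4-51}). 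The variant for $\widehat{\E}^*_N$ and $\widehat{\cE}^\TF_N$ is immediate since the nuclear repulsion enters additively on both sides of the two-sided asymptotics. I expect the main delicate point to be the Brezis--Lieb lower bound $cb^{-7}$: it is a classical statement for non-magnetic neutral TF, but it must be applied under the hypothesis (\ref{26-4-48}) uniformly in $Z$, and one should verify that the $M=O(1)$ plus $Z_m\asymp N$ assumption makes the constant genuinely independent of $Z$ (so that no spurious $Z$-powers creep into the exponents $-5/21$ and $-1/2$).
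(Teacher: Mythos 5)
Your comparison step and the Teller/Brezis--Lieb input are exactly what the paper uses to derive its preliminary inequality (the analogue of your ``master inequality'' is the paper's \textup{(\ref{26-4-53})}, with the strong non‑binding theorem giving the $cb^{-7}$ lower bound). But the claim that ``a three‑way case analysis\ldots yields exactly the three alternatives in (\ref{26-4-49})'' is where the proof has a genuine gap. Solving $cb^{-7}\le C\bigl(\alpha|\log(\alpha Z)|^{1/3}Z^{25/9}+Z^{5/3}+\alpha b^{-3}Z^2\bigr)$ term by term gives
\begin{equation*}
b\gtrsim \min\bigl(Z^{-\frac{5}{21}},\ \alpha^{-\frac{1}{7}}|\log(\alpha Z)|^{-\frac{1}{21}}Z^{-\frac{25}{63}},\ \alpha^{-\frac{1}{4}}Z^{-\frac{1}{2}}\bigr),
\end{equation*}
and the middle alternative is strictly \emph{weaker} than the claimed $Z^{-5/21}(\alpha Z)^{-\delta}$ whenever $\alpha\gtrsim Z^{-10/9-\delta'}$: write $\alpha=Z^{-\beta}$ and compare exponents, $\alpha^{-1/7}Z^{-25/63}=Z^{\beta/7-10/63}\cdot Z^{-5/21}$, and $\beta/7-10/63<0$ exactly when $\beta<10/9$. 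The paper makes this point explicitly (``Unfortunately for $\alpha\ge Z^{-10/9-\delta'}$ it is not as good as we claimed in (\ref{26-4-49}). Still this estimate implies both (\ref{26-4-50}) and (\ref{26-4-51}).''). So your argument suffices for the asymptotics (\ref{26-4-50})--(\ref{26-4-51}) — and you are right that these, plus adding nuclear repulsion on both sides, are essentially automatic — but it does not prove (\ref{26-4-49}).

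The missing ingredient is a bootstrap. Having obtained a weak a priori separation (the paper records $b\gg Z^{-2/7}$), one must not re‑use the crude two‑sided asymptotics but instead prove a sharper \emph{decoupling} estimate at the level of one‑particle traces (following Proposition~\ref{book_new-prop-25-8-12}): for the molecular minimizer $A$ and atomic potentials $W_m$,
\begin{multline*}
\Bigl|\Tr^-(H_{A,W}+\mu)-\sum_m\Tr^-(H_{A_m,W_m}+\mu)\\
-\int\bigl(\Weyl(\ldots)-\sum_m\Weyl(\ldots)\bigr)\,dx\Bigr|\le
CZ^{\frac{5}{3}}\bigl(Z^{-\delta}+(\alpha Z)^\delta\bigr),
\end{multline*}
where the cutoff potentials $A_m=A\,\phi(b^{-1}|x-\y_m|)$ introduce an extra ``atomization'' error $C\alpha b^{-3}Z^2$ in the magnetic energy. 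Feeding this into the variational comparison replaces the $\alpha|\log(\alpha Z)|^{1/3}Z^{25/9}$ term of the crude inequality with $CZ^{5/3}(Z^{-\delta}+(\alpha Z)^\delta)$, i.e.\ one gets
$b^{-7}\le CZ^{5/3}\bigl(Z^{-\delta}+(\alpha Z)^\delta\bigr)+C\alpha b^{-3}Z^2$, and only \emph{this} inequality delivers (\ref{26-4-49}) after the three‑way case analysis. Without the bootstrap the exponent $Z^{25/9}$ (coming from the $\kappa|\log\kappa|^{1/3}h^{-4/3}$ remainder in Theorem~\ref{thm-26-4-3}) is simply too large to control $b^{-7}$ at the claimed strength when $\alpha$ is near its allowed upper bound.
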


\begin{proof}
Optimization with respect to $\y_1,\ldots,\y_M$ implies
\begin{equation}
\E^* +\sum _{1\le m<m'\le M}
\frac{Z_mZ_{m'}}{|\y_m-\y_{m'}|}< \sum_{1\le m\le M}\E^*_m
\label{26-4-52}
\end{equation}
where $\E^*=\E^*(\y_1,\ldots,\y_M;Z_1,\ldots,Z_m,N)$ and
$\E^*_m=\E^*(\y_m, Z_m)$ are calculated for separate atoms. In virtue of theorem
\ref{thm-26-4-3}
\begin{multline}
\cE^\TF+ \sum _{1\le m<m'\le M} \frac{Z_mZ_{m'}}{|\x_m-\x_{m'}|} -
\sum_{1\le m\le M} \cE^\TF_m \le\\
C\alpha |\log (\alpha Z)|^{\frac{1}{3}} Z^{\frac{25}{9}}+ Z^{\frac{5}{3}}+
C\alpha b^{-3} Z ^2;
\label{26-4-53}
\end{multline}
however due to strong non-binding theorem in Thomas-Fermi theory the left-hand expression is $\gtrsim b^{-7}$ as $b\ge Z^{-\frac{1}{3}}$ and therefore (\ref{26-4-53}) implies
\begin{equation*}
b\gtrsim \min\bigl( Z^{-\frac{5}{21}},\,
\alpha^{-\frac{1}{7}}|\log (\alpha Z)|^{-\frac{1}{21}}Z^{-\frac{25}{63}}
{\color{gray} ,\,\alpha^{-\frac{1}{4}}Z^{-\frac{1}{2}}}\bigr)
\end{equation*}
where the third expression is larger than the second one for sure. Unfortunately for $\alpha \ge Z^{-\frac{10}{9}-\delta'}$ it is not as good as we claimed in (\ref{26-4-49}). Still this estimate implies both (\ref{26-4-50}) and (\ref{26-4-51}).

To prove (\ref{26-4-49}) we record that
$b\gg Z^{-\frac{2}{7}}$\,\footnote{\label{foot-26-19} There is no binding with $b\le Z^{-\frac{1}{3}}$ because remainder estimate is (better than) $CZ^2$ and binding energy excess is $\asymp Z^{\frac{7}{3}}$.} we employ arguments used in the proof of Proposition~\ref{book_new-prop-25-8-12} and prove that
\begin{multline*}
|\Tr^- (H_{A,W}+\mu) -\sum_{1\le m\le M}\Tr^- (H_{A,W_m}+\mu)-\\
\int \Bigl( \Weyl(H_{A,W}+\mu;\, x) -
\sum_{1\le m\le M}  \Weyl(H_{A,W_m}+\mu;\, x)\Bigr)\,dx|\le
CZ^{\frac{5}{3}}\bigl(Z^{-\delta}+ (\alpha Z)^\delta\bigr)
\end{multline*}
where $A$ be a minimizer for ``molecular'' expression (\ref{26-4-45}) and $W_m$ are atomic potentials. The same estimate holds if we replace
$\Tr^- (H_{A,W_m}+\mu)$ by $\Tr^- (H_{A_m,W_m}+\mu)$ with
$A_m= A \phi (b^{-1}|x-\y_m|)$ with $\phi \in \sC^\infty_0 (B(0,\frac{1}{3})$, equal $1$ in $B(0,\frac{1}{4})$. We leave an easy proof to the reader.

Then using lower estimate for $\inf \Spec_\fH (\sfH)$ and upper estimates for
$\inf \Spec_{\fH_m} (\sfH_m)$ through $\Tr^- (H_{A,W}+\mu)$ and
$\Tr^- (H_{A_m,W_m}+\mu')$ respectively (where $\sfH_m$ are associated with $H_{A_m,W_m}$) we arrive to
\begin{multline*}
\inf \Spec_\fH (\sfH) \ge
\sum_{1\le m\le M} \inf \Spec_{\fH_m} (\sfH_m)+
\cE^\TF -\sum_{1\le m\le M} \cE^\TF_m \\[3pt]
-CZ^{\frac{5}{3}}\bigl( Z^{-\delta}+(\alpha Z)^{\delta}\bigr)
\end{multline*}
and therefore
\begin{multline}
\E_{A} \ge \sum_{1\le m\le M} \E_{m,A_m} +
\cE^\TF -\sum_{1\le m\le M} \cE^\TF_m -\\
CZ^{\frac{5}{3}}\bigl( Z^{-\delta}+(\alpha Z)^{\delta}\bigr)-
C\alpha b^{-3} Z^2
\label{26-4-54}
\end{multline}
where the last term is due to replacement of
$\frac{1}{\alpha}\int |\nabla \times A|^2\,dx$ by ``atomized'' expressions
$\sum _{1\le m\le M}\frac{1}{\alpha}\int |\nabla \times A_m|^2\,dx$.

The last inequality (\ref{26-4-54}) then obviously holds with $A_m$ replaced by optimizers for ``atomic'' expressions (\ref{26-4-45}) and \emph{now\/} strong non-binding theorem implies that
\begin{equation*}
b^{-7}\le CZ^{\frac{5}{3}}\bigl( Z^{-\delta}+(\alpha Z)^{\delta}\bigr)+
C\alpha b^{-3} Z^2
\end{equation*}
which implies (\ref{26-4-49}) where we change $\delta>0$ as needed.
\end{proof}

\chapter{Miscellaneous problems}
\label{sect-26-5}

In our analysis in Sections~\ref{book_new-sect-24-5} and~\ref{book_new-sect-24-6} the crucial role was played by an estimate of $\D(\rho_\Psi-\rho^\TF,\, \rho_\Psi-\rho^\TF)$ and since what we have now (see Theorem~\ref{thm-26-4-4}) is (almost) as good as we had then, all arguments of these Sections still work with the minimal modifications. We leave most of the easy details to the reader but we need to deal with different magnetic fields for different $N$.

\section{Excessive negative charge}
\label{sect-26-5-1}

\begin{theorem}\footnote{\label{foot-26-20} cf. Theorem~\ref{book_new-thm-24-5-2}.}\label{thm-26-5-1}
Let condition \textup{(\ref{26-4-48})} be fulfilled.
\begin{enumerate}[label=(\roman*), fullwidth]
\item\label{thm-26-5-1-i}
In the framework of the fixed nuclei model let us assume that \newline $\I^*_N\Def \E^*_{N-1}-\E^*_N>0$. Then
\begin{equation}
(N-Z)_+\le CZ^{\frac{5}{7}}
\left\{\begin{aligned}
&1 \qquad &&\text{as\ \ } a\le Z^{-\frac{1}{3}},\\
&Z^{-\delta}+ (aZ^{\frac{1}{3}})^{-\delta}+(\alpha Z)^\delta
&&\text{as\ \ } a\ge Z^{-\frac{1}{3}};
\end{aligned}\right.
\label{26-5-1}
\end{equation}
\item\label{thm-26-5-1-ii}
In particular for a single atom and for molecule with
$a\ge Z^{-\frac{1}{3}+\delta}$
\begin{equation}
(N-Z)_+\le Z^{\frac{5}{7}}\bigl(Z^{-\delta}+(\alpha Z)^\delta\bigr);
\label{26-5-2}
\end{equation}
\item\label{thm-26-5-1-iii}
In the framework of  the free nuclei model let us assume that
$\widehat{\I}^*_N\Def \widehat{\E}^*_{N-1}-\widehat{\E}^*_N>0$. Then estimate \textup{(\ref{26-5-2})} holds.
\end{enumerate}
\end{theorem}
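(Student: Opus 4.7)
The plan is to carry the proof of Theorem~\ref{book_new-thm-24-5-2} (the non-magnetic case) over to the present setting, using Theorem~\ref{thm-26-4-4} as the magnetic analogue of the electrostatic input. The central observation is that the Benguria--Lieb--Ruskai moment argument is insensitive to replacing the kinetic term $|D|^2$ by the Pauli term $((D-\mathsf{A})\cdot\boldupsigma)^2$, provided the optimal magnetic field $\mathsf{A}$ is well-localized near the nuclei---which is exactly what Propositions~\ref{prop-26-3-7}--\ref{prop-26-3-9} and Remark~\ref{rem-26-3-21} provide.

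For Statement~\ref{thm-26-5-1-i}, I would fix a near-minimizer $\mathsf{A}=\mathsf{A}_N$ and a ground state $\Psi$ of $\sfH_{\mathsf{A},V}$ (genuinely an $\sL^2$-eigenfunction, since $\I^*_N>0$). Multiplying the eigenvalue equation by $|x_j|\Psi$ and symmetrizing in $j=1,\ldots,N$ produces, after integration by parts in the Pauli term and use of $\nabla\cdot\mathsf{A}=0$, the standard Ruskai-type inequality relating $(N-Z)_+$ to weighted moments of $\rho_\Psi-\rho^\TF$. The magnetic cross terms $\mathsf{A}\cdot\nabla|x_j|$ and $|\mathsf{A}|^2|x_j|$, together with the Zeeman contribution $\boldupsigma\cdot(\nabla\times\mathsf{A})\,|x_j|$, are bounded using $|\mathsf{A}|\le C\alpha Z\ell^{-2}$ and $|\partial\mathsf{A}|\le C\alpha Z\ell^{-3}$ from Proposition~\ref{prop-26-3-9} (after rescaling $x\mapsto xZ^{1/3}$) together with $\int|x|\rho^\TF\asymp Z^{2/3}$; all such corrections are $o(Z^{5/7})$ under~(\ref{26-4-8}). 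Combining the moment inequality with Theorem~\ref{thm-26-4-4}\ref{thm-26-4-4-ii} (or the weaker~(\ref{26-4-41}) when $a\le Z^{-1/3}$) yields~(\ref{26-5-1}). Statement~\ref{thm-26-5-1-ii} is the specialization $a\ge Z^{-1/3+\delta}$, where $(aZ^{1/3})^{-\delta}$ is absorbed into $Z^{-\delta}$.

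For Statement~\ref{thm-26-5-1-iii}, I would invoke the free-nuclei separation estimate~(\ref{26-4-49}) of Theorem~\ref{thm-26-4-7}, which in each of its three regimes gives $(bZ^{1/3})^{-\delta}\lesssim Z^{-\delta}+(\alpha Z)^{\delta}$ (after redefining $\delta$). The hypothesis $\widehat{\I}^*_N>0$ replaces $\I^*_N>0$ because the nuclear repulsion drops out of the $\Psi$-variational problem once the optimizing $\y_1,\ldots,\y_M$ are fixed; one verifies that the shift of optimal positions between $N$ and $N-1$ affects $(N-Z)_+$ only at order $o(Z^{5/7})$ by invoking the gradient bound on $\widehat{\E}$ with respect to $\y_m$ implicit in the proof of Theorem~\ref{thm-26-4-7}.

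The main obstacle is controlling, uniformly in $N$, the comparison between the minimizers $\mathsf{A}_N$ and $\mathsf{A}_{N-1}$: indeed, $\E^*_{N-1}-\E^*_N$ is a difference of infima over potentially different $\mathsf{A}$'s. By Proposition~\ref{prop-26-2-5} and the continuity of $S(\kappa)$ (Proposition~\ref{prop-26-3-20}), the discrepancy between magnetic-field energies of $\mathsf{A}_N$ and $\mathsf{A}_{N-1}$ is $O(Z^{5/3})$, which is absorbed by the error terms in~(\ref{26-5-1}). The analogous magnetic correction to the moment identity is controlled by the decay of $\mathsf{A}$ mentioned above; this is the most technical piece but reduces to an explicit computation once the estimates of Section~\ref{sect-26-3-2} are in hand.
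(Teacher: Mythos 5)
Your high-level plan agrees with the paper's: fix a near-minimizer $A=A_N$, establish existence of a genuine $N$-particle ground state, and feed the estimate on $\D(\rho_\Psi-\rho^\TF,\rho_\Psi-\rho^\TF)$ from Theorem~\ref{thm-26-4-4} into the argument of Theorem~\ref{book_new-thm-24-5-2}. But the last paragraph mishandles what you call the ``main obstacle.'' Once $A=A_N$ is fixed you never need $A_{N-1}$ at all: since $A_N$ minimizes $\E_N(\cdot)$ but need not minimize $\E_{N-1}(\cdot)$, the elementary inequality
\begin{equation*}
\I_N(A_N)\Def \E_{N-1}(A_N)-\E_N(A_N)\ge \E^*_{N-1}-\E^*_N=\I^*_N>0
\end{equation*}
hands you directly, with a \emph{single} fixed magnetic field, the strict ionization gap needed to run the HVZ/Zhislin argument and the excess-charge machinery. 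The paper uses exactly this minimax observation (and the analogous device of freezing $\underline{\y}$ in the free-nuclei case, which supersedes your gradient-bound argument in part~\ref{thm-26-5-1-iii}). Your attempt to instead estimate the ``discrepancy between the magnetic-field energies of $A_N$ and $A_{N-1}$'' as $O(Z^{5/3})$ both introduces an unnecessary step and is dimensionally mismatched to the right-hand side of~(\ref{26-5-1}), which is a particle number $\sim Z^{5/7}$ rather than an energy; you do not supply the conversion.

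The other point of divergence is the explicit $|x_j|$-moment computation with its ``magnetic cross terms'' and Zeeman contribution. The paper's claim that the Theorem~\ref{book_new-thm-24-5-2} argument ``is not non-magnetic field specific'' reflects the modular structure of that proof: the Lieb electrostatic inequality that controls the inter-electron and nuclear terms is purely about Coulomb potentials and never touches the kinetic form, and the Pauli kinetic term, once paired with an ordinary (not $|x|$-growing) cutoff $\theta(x_N)$, contributes only $\int\theta\,|((D-A)\cdot\boldupsigma)\Psi|^2\ge 0$ plus commutator terms supported where $\nabla\theta\ne 0$; the magnetic field then enters only through the one-particle spectral estimates already established in Sections~\ref{sect-26-2}--\ref{sect-26-3} and through the bound $Q$ on $\D(\rho_\Psi-\rho^\TF,\rho_\Psi-\rho^\TF)$. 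Your version, which multiplies by $|x_j|$ directly, reintroduces the Zeeman term $\boldupsigma\cdot(\nabla\times A)\,|x_j|$ as a sign-indefinite piece and then asserts without computation that it is $o(Z^{\frac{5}{7}})$; since $|\nabla\times A|\lesssim\alpha Z\,\ell^{-3}$ decays more slowly than $\rho^\TF\asymp\ell^{-6}$ at large $\ell$, that bound is not automatic and would need the TF-density weight to be tracked carefully. The paper's route sidesteps the issue entirely; if you insist on the $|x_j|$-moment variant, that tail estimate is a genuine gap to be filled, not a formality.
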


\begin{proof}
The proof of follows the proof of Theorem~\ref{book_new-thm-24-5-2} and since it is not non-magnetic field specific we find that $(N-Z)_+\le Q^{\frac{3}{7}}$ where $Q$ is an estimate for $\D(\rho_\Psi-\rho^\TF,\, \rho_\Psi-\rho^\TF)$ which we established already; recall that $\rho^\TF \asymp \ell^{-6}$ as
$\ell \gtrsim Z^{-\frac{1}{3}}$ also plays important role.

Here we pick up $A=A_N$ (exactly as in the analysis of free nuclei model we pick up $\underline{y}$ for $N$-electrons) and conclude that
\begin{equation*}
\I_N(A)\Def \E_{N-1}(A)- \E_{N} (A)\ge \E^*_{N-1}- \E^*_{N}>0
\end{equation*}
and then repeat arguments of the proof of Theorem~\ref{book_new-thm-24-5-2}.

We leave easy details to the reader.
\end{proof}

\section{Estimates for ionization energy}
\label{sect-26-5-2}

\begin{theorem}\footnote{\label{foot-26-21} cf. Theorem~\ref{book_new-thm-24-5-3}.}\label{thm-26-5-2}
Let condition \textup{(\ref{26-4-48})} be fulfilled and let
$N\ge Z-C_0 Z^{\frac{5}{7}}$. Then
\begin{enumerate}[label=(\roman*), fullwidth]
\item\label{thm-26-5-2-i}
In the framework of the fixed nuclei model
\begin{equation}
\I^*_N \le  CZ^{\frac{20}{21}}.
\label{26-5-3}
\end{equation}
\item\label{thm-26-5-2-ii}
In the framework of the free nuclei model with
$N\ge Z-C_0 Z^{\frac{5}{7}}\bigl(Z^{-\delta}+{\alpha Z}^{\delta}\bigr)$
\begin{equation}
\widehat{\I}_N^*\Def \widehat{\E}^*_{N-1}- \widehat{\E}^*_{N-1} \le Z^{\frac{20}{21}}\bigl(Z^{-\delta'}+(\alpha Z)^{\delta'}\bigr).
\label{26-5-4}
\end{equation}
\end{enumerate}
\end{theorem}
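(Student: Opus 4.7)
The plan is to adapt the argument of the non-magnetic analogue Theorem~\ref{book_new-thm-24-5-3}: reduce the bound on $\I^*_N$ to a bound on the Thomas--Fermi chemical potential $\nu$ and then invoke TF scaling. If $\I^*_N\le 0$ the estimate is vacuous, so we may assume $\I^*_N>0$. In case~\ref{thm-26-5-2-i}, Theorem~\ref{thm-26-5-1}\ref{thm-26-5-1-i} gives $(N-Z)_+\le CZ^{\frac{5}{7}}$, and together with the hypothesis $N\ge Z-C_0 Z^{\frac{5}{7}}$ this yields the two-sided bound $|N-Z|\le C_1 Z^{\frac{5}{7}}$. In case~\ref{thm-26-5-2-ii}, the refined Theorem~\ref{thm-26-5-1}\ref{thm-26-5-1-iii} combined with the strengthened hypothesis yields analogously $|N-Z|\le C_1 Z^{\frac{5}{7}}\bigl(Z^{-\delta}+(\alpha Z)^{\delta}\bigr)$.

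The variational upper bound is obtained as follows. Fix a near-minimizer $A_N$ of $\E^*_N$ and use it as a trial vector field for the $(N-1)$-electron problem. Since $\E^*_{N-1}\le \E_{N-1}(A_N)$,
\begin{equation*}
\I^*_N\;\le\;\E_{N-1}(A_N)-\E_N(A_N)\;\le\;-\lambda_N(A_N),
\end{equation*}
where $\lambda_N(A_N)$ is the $N$-th negative eigenvalue of the effective one-particle Pauli operator $H_{A_N,W}$ with $W$ the Hartree-type potential introduced in Section~\ref{sect-26-4-2}. Applying Theorem~\ref{thm-26-3-22} together with the regularity estimates on $A_N$ from Section~\ref{sect-26-3-2} gives the counting function asymptotics
\begin{equation*}
\N(\mu)=\int (W+\mu)_+^{\frac{3}{2}}\,dx+o(Z^{\frac{20}{21}})
\end{equation*}
uniformly for $\mu$ in a neighbourhood of $0$ at the TF scale. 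Inverting this identity and comparing with the TF constraint $\int (W^{\TF}+\nu)_+^{\frac{3}{2}}\,dx=N$ yields $|\lambda_N-\nu|=o(Z^{\frac{20}{21}})$. The classical Thomas--Fermi estimate $|\nu|\le C|Z-N|^{\frac{4}{3}}$, combined with the charge-excess bound of the previous paragraph, then produces~\textup{(\ref{26-5-3})}.

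For part~\ref{thm-26-5-2-ii} the same scheme works with two refinements. First, the internuclear distance bound $b\gtrsim Z^{-\frac{5}{21}+\delta}$ established in the proof of Theorem~\ref{thm-26-4-7} legitimises the sharper estimate \textup{(\ref{26-4-42})} of Theorem~\ref{thm-26-4-4}\ref{thm-26-4-4-ii}, removing the $\D$-distance loss one would otherwise pay. Second, the refined excessive-charge bound carries an extra factor $(Z^{-\delta}+(\alpha Z)^{\delta})$ which propagates through the TF comparison, with a possibly new (smaller) exponent $\delta'>0$, to yield~\textup{(\ref{26-5-4})}.

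The main obstacle is the uniform handling of the self-generated magnetic field: the minimizers $A_N$ and $A_{N-1}$ are a priori different, so the inequality $\E_{N-1}(A_N)-\E_N(A_N)\le -\lambda_N(A_N)$ requires a single one-body operator underpinning both traces. One must therefore verify that the pointwise and Sobolev estimates on $A_N$ from Section~\ref{sect-26-3-2} (which depend only on $\kappa=\alpha Z$, not on the electron configuration) are strong enough to ensure the counting-function identity holds with error $o(Z^{\frac{20}{21}})$ rather than merely $O(Z^{\frac{20}{21}})$, and that the identification of $\lambda_N$ with $\nu$ genuinely uses the same effective potential $W$ on both sides. As the author indicates, these verifications amount to minimal modifications of the non-magnetic arguments of Sections~\ref{book_new-sect-24-5} and~\ref{book_new-sect-24-6}, combined with the regularity theory of Section~\ref{sect-26-3}.
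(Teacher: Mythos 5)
Your opening move — fixing $A=A_N$ and bounding $\I^*_N\le \E_{N-1}(A_N)-\E_N(A_N)=\I_N(A_N)$ — is exactly what the paper prescribes. But the next step, $\I_N(A_N)\le -\lambda_N(A_N)$, is asserted without justification, and this is precisely where the difficulty lives. For the interacting $N$-body system this is a quantitative Koopmans-type bound: to establish it one would compare the Slater-determinant upper bound for $\E_{N-1}(A_N)$ with the Lieb--Oxford lower bound for $\E_N(A_N)$, and the respective exchange and Lieb--Oxford error terms are each of size $O(Z^{5/3})$. Even after replacing both by their common leading Dirac value, the best available control of the residual is $O(Z^{5/3-\delta})$ with small $\delta$, which is nowhere near $o(Z^{20/21})$ (one would need $\delta\ge 5/7$). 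So the naive subtraction you propose does not close, and the counting-function identification $|\lambda_N-\nu|=o(Z^{20/21})$ is moot unless the first inequality is fixed.

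The paper's route is genuinely different. The phrase ``proven simultaneously with Theorem~\ref{book_new-thm-24-5-2}'', together with the proof of Theorem~\ref{thm-26-5-1} which records $(N-Z)_+\le Q^{3/7}$ for $Q=\D(\rho_\Psi-\rho^\TF,\,\rho_\Psi-\rho^\TF)$, shows that the mechanism is a Ruskai--Sigal-type localization: one cuts off the outermost electron, balances the localization energy against the residual potential in the outer region, and controls the deviation of the electron density there directly by the Dirichlet-form estimate $Q\le CZ^{5/3}$. Both the excess-charge and the ionization bounds come out of this one analysis with exponents $Q^{3/7}$ and $Q^{4/7}=Z^{20/21}$; no comparison of $\E_{N-1}$ and $\E_N$ via one-body eigenvalues (and hence no exchange-term cancellation) is needed. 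Your closing paragraph senses a difficulty but misidentifies it: the regularity of $A_N$ from Section~\ref{sect-26-3-2} is indeed adequate, whereas the unaddressed gap is the exchange/Lieb--Oxford cancellation — which the paper never has to face, because it argues by localization rather than by eigenvalue subtraction. To repair your proposal you would either need to substitute the localization argument of the referenced Subsection~\ref{book_new-sect-24-6-1} (with $A=A_N$ frozen, as the paper says) or supply a genuine proof that the $O(Z^{5/3})$ errors cancel to the claimed order, which does not appear to be available.
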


\begin{proof}
Recall that Theorem~\ref{book_new-thm-24-5-3} was proven simultaneously with Theorem~\ref{book_new-thm-24-5-2}; we follow the same scheme here picking up $A=A_N$. Thus here and in the first part of the proof of Theorem~\ref{thm-26-5-3} we estimate from above $\I^*_N(A)$.

Again easy details a left to the reader.
\end{proof}

\begin{theorem}\footnote{\label{foot-26-22} cf. Theorem~\ref{book_new-thm-24-6-3}.} \label{thm-26-5-3} Let condition \textup{(\ref{26-4-48})} be fulfilled and let  $N\le Z-C_0 Z^{\frac{5}{7}}$. Then in the framework of fixed nuclei model under assumption \textup{(\ref{book_new-24-6-2})}
\begin{equation}
(\I^*_N +\nu )_+\le C (Z-N)^{\frac{17}{18}}Z^{\frac{5}{18}}
\left\{\begin{aligned}
&1 \qquad &&\text{as\ \ } a\le Z^{-\frac{1}{3}},\\
&Z^{-\delta}+ (aZ^{\frac{1}{3}})^{-\delta}
&&\text{as\ \ } a\ge Z^{-\frac{1}{3}}.
\end{aligned}\right.
\label{26-5-5}
\end{equation}
\end{theorem}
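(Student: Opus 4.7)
The plan is to reduce the self-generated problem to the non-magnetic argument of Theorem~\ref{book_new-thm-24-6-3} by fixing a near-optimal vector potential, and then to transport that argument through by invoking the magnetic-field regularity developed in Section~\ref{sect-26-3}. First I would pick $A=A_N$ to be a near-minimizer for $\E^*_N$. Since $\E^*_{N-1}$ is an infimum over $A$, one has $\E^*_{N-1}\le \E_{N-1}(A_N)$, and the magnetic energy term $\alpha^{-1}\int |\nabla\times A|^2\,dx$ enters identically in $\E_{N-1}(A_N)$ and $\E_N(A_N)$ and therefore cancels in the difference. This gives
\begin{equation*}
\I^*_N = \E^*_{N-1}-\E^*_N \le \E_{N-1}(A_N)-\E_N(A_N) = \I_N(A_N),
\end{equation*}
and consequently $(\I^*_N+\nu)_+\le (\I_N(A_N)+\nu)_+$. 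The remaining task is to bound the ionization energy of the multiparticle Pauli operator with the \emph{fixed} external vector potential $A_N$.

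Next I would mimic the proof of Theorem~\ref{book_new-thm-24-6-3}: use the Thomas--Fermi decomposition with chemical potential $\nu$, construct a trial $(N-1)$-electron state from a ground state $\Psi$ of the $N$-electron problem for $A=A_N$ by extracting one electron from the external zone $\{x:\ \ell(x)\gtrsim r_\nu\}$ where $r_\nu$ denotes the appropriate TF classical turning radius, and control the probability that an electron of $\Psi$ lies in this zone by means of the density estimate of Theorem~\ref{thm-26-4-4}\ref{thm-26-4-4-ii}. The combination $(Z-N)^{17/18}Z^{5/18}$ then emerges from the same optimization as in the non-magnetic case, balancing the scale of $|\nu|$ against a semiclassical one-particle remainder of order $Z^{5/3}\bigl(Z^{-\delta}+(aZ^{1/3})^{-\delta}\bigr)$. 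Each individual estimate reduces to a one-particle semiclassical analysis for $H_{A_N,W^\TF}$ in its outer zone, and the regularity of $A_N$ supplied by Propositions~\ref{prop-26-3-7} and~\ref{prop-26-3-9} guarantees that $A_N$ is smooth enough on the scales $\ell\gtrsim r_\nu$ for the non-magnetic remainder estimates to transport with only the indicated $Z^{-\delta}$ and $(aZ^{1/3})^{-\delta}$ corrections.

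The main obstacle will be the one-particle step: verifying that the spectral/trace asymptotics for $H_{A_N,W^\TF}$ near and beyond the classical turning radius coincide with the non-magnetic ones up to the stated correction factors. Because $\alpha Z\le\kappa^*$ is small, after the TF rescaling $x\mapsto xZ^{1/3}$ and a further rescaling adapted to $r_\nu$ the effective magnetic coupling in the outer zone stays $\ll 1$, so $H_{A_N,W^\TF}$ differs from $-h^2\Delta-W^\TF$ only by a subleading perturbation; combined with the integrated density control via $\D(\rho_\Psi-\rho^\TF,\,\rho_\Psi-\rho^\TF)$ coming from Theorem~\ref{thm-26-4-4}, this should close the argument and yield~\textup{(\ref{26-5-5})}. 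The technical book-keeping of routing the $\D$-estimate and the magnetic regularity through every appearance of a semiclassical remainder in the Theorem~\ref{book_new-thm-24-6-3} proof is precisely the step that requires the most care.
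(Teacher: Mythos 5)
Your high-level strategy matches the paper's: pick $A=A_N$ a (near-)minimizer for the $N$-electron problem, observe that the magnetic energy cancels in the difference so that $\I^*_N\le\I_N(A_N)$, and then run the non-magnetic Subsection~\ref{book_new-sect-24-6-1} argument for $\I_N(A_N)+\nu$ with the $\D$-estimate of Theorem~\ref{thm-26-4-4} in place of its non-magnetic counterpart. That part is fine.

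The gap is in the claimed source of regularity for $A_N$. You invoke Propositions~\ref{prop-26-3-7} and~\ref{prop-26-3-9} to say that $A_N$ is smooth on the scales $\ell\gtrsim r_\nu$, but those propositions apply to a minimizer of the \emph{one-particle trace} functional $\E_\psi(A)$ from Sections~\ref{sect-26-2}--\ref{sect-26-3}, whose variational equation is (\ref{26-2-14}) (or its weighted version (\ref{26-2-82})); the pointwise bounds in those propositions are derived by bootstrapping that equation. The vector potential $A_N$ you fix here is instead a minimizer of the \emph{multiparticle} energy (\ref{26-4-5}); it satisfies the different Euler--Lagrange equation (\ref{26-A-1}), and Remark~\ref{rem-26-5-5} states explicitly that while (\ref{26-A-1}) ``implies some estimate it is not even remotely as good as we achieved in Sections~\ref{sect-26-2} and \ref{sect-26-3}.'' So you cannot transport the non-magnetic semiclassical remainder estimates by citing those regularity results; they simply do not apply to $A_N$.

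What actually saves the argument for this theorem (as opposed to the matching \emph{lower} bound on $\I^*_N+\nu$, which the paper leaves open precisely because of this issue in Problem~\ref{rem-26-5-4}) is the direction of the needed eigenvalue bound. As Remark~\ref{rem-26-5-5} points out, the estimate from above of $\I^*_N+\nu$ only requires an estimate \emph{from below} for the lowest eigenvalue of $H_{A,V}$ after localization to $\supp\theta$, and such one-sided bounds can be obtained without pointwise control on $A_N$; the estimate from above of that eigenvalue, which would need the stronger smoothness, is exactly what is missing. Your proof should be rewritten so that the one-particle step relies only on such lower eigenvalue bounds (together with the $\D$-estimate of Theorem~\ref{thm-26-4-4}) rather than on the inapplicable regularity of Propositions~\ref{prop-26-3-7}, \ref{prop-26-3-9}.
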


\begin{proof}
To estimate $\I^*_N+\nu$ from above we follow exactly the arguments of Subsection~\ref{book_new-sect-24-6-1} to estimate $\I_N(A_N)\ge \I^*_N$.
\end{proof}

\begin{Problem}\label{rem-26-5-4}
To prove the same estimate for $(\I^*_N+\nu)_-$.
\end{Problem}

\begin{remark}\label{rem-26-5-5}
To estimate $\I^*_N+\nu$ from below we  pick up $A=A_{N-1}$; then
\begin{equation*}
\I_{N}(A)= \E_{N-1}(A)-\E_{N}(A)\le \E^*_{N-1} -\E^*_{N} =\I^*_N
\end{equation*}
and we should follow the arguments of Subsection~\ref{book_new-sect-24-6-2}. However in contrast to all other proofs of this Section here we should use spectral properties of $H_{A,V}$ (or at least an estimate from above for its lowest eigenvalue after localization to $\supp \theta$ while in all other results we need an estimate from below for the same lowest eigenvalue after localization to $\supp \theta$) and to do so we need some uniform (i.e. with constants which do not depend on $N$) smoothness estimates for $A_{N-1}$ where $A_{N-1}$ is the minimizer for $\E_{N-1}(A)$ as defined in Sections~\ref{sect-26-4} and here (rather than as defined in Sections~\ref{sect-26-2}--\ref{sect-26-3}).
While (\ref{26-A-1}) is an analogue of (\ref{26-2-14}), it is still not the same and while it implies some estimate it is not even remotely as good as we achieved in Sections~\ref{sect-26-2} and \ref{sect-26-3}.

Sure $\rho_\Psi$ is not very smooth either but it close to rather smooth $\rho^\TF$; on the other hand minimizer $A_N$ is an almost-minimizer for the one-particle trace problem studied  Sections~\ref{sect-26-2}--\ref{sect-26-3} but we don't know how close it to the minimizer (or one of the minimizers) of the latter problem.
\end{remark}

\section{Free nuclei model: Excessive positive charge}
\label{sect-26-5-3}

\begin{theorem}\footnote{\label{foot-26-23} cf. Theorem~\ref{book_new-thm-24-6-4}.}\label{thm-26-5-6}
Let condition \textup{(\ref{26-4-48})} be fulfilled. Then in the framework of free nuclei model with $M\ge 2$ the stable molecule does not exist unless
\begin{equation}
Z-N \le Z^{\frac{5}{7}}\bigl( Z^{-\delta}+(\alpha Z)^\delta\bigr).
\label{26-5-6}
\end{equation}
\end{theorem}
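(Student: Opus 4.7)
The proof proceeds by contradiction, following the strategy of Theorem~\ref{book_new-thm-24-6-4} in the non-magnetic case, adapted to the present self-generated magnetic field setting exactly as Theorem~\ref{thm-26-4-7} adapts the corresponding estimate on internuclear distances. Assume a stable molecular configuration $\underline{\y}^*$ exists with $q\Def Z-N>0$; by definition of stability we then have, for any partition $N=N_1+\ldots +N_M$ with $N_m\ge 0$,
\begin{equation*}
\widehat{\E}^*_N(\underline{\y}^*)\le \sum_{1\le m\le M} \widehat{\E}^*_{m,N_m}.
\end{equation*}
The strategy is to bound the left-hand side from below and the right-hand side from above in such a way that the inequality forces $q$ to be small.

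\textbf{Step 1 (atomic decomposition of the near-minimizer).} I would pick $A=A_N$ near-minimizing $\E^*_N(\underline{\y}^*)$ and apply the localization developed in Subsection~\ref{sect-26-3-6-1}: using a partition of unity $\{\psi_m\}$ adapted to the scale $b$, writing $A_m=A\,\phi(b^{-1}|x-\y_m|)$ with $\phi\in \sC_0^\infty(B(0,\frac13))$ equal to $1$ on $B(0,\frac14)$, and repeating the argument of Theorem~\ref{thm-26-4-7} verbatim. This yields, with the Scott corrections cancelling between the molecule and the atomic collection,
\begin{equation*}
\widehat{\E}^*_N(\underline{\y}^*)\ge \sum_{1\le m\le M}\widehat{\E}^*_{m,N_m}+\bigl(\widehat{\cE}^\TF-\sum_m \widehat{\cE}^\TF_m\bigr) - CZ^{\frac{5}{3}}\bigl(Z^{-\delta}+(\alpha Z)^\delta\bigr)-C\alpha b^{-3}Z^2,
\end{equation*}
where $\widehat{\cE}^\TF$ refers to the \emph{ionic} Thomas--Fermi energy with $N_m$ electrons at nucleus $m$. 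I would choose the partition $N=N_1+\ldots +N_M$ so that each $N_m\le Z_m$ and $Z_m-N_m \asymp q/M$, which is possible since $Z_m\asymp N\asymp Z$ and $q>0$.

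\textbf{Step 2 (Teller-type strong non-binding for ions).} Stability combined with Step~1 forces
\begin{equation*}
\widehat{\cE}^\TF-\sum_m \widehat{\cE}^\TF_m\le CZ^{\frac{5}{3}}\bigl(Z^{-\delta}+(\alpha Z)^\delta\bigr)+C\alpha b^{-3}Z^2.
\end{equation*}
I would invoke the sharp Thomas--Fermi strong non-binding estimate for ionic subsystems (classical TF result, independent of the magnetic field): when $b\ge Z^{-\frac{1}{3}}$ and each fragment has $N_m\le Z_m$,
\begin{equation*}
\widehat{\cE}^\TF-\sum_m \widehat{\cE}^\TF_m\ge c\,b^{-7}+ c\,q^{\frac{7}{3}}Z^{-\frac{1}{3}}
\end{equation*}
(or more precisely a combination capturing both geometric non-binding and excess charge, as in the proof of Theorem~\ref{book_new-thm-24-6-4}). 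Combined with the bound on $b$ already established in Theorem~\ref{thm-26-4-7}, the $b^{-7}$ term is consistent only with an overall right-hand side of order $Z^{\frac{5}{3}}$ up to the $\delta$-corrections, and the $q^{\frac{7}{3}}Z^{-\frac{1}{3}}$ contribution then yields
\begin{equation*}
q^{\frac{7}{3}}\le CZ^{2}\bigl(Z^{-\delta}+(\alpha Z)^\delta\bigr),
\end{equation*}
which rearranges to the claimed bound $q\le Z^{\frac{5}{7}}\bigl(Z^{-\delta'}+(\alpha Z)^{\delta'}\bigr)$ upon relabelling~$\delta$.

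\textbf{Main obstacle.} The main difficulty is \emph{not} the magnetic analysis itself---the partition estimates, the cancellation of the Scott terms, and the $C\alpha b^{-3}Z^2$ remainder are all in place from Theorem~\ref{thm-26-4-7} and the $\D$-estimate of Theorem~\ref{thm-26-4-4}---but the Thomas--Fermi ionic strong non-binding lower bound of Step~2. One must ensure that the TF functional comparison between the molecular ion and the sum of atomic ions, with optimally distributed $N_m$, produces simultaneously the geometric non-binding $b^{-7}$ and a genuine $q^{7/3}Z^{-1/3}$ excess-charge term, and that these contributions survive after the $\underline{\y}^*$-minimization forces $b$ to be essentially as large as Theorem~\ref{thm-26-4-7} permits. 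A secondary, but routine, technical point is to justify that the near-minimizer $A_N$ for the molecule, cut off to $b$-balls around each nucleus, still satisfies the regularity estimates of Subsection~\ref{sect-26-3-2} so that the atomic-side trace asymptotics from Theorem~\ref{thm-26-3-22} apply uniformly.
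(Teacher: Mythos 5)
Your overall strategy -- stability comparison, trace asymptotics with Scott cancellation, Thomas--Fermi non-binding, and the regularity of the cut-off minimizer -- is the right framework and matches the paper's one-line proof, which simply refers back to the corresponding non-magnetic statement. However, the crucial quantitative Step~2 is not sound as written.

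The claimed lower bound
$\widehat{\cE}^\TF-\sum_m \widehat{\cE}^\TF_m\ge c\,b^{-7}+ c\,q^{\frac{7}{3}}Z^{-\frac{1}{3}}$
does not hold in this form. Both the molecular TF ion and the optimally distributed atomic TF ions carry a chemical potential $\nu\asymp -q^{4/3}$, and their TF-energy excesses relative to the neutral systems are \emph{both} of order $q^{7/3}$; these contributions cancel in the molecular-minus-atomic difference. There is no residual $q^{7/3}$ ``excess-charge'' term sitting alongside the $b^{-7}$ geometric deficit. The genuine source of $q$-dependence in the deficit is the monopole-monopole (ion--ion) electrostatic repulsion $\asymp q^2/b$ when the compactly-supported TF ion clouds (radius $\asymp q^{-1/3}$) no longer overlap; this must be combined with the control on $b$ -- either the lower bound $b\gtrsim Z^{-5/21}$ from Theorem~\ref{thm-26-4-7}, or the observation that chemical binding requires cloud overlap $b\lesssim q^{-1/3}$ -- to produce a power of $q$. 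The spurious factor $Z^{-1/3}$ is also a unit mismatch: $q^{7/3}$ and $Z^{5/3}$ already live in the same atomic units, so no extra $Z$-power should appear.

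Second, and independently, the arithmetic does not close. You conclude $q^{7/3}\le CZ^{2}\bigl(Z^{-\delta}+(\alpha Z)^\delta\bigr)$ and say this ``rearranges to the claimed bound $q\le Z^{5/7}(\ldots)$''. But $q^{7/3}\le Z^2$ gives $q\le Z^{6/7}$, not $Z^{5/7}$. To reach $Z^{5/7}$ you need $q^{7/3}\le CZ^{5/3}(\ldots)$, which your stated TF bound with the extra $Z^{-1/3}$ does not produce; removing the spurious $Z^{-1/3}$ would make the conclusion consistent, but as noted above the $q^{7/3}$ term itself needs a different justification. You should also tie the argument more explicitly to the $\D(\rho_\Psi-\rho^\TF,\rho_\Psi-\rho^\TF)$ estimate of Theorem~\ref{thm-26-4-4}: the paper flags this as the single ingredient carrying over from the non-magnetic proof, and the exponent $\frac{5}{7}=\frac{3}{7}\cdot\frac{5}{3}$ in \textup{(\ref{26-5-6})} is precisely $Q^{3/7}$ with $Q\asymp Z^{5/3}$, the same structure that appears in Theorem~\ref{thm-26-5-1}.
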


\begin{proof}
Again we just repeat the proof of Theorem~\ref{book_new-thm-24-6-4}.
\end{proof}

\begin{subappendices}
\chapter{Appendices}

\section{Minimizers and ground states}
\label{sect-26-A-1}

First establish a conditional existence of the minimizer\footnote{\label{foot-26-24} We do not know if it is unique.} and the corresponding ground state of the original problem:

\begin{theorem}\label{thm-26-A-1}
Let $\alpha Z <\kappa^*$ and let $\E^*_N <\E^*_{N-1}$. Then there exist a minimizer $A=A_N$ for the original multiparticle problem and the corresponding ground state $\Psi_N$.
\end{theorem}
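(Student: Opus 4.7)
The plan is to follow the direct method of calculus of variations, mirroring the proof of Proposition~\ref{prop-26-2-2} but now complicated by the many-body character of the problem, where the binding condition $\E^*_N < \E^*_{N-1}$ plays the role of an HVZ-type criterion that supplies the crucial compactness of the approximate ground states.

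First I would take a minimizing sequence $(A_j,\Psi_j)$, where $A_j\in\sH^1_0$ and $\Psi_j\in\fH$ is a normalized near-ground-state of $\sfH_{A_j,V}$, such that $(\sfH_{A_j,V}\Psi_j,\Psi_j)+\alpha^{-1}\|\nabla\times A_j\|^2\to\E^*_N$. Comparing with the trial pair $(0,\Psi^{(0)})$ shows $\E^*_N\le\E^*_N(0)\le C Z^{7/3}$; inserting the magnetic Lieb--Thirring inequality of Lieb--Loss--Solovej (as in Proposition~\ref{prop-26-2-1}), with the electron-electron repulsion controlled via Lieb--Oxford, yields the a priori bound $\alpha^{-1}\|\nabla\times A_j\|^2\le C Z^{7/3}$ provided $\alpha Z<\kappa^*$ with $\kappa^*$ small. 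Hence $A_j$ is bounded in $\sH^1$, and after passing to a subsequence $A_j\rightharpoonup A_\infty$ weakly in $\sH^1$ and in $\sL^6$, strongly in $\sL^p_\loc$ for every $p<6$.

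Next I would establish concentration of the many-body wave functions. The key point is that the assumption $\E^*_N<\E^*_{N-1}$ forbids escape of one (or more) electrons to infinity. Quantitatively, take an IMS-type partition $1=\chi_R^2+\eta_R^2$ with $\chi_R$ supported in $|x|\le 2R$, and apply the geometric localization to the $N$-body operator in each variable $x_k$; the localization error is $O(R^{-2})$, while the contribution from configurations with at least one $|x_k|\ge R$ is, in the limit $R\to\infty$, bounded below by $\E^*_{N-1}$ (since the corresponding electron decouples from $V$ and from the rest). The spectral gap $\E^*_{N-1}-\E^*_N>0$ therefore forces $\|\eta_R(x_k)\Psi_j\|\to 0$ uniformly in $j$ as $R\to\infty$, for each $k$. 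Combined with the uniform kinetic-energy bound $\sum_k \|(D-A_j)_{x_k}\Psi_j\|^2\le C Z^{7/3}$ (and the bound on $A_j$ in $\sL^6$ used to control $\|D\Psi_j\|$ as in the proof of Proposition~\ref{prop-26-2-2}), this tightness yields strong convergence $\Psi_j\to\Psi_\infty$ in $\fH$ along a subsequence, with $\|\Psi_\infty\|=1$.

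Finally I would pass to the limit. Weak lower semicontinuity of the $\sH^1$ seminorm gives $\|\nabla\times A_\infty\|^2\le\liminf\|\nabla\times A_j\|^2$. For the quantum part, the strong convergence of $\Psi_j$ together with $A_j\to A_\infty$ strongly in $\sL^p_\loc$ suffices to pass to the limit in every term of $(\sfH_{A_j,V}\Psi_j,\Psi_j)$: the only non-trivial terms, those containing $A_j$ and $|A_j|^2$, are controlled by $\|A_j\|_{\sL^6}\cdot\|\Psi_j\|_{\sL^3_x\sL^2_{\hat x}}$-type estimates and converge by the local strong convergence of $A_j$ together with the tightness of $\Psi_j$. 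Therefore
\begin{equation*}
\E(A_\infty)\le (\sfH_{A_\infty,V}\Psi_\infty,\Psi_\infty)+\alpha^{-1}\|\nabla\times A_\infty\|^2\le\liminf_j\E(A_j)=\E^*_N,
\end{equation*}
so $A_\infty$ is a minimizer, equality holds throughout, and $\Psi_\infty$ is a ground state of $\sfH_{A_\infty,V}$.

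The main obstacle is the concentration/tightness argument in the second paragraph: unlike in Proposition~\ref{prop-26-2-2}, there is no confining potential, so compactness of $\Psi_j$ must be extracted purely from the binding assumption $\E^*_N<\E^*_{N-1}$, and one has to do this uniformly in $j$ despite the fact that the underlying operator $\sfH_{A_j,V}$ itself varies with $j$; this requires a uniform (in $A_j$) version of the HVZ decomposition, whose proof rests on the uniform bound $\|A_j\|_{\sL^6}\le C$ obtained in the first step.
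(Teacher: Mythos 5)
Your proposal is correct and follows essentially the same direct-method strategy as the paper: take a minimizing sequence of vector potentials, extract a weakly $\sH^1$-convergent subsequence (strong in $\sL^p_\loc$, $p<6$), establish strong $\sL^2$-convergence of the associated $N$-body states using the binding gap $\E^*_N<\E^*_{N-1}$, and pass to the limit. The paper's proof differs in one organizational detail: it uses \emph{exact} ground states $\Psi_N(A_{(k)})$ for each $A_{(k)}\in\sC_0^\infty$ in the minimizing sequence, which exist because $\E_N(A_{(k)})\to\E^*_N<\E^*_{N-1}\le\E_{N-1}(A_{(k)})$ so the binding condition holds for each fixed $A_{(k)}$ with $k$ large; the uniform spectral gap then yields uniform exponential decay (cf.\ Proposition~\ref{prop-26-A-2}), which is the cleanest route to tightness. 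You instead work with near-ground-states and supply the tightness via an explicit IMS localization argument, which is slightly more general and spells out what the paper dispatches with ``One can prove easily that $\Psi_N(A_{(k)})$ converge \ldots\ strongly in $\sL^2$.'' Both routes are sound; yours trades the convenience of exponential decay for a self-contained HVZ-type compactness argument, and you are right that the uniform $\|A_j\|_{\sL^6}\le C$ bound is what makes the localization estimates uniform in $j$.
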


\begin{proof}
We know that as $\alpha Z <\kappa^*$ (with $\kappa^*>0$ which does not depend on $Z$ or positions of the nuclei) $\E^*(A)$ is bounded from below; then
$\|\nabla \times A'\|^2$ is bounded from above for near-minimizer $A'$ (but constants do depend on $Z$ and $(\kappa^*-\alpha Z)$ here). On the other hand, if $A'\in \sC_0^\infty$ and $\E_N(A')< \E_{N-1}(A')$ there exists a ground state $\Psi_N(A')$.

Therefore if $A_{(k)}\in \sC_0^\infty$ is a minimizing sequence for $\E_N(A)$ we have also a sequence $\Psi_N(A_{(k)})$ with $\|\Psi_N(A_{(k)})\|=1$.
Going if necessary to subsequence we can assume that $A_{(k)}$ converges weakly in $\sH^1$ and strongly in $\sL^p_\loc$ for any $p<6$; let $A$ be its limit.

One can prove easily that $\Psi_N (A_{(k)})$ converge weakly in $\sH^1$ and strongly in $\sL^2$ to $\Psi$ and
\begin{gather*}
(\sfH_{A,V} \Psi,\Psi)=\lim_{k\to \infty} (\sfH_{A_{(k)},V} \Psi_N(A_{(k)}),\Psi_N(A_{(k)}))\\
\shortintertext{and then}
(\sfH_{A,V} \Psi,\Psi)+\frac{1}{\alpha} = \lim_{k\to \infty} \E _N (A_{(k)})
\end{gather*}
which is $\E^*_N$ since $A_{(k)}$ is a minimizing sequence and then $\Psi$ must be a ground state.
\end{proof}

Now in this framework we establish properties of these minimizer and the ground state:

\begin{proposition}\footnote{\label{foot-26-25} cf. Proposition~\ref{book_new-prop-25-A-7}.}\label{prop-26-A-2}
Let $\Psi=\Psi_N$ and $A=A_N$ be a ground state and minimizer with energy $\E^*_N<\E^*_{N-1}$.

\begin{enumerate}[label=(\roman*), fullwidth]
\item\label{prop-26-A-2-i}
$\Psi\in \sC^1$ and $\Psi= O(e^{-\epsilon |\underline{x}|})$ as $|\underline{x}|\to \infty$;

\item\label{prop-26-A-2-ii}
$A\in \sC^1$ and $A= O(|x|^{-2})$, $\nabla \times A= O(|x|^{-3})$ as $|x|\to \infty$;

\item\label{prop-26-A-2-iii}
Let $N<Z$. Then $V_\Psi -V \in \sC^1$ and $V_\Psi = (Z-N)|x|^{-1} +O(|x|^{-2})$,
$\nabla{V}_\Psi = (Z-N)|x|^{-2} +O(|x|^{-3})$ as $|x|\to \infty$.
\end{enumerate}
\end{proposition}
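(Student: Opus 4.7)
The plan is to bootstrap between the eigenvalue equation $\sfH_{A,V}\Psi=\E^*_N\Psi$ for the ground state and the Euler--Lagrange equation for the minimizer $A=A_N$. The existence proof of Theorem~\ref{thm-26-A-1} already supplies $A\in\sH^1$ and $\Psi\in\sH^1$, and I would run an iteration in which improved decay or smoothness of one side feeds into the source of the equation for the other.

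I would prove (ii) first, modulo the exponential decay of $\Psi$ from step (i). Varying $A$ in $(\sfH_{A,V}\Psi,\Psi)+\alpha^{-1}\|\nabla\times A\|^2$ and using Hellmann--Feynman to drop the contribution of the implicit variation of $\Psi$, one derives, in Coulomb gauge $\nabla\cdot A=0$, a Poisson system $2\alpha^{-1}\Delta A_j=\Phi_j$ whose source $\Phi_j$ is a current density built from the one-body reduced density matrix of $\Psi$, directly analogous to (\ref{26-2-14}) but with $e(x,y,0)$ replaced by the reduced one-particle density of the $N$-body state. Exponential decay of $\Psi$ from step (i) transfers to $\Phi_j$, and the Newtonian representation $A_j(x)=-\alpha(8\pi)^{-1}\int|x-y|^{-1}\Phi_j(y)\,dy$ yields $A_j(x)=c_j|x|^{-1}+O(|x|^{-2})$ at infinity. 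The monopole coefficients $c_j$, proportional to $\int\Phi_j\,dx$, vanish by the same argument used after (\ref{26-3-36}) in the one-body setting (the gauge constraint $\nabla\cdot A=0$ combined with the divergence-free character of the probability current for the stationary state $\Psi$), so $A=O(|x|^{-2})$; differentiating the integral representation gives $\nabla\times A=O(|x|^{-3})$, and Schauder estimates applied to the smooth, exponentially decaying source give $A\in\sC^1$.

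For (i), with $A\in\sC^1_{\loc}$ in hand the eigenvalue equation is an elliptic system on $\bR^{3N}$ whose coefficients are smooth off the nuclear coincidence set $\{x_j=\y_m\}$, so local elliptic regularity together with Kato's treatment of Coulomb cusps gives $\Psi\in\sC^1$. For the exponential decay I would apply Agmon's method: a HVZ-type argument, which goes through because $A(x)\to 0$ at infinity, identifies $\inf\Specess(\sfH_{A,V})=\E^*_{N-1}$ on $\fH$, and the hypothesis $\E^*_N<\E^*_{N-1}$ supplies a positive Agmon gap, producing $\Psi=O(e^{-\epsilon|\underline{x}|})$ for every $\epsilon^2$ smaller than the gap. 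Statement (iii) is then immediate with the interpretation $V_\Psi(x)=V(x)-|x|^{-1}\ast\rho_\Psi(x)$: from exponential decay of $\rho_\Psi$ together with $\int\rho_\Psi\,dx=N$, a multipole expansion gives $|x|^{-1}\ast\rho_\Psi(x)=N|x|^{-1}+O(|x|^{-2})$; combined with $V(x)=Z|x|^{-1}+O(|x|^{-2})$ at infinity (the $\y_m$ lie at bounded distance from $0$), this yields the stated asymptotics for $V_\Psi$ and, after differentiating the expansion, for $\nabla V_\Psi$, while $\sC^1$-regularity of $V_\Psi-V$ is immediate from smoothness and decay of $\rho_\Psi$.

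The main obstacle is the circularity of the bootstrap: the HVZ/Agmon step in (i) nominally requires some decay of $A$, while the sharp decay of $A$ in (ii) uses the exponential decay of $\Psi$. I would resolve this by first running a preliminary, weaker Agmon estimate using only the crude regularity $A\in\sH^1\hookrightarrow\sL^6$ -- enough to show that the $A$-dependent terms in $\sfH_{A,V}$ are relatively compact perturbations of the $A=0$ Hamiltonian and hence do not move the essential spectrum. This yields some exponential decay of $\Psi$, which is then fed into the equation for $A$ to produce the sharp $O(|x|^{-2})$ decay, which in turn can be looped back to improve the Agmon constant up to the optimal one.
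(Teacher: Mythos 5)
The paper's own proof of this proposition is deliberately terse: it records the variational equation (\ref{26-A-1}), observes the analogy to (\ref{26-2-14}), and leaves the rest to the reader as an ``obvious'' extension of the non-magnetic proposition cited in the footnote. Your proposal supplies precisely the details the paper is gesturing at, in the intended way: the Euler--Lagrange current equation for $A$ (which is exactly (\ref{26-A-1})), HVZ together with Agmon decay driven by the hypothesis $\E^*_N<\E^*_{N-1}$, local elliptic regularity with Kato cusps for $\Psi$, and the multipole expansion for $V_\Psi$; you also correctly identify and handle the bootstrap circularity between decay of $\Psi$ and decay of $A$. So this is the same approach, correctly executed.

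One small point worth tightening: for the vanishing of the monopole term in $A$, the paper (in the one-body analogue in Subsection~\ref{sect-26-3-2-4}) invokes the gauge condition $\nabla\cdot A=0$ directly --- a monopole $c/|x|$ would make $\nabla\cdot A\sim -c\cdot x/|x|^{3}\not\equiv 0$ unless $c=0$. Your phrasing blends this with conservation of the probability current. If you prefer the current route (deducing $\int\Phi_j\,dx=0$ from stationarity and integration by parts), note that the right-hand side of (\ref{26-A-1}) is the full Pauli current, including the magnetization term $\nabla\times(\Psi^\dag\boldupsigma\Psi)$: the orbital part's integral vanishes by the continuity equation, the magnetization part's vanishes trivially as a total curl, so the argument closes, but the two contributions should be separated rather than folded into a single appeal to ``the probability current.''
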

\begin{proof}
Obvious proof using also an equation
\begin{multline}
\frac{2}{\alpha}\Delta A_j =\\
-2N \Re \tr \int \Psi^\dag (x,x_2,\ldots,x_N) \upsigma_j (D-A)_x\cdot \boldupsigma \Psi (x,x_2,\ldots,x_N)\,dx_2\cdots dx_N,
\label{26-A-1}
\end{multline}
is left to the reader. This equation is similar to (\ref{26-2-14}) and is also derived from variational principles, its right-hand expression is $\frac{\updelta \Lambda}{\updelta A}$ where $\Lambda$ is the lowest eigenvalue of $\sfH_{A,V}$ on Fock' space.\end{proof}

\section{Zhislin' theorem}
\label{sect-26-A-2}

\begin{theorem}[Zhislin's theorem]\footnote{\label{foot-26-26} cf. Theorem~\ref{book_new-thm-25-A-8}.}\label{thm-26-A-3}
$\E^*_{N+1}< \E^*_N$ as $N<Z$.
\end{theorem}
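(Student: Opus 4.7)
The plan is to follow the standard Zhislin binding argument, using $A=A_N$ (the minimizer for the $N$-electron problem, which exists by Theorem~\ref{thm-26-A-1}) as a trial vector potential for the $(N{+}1)$-particle problem, so that the magnetic field energy $\alpha^{-1}\int|\nabla\times A_N|^2\,dx$ contributes identically to both sides. Let $\Psi_N$ be the corresponding ground state. By Proposition~\ref{prop-26-A-2}, $\Psi_N$ decays exponentially, $A_N(x)=O(|x|^{-2})$, and the effective potential seen by a test electron at distance $R$ from the nuclei is
\begin{equation*}
-V(x)+V_{\Psi_N}(x)=-(Z-N)|x|^{-1}+O(|x|^{-2}),
\end{equation*}
with $Z-N\ge 1$.

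The first step is to construct a one-particle trial orbital $\phi_R$ (say, a normalized hydrogenic-type function in $\sH^1(\bR^3,\bC^2)$) concentrated in a shell $\{|x|\asymp R\}$ with large $R$, so that $\phi_R$ is essentially orthogonal, up to exponentially small terms, to the single-particle density of $\Psi_N$, and satisfies $\|(D-A_N)\phi_R\|^2=O(R^{-2})$. Form the antisymmetrized trial state
\begin{equation*}
\Psi_{N+1}\Def \cA\bigl(\Psi_N\otimes\phi_R\bigr)/\|\cA(\Psi_N\otimes\phi_R)\|,
\end{equation*}
whose normalization differs from $1$ by an exponentially small amount due to the exponential decay of $\Psi_N$.

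The second step is the energy estimate. Expanding $\langle\sfH_{A_N,V}\Psi_{N+1},\Psi_{N+1}\rangle$ and using the Slater–Condon rules together with the identity $\sfH^{(N+1)}_{A_N,V}=\sfH^{(N)}_{A_N,V}\otimes I+I\otimes H_{A_N,V}^0+\sum_j|x_j-x_{N+1}|^{-1}$, we get
\begin{equation*}
\langle\sfH_{A_N,V}\Psi_{N+1},\Psi_{N+1}\rangle
=\E^*_N(A_N)-\alpha^{-1}\!\int|\nabla\times A_N|^2\,dx+\cF(\phi_R)+o(R^{-1}),
\end{equation*}
where $\cF(\phi_R)=\langle\bigl(H_{A_N,0}-V+V_{\Psi_N}\bigr)\phi_R,\phi_R\rangle$ and the exchange term is exponentially small because the two factors have disjoint essential supports. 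Since $H_{A_N,0}\phi_R=O(R^{-2})$ in $\sL^2$ and the effective potential is $-(Z-N)|x|^{-1}+O(|x|^{-2})$, choosing $\phi_R$ as a Coulombic ground state of $-\Delta-(Z-N)|x|^{-1}$ localized near $|x|=R$ yields
\begin{equation*}
\cF(\phi_R)\le -c(Z-N)R^{-1}+C R^{-2}
\end{equation*}
for some $c>0$; take $R$ so large that the right-hand side is strictly negative.

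Adding back the magnetic energy of $A_N$ gives
\begin{equation*}
\E^*_{N+1}\le \E^*_{N+1}(A_N)\le \langle\sfH_{A_N,V}\Psi_{N+1},\Psi_{N+1}\rangle+\alpha^{-1}\!\int|\nabla\times A_N|^2\,dx<\E^*_N(A_N)=\E^*_N,
\end{equation*}
which is Zhislin's inequality. The main technical obstacle is verifying the two decay estimates used above — that $A_N$ decays as $|x|^{-2}$ and that the screened Hartree potential $V_{\Psi_N}-V$ decays as $(Z-N)|x|^{-1}+O(|x|^{-2})$ uniformly enough to control the magnetic cross terms $\langle A_N\cdot(D-A_N)\phi_R,\phi_R\rangle$ — but these follow from Proposition~\ref{prop-26-A-2} and the equation \textup{(\ref{26-A-1})} together with the exponential decay of $\Psi_N$; the rest is the standard Zhislin computation.
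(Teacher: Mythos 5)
Your proposal is correct and matches the approach the paper intends: the paper simply defers to the constant-field Zhislin proof ``with obvious modifications,'' and the obvious modification is exactly what you do --- freeze $A=A_N$ so the field energy cancels, then run the standard binding argument with a far-out trial orbital $\phi_R$, controlling the magnetic terms via the $O(|x|^{-2})$ decay of $A_N$ from Proposition~\ref{prop-26-A-2}. One thing worth stating explicitly: Theorem~\ref{thm-26-A-1} and Proposition~\ref{prop-26-A-2} both carry the hypothesis $\E^*_N<\E^*_{N-1}$, so the argument must be organized as an induction on $N$ (with base case $\E^*_1<\E^*_0=0$ coming from $A=0$ and the Coulomb bound state), which you invoke implicitly but should make explicit.
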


\begin{proof}
An easy proof repeating with obvious modifications proof of Theorem~\ref{book_new-thm-25-A-8} is left to the reader.
\end{proof}

\section{L. Erd\"os--J. P. Solovej' lemma}
\label{sect-26-A-3}

We reproduce here Lemma 2.1 from L.~Erd\"os,  J. P. Solovej~\cite{erdos:solovej}.

\begin{lemma}\label{lemma-26-A-4}
There is a positive universal constant $\kappa^*$ such
that for any $Z, \alpha$ with $Z\alpha \le \kappa^*$ we have
\begin{equation*}
\inf_N \inf_{A}
 H_{A,V} \ge -
 CZ^{\frac{7}{3}}\delta^{1/2} - Z^{\frac{2}{3}} \delta^{-2}
\end{equation*}
if $C Z^{-\frac{2}{3}}\le \delta\le C_1$ with a sufficiently large constant $C$.
\end{lemma}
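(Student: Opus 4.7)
The plan is to reduce the many-body bound to a one-body trace estimate for a magnetic Schrödinger operator, and then to control the Coulomb singularities at the scale $\delta$ using a Hardy-type decomposition, with the magnetic cross-terms absorbed by the magnetic energy thanks to the smallness of $Z\alpha$.

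\textbf{Step 1 (many-body to one-body).} Drop the non-negative pair interaction $\sum_{j<k}|x_j-x_k|^{-1}$ and use antisymmetry of $\Psi\in \bigwedge^N \sL^2(\bR^3,\bC^2)$ to get
\begin{equation*}
\inf\Spec(\sfH_{A,V})\ \ge\ \Tr^-\bigl(\bigl((D-A)\cdot\boldupsigma\bigr)^2-V\bigr),
\end{equation*}
so the right-hand side is $N$-independent. Including the magnetic energy, it suffices to show
\begin{equation*}
\Tr^-(H_{A,V})+\frac{1}{\alpha}\|\nabla\times A\|^2 \ \ge\ -CZ^{7/3}\delta^{1/2}-Z^{2/3}\delta^{-2}
\end{equation*}
uniformly in $A\in\sH^1_0$.

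\textbf{Step 2 (splitting the Coulomb singularities).} Introduce a smooth partition $V = V_{\mathsf{in}}+V_{\mathsf{out}}$ with $V_{\mathsf{in}}$ supported in $\bigcup_m B(\y_m,\delta)$ and $|V_{\mathsf{out}}|\lesssim Z\delta^{-1}$ on its support, with $V_{\mathsf{out}}\asymp Z|x-\y_m|^{-1}$ on the annuli $\{\delta\le|x-\y_m|\lesssim 1\}$. By the diamagnetic Hardy inequality
\begin{equation*}
\bigl((D-A)\cdot\boldupsigma\bigr)^2\ \ge\ \tfrac14\sum_m|x-\y_m|^{-2}
\end{equation*}
combined with $V_{\mathsf{in}}\le \sum_m Z_m|x-\y_m|^{-1}$, the local operator on each core ball is bounded below (pointwise in $\xi$) by $-CZ^2$. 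Quantising this together with the Lieb--Thirring bound restricted to $\bigcup B(\y_m,\delta)$ with effective semiclassical parameter $\hbar_{\delta}\asymp Z^{-1/2}\delta^{-1/2}$ gives a contribution $\ge -C Z^{2/3}\delta^{-2}$ to $\Tr^-$.

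\textbf{Step 3 (magnetic Lieb--Thirring on the bulk).} Apply the magnetic Lieb--Thirring inequality (26-2-9) with $h=1$ to $((D-A)\boldupsigma)^2-V_{\mathsf{out}}$:
\begin{equation*}
\Tr^-\bigl(((D-A)\boldupsigma)^2-V_{\mathsf{out}}\bigr)\ \ge\ -C\|V_{\mathsf{out}}\|_{5/2}^{5/2}-C\|\nabla\times A\|^{3/2}\,\|V_{\mathsf{out}}\|_4.
\end{equation*}
A direct computation on annuli yields $\|V_{\mathsf{out}}\|_{5/2}^{5/2}\lesssim Z^{5/2}\delta^{1/2}$ and $\|V_{\mathsf{out}}\|_4^4\lesssim Z^4\delta$. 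The first summand gives, after using $Z^{5/2}\le Z^{7/3}\cdot Z^{1/6}$ and the TF-rescaling implicit in $CZ^{-2/3}\le\delta$, the stated $-CZ^{7/3}\delta^{1/2}$. For the magnetic cross-term, Young's inequality with exponents $(\tfrac43,4)$ gives
\begin{equation*}
C\|\nabla\times A\|^{3/2}\|V_{\mathsf{out}}\|_4\ \le\ \tfrac{1}{2\alpha}\|\nabla\times A\|^2+C\alpha^3\|V_{\mathsf{out}}\|_4^8\ \le\ \tfrac{1}{2\alpha}\|\nabla\times A\|^2+C(Z\alpha)^3 Z^{7/3}\delta^{1/2},
\end{equation*}
so under $Z\alpha\le\kappa^*$ with $\kappa^*$ small the first term is absorbed in the magnetic field energy and the second is dominated by the target $-CZ^{7/3}\delta^{1/2}$.

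\textbf{Main obstacle.} The delicate point is Step~2: the magnetic Hardy inequality by itself only shows non-negativity of the kinetic-minus-Coulomb part up to an $N$-independent constant, but one needs the precise $\delta^{-2}$ scaling that reflects the Bohr radius inside the ball. Recovering it requires combining Hardy with a localized Lieb--Thirring in the core (with the rescaled semiclassical parameter above), and then checking that the IMS commutator terms generated by the partition of unity are themselves dominated by $Z^{2/3}\delta^{-2}$. Getting the numerical constants right in the Young-inequality absorption of Step~3 is what fixes the universal threshold $\kappa^*$.
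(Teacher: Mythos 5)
Your Step~2 hinges on the claim that the Pauli operator satisfies the Hardy bound $\bigl((D-A)\cdot\boldupsigma\bigr)^2\ge \frac14\sum_m|x-\y_m|^{-2}$, but this is false: by the Lichnerowicz identity $\bigl((D-A)\cdot\boldupsigma\bigr)^2=(D-A)^2-\boldupsigma\cdot B$, and the $-\boldupsigma\cdot B$ term cannot be discarded (the Pauli operator has Loss--Yau zero modes and no diamagnetic Hardy inequality). At best one has $\bigl((D-A)\cdot\boldupsigma\bigr)^2\ge \frac14|x-\y_m|^{-2}-|B|$, and controlling that $-|B|$ is precisely the hard part — the paper handles it by a gauge change $e^{i\varphi}$ with $\nabla\varphi=\bar A$, the Poincar\'e inequality to trade $\|A_0\|_{\sL^6}$ for $\|\partial A\|_{\sL^2(B(0,3r))}$, and the Lieb--Loss--Solovej \emph{running energy scale} decomposition $\Tr^-=-\int_0^\mu \N_{-e}\,de-\int_\mu^\infty \N_{-e}\,de$ with $\mu=4Z^2$, which converts the magnetic contribution into $C\mu^{1/2}\|B_0\|^2=CZ\|B_0\|^2$ and makes it absorbable by $\frac{1}{\alpha}\|B_0\|^2$ once $Z\alpha\le\kappa^*$. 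None of these devices appears in your proposal, and the Young-inequality absorption you use instead produces $C\alpha^3\|V_{\mathsf{out}}\|_4^8\asymp(\alpha Z)^3Z^5\delta^{-2}$, which is far too large to be dominated by $Z^{7/3}\delta^{1/2}$.

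There are also several concrete scaling errors. The paper localizes at radius $r=\delta Z^{-1/3}$, \emph{not} $\delta$; this is how the two terms $CZ^{5/2}r^{1/2}=CZ^{7/3}\delta^{1/2}$ and $Cr^{-2}=CZ^{2/3}\delta^{-2}$ arise (the latter is exactly the IMS commutator $|\nabla\theta_0|^2+|\nabla\theta_1|^2\lesssim r^{-2}$ fed into the CLR integral of $W^{5/2}$ over $B(0,2r)$). Your $\|V_{\mathsf{out}}\|_{5/2}^{5/2}\lesssim Z^{5/2}\delta^{1/2}$ and $\|V_{\mathsf{out}}\|_4^4\lesssim Z^4\delta$ are both wrong: with $V_{\mathsf{out}}\asymp Z|x|^{-1}$ on $\{\delta\le|x|\lesssim 1\}$ one computes $\int V_{\mathsf{out}}^{5/2}\asymp Z^{5/2}$ (the radial integral $\int_\delta^1 s^{-1/2}ds$ is $O(1)$) and $\int V_{\mathsf{out}}^4\asymp Z^4\delta^{-1}$, and the step ``$Z^{5/2}\le Z^{7/3}Z^{1/6}$ plus $CZ^{-2/3}\le\delta$ gives $Z^{7/3}\delta^{1/2}$'' would require $\delta\ge Z^{1/3}$, which contradicts $\delta\le C_1$. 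Finally, Step~1's reduction to the full $\Tr^-\bigl(H_{A,V}\bigr)$ for bare Coulomb $V$ is vacuous: that trace diverges (the sum $\sum_n 2n^2\cdot Z^2/(4n^2)$ is infinite), so before any trace or Lieb--Thirring estimate can be applied one must already have performed the $\theta_0,\theta_1$ localization and cut $W$ off to $B(0,2r)$ — which is the point at which the paper's proof actually starts.
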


\begin{proof}
Consider pair of smooth functions $\theta_0$ and $\theta_1$,
\begin{equation*}
\theta_0^2 + \theta_1^2= 1, \; \supp \theta_1 \subset B(0, 2r),
\; \theta_1= 1 \text{\ \ on\ \ } B(0,r),
\quad |\nabla \theta_0|, \ |\nabla \theta_1|\le Cr^{-1}
\end{equation*}
with $r=\delta Z^{-\frac{1}{3}}$.

Let $\tilde{\chi}_0$ be a smooth cutoff function supported on $B(0,3r)$ such that $|\nabla \tilde{\chi}_0|\le  Cr^{-1}$ and $\tilde{\chi}_0=1$ on $B(0,2r)$. Let $ \bar{A}$ be an average of $A$ over $B(0,3r)$. Let
$A_0\Def  (A-\bar{A}) \tilde{\chi}_0$, $B_0\Def \nabla\times A_0$;
then $\nabla \otimes A_0 =
\tilde{\chi}_0 \nabla\otimes A + (A-\bar{A})\otimes\nabla\tilde{\chi}_0$. Clearly
\begin{multline*}
\int_{\bR^3} B_0^2\,dx \le \int_{\bR^3} |\nabla\otimes A_0|^2\,dx\le
 2\int_{\bR^3} \tilde{\chi}_0^2 |\nabla\otimes A|^2\,dx +
 Cr^{-2}\int_{B(0,3r)}(A-\bar{A})^2\,dx  \\
\le  C_1\int_{B(0,3r)} |\nabla\otimes A|^2\,dx
\end{multline*}
for some universal constant $C_1$, where in the last step we used the Poincar\'e inequality. Let $\varphi$ be a real phase  such that  $\nabla\varphi=\bar{A}$.
Since $\tilde{\chi}_0\equiv 1$ on the support of $\theta_1$, we have
\begin{equation*}
\theta_1 H_{A,0}  \theta_1= \theta_1  e^{-i\varphi}
H_{A-\bar{A},0} e^{i\varphi}\theta_1 =
\theta_1  e^{-i\varphi} H_{A_0,V} e^{i\varphi}\theta_1 .
\end{equation*}
After these localizations, we have
\begin{multline}
 H_{N,Z;A}^1\Def \\
 \sum_{j=1}^N \Bigl[ \theta_1\Bigl(H_{A,0} - \frac{Z}{|x| }- \bigl(|\nabla\theta_0|^2 +|\nabla\theta_1|^2\big)\theta_1\Bigr]_j
  +  \frac{1}{\alpha }\int_{B(0,3r)}|\nabla\otimes A|^2 \,dx \\
\ge \sum_{j=1}^N \Bigl[ \theta_1 e^{-i\varphi}\bigl(H_{A_0,0} - W(x)\bigr) e^{i\varphi}\theta_1\Bigr]_j +\frac{1}{2C_1\alpha}\int B_0^2\,dx
\label{26-A-2}
\end{multline}
with
\begin{equation*}
W(x) =\Bigl[\frac{Z}{|x|} + Cr^{-2}\Bigr]\, \mathbf{1}(|x|\le 2r)
\end{equation*}
where $\mathbf{1}(X)$ is a characteristic function of $X$.

Now we use the ``running energy scale'' argument in E.~Lieb, M.~Loss,~M. and J.~Solovej~\cite{lieb:loss:solovej}.
\begin{multline}
\sum_{j=1}^N \Bigl[\theta_1 e^{-i\varphi}
\bigl[H_{A',0}- W\bigr]e^{i\varphi}\theta_1\Bigr]_j
  \ge -\int_0^\infty\N_{-e}( H_{A',0}- W)\,d e\\
\shoveright{\ge -\int_0^\mu\N_{-e}(H_{A',0}- W)\,d e -
\int_\mu^\infty\N_{0}\bigl(\frac{\mu}{e}H_{A',0}- W+e\bigr)\,d e\;}\\
 \ge -\int_0^\mu\N_{-e}(H_{A',0}- W)\,d e -
\int_\mu^\infty\N_{0}\bigl(H_{A',0}-\frac{e}{\mu} W+ \frac{e^2}{\mu}\Bigr)\,d e,
\label{26-A-3}
\end{multline}
where $\N_{-e}(H)$ denotes the number of eigenvalues of a self-adjoint operator $H$ below $-e$.

In the first term we use the bound $H_{A_0,0}\ge (D-A_0)^2 - |B_0|$ and the CLR (i.e. Cwikel-Lieb-Rozenblum) bound:
\begin{multline}
  \int_0^\mu\N_{-e}(H_{A_0,0}- W)\,d e \le
  C\int_0^\mu \,d e\int_{\bR^3}  (W+ |B_0|-e)_+^{\frac{3}{2}}\,dx\\
 \le C\int_0^\mu \,d e\int_{\bR^3}(W-e/2)_+^{\frac{3}{2}}\,dx  +
 C\int_0^\mu \,d e\int_{\bR^3} (|B_0|-e^2/2\mu)_+^{\frac{3}{2}}\,dx\\
\le C\int_{\bR^3} W^{\frac{5}{2}} \,dx +
C\mu^{\frac{1}{2}}\int_{\bR^3} B_0^2\,dx
= CZ^{\frac{5}{2}}r^{\frac{1}{2}}+Cr^{-2} +
C\mu^{\frac{1}{2}}\int_{\bR^3} B_0^2\,dx .
\label{26-A-4}
\end{multline}
In the second term of (\ref{26-A-3}) we use
\begin{multline*}
  H_{A_0,0}-\frac{e}{\mu} W \ge
   \frac{1}{2}\bigl[ (D-A_0)^2
 - \frac{2eZ}{\mu|x|}\,\mathbf{1} (|x|\le 2 r)\bigr] \\[3pt]
 + \frac{1}{2} (D-A_0)^2
 - |B_0|  -\frac{Ce}{\mu r^2}\,\mathbf{1}(|x|\le 2r),
\end{multline*}
and that
\begin{gather*}
(D-A_0)^2  - \frac{2eZ}{\mu|x|}\,\mathbf{1} (|x|\le 2r) \ge
(D-A_0)^2  - \frac{4eZ}{\mu|x|}\ge - \bigl(\frac{2eZ}{\mu}\bigr)^2\\
\shortintertext{i.e.}
H_{A_0,0}-\frac{e}{\mu} W \ge
\frac{1}{2} (D-A_0)^2 - 2\bigl(\frac{eZ}{\mu}\bigr)^2
 - |B_0| -\frac{Ce}{\mu r^2}\, \mathbf{1}(|x|\le 2r).
\end{gather*}
Let $\mu=4Z^2$, then using $Ce/\mu r^2 \le e^2/4\mu$ for $\mu\le e$
(i.e. $C\le (\delta Z^{2/3})^2$), we get
\begin{multline}
 \int_\mu^\infty\N_{0}\bigl(H_{A_0,0}- \frac{e}{\mu}W+\frac{e^2}{\mu}\bigr)\,d e
\\
\le \int_\mu^\infty\N_{0}\bigl(\frac{1}{2} (D-A_0)^2
 - |B_0| +\frac{e^2}{4\mu}\bigr)\,d e
\le C\int_0^\mu \,d  e\int_{\bR^3}
    (|B_0|-e^2/4\mu)_+^{3/2}\,dx \\
\leq C\mu^{\frac{1}{2}}\int_{\bR^3} B_0^2\,dx .
\label{26-A-5}
\end{multline}
Note that if $Z\alpha \le \kappa^*$ with some sufficiently small universal constant $\kappa^*$, then (\ref{26-A-5}) can be controlled by the corresponding
term in \eqref{26-A-2}. Combining the  estimates (\ref{26-A-2}), (\ref{26-A-3}),
(\ref{26-A-4}) and (\ref{26-A-5}) we obtain
\begin{equation*}
  H_{A,V} \ge - CZ^{\frac{5}{2}}r^{\frac{1}{2}} -Cr^{-2}
\end{equation*}
and  Lemma  follows.
\end{proof}

\end{subappendices}

 %\printindex


\begin{thebibliography}{99}
\bibliographystyle{book}

\providecommand{\bysame}{\leavevmode\hbox to5em{\hrulefill}\thinspace}


\bibitem[A]{arnold:classical}
V. I. Arnold:
\emph{Mathematical Methods of Classical Mechanics}.
Springer-Verlag (1990).

\bibitem[Ba]{bach}
V. Bach: \emph{Error bound for the Hartree-Fock energy of
atoms and molecules,}
Commun. Math. Phys.  147:527--548 (1992).


\bibitem[Be]{benguria}
R. Benguria:
\emph{Dependence of the Thomas-Fermi  energy on the nuclear coordinates,}
Commun. Math. Phys., 81:419--428 (1981).


\bibitem[BeL]{benguria:lieb}
R. Benguria and E. H. Lieb:
\emph{The positivity of the pressure in Thomas-Fermi theory,}
Commun. Math. Phys., 63:193--218 (1978).


\bibitem[BrL]{brezis:lieb}
H. Brezis and E. H. Lieb:
\emph{Long range potentials in Thomas-Fermi theory,}  Commun. Math. Phys.
65, 231-246 (1979).


\bibitem[E]{erdos:magnetic} L. Erd{\H o}s,
Magnetic Lieb-Thirring inequalities.
\newblock {Commun. Math. Phys.}, 170:629--668 (1995).

\bibitem[ES3]{erdos:solovej}  L. Erd{\H o}s, J.P. Solovej, 
\emph{Ground state energy of large atoms in a self-generated
magnetic field.} Commun. Math. Phys. \textbf{294}, No. 1, 229-249 (2009)
\href{http://arxiv.org/abs/0903.1816}{arXiv:0903.1816}

\bibitem[EFS1]{EFS1}  L. Erd{\H o}s, S. Fournais, J.P. Solovej: 
\emph{Stability and semiclassics in self-generated
fields.} 
\href{http://arxiv.org/abs/1105.0506}{arXiv:1105.0506}


\bibitem[EFS2]{EFS2}  L. Erd{\H o}s, S. Fournais,
J.P. Solovej: \emph{Second order semiclassics with self-generated
magnetic fields.} 
\href{http://arxiv.org/abs/1105.0512}{arXiv:http://arxiv.org/abs/1105.0512}


\bibitem[EFS3]{EFS3}  L. Erd{\H o}s, S. Fournais, J.P. Solovej: 
\emph{Scott correction for large atoms and molecules in a self-generated magnetic field} \href{http://arxiv.org/abs/1105.0521}{arXiv:1105.0521}


\bibitem[FS]{FS} C.~Fefferman and L.A.~Seco: \emph{On the energy of a
    large atom}, Bull.~AMS \textbf{23}, 2, 525--530 (1990).

\bibitem[FSW1]{FSW1} R. L. Frank, H. Siedentop, S. Warzel:
\emph{The ground state energy of heavy atoms: 
relativistic lowering of the leading energy correction.} 
Commun. Math. Phys.  \textbf{278}
   no. 2, 549--566 (2008)




\bibitem[FSW2]{FSW2} R. L. Frank, H. Siedentop, S. Warzel:
\emph{The energy of heavy atoms according to Brown and Ravenhall: the Scott correction.}
Doc. Math. \textbf{14}, 463--516 (2009).







\bibitem[FLL]{FLL}
J. Fr\"ohlich, E. H. Lieb, and M. Loss:
\emph{Stability of Coulomb systems with magnetic fields. 
I. The one-electron atom. }
Commun.\ Math.\ Phys.\ \textbf{104} 251--270 (1986) 



\bibitem[GS]{graf:solovej}
G. M. Graf  and J. P. Solovej:
\emph{A correlation estimate with applications to quantum systems with Coulomb interactions,}
Rev.~Math.~Phys., 6(5a):977--997 (1994).
Reprinted in  The state of matter a volume dedicated to
E.~H.~Lieb, Advanced series in mathematical physics, 20,
M.~Aizenman and H.~Araki (Eds.), 142--166, World Scientific 1994.



\bibitem[H]{H} W.~Hughes: \emph{An atomic energy bound that gives
    Scott's correction}, Adv.~Math. \textbf{79}, 213--270 (1990).





\bibitem[Ivr1]{ivrii:MQT1} V. Ivrii,
\href{http://weyl.math.toronto.edu/victor2/preprints/MQT1.pdf}
{Asymptotics of the ground state energy of heavy molecules in
the strong magnetic field. I}.  Russian Journal
of Mathematical Physics, 4(1):29--74 (1996).

\bibitem[Ivr2]{ivrii:MQT2}
\href{http://weyl.math.toronto.edu/victor2/preprints/MQT2.pdf}
{Asymptotics of the ground state energy of heavy molecules in
the strong magnetic field. II}.  Russian Journal  of Mathematical Physics,
5(3):321--354 (1997).

\bibitem[Ivr3]{ivrii:MQT-Vishik}  V. Ivrii,
\href{http://weyl.math.toronto.edu/victor2/preprints/MQT-Vishik.pdf}{Heavy molecules in the strong magnetic field.}
 Russian Journal of Math. Phys., 4(1):29--74 (1996).



\bibitem[Ivr4]{ivrii:MQT3}  V. Ivrii,
 \href{http://weyl.math.toronto.edu/victor2/preprints/MQT3.pdf}{
Heavy molecules in the strong magnetic field. Estimates for ionization energy and excessive charge} 6(1):56--85 (1999).


\bibitem[Ivr5]{ivrii:IRO2}
\href{http://weyl.math.toronto.edu/victor2/preprints/IRO2.pdf}{Sharp spectral asymptotics for operators with irregular coefficients.
Pushing the limits. II}. Comm. Part. Diff. Equats.,  28 (1\&2):125--156, (2003).


\bibitem[Ivr6]{ivrii:IRO3} V. Ivrii
\href{http://arxiv.org/abs/math/0510326.pdf}{Sharp spectral asymptotics for operators with irregular coefficients.
III. Schr\"odinger operator with a strong magnetic field}, arXiv:math/0510326
(Apr. 11, 2011), 101pp.


\bibitem[Ivr7]{ivrii:Preprint-A} V. Ivrii
\href{http://arxiv.org/pdf/1108.4188.pdf}{Local trace asymptotics in the self-generated magnetic field}, arXiv:math/1108.4188 (December 23, 2011), 24pp.

\bibitem[Ivr8]{ivrii:Preprint-B} V. Ivrii
\href{http://arxiv.org/pdf/1112.2487.pdf}{Global trace asymptotics in the self-generated magnetic field in the case of Coulomb-like singularities}, arXiv:math/1112.2487 (December 23, 2011), 19pp.

\bibitem[Ivr9]{ivrii:Preprint-C} V. Ivrii
\href{http://arxiv.org/pdf/1112.5538.pdf}{Asymptotics of the ground state energy for atoms and molecules in the self-generated magnetic field}, arXiv:math/1112.5538 (December 23, 2011), 11pp.



\bibitem[Ivr10]{ivrii:Preprint-24} V. Ivrii
\href{http://arxiv.org/pdf/1210.1132.pdf}{Asymptotics of the ground state energy of heavy molecules and related topics}, arXiv:math/0510326
(October 03, 2012), 70pp.

\bibitem[Ivr11]{ivrii:Preprint-25} V. Ivrii
\href{http://arxiv.org/pdf/1210.1329}{Asymptotics of the ground state energy of heavy molecules and related topics. II}, arXiv:math/1210.1329
(January 23, 2013), 141pp.




\bibitem[Ivr11]{futurebook} V. Ivrii
 \emph{Microlocal Analysis and Sharp Spectral Asymptotics}, 
 in progress: available online at \newline
\href{http://www.math.toronto.edu/ivrii/futurebook.pdf}{http://www.math.toronto.edu/ivrii/futurebook.pdf}


\nopagebreak
\bibitem[IS]{ivrii:ground} Ivrii, V. and Sigal, I.~M
\href{http://weyl.math.toronto.edu/victor2/preprints/Scott.pdf}{Asymptotics of the ground state energies of large {C}oulomb systems.}
\newblock {Ann. of Math.}, 138:243--335 (1993).


\bibitem[L]{L} E. H. Lieb: \emph{Thomas-Fermi and related
theories of atoms and molecules}, Rev. Mod. Phys. \textbf{65}. No. 4, 603-641
(1981)



\bibitem[L2]{L2} E. H. Lieb: \emph{Variational principle for
many-fermion systems}, Phys. Rev. Lett. \textbf{46}, 457--459 (1981)
and \textbf{47} 69(E) (1981)

\bibitem[L2]{lieb:selecta} \emph{The stability of matter: from atoms to
stars} (Selecta). Springer-Verlag (1991).

\bibitem[LLS]{lieb:loss:solovej} E. H. Lieb, M. Loss and J. P. Solovej: {\em
Stability of Matter in Magnetic Fields}, Phys. Rev. Lett. \textbf{75},
 985--989 (1995)


\bibitem[LO]{LO} E. H. Lieb and S. Oxford: \emph{Improved Lower Bound
    on the Indirect Coulomb Energy}, Int. J. Quant. Chem. \textbf{19},
  427--439, (1981)




\bibitem[LS]{LS} E. H. Lieb and B. Simon: \emph{The Thomas-Fermi theory
of atoms, molecules and solids}, Adv. Math. \textbf{23}, 22-116 (1977)


\bibitem[LSY1]{LSY1}
E.~H. Lieb, ~J.~P. Solovej  and J. Yngvarsson:
\emph{Asymptotics of heavy atoms in high magnetic fields: I. Lowest Landau band
regions,} Comm. Pure Appl. Math. 47:513--591 (1994).

\bibitem[LSY2]{LSY2}
E.~H. Lieb, ~J.~P. Solovej  and J. Yngvarsson:.
\emph{Asymptotics of heavy atoms in high magnetic fields: II. Semiclassical regions,}
Comm. Math. Phys., 161: 77--124 (1994).


\bibitem[RS]{ruskai:solovej}
M. B. Ruskai, M. B. and J. P. Solovej:
\emph{Asymptotic neutrality of
polyatomic molecules}. In  Schr\"odinger Operators,  Springer Lecture
Notes in Physics 403, E. Balslev (Ed.), 153--174, Springer Verlag
(1992).

\bibitem[SW1]{SW1} H.~Siedentop and R.~Weikard: \emph{On the leading
    energy correction for the statistical model of an atom:
    interacting case}, Commun.~Math.~Phys.~ \textbf{112}, 471--490
  (1987) 

\bibitem[SW2]{SW2} H.~Siedentop and R.~Weikard: \emph{On the leading
    correction of the Thomas-Fermi model: lower bound}, Invent.~Math.
  \textbf{97}, 159--193 (1990)

\bibitem[SW3]{SW3} H.~Siedentop and R.~Weikard:
 \emph{A new phase space localization technique with
    application to the sum of negative eigenvalues of {S}chr\"odinger
    operators}, Ann.~Sci.~\'Ecole Norm. Sup. (4), \textbf{24}, no.~2,
  215--225 (1991).
  
 \bibitem[Sob1]{Sob1} A.~V.~Sobolev: \emph{Quasi-classical asymptotics
of local Riesz means for the Schr\"odinger operator in a moderate
magnetic field.} Ann. Inst. H. Poincar\'e, \textbf{62}  no. 4, 325-360,  (1995)

 \bibitem[Sob]{Sob} A.~V.~Sobolev: \emph{Discrete spectrum 
asymptotics for the Schr\"{o}dinger operator with a
 singular potential and a magnetic field},
 Rev.~Math.~Phys \textbf{8} (1996) no.~6, 861--903.
  

\bibitem[SS]{SS}  J. P. Solovej, W. Spitzer:  \emph{A new
coherent states approach to semiclassics which gives
Scott's correction.} Comm. Math. Phys.  \textbf{241}  (2003),  no. 2-3, 383--420.

\bibitem[SSS]{SSS}  J. P. Solovej, T.\O.  S\o rensen, W. Spitzer:  \emph{
Relativistic Scott correction for atoms and molecules.}
Comm. Pure Appl. Math. Vol. LXIII. 39-118 (2010).



\bibitem[Zh]{zhislin:spectrum}
G. Zhislin,
\emph{Discussion of the spectrum of Schrodinger operator for systems of many particles}.
\newblock {Tr. Mosk. Mat. Obs.}, 9, 81--128 (1960).


 \end{thebibliography}
\end{document}